\newtheorem{thm}{Theorem}
\newtheorem{lem}{Lemma}
\newtheorem{coro}{Corollary}
\newtheorem{prop}{Proposition}
\newcolumntype{H}{>{\setbox0=\hbox\bgroup}c<{\egroup}@{}}
\providecommand{\keywords}[1]{\textbf{Keywords}: #1}
\theoremstyle{plain}
\theoremstyle{remark}
\newtheorem{definition}{Definition}
\renewcommand{\hat}{\widehat}
\def\wh{\widehat}
\newcommand{\cor}{{\rm Corr}}
\newcommand{\cov}{{\rm Cov}}
\newcommand{\var}{{\rm Var}}
\newcommand{\ve}{{\varepsilon}}
\def\E{{\rm E}} 
\newcommand{\bA}{{\mathbf A}}
\newcommand{\bB}{{\mathbf B}}
\newcommand{\bD}{{\mathbf D}}
\newcommand{\bF}{{\mathbf F}}
\newcommand{\bG}{{\mathbf G}}
\newcommand{\bI}{{\mathbf I}}
\newcommand{\bL}{{\mathbf L}}
\newcommand{\bP}{{\mathbf P}}
\newcommand{\bS}{{\mathbf S}}
\newcommand{\bU}{{\mathbf U}}
\newcommand{\bV}{{\mathbf V}}
\newcommand{\bW}{{\mathbf W}}
\newcommand{\bX}{{\mathbf X}}
\newcommand{\bY}{{\mathbf Y}}
\newcommand{\bZ}{{\mathbf Z}}
\newcommand{\ba}{{\mathbf a}}
\newcommand{\bb}{{\mathbf b}}
\newcommand{\bd}{{\mathbf d}}
\newcommand{\bc}{{\mathbf c}}
\newcommand{\bx}{{\mathbf x}}
\newcommand{\by}{{\mathbf y}}
\newcommand{\bz}{{\mathbf z}}
\newcommand{\bbeta}  {\boldsymbol{\beta}}
\newcommand{\btheta} {\boldsymbol{\theta}}
\newcommand{\bC}{{\mathbf C}}
\newcommand{\calD}{{\mathcal D}}
\newcommand{\calJ}{{\mathcal J}}
\newcommand{\calT}{{\mathcal T}}
\def\6bullets{\bullet\bullet\bullet\bullet\bullet\bullet}
\begin{document}
	\title{A two-way heterogeneity model for dynamic networks}
	\date{}
	\author{
		\normalsize Binyan Jiang\\
		\normalsize Department of Applied Mathematics, The  Hong Kong Polytechnic University\\
		\normalsize Hong Kong, by.jiang@polyu.edu.hk
		\and
		\normalsize Chenlei Leng\\
		\normalsize Department of Statistics, University of Warwick  \\
		\normalsize United Kindom, C.Leng@warwick.ac.uk
		\and
		\normalsize Ting Yan\\
		\normalsize Department of Statistics, Central China Normal University\\
		\normalsize China, tingyanty@mail.ccnu.edu.cn
		\and
		\normalsize Qiwei Yao\\
		\normalsize Department of Statistics, London School of Economics\\
		\normalsize  United Kindom, q.yao@lse.ac.uk
		\and
		\normalsize Xinyang Yu\\
		\normalsize Department of Applied Mathematics, The Hong Kong Polytechnic University\\
		\normalsize Hong Kong, xinyang.yu@connect.polyu.hk
	}
	\maketitle
	
	\begin{abstract}
		Dynamic network data analysis requires joint modelling individual snapshots and time dynamics. This paper proposes a new two-way heterogeneity model towards this goal. The new model equips each node of the network with two 
		heterogeneity parameters, one to characterize the propensity of forming ties with other nodes and the other to differentiate the tendency of retaining existing ties over time. 
		Though the negative log-likelihood function
		is non-convex, it is locally convex in a neighbourhood of the true value of the parameter vector. By using
		a novel method of moments estimator as the initial value,  the {consistent local}  
		maximum likelihood estimator (MLE) can be obtained by a gradient descent algorithm.
		To establish the upper bound for the estimation error of the
		MLE, we derive a new uniform deviation bound,
		which is of independent interest. The usefulness of the model and the associated theory are further supported by extensive simulation and
		{  the analysis of some real network data sets.}
	\end{abstract}
	
	\keywords{Degree heterogeneity, Dynamic networks, Maximum likelihood estimation, Uniform deviation bound}	

\section{Introduction}
Network data featuring prominent interactions between subjects arise in various areas such as biology, economics, engineering, medicine, and social sciences \citep{newman2018networks,kolaczyk2020statistical}.  As a rapidly growing field of active research, statistical modelling of  networks aims to capture and understand the linking patterns in these data. A large part of the literature has focused on examining these patterns for canonical, static networks that are observed at a single snapshot. Due to the increasing availability of networks that are observed multiple times,  models for dynamic networks evolving in time
are of increasing interest now. These models typically assume, among
others, that networks observed at different time are independent \citep{p19, bbm18},  independent conditionally on some latent processes \citep{dd16, mm17}, or  drawn sequentially from an exponential random	graph model conditional on the previous networks \citep{hanneke2007discrete,hanneke2010discrete,krivitsky2014separable}. 


One of the stylized facts of real-life networks is that their nodes often have different tendencies to form ties and may evolve differently over time. The former is manifested	by the fact that the so-called hub nodes have many links while the peripheral nodes have  small numbers of connections in, for example, a big social network. The latter becomes evident when some individuals are more active in seeking new ties/friends than the others.
In this paper, we  refer to these two kinds of heterogeneity as static heterogeneity and dynamic heterogeneity respectively. Also known as degree heterogeneity in the static network literature, static heterogeneity has featured prominently in several popular models widely used in practice including the stochastic block model and its degree-corrected generalization \citep{karrer2011stochastic}. 
See also \cite{jin2015fast,sengupta2018block, jin2022optimal,ke2022score}, and the references therein. Another common and natural approach to capture the static heterogeneity  is to introduce node-specific parameters, one for each node. 	For single networks, this is often conducted via modelling the logit of the link probability between each pair of nodes as the sum of their heterogeneity parameters. Termed as the $\beta$-model \citep{chatterjee2011}, this model and its generalizations have been extensively studied 
when a single static network is observed \citep{yan2013central,karwa2016inference, graham2017econometric, yan2019statistical, chen2021analysis, Leng2020b}. 

With $n$ observed networks each having $p$ nodes, the goal of this study is two-fold: (i) 
We propose a dynamic network model named the two-way heterogeneity model that captures both static heterogeneity and dynamic heterogeneity, and develop the associate inference methodology;
(ii) We establish new generic asymptotic results that can be applied or extended to different models with a large number of parameters (in relation to $p$). {We focus on the scenario that the number of nodes $p$ goes to infinity. Our asymptotic results hold when  $np\rightarrow \infty$, though $n$ may be fixed. }
The main contributions of our paper can be summarized as follows.
\begin{itemize}
\item  {\color{black}	We introduce a reparameterization of the general autoregressive network model \citep{jiang2020autoregressive} to accommodate variations in both node degree and dynamic fluctuations. This novel approach can be regarded as an extension of the $\beta$-model \citep{chatterjee2011} to a dynamic framework. It encompasses two sets of parameters for heterogeneity: one governs static variations, akin to those in the standard $\beta$-model, while the other addresses dynamic fluctuations. Unlike the general model in \citep{jiang2020autoregressive}, which necessitates a large number of network observations (i.e. $n \rightarrow \infty$), we demonstrate the validity of our formulation even in scenarios where $n$ is small but $p$ is large.		}
\item  The formulation of our model gives rise to a high-dimensional non-convex loss function based on likelihood. 
By establishing the local convexity of the loss function in a neighborhood of the true parameters, 
we compute the local MLE by a standard gradient descent algorithm using a newly proposed method of moments estimator (MME) as its initial value.
To our best knowledge, this is the first result in network data analysis for solving such a  
non-convex optimization problem with algorithmic guarantees.		

\item Furthermore, to characterize the local MLE,  
we have derived its estimation error bounds  in  the $\ell_2$ norm and the $\ell_\infty$ norm when $np\rightarrow \infty$ in which $n\ge 2$ can be finite. 
Due to the dynamic structure of the data,
the Hessian matrix of the loss function exhibits a complex structure. As a result, existing analytical approaches, such as the  interior point theorem \citep{gragg1974optimal,yan2016asymptotics} developed for static networks, 
are no longer applicable; see Section \ref{S3.1} for further elaboration. 
We derive a novel locally uniform deviation bound in a neighborhood of the true parameters with a diverging radius. Based on this we first establish $\ell_2$ norm consistency of the MLE, which paves the way for the uniform consistency in $\ell_\infty$ norm.  

\item  
In establishing the locally uniform deviation bound, we have provided a general result for functions of the form  $L(\btheta)=\frac{1}{p}\sum_{1\leq i\neq j\leq p  }l_{i,j} \left( \theta_i, \theta_j\right) Y_{i,j}$ as defined in \eqref{Lfunction} below. 
This result explores the sparsity structure of $L(\btheta)$ in the sense that most of
its higher 
order derivatives are zero -- the condition which our model satisfies, and provides 
a new bound that substantially extends the scope of empirical processes for the M-estimators \citep{van1996weak} for the models with a fixed number of parameters to those with a growing number of parameters.
The result here is of independent interest as it can be applied to any model with {an objective function} taking the form of $L$. 

\end{itemize}

~\\
The rest of the paper is organized as follows. We  introduce in Section \ref{sec: TWHM} the new two-way heterogeneity model and  present its properties. The estimation of its local MLE in a neighborhood of the truth and the associated  theoretical properties are presented in Section \ref{sec: parameter estimation}. The development of these properties relies on new local deviation bounds which are presented in Section \ref{S.4}.
Simulation studies and an analysis of ants interaction data  are reported  in Section \ref{sec: numerical study}. We conclude the paper in Section \ref{sec: conclusion}. All technical proofs are relegated to Appendix A. 
{\color{black} Additional numerical results showcasing the effectiveness of our method in aiding community detection within stochastic block structures, along with an application aimed at understanding dynamic protein-protein interaction networks, are provided in Appendix B.}

\section{Two-way Heterogeneity Model}\label{sec: TWHM}
Consider a dynamic network defined on $p$
nodes which are unchanged over time. Denote by  a $p\times p$  matrix $\bX^t=(X_{i,j}^t)$ its adjacency matrix at time $t$, i.e. $X_{i,j}^t=1$ indicates the existence of a connection between 
nodes $i$ and $j$ at time $t$, and 0 otherwise. We focus on undirected networks without self-loops, i.e., $X_{i,j}^t = X_{j,i}^t$ for all $(i,j) \in {  \calJ \equiv \{ (i,j): 1\le i < j \le p\}}$,
and $X_{i,i}^t=0$ for $1\le i\le p$, though our approach can be readily extended to directed networks.

{\color{black}
To capture the autoregressive pattern in dynamic networks, \cite{jiang2020autoregressive} proposed to model the 
network process via the following stationary AR(1) framework:
\begin{equation*} \label{b1_jiang}
	X^t_{i,j} \;=\; X^{t-1}_{i,j}\, I( \ve_{i,j}^t=0) \;+\; I (\ve_{i,j}^t=1), \quad t\ge 1,
\end{equation*}
where $I(\cdot)$ denotes  the indicator function,  and the  
$\ve_{i,j}^t$, $(i,j) \in \calJ$ are independent innovations satisfying
\begin{equation*} \label{b2}
	P(\ve_{i,j}^t =1) = \alpha_{i,j}, \quad
	P(\ve_{i,j}^t = -1) = \beta_{i,j}, \quad
	P(\ve_{i,j}^t =0) = 1 - \alpha_{i,j} - \beta_{i,j},
\end{equation*}
for some positive parameters  $ \alpha_{i,j}$ and $\beta_{i,j}$.
This general model opts to neglect the inherent nature of the networks and chooses to estimate each pair $\left(\alpha_{i,j},\beta_{i,j} \right) $ independently. 
As a result, there are  $p(p-1)$  parameters and consistent model estimation requires  $n\rightarrow \infty$. 
Conversely, in numerous real-world scenarios, it is frequently noted that the number of network observations $n$ is modest, while the number of nodes $p$ can significantly exceed $n$. Under such a scenario of small-$n$-large-$p$, the conventional model outlined in  \cite{jiang2020autoregressive}  may not be suitable. To address this and to effectively capture node heterogeneity in dynamic networks, as well as accommodate small-$n$-large-$p$ networks, we propose the following reparameterization for the general AR(1) model mentioned above. This reparameterization not only accounts for inherent node heterogeneity but also reduces the parameter count from $p(p-1)$ to $2p$.	}
\begin{definition}\label{Def1}
{\bf Two-way Heterogeneity Model (TWHM)}. 	
The data generating process satisfies
\begin{equation} \label{b1}
	X^t_{i,j} = I( \ve_{i,j}^t=0)+X^{t-1}_{i,j} I( \ve_{i,j}^t=1)  , \qquad (i,j) \in \calJ,
\end{equation}
where   the   
$\ve_{i,j}^t$, for $(i,j) \in {\cal J}$ and $t\ge 1$  are independent innovations 
with their  distributions satisfying
\begin{equation} \label{bbeta}
	P(\ve_{i,j}^t =r) = \frac{e^{\beta_{i,r}+\beta_{j,r}}}{1+\sum_{k=0}^1e^{\beta_{i,k}+\beta_{j,k}}} ~~{\rm for}~ r=0,1, 	\quad
	P(\ve_{i,j}^t = -1) =  \frac{1}{1+\sum_{k=0}^1e^{\beta_{i,k}+\beta_{j,k }}}.
\end{equation}
\end{definition}

TWHM defined above is a reparametrization of the AR(1) network model \cite{jiang2020autoregressive} as it reduces the total number of parameters from $2p^2$ therein to $2p$. By Proposition 1 of \cite{jiang2020autoregressive}, the matrix process $\{ \bX^t, t \ge 1\} $ is strictly stationary with 
\begin{equation}\label{margP}   
P(X_{i,j}^t =1) = \frac{ e^{\beta_{i,0} + \beta_{j,0}}}{
	1 + e^{\beta_{i,0} + \beta_{j,0}}} = 1 -P(X_{i,j}^t =0),
\end{equation}
provided that we activate the process with $\bX^0 = (X_{i,j}^0)$ also following this stationary marginal distribution.

Furthermore,
\begin{equation*} \label{b8} \nonumber
\E(X_{i,j}^t)= \frac{e^{\beta_{i,0} +\beta_{j,0} }}{1+e^{\beta_{i,0} +\beta_{j,0}}}, \qquad
\var(X_{i,j}^t) =\frac{e^{\beta_{i,0} +\beta_{j,0} }}{(1+e^{\beta_{i,0} +\beta_{j,0}})^2},
\end{equation*}
\begin{equation} \label{b9}
\rho_{i,j}(|t-s|) \equiv
\cor(X_{i,j}^t, X_{i, j}^{s}) =\left(
\frac{e^{\beta_{i,1} +\beta_{j,1}} }{ 1+\sum_{r=0}^1e^{\beta_{i,r} +\beta_{j,r }}  }\right)^{ |t-s|} .
\end{equation}
Note that
the connection probabilities in (\ref{margP}) depend on 
$\bbeta_0 =(\beta_{1,0}, \cdots, \beta_{p,0})^\top$ only, and are of 
the same form as the (static) $\beta$-model  \cite{chatterjee2011}. Hence we call $\bbeta_0$ the static heterogeneity parameter. Proposition \ref{prop1} below confirms
that means and variances of node degrees in TWHM also depend on $\bbeta_0$ only, and that different values of $\beta_{i,0}$ reflect the heterogeneity in the degrees of nodes. 

Under TWHM, it holds that
\begin{equation}
\label{transP}
P(X^t_{i,j}=1|X^{t-1}_{i,j} =0) = \frac{e^{\beta_{i,0}+\beta_{j,0}}}{1+\sum_{k=0}^1e^{\beta_{i,k}+\beta_{j,k}}}, 
P(X^t_{i,j}=0|X^{t-1}_{i,j} =1) =
\frac{1}{1+\sum_{k=0}^1e^{\beta_{i,k}+\beta_{j,k }}}.
\end{equation}
Hence the dynamic changes (over time) of network $\bX^t$ depend on, in addition to $\bbeta_0$, $\bbeta_1 \equiv (\beta_{1,1}, \cdots, \beta_{p,1})^\top$: the larger $\beta_{i,1}$ is,  the more likely $X_{i,j}^t$ will retain the value of  $X_{i,j}^{t-1}$ for all $j$. {\color{black} Thus we call $\bbeta_1$ the dynamic heterogeneity parameter,   as its components reflect the different dynamic behaviours of the $p$ nodes.}
A schematic description of the model can be seen from Figure \ref{fig:0} where three snapshots of a dynamic network with four nodes are depicted.
\begin{figure}[tb] 
\centering
\includegraphics[width=13.5cm]{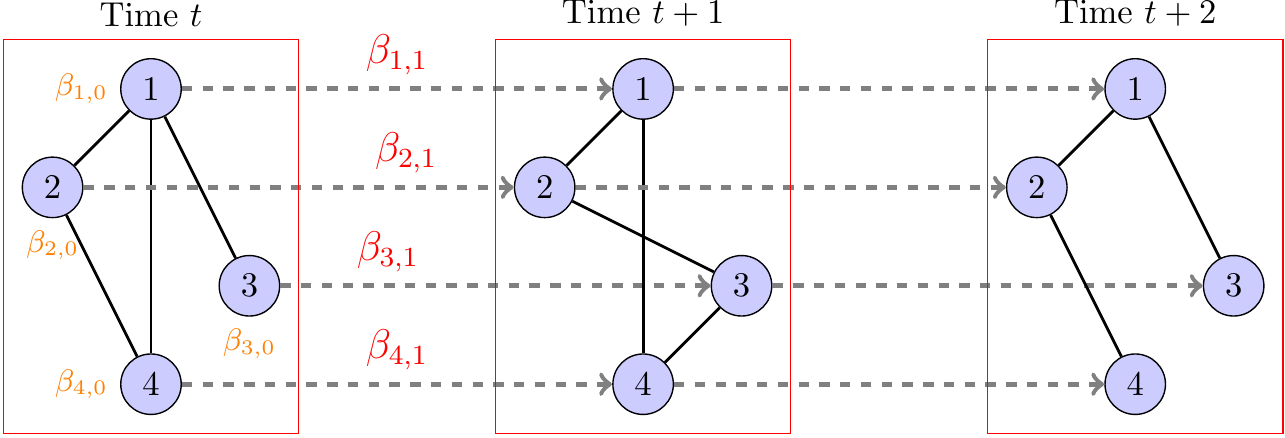} 
\caption{A schematic depiction of TWHM: $\beta_{i,0}, i=1,...,4$, are parameters to characterize the static heterogeneity of nodes, while $\beta_{i,1}$ characterize their dynamic heterogeneity.}\label{fig:0}
\end{figure}

From now on, let $\{\bX^{t}\} \sim P_{\btheta}$  denote  
the stationary TWHM  with parameters ${\btheta} = (\bbeta_0^\top, \bbeta_1^\top)^\top$, 
and  $d_i^t=\sum_{j=1}^p X_{i,j}^t$ be the degree of node $i$ at time $t$. The proposition below lists some properties of the node degrees. 
\begin{prop} \label{prop1}
Let $\{\bX^t\} \sim P_{\btheta}$. Then
$\{( d_1^t,\ldots,  d_p^t), t=0,1,2,\cdots\}$  is a strictly stationary process. Furthermore for any $1\leq i<j\leq p $ and $t, s \ge 0$,
\begin{equation*} \label{b10}
	\E (d_i^t)= \sum_{k=1,\:k \neq i}^{p}\frac{e^{\beta_{i,0} +\beta_{k,0} }}{1+e^{\beta_{i,0} +\beta_{k,0}}}, \qquad
	\var(d_i^t) =\sum_{k=1,\:k \neq i}^{p}\frac{e^{\beta_{i,0} +\beta_{k,0} }}{(1+e^{\beta_{i,0} +\beta_{k,0}})^2},
\end{equation*}
\begin{eqnarray*}
	\rho^d_{i,j}(|t-s|) &\equiv&
	\cor(d_{i }^t, d_{j}^{s}) \\
	&=&\begin{cases}
		C_{i,\rho}\sum_{k=1,\:k \neq i}^{p}\left( \frac{e^{\beta_{i,1} +\beta_{k,1}} }{ 1+\sum_{r=0}^1e^{\beta_{i,r} +\beta_{k,r }}  }\right)^{ |t-s|}\frac{e^{\beta_{i,0} +\beta_{k,0} }}{(1+e^{\beta_{i,0} +\beta_{k,0}})^2}\quad & {\rm if} \; i=j,\\
		0 & {\rm if} \; i\neq j,
	\end{cases}
\end{eqnarray*}
where $C_{i,\rho}=\left( \sum_{k=1,\:k \neq i}^{p}\frac{e^{\beta_{i,0} +\beta_{k,0} }}{(1+e^{\beta_{i,0} +\beta_{k,0}})^2}\right)^{-1} $.
\end{prop}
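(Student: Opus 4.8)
The plan is to regard each edge process $\{X_{i,j}^t\}_{t\ge0}$, $(i,j)\in\calJ$, as the basic building block and to derive every degree quantity from it, the decisive feature being that these edge processes are mutually independent across distinct unordered pairs in $\calJ$. For strict stationarity I would note that by \eqref{b1}--\eqref{bbeta} each $\{X_{i,j}^t\}_t$ is a two-state Markov chain driven by the i.i.d.\ innovations $\{\ve_{i,j}^t\}_t$ and initialised at its stationary marginal \eqref{margP}, hence is strictly stationary; since the pairs in $\calJ$ index independent innovation sequences, the joint process $\{\bX^t\}$ is strictly stationary, and as $(d_1^t,\dots,d_p^t)$ is a fixed measurable function of $\bX^t$ that does not depend on $t$, the degree-vector process inherits strict stationarity.

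For the first two moments, writing $d_i^t=\sum_{k\ne i}X_{i,k}^t$, linearity of expectation together with the marginal mean given above yields $\E(d_i^t)$ at once. For the variance, the edges $\{i,k\}$, $k\ne i$, involve pairwise disjoint innovation sequences and are therefore independent, so $\var(d_i^t)=\sum_{k\ne i}\var(X_{i,k}^t)$; substituting the marginal variance reproduces the stated expression, and in particular $\var(d_i^t)^{-1}=C_{i,\rho}$.

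For the autocorrelations I would expand $\cov(d_i^t,d_j^s)=\sum_{k\ne i}\sum_{l\ne j}\cov(X_{i,k}^t,X_{j,l}^s)$ and invoke the structural identity $\cov(X_{i,k}^t,X_{j,l}^s)=0$ whenever $\{i,k\}\ne\{j,l\}$, which holds because distinct pairs carry independent innovations. For $i=j$ only the matched terms $k=l$ survive, giving $\cov(d_i^t,d_i^s)=\sum_{k\ne i}\var(X_{i,k}^t)\,\rho_{i,k}(|t-s|)$ with $\rho_{i,k}$ from \eqref{b9}; normalising by $\var(d_i^t)=\var(d_i^s)$ and using $\var(d_i^t)^{-1}=C_{i,\rho}$ gives exactly the stated $i=j$ formula. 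For $i\ne j$ the constraint $\{i,k\}=\{j,l\}$ forces $k=j$ and $l=i$, so the whole double sum collapses to the single shared-edge term $\cov(X_{i,j}^t,X_{i,j}^s)$.

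The crux of the off-diagonal case is exactly this lone shared edge $\{i,j\}$: it is the unique channel linking the degrees of two distinct nodes, so verifying the proposition's entry $0$ amounts to showing that its contribution is annihilated after normalisation. The plan is to bound it directly: the parameter constraints keep each per-edge variance uniformly bounded away from $0$ and above, so $\var(d_i^t)$ and $\var(d_j^s)$ are both of exact order $p$, whereas the shared-edge covariance is a single $O(1)$ term; the resulting ratio is thereby controlled and collapses to the value $0$ recorded for $i\ne j$. I expect this normalisation step, isolating and dominating the shared-edge contribution, to be the only delicate point, the remaining computations being routine consequences of edge independence.
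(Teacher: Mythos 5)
Your stationarity argument and the moment computations coincide with what the paper does: the paper outsources strict stationarity of $\{\bX^t\}$ to Proposition 1 of \cite{jiang2020autoregressive} (your direct argument via independent stationary edge chains is an acceptable substitute), and it then sums edge-wise means, variances and autocovariances exactly as you propose, using independence across distinct pairs in $\calJ$. The genuine gap is in your final step for $i\neq j$. You correctly isolate the shared edge: with the matching constraint $\{i,k\}=\{j,l\}$, the double sum collapses to $\cov(d_i^t,d_j^s)=\cov(X_{i,j}^t,X_{i,j}^s)=\rho_{i,j}(|t-s|)\,\var(X_{i,j}^t)$, which by \eqref{b9} is \emph{strictly positive} for every finite parameter value (and every lag, including $t=s$). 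No normalisation can ``annihilate'' a strictly positive covariance: for fixed $p$ the correlation equals $\rho_{i,j}(|t-s|)\var(X_{i,j}^t)\big/\sqrt{\var(d_i^t)\var(d_j^s)}$, a fixed positive number. Your order argument ($O(1)$ numerator against variances of order $p$) establishes only $\cor(d_i^t,d_j^s)=O(p^{-1})\to 0$ as $p\to\infty$; it does not, and cannot, deliver the exact zero asserted in the statement, so ``controlled'' is not ``zero'' and the proof of the off-diagonal entry is incomplete as a proof of the literal claim.

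It is worth noting that your bookkeeping is in fact \emph{more} careful than the paper's own proof, which computes $\cov(d_i^t,d_i^s)=\sum_{k\neq i}\cov(X_{i,k}^t,X_{i,k}^s)$ for the diagonal case and then simply records the value $0$ for $i\neq j$ without ever confronting the shared edge $X_{i,j}$, which appears in both $d_i^t$ and $d_j^s$ because the network is undirected ($X_{j,i}^s=X_{i,j}^s$). Your expansion shows that the exact-zero entry would hold only if the two degrees involved disjoint edge sets, which is false. The defensible versions of the claim are either the asymptotic statement $\cor(d_i^t,d_j^s)=O(p^{-1})$, which your bounding step essentially proves (provided you state honestly that the uniform-in-$p$ lower bound on $\var(d_i^t)/p$ requires the $\beta$'s to remain in a compact set as $p$ grows, an assumption the proposition does not impose), or exact zero for modified degrees that exclude the edge $\{i,j\}$. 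So: same route as the paper on stationarity and moments, a sharper decomposition on the cross term, but a genuine gap at the last step --- one that in fact exposes an inaccuracy in the proposition itself rather than a fixable flaw in your own argument.
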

{\color{black}  Proposition \ref{prop1} implies that
when there exist constants  $\beta_{0}$ and $\beta_{1}$ such that  $\beta_{i,0}\approx \beta_{0}$ and  $\beta_{i,1}\approx \beta_{1}$ for all  $i$, the degree sequence $\{d_i^t, t=1,\ldots, n\}$ is   approximately AR(1).  }

\section{Parameter Estimation}\label{sec: parameter estimation}
We introduce some notation first. Denote by ${\bI}_{p}$ the $p\times p$ identity matrix.   For any $s\in \mathbb{R}$,  $\textbf{s}_{p}$ denotes the $p\times 1$ vector with all its elements  equal to $s$. 
For $ \ba=(a_1,\ldots,a_p)^\top\in \mathbb{R}^{p}$ and $\bA=(A_{i,j})\in \mathbb{R}^{p \times p}$, let 
$\|\ba\|_{q}=\left( a_{i}^q\right)^{1/q} $ for any $q\ge 1$,  
$\|\ba\|_{\infty}=\max_{i}|a_{i}|$, and  $\|\bA\|_{\infty} = \max_{i} \sum_{j=1}^{p} |A_{i,j}|$.
Furthermore, let $\|\bA\|_2$ denote   the spectral norm of $\bA$ which equals its largest eigenvalue. 
For a random matrix $\bW \in \mathbb{R}^{p \times p}$ with $\E\left( \bW\right) =\textbf{0}$, define its matrix variance  as $\var(\bW)= \max \left\{\|\E\left( \bW\bW^{\top}\right) \|_{2},\|\E\left( \bW^{\top}\bW\right) \|_{2}  \right\}$.   The notation $x \lesssim y $ means that there exists a constant $c_1>0$ such that $|x|\leq c_1 |y|$,  while notation $x \gtrsim y $ means there exists a constant $c_2>0$ such that $|x|\geq c_2 |y|$.
Denote by $\bB_{\infty}\left( \bx,r\right) = \left\{ \by: \|\by-\bx\|_{\infty}\leq r \right\}$ the ball centred at $\bx$ with $\ell_\infty$ radius $r$. Let $c,c_0, c_1, \ldots, C, C_0, C_1,\ldots$ denote some generic constants that may be different in different places.
Let ${\btheta}^* =(\bbeta_0^{* \top }, \bbeta_1^{* \top})^\top = (\beta^*_{1,0},\cdots,\beta^*_{p,0},\beta^*_{1,1},\cdots,\beta^*_{p,1})^\top$ be the true unknown parameters.   We assume:
\begin{itemize}
\item [(A1)] There exists a constant $K$ such that for any $i=1,2,\cdots, p$, the true parameters satisfy 
{\color{black}$\beta^*_{i,1}-\max\big( \beta^*_{i,0},0\big) <K$.} 
\end{itemize}
{\color{black} Condition (A1)  ensures that the autocorrelation functions (ACFs)  in (\ref{b9}) are bounded away from 1 for any $(i,j) \in \calJ$.  It is worth noting that both $\beta^*_{i,1}$ and $\beta^*_{i,0}$ are allowed to vary with $p$, thus accommodating sparse networks in our analysis.  In practical terms, $\beta_{i,0}^*$, which reflects the sparsity of the stationary network, tends to be very small for large networks. Consequently, condition (A1) holds when   $\beta_{i,1}^*$ is bounded from above.  }

\subsection{Maximum likelihood estimation}\label{S3.1}
With the available observations $\bX^0,  \cdots,\bX^n$, 
the log-likelihood function conditionally on $\bX^0$ is
of the form $L(\btheta; \bX^n, \cdots, \bX^1|\bX^0 )= \prod_{t=1}^{n} L(\btheta; \bX^{t}|\bX^{t-1} ) $.
Note $\{ X^t_{i,j}\}$ for different $(i,j) \in \calJ$ are independent with each other. By (\ref{transP}),
a (normalized) negative log-likelihood admits the following form:
\begin{align}  \label{eq:lossfunction}
& l(\btheta) = - {\frac{1}{np}} L(\btheta; \bX^n,\bX^{n-1}, \cdots, \bX^1|\bX^0 )  \\ \nonumber
&=  - {\frac{1}{p}}\sum_{1\leq i<j\leq p  }\log \Big({1+ e^{\beta_{i,0} +\beta_{j,0 } }+e^{\beta_{i,1} +\beta_{j,1 } }}\Big)  + {\frac{1}{np}}{ \sum_{1\leq i<j\leq p  }   \Bigg\{   \left( \beta_{i,0} +\beta_{j,0 }\right)    \sum_{t=1}^nX_{i,j}^t}     \nonumber\\ \nonumber
&+\log\left(  1+e^{\beta_{i,1} +\beta_{j,1} }\right)\sum_{t=1}^n\left( 1-X_{i,j}^t\right) \left( 1-X_{i,j}^{t-1}\right)\\
&+  \log\big( 1+e^{\beta_{i,1} +\beta_{j,1}-\beta_{i,0} -\beta_{j,0 } }\big)\sum_{t=1}^n X_{i,j}^t X_{i,j}^{t-1} \Bigg\}.
\nonumber
\end{align}
For brevity, write
\begin{equation}\label{abc}
a_{i,j}  =  \sum_{t=1}^{n} X_{i,j}^{t}, \quad
b_{i,j}=  \sum_{t=1}^{n} X_{i,j}^{t}X_{i,j}^{t-1}  , \quad d_{i,j} =  \sum_{t=1}^{n} \Big( 1-X_{i,j}^{t}\Big) \Big( 1-X_{i,j}^{t-1}\Big).
\end{equation}
Then the Hessian matrix of $l(\btheta)$ is of the form 
\[ \bV(\btheta)= \frac{\partial^2{l(\btheta)}}{\partial{\btheta}\partial{\btheta^\top}}=
\begin{bmatrix} 
\frac{\partial^2{l(\btheta)}}{\partial{\bbeta_0}\partial{\bbeta_0^\top}} & \frac{\partial^2{l(\btheta)}}{\partial{\bbeta_0}\partial{\bbeta_1^\top}} \\
\frac{\partial^2{l(\btheta)}}{\partial{\bbeta_1}\partial{\bbeta_0^\top}} &\frac{\partial^2{l(\btheta)}}{\partial{\bbeta_1}\partial{\bbeta_1^\top}} 
\end{bmatrix} := \begin{bmatrix} 
\bV_{1}(\btheta) & \bV_{2}(\btheta)  \\
\bV_{2}(\btheta) & \bV_{3} (\btheta)
\end{bmatrix},
\]
where for $i\not= j$, 
\begin{eqnarray*}
\frac{\partial^2{l(\btheta)}}{\partial{\beta_{i,0}}\partial{\beta_{j,0}}} &=& \frac{1}{p}\frac{e^{ \beta_{i,0} +\beta_{j,0}}(1+ e^{\beta_{i,1} +\beta_{j,1}})} {(1+e^{\beta_{i,0} +\beta_{j,0}}+e^{\beta_{i,1} +\beta_{j,1}})^2} -\frac{1}{np}b_{i,j}  \frac{e^{\beta_{i,0} +\beta_{j,0}+\beta_{i,1} +\beta_{j,1}}} {(e^{\beta_{i,0} +\beta_{j,0}}+e^{\beta_{i,1} +\beta_{j,1}})^2}, 
\\ \frac{\partial^2{l(\btheta)}}{\partial{\beta_{i,0}}\partial{\beta_{j,1}}} &=& -\frac{1}{p}\frac{e^{\beta_{i,0} +\beta_{j,0} +\beta_{i,1} +\beta_{j,1}}} {(1+e^{\beta_{i,0} +\beta_{j,0}}+e^{\beta_{i,1} +\beta_{j,1}})^2} +\frac{1}{np}b_{i,j} \frac{e^{\beta_{i,0} +\beta_{j,0}+\beta_{i,1} +\beta_{j,1}}} {(e^{\beta_{i,0} +\beta_{j,0}}+e^{\beta_{i,1} +\beta_{j,1}})^2}, \\
	\frac{\partial^2{l(\btheta)}}{\partial{\beta_{i,1}}\partial{\beta_{j,1}}} 
	&=&\frac{1}{p}\frac{e^{ \beta_{i,1} +\beta_{j,1}}(1+ e^{\beta_{i,0} +\beta_{j,0}})} {(1+e^{\beta_{i,0} +\beta_{j,0}}+e^{\beta_{i,1} +\beta_{j,1}})^2}  - \frac{1}{np} d_{i,j} \frac{e^{\beta_{i,1} +\beta_{j,1}}}{(1+e^{\beta_{i,1} +\beta_{j,1}})^2}\\&&-\frac{1}{np}b_{i,j} \frac{e^{\beta_{i,0} +\beta_{j,0}+\beta_{i,1} +\beta_{j,1}}} {(e^{\beta_{i,0} +\beta_{j,0}}+e^{\beta_{i,1} +\beta_{j,1}})^2}.
\end{eqnarray*}
Note that matrix $\bV_{2}(\btheta)$ is symmetric.
Furthermore, the three matrices $\bV_{1}(\btheta), \bV_{2}(\btheta)$ and $\bV_{3}(\btheta)$ are diagonally balanced \citep{hillar2012inverses} in the sense that their diagonal elements are the sums of their respective rows, namely, 
\[(\bV_k(\btheta))_{i,i}=\sum_{j=1,\:j \neq i}^{p} (\bV_k(\btheta))_{i,j}, \quad k=1,2,3.\] 

Unfortunately the Hessian matrix $\bV(\btheta)$ is not uniformly positive-definite. Hence  
$l(\btheta)$ is not convex; 
{  see Section \ref{Sim.H} for an example.  }
Therefore, finding the global MLE by minimizing $l(\btheta)$ would be infeasible, especially given the large dimensionality of $\btheta$.
To overcome the obstacle, we propose the following roadmap  to search for the local MLE over a neighbourhood of the true parameter values $\btheta^*$.
\begin{itemize}
	\item[(1)] First we show that $l(\btheta)$ is locally convex in a neighbourhood of 
	$\btheta^*$  (see Theorem \ref{thm1} below). Towards this end, we first prove that $\E(\bV(\btheta))$ is positive definite in a neighborhood of $\btheta^*$. Leveraging on some newly proved concentration results, we show that $\bV(\btheta)$ converges to $\E(\bV(\btheta))$ uniformly over the neighborhood.
	\item[(2)]  Denote by $\wh \btheta$  the local MLE in the neighbourhood identified above. We derive
	the bounds for $\wh\btheta-\btheta^*$ respectively in both $\ell_2$ and $\ell_\infty$ norms (see Theorems \ref{thm2} and \ref{thm3} below). The $\ell_2$ convergence is established by providing a uniform upper bound for the local deviation {  between} $l(\btheta)-\E(l(\btheta))$ {  and} $l(\btheta^*)-\E(l(\btheta^*))$ (see Corollary \ref{Cor1} in Section \ref{S.4}). The $\ell_\infty$ convergence of $\hat\btheta$ is established by further exploiting the special structure of the objective function.

	\item[(3)]  
	We propose a new method of moments estimator (MME) which is proved to lie asymptotically in the neighbourhood specified in (1) above. With this MME as the initial value, the local MLE $\wh \btheta$ can be simply obtained via a gradient decent algorithm.
\end{itemize} 	
The main technical challenges in the roadmap above can be summarized as follows. 

Firstly,	to establish the upper bounds as stated in (2) above,
we need to evaluate the uniform local deviations of the  loss function. While the theoretical framework for deriving similar deviations of  M-estimators has been well established in, for example, \cite{van1996weak, van2000asymptotic}, classical techniques in empirical process for establishing uniform laws \cite{wainwright2019high} are not applicable because the number of parameters in TWHM diverges. 



Secondly, 
for the classical $\beta$-model, proving the existence and convergence of its MLE relies strongly on the interior point theorem \citep{gragg1974optimal}. In particular, this theorem is applicable only because the 
Hessian matrix of the $\beta$-model
admits a nice structure, i.e. it is diagonally dominant and all its elements are {positive}  depending on the parameters only \citep{chatterjee2011,yan2013central,yan2016asymptotics,han2020bivariate}. However the Hessian matrix of $l(\btheta)$ for TWHM depends on random variables $X_{i,j}^t$'s in addition to the parameters, making it impossible to verify if the score function is uniformly Fr\'{e}chet differentiable or not,  a key  assumption required by the  interior point theorem. 

Lastly, the higher order derivatives of $l(\btheta)$ { may} diverge as the order 
increases.  To see this,  notice that for any integer $k$, the $k$-th order derivatives of $l(\btheta)$ is closely related to the $(k-1)$-th order derivatives of the Sigmoid function $S(x)=\frac{1}{1+e^{-x}}$ in that
$
\frac{\partial^k S\left( x\right) }{\partial x^{k} }=\frac{\sum_{m=0}^{k-2} -A\left(k-1,m \right) \left(-e^{x} \right)^{m+1}  }{\left(1+e^{x} \right)^{k} } 
$, 
where $A\left(k-1,m \right)$ is the Eulerian number  \citep{minai1993derivatives}. Some of the coefficients $A\left(k-1,m \right)$ can diverge very quickly as $k$ increases. Thus, loosely speaking, $l(\btheta)$ is not smooth.
{{This non-smoothness}} and the need to deal with a growing number of parameters  make various local approximations based on Taylor expansions highly non-trivial; noting that the consistency of MLEs in many finite-dimensional models is often established via these approximations.

In our proofs, we have made great use of the special sparse structure of the loss function in  the form \eqref{Lfunction} below.  This sparsity structure stems from the fact that most of its higher order derivatives are zero.  
Based on the uniform  local deviation bound obtained in Section \ref{S.4}, we have established an upper bound for the error of the local MLE under the $l_2$ norm. Utilizing the structure of the marginalized functions of the loss we have further established an upper bound for the estimation error under the $l_\infty$ norm thanks to an iterative procedure stated in Section \ref{S3.3}.

\subsection{Existence of  the local MLE}\label{S3.2}
To establish the convexity of $l(\btheta)$ in a  neighborhood of $\btheta^*$,
we first show that such a local convexity holds
for  $\E(\bV(\btheta))$.

\begin{prop}\label{prop2}
	Let $\bA$ be a $2p\times2p$ matrix
	defined as
	$
	\bA = \left[
	\begin{array}{ccc}
		\bA_{1} & \bA_{2}  \\
		\bA_{2} & \bA_{3}
	\end{array}
	\right],
	$
	where $\bA_1$, $\bA_2$, $\bA_3$ are $p\times p$ symmetric matrices.
	Then $\bA$ is positive (negative) definite if  $-\bA_2,\bA_2+\bA_3,\bA_2+\bA_1$ are all positive (negative) definite.
\end{prop}
\begin{proof}
	Consider any nonzero $\bx = (\bx_1^\top  ,\bx_2^\top  )^\top  \in \mathbb{R}^{2p} $ where $\bx_1,\bx_2\in \mathbb{R}^{p}$, we have:
	\begin{eqnarray*}
		\bx^{T}\bA \bx&=& \bx_1^\top  \bA_{1} \bx_1 + \bx_2^\top  \bA_{3} \bx_2 + 2\bx_1^\top  \bA_{2} \bx_2  \\
		&=& \bx_1^\top   (\bA_{1} +\bA_{2}) \bx_1 +\bx_2^\top   (\bA_{3} +\bA_{2}) \bx_2 - (\bx_1-\bx_2)^\top   \bA_{2} (\bx_1-\bx_2).
	\end{eqnarray*}
	This proves the proposition. 
\end{proof}
Noting that $-\bV_2(\btheta), \bV_2(\btheta)+\bV_3(\btheta) $ and $ \bV_2(\btheta)+\bV_1(\btheta) $ are all diagonally balanced matrices,  with some  routine  calculations it can be shown that  $-\E\bV_2(\btheta^*),\E(\bV_2(\btheta^*)+\bV_3(\btheta^*))$ and $\E(\bV_2(\btheta^*)+\bV_1(\btheta^*))$ have only positive elements, and thus are all positive definite. Therefore, $\E\bV(\btheta^*)$ is positive definite by Proposition \ref{prop2}. By continuity,  when $\btheta$ is close enough to $\btheta^*$, 
$\E\bV(\btheta)$ is also positive definite,  and hence 
$\E l(\btheta)$ is   strongly convex in a neighborhood  of  $\btheta^*$. Next we want to show the local convexity of $l(\btheta)$ whose  second order derivatives  depend on the sufficient statistics   
$ b_{i,j}  =  \sum_{t=1}^{n} X_{i,j}^{t}X_{i,j}^{t-1}  $, and $d_{i,j} =  \sum_{t=1}^{n} \Big( 1-X_{i,j}^{t}\Big) \Big( 1-X_{i,j}^{t-1}\Big)  $. By noticing that the network process is $\alpha$-mixing with an exponential decaying mixing coefficient,
we first obtain the following concentration results for  $ b_{i,j} $ and $d_{i,j}$, which ensure element-wise convergence of $\bV(\btheta)$ to $\E\bV(\btheta)$ for a given $\btheta$ when $np\rightarrow \infty$. 

\begin{lem}\label{mixing}
	Suppose $\{\bX^t\} \sim P_{\btheta}$ for some ${\btheta} = (\beta_{1,0},\cdots,\beta_{p,0},\beta_{1,1},\cdots,\beta_{p,1})^\top$ satisfying condition (A1). Then    for any $(i,j)\in {\cal J}$,
	$\{X^t_{i,j}, t\ge 1\}$ is $\alpha$-mixing with exponential decaying rates. 
	Moreover, for any
	positive constant $c>0$, by choosing $c_1>0$ to be large enough, it holds with probability greater than $1-(np)^{-c}$ that
	\[ \max_{1\le i<j\le p}\left\{ n^{-1}\left|\sum_{t=1}^{n}\left\{ X_{i,j}^{t}- \E\left(X_{i,j}^{t} \right) \right\}\right|, n^{-1}\left|b_{i,j}-\E(b_{i,j}) \right|, n^{-1}\left|d_{i,j}-\E(d_{i,j})\right|\right\} \le c_1   r_{n,p},\]
	where
	$r_{n,p}= \sqrt{n^{-1}\log(np)}+ n^{-1}\log\left(n\right)\log\log\left(n\right)\log\left(np\right)$.
\end{lem}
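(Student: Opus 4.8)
The plan is to treat the two assertions of Lemma \ref{mixing} separately: first establish geometric $\alpha$-mixing edge by edge, and then feed a \emph{uniform} mixing rate into a Bernstein-type concentration inequality for dependent sequences.

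For the mixing claim, the starting observation is that for each fixed $(i,j)\in\calJ$ the scalar process $\{X^t_{i,j},\,t\ge 0\}$ is, by \eqref{b1} and \eqref{transP}, a stationary two-state (binary) Markov chain, and these chains are mutually independent across $(i,j)$. A stationary, irreducible, aperiodic finite-state Markov chain is geometrically $\alpha$-mixing, with rate governed by the modulus of the second-largest eigenvalue of its transition matrix. For the $2\times2$ matrix read off from \eqref{transP} this eigenvalue equals $\lambda_{i,j}:=e^{\beta_{i,1}+\beta_{j,1}}/(1+e^{\beta_{i,0}+\beta_{j,0}}+e^{\beta_{i,1}+\beta_{j,1}})$, which is exactly the ACF base appearing in \eqref{b9}. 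Condition (A1) guarantees $\sup_{(i,j)\in\calJ}\lambda_{i,j}\le\rho_0$ for some $\rho_0<1$, so the mixing coefficients obey $\alpha_{i,j}(k)\le C\rho_0^{\,k}$ with $C$ and $\rho_0$ independent of $(i,j)$. This uniformity is what makes the later union bound viable.

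For the concentration claim, I would note that each of the three centred sums has the form $\sum_{t=1}^n\{f(X^t_{i,j},X^{t-1}_{i,j})-\E f(X^t_{i,j},X^{t-1}_{i,j})\}$ for a function $f$ valued in $[0,1]$ (namely $f(x,y)=x$, $f(x,y)=xy$, and $f(x,y)=(1-x)(1-y)$). The bivariate chain $\{(X^t_{i,j},X^{t-1}_{i,j})\}$ inherits geometric $\alpha$-mixing from $\{X^t_{i,j}\}$ with the same rate up to a unit lag shift, so $\{f(X^t_{i,j},X^{t-1}_{i,j})\}$ is a bounded, stationary, geometrically $\alpha$-mixing sequence whose long-run variance satisfies $V_{i,j}\le\var(f)+2\sum_{k\ge1}|\cov(f_0,f_k)|\le V_0$, a universal constant; here the covariance sum is controlled by Davydov's covariance inequality together with $\alpha_{i,j}(k)\le C\rho_0^{\,k}$. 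Both the envelope $M=1$ and the variance proxy $V_0$ are thus uniform in $(i,j)$. I would then apply a Bernstein-type inequality for geometrically $\alpha$-mixing sequences (of Merlev\`ede--Peligrad--Rio type), which yields, for a deviation level $x>0$,
\[
P\Bigl(\bigl|\sum_{t=1}^{n}\{f-\E f\}\bigr|\ge x\Bigr)\le \exp\Bigl(-\frac{C_0\,x^2}{nV_0+x\log(n)\log\log(n)}\Bigr),
\]
with $C_0$ depending only on $\rho_0$. Setting $x=c_1\,n\,r_{n,p}$ and distinguishing the two cases according to which of $nV_0$ and $x\log(n)\log\log(n)$ dominates the denominator, one checks that the right-hand side is at most $(np)^{-(c+2)}$ once $c_1$ is large enough (depending on $c$, $C_0$, $V_0$): in the first case the requirement reduces to $x\gtrsim\sqrt{n\log(np)}$ and in the second to $x\gtrsim\log(n)\log\log(n)\log(np)$, both guaranteed by the definition of $r_{n,p}$. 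A union bound over the three statistics and the fewer than $(np)^2$ pairs in $\calJ$ then gives total failure probability at most $(np)^{-c}$, as claimed.

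The main obstacle I anticipate is twofold, and both parts concern uniformity rather than any single estimate: first, extracting from condition (A1) a single geometric mixing rate $\rho_0<1$ valid simultaneously for \emph{all} edges, so that the Bernstein constant $C_0$ and the variance bound $V_0$ do not depend on $(i,j)$; and second, matching the precise two-term rate $r_{n,p}$, which pins down the use of the Merlev\`ede--Peligrad--Rio inequality and requires careful bookkeeping of the crossover between its Gaussian and exponential tails so that both contributions fall below the target $(np)^{-(c+2)}$ before the union bound is applied.
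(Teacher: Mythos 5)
Your proposal is correct and follows essentially the same route as the paper: geometric $\alpha$-mixing of each edge process (the paper obtains this by citing Proposition 3 of \cite{jiang2020autoregressive} together with the hereditary property of mixing, whereas you derive it more self-containedly from the spectral gap of the $2\times 2$ transition matrix in \eqref{transP}, whose second eigenvalue is indeed the ACF base in \eqref{b9}), followed by the Merlev\`ede--Peligrad--Rio Bernstein inequality with exactly the paper's deviation level $\epsilon = c_1\sqrt{n\log(np)}+c_1\log(n)\log\log(n)\log(np)$ and a union bound over the three statistics and all pairs. The uniformity issue you flag as the main obstacle --- extracting a single mixing rate $\rho_0<1$, hence a single Bernstein constant, valid for all $(i,j)$ from condition (A1) --- is genuinely the delicate point here; the paper handles it implicitly by asserting one constant $C$ for all pairs, so your explicit treatment is, if anything, the more careful version of the same argument.
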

The following lemma provides a lower bound  for the smallest eigenvalue of $\E(\bV(\btheta))$.

\begin{lem}\label{bound_l2}
	{\color{black}
		Let $\{\bX^t\} \sim P_{\btheta^*}$, $\bB_{\infty}\left( \btheta^*,r \right): = \left\{ \btheta: \|\btheta-\btheta^*\|_{\infty}\le r \right\}$ and $\bB\left( \kappa_{0},\kappa_{1} \right) := $ $\Big\{ \left(\bbeta_{0}, \bbeta_{1}\right) : \|\bbeta_{0}\|_{\infty}\le \kappa_{0},\|\bbeta_{1}\|_{\infty}\le \kappa_{1} \Big\}$.
		Under condition (A1), for any $\kappa_{0},\kappa_{1}$ and $r=c_r e^{-4\kappa_{0}-4\kappa_{1}}$ with $c_r>0$ being a small enough constant,  
	}
	there exists a constant $C>0$ such that
	\begin{equation*} 
		\inf_{\btheta \in \bB_{\infty}\left( \btheta^*,r \right)\cap\bB\left( \kappa_{0},\kappa_{1} \right);\|\ba\|_{2}=1}\ba^{\top}\E\left( \bV(\btheta)\right) \ba\geq  C e^{-4\kappa_{0}-4\kappa_{1}}.
	\end{equation*}
\end{lem}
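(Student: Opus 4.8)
The plan is to combine the block identity from the proof of Proposition~\ref{prop2} with the diagonally balanced structure of the expected Hessian, reducing the problem to an elementary lower bound for quadratic forms of diagonally balanced matrices with positive entries. First I would take any $\ba$ with $\|\ba\|_{2}=1$, write $\ba=(\bx_1^\top,\bx_2^\top)^\top$ with $\|\bx_1\|_{2}^2+\|\bx_2\|_{2}^2=1$, and apply the identity established in the proof of Proposition~\ref{prop2} to $\bA=\E(\bV(\btheta))$, which gives (all blocks evaluated at $\btheta$)
\begin{equation*}
\ba^\top\E(\bV(\btheta))\ba=\bx_1^\top\bigl(\E\bV_1+\E\bV_2\bigr)\bx_1+\bx_2^\top\bigl(\E\bV_3+\E\bV_2\bigr)\bx_2+(\bx_1-\bx_2)^\top\bigl(-\E\bV_2\bigr)(\bx_1-\bx_2).
\end{equation*}
By the discussion following Proposition~\ref{prop2}, each of the three matrices $\E\bV_1+\E\bV_2$, $\E\bV_3+\E\bV_2$ and $-\E\bV_2$ is symmetric and diagonally balanced. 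For any symmetric diagonally balanced $\bM=(M_{k,l})$ one has the identity $\bz^\top\bM\bz=\sum_{k<l}M_{k,l}(z_k+z_l)^2$, so if its off-diagonal entries are all at least some $\mu>0$ then $\bz^\top\bM\bz\ge\mu\sum_{k<l}(z_k+z_l)^2$.

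The key quantitative input is therefore a uniform lower bound $\mu$ on the off-diagonal entries of these three matrices over the whole region. The discussion following Proposition~\ref{prop2} already gives strict positivity at $\btheta^*$; I would sharpen this to a magnitude. Using the explicit forms of $\E_{\btheta^*}(b_{i,j})$ and $\E_{\btheta^*}(d_{i,j})$, condition~(A1), and the box constraint $\btheta\in\bB(\kappa_{0},\kappa_{1})$ (which forces $e^{\beta_{i,r}+\beta_{j,r}}\in[e^{-2\kappa_{r}},e^{2\kappa_{r}}]$ for $r\in\{0,1\}$), a routine computation bounds every such off-diagonal entry below by $\mu:=c\,p^{-1}e^{-4\kappa_{0}-4\kappa_{1}}$ for a small constant $c>0$. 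The choice $r=c_re^{-4\kappa_{0}-4\kappa_{1}}$ is precisely calibrated so that a Lipschitz/continuity argument in $\btheta$ transfers these entrywise lower bounds, up to a constant, from $\btheta^*$ to the entire set $\bB_{\infty}(\btheta^*,r)\cap\bB(\kappa_{0},\kappa_{1})$.

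Granting $\mu$, the conclusion follows from elementary algebra. Writing $\by=\bx_1-\bx_2$ and using $\sum_{k<l}(z_k+z_l)^2=(p-2)\|\bz\|_{2}^2+(\mathbf{1}^\top\bz)^2\ge(p-2)\|\bz\|_{2}^2$ for each of the three terms,
\begin{equation*}
\ba^\top\E(\bV(\btheta))\ba\ge\mu(p-2)\bigl(\|\bx_1\|_{2}^2+\|\bx_2\|_{2}^2+\|\by\|_{2}^2\bigr)\ge\mu(p-2),
\end{equation*}
since $\|\bx_1\|_{2}^2+\|\bx_2\|_{2}^2=1$ and $\|\by\|_{2}^2\ge0$. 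For $p\ge4$ we have $(p-2)/p\ge1/2$, hence $\mu(p-2)\ge\tfrac{c}{2}e^{-4\kappa_{0}-4\kappa_{1}}$, which is the claimed bound with $C=c/2$.

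I expect the main obstacle to be the uniform entrywise lower bound $\mu$: one must establish not just positivity of the off-diagonal entries at $\btheta^*$ but its persistence across the neighborhood, while tracking how the magnitude degrades exactly to order $e^{-4\kappa_{0}-4\kappa_{1}}$ as $\kappa_{0},\kappa_{1}$ grow. This requires carefully controlling the interplay between the $\E_{\btheta^*}$-terms (which are fixed) and the smooth parts evaluated at $\btheta$, and it is exactly this control that dictates the admissible radius $r=c_re^{-4\kappa_{0}-4\kappa_{1}}$.
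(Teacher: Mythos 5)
Your proposal follows essentially the same route as the paper's proof: the block identity of Proposition \ref{prop2}, the diagonally balanced quadratic-form identity $\bz^\top\bM\bz=\sum_{k<l}M_{k,l}(z_k+z_l)^2$, uniform entrywise lower bounds of order $p^{-1}e^{-4\kappa_{0}-4\kappa_{1}}$ obtained by playing the $\btheta^*$-expectations $\E(b_{i,j})$, $\E(d_{i,j})$ off against the smooth factors evaluated at $\btheta$ over the radius-$r$ neighbourhood, and the final eigenvalue bound via $(p-2)\bI_p+\mathbf{1}_p\mathbf{1}_p^\top$. The one inaccuracy is your claim that $-\E\bV_2(\btheta)$ also satisfies the common bound $\mu=c\,p^{-1}e^{-4\kappa_{0}-4\kappa_{1}}$ — the paper's computation only yields $C_1p^{-1}e^{-6\kappa_{0}-4\kappa_{1}}$ for that block — but this is harmless because, exactly as in the paper, that term multiplies a square and only its nonnegativity is used before it is dropped.
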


{\color{black}Examining the proof indicates that the lower bound in Lemma \ref{bound_l2} is attained when  $\bbeta_0= (\kappa_{0},\ldots, \kappa_{0})^\top$ and $\bbeta_{1}=(-\kappa_{1},\ldots, -\kappa_{1})^\top$. 
}
Hence the smallest eigenvalue of $\E\left( \bV(\btheta)\right)$ can decay exponentially in $\kappa_0$ and $\kappa_1$. Consequently, an upper bound for the radius $\kappa_0$ and $\kappa_1$ must be imposed so as to ensure the positive definiteness of the sample analog $\bV(\btheta)$. Moreover, Lemma \ref{bound_l2} also indicates that the positive definiteness of   $\E\left( \bV(\btheta)\right)$ can  be guaranteed  when $\btheta$ is within the $\ell_\infty$ ball $\bB_{\infty}\left( \btheta^*,r \right)$.		To establish the existence  of the local MLE in the neighborhood, we need to evaluate the closeness of $\E\left( \bV(\btheta)\right)$ and   	$\bV(\btheta)$ in terms of the operator norm. Intuitively, for some appropriately chosen {\color{black} $\kappa_{0},\kappa_{1}$, if  $\|\E\left( \bV(\btheta)\right) -\bV(\btheta)\|_2$  has a  smaller order than $e^{-4\kappa_{0}-4\kappa_{1}}$ uniformly over the parameter space   $\{ \btheta: \|\bbeta_{0}\|_{\infty}\le \kappa_{0},\|\bbeta_{1}\|_{\infty}\le \kappa_{1}$ and $\btheta \in \bB_{\infty}\left( \btheta^*,r \right)\}$}, the positive definiteness of $\bV(\btheta)$  can be concluded.

Note that $\bV_2(\btheta)-\E\bV_2(\btheta), \bV_2(\btheta)+\bV_3(\btheta)-\E\left( \bV_2(\btheta)+\bV_3(\btheta)\right)$ and $\bV_2(\btheta)+\bV_1(\btheta) -\E\left(\bV_2(\btheta)+\bV_1(\btheta) \right)$ are all centered and diagonally balanced matrices 
{  which} can be decomposed into  sums of independent random matrices. 
The following lemma provides a bound for evaluating the moderate deviations of these centered matrices. 

\begin{lem}\label{BersM}
	Let $\bZ=(Z_{i,j})_{1\le i,j\le p}$ be a symmetric $p\times p$ random matrix such that the off-diagonal elements $Z_{i,j}, 1\le  i< j\le p$ are independent of each other and satisfy
	\begin{equation*}
		Z_{i,i}= \sum_{j=1,\:j \neq i}^{n} Z_{i,j} ,\quad  \E\left( Z_{i,j} \right)=0,\quad  \var\left( Z_{i,j} \right) \leq \sigma^2, \quad {\rm and} \quad  Z_{i,j} \leq b \quad {\rm almost~~ surely}.
	\end{equation*}
	Then it holds that
	\begin{equation*}
		P\left( \left\| \bZ \right\|_{2} > \epsilon\right) \leq 2p \ \exp\left( -\frac{\epsilon^2}{ 2\sigma^2(p-1) + 4b\epsilon } \right).
	\end{equation*}
\end{lem}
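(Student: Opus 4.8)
The plan is to recognize $\bZ$ as a signless-Laplacian-type matrix and to write it as a sum of independent \emph{rank-one} random matrices indexed by the edges, after which I would apply the matrix Laplace transform method. Setting $\bu_{ij}=\be_i+\be_j$, the two defining properties---symmetry together with the diagonal balance $Z_{i,i}=\sum_{j\ne i}Z_{i,j}$---are exactly equivalent to
\[
\bZ=\sum_{1\le i<j\le p} Z_{i,j}\,\bu_{ij}\bu_{ij}^\top ,
\]
because the $(i,j)$ off-diagonal entry of the right-hand side is $Z_{i,j}$ while its $(i,i)$ diagonal entry is $\sum_{j\ne i}Z_{i,j}$. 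The atoms $\bM_{ij}:=Z_{i,j}\bu_{ij}\bu_{ij}^\top$ are independent, mean-zero, Hermitian, so I would bound $P(\lambda_{\max}(\bZ)\ge\epsilon)\le e^{-\lambda\epsilon}\,\E\tr e^{\lambda\bZ}$ for $\lambda>0$ and use subadditivity of the matrix cumulant generating function, $\E\tr e^{\lambda\bZ}\le \tr\exp\!\big(\sum_{i<j}\log\E e^{\lambda\bM_{ij}}\big)$.

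The next step is to control each per-edge factor. Since $\bu_{ij}\bu_{ij}^\top$ is rank one with $\|\bu_{ij}\|_2^2=2$, the rank-one exponential formula gives $\E e^{\lambda\bM_{ij}}=\bI+\tfrac12\big(\E e^{2\lambda Z_{i,j}}-1\big)\bu_{ij}\bu_{ij}^\top$, so by operator monotonicity of the logarithm $\log\E e^{\lambda\bM_{ij}}\preceq \tfrac12\big(\log\E e^{2\lambda Z_{i,j}}\big)\bu_{ij}\bu_{ij}^\top$. For the scalar factor I would feed in the standard Bennett/Bernstein moment-generating-function bound: using $\E Z_{i,j}=0$, $\var(Z_{i,j})\le\sigma^2$, $Z_{i,j}\le b$, and the fact that $u\mapsto (e^u-1-u)/u^2$ is increasing, one gets $\log\E e^{2\lambda Z_{i,j}}\le g(\lambda):=\tfrac{\sigma^2}{b^2}\big(e^{2\lambda b}-1-2\lambda b\big)$, whence $\log\E e^{\lambda\bM_{ij}}\preceq \tfrac12 g(\lambda)\,\bu_{ij}\bu_{ij}^\top$.

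Summing over edges reduces everything to the spectrum of $\sum_{1\le i<j\le p}\bu_{ij}\bu_{ij}^\top=(p-2)\bI+\mathbf{1}\mathbf{1}^\top$ (with $\mathbf{1}$ the all-ones vector), whose largest eigenvalue is $2(p-1)$. Monotonicity of $\tr\exp$ under the Loewner order then yields $\E\tr e^{\lambda\bZ}\le \tr\exp\!\big(\tfrac12 g(\lambda)[(p-2)\bI+\mathbf{1}\mathbf{1}^\top]\big)\le p\,e^{(p-1)g(\lambda)}$, and therefore $P(\lambda_{\max}(\bZ)\ge\epsilon)\le p\,\exp\big(-\lambda\epsilon+(p-1)g(\lambda)\big)$. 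Optimizing over $\lambda>0$ is the familiar Bennett calculus: the minimizer solves $e^{2\lambda b}=1+\epsilon b/\{2(p-1)\sigma^2\}$, and the elementary inequality $(1+x)\log(1+x)-x\ge x^2/\{2(1+x/3)\}$ converts the Bennett exponent into a Bernstein exponent of the advertised shape $\exp\big(-\epsilon^2/\{c_1\sigma^2(p-1)+c_2 b\epsilon\}\big)$, with the constants pinned down by the optimization. Finally I would run the identical argument for $-\bZ$ to control $-\lambda_{\min}(\bZ)=\lambda_{\max}(-\bZ)$; since $\|\bZ\|_2=\max\big(\lambda_{\max}(\bZ),-\lambda_{\min}(\bZ)\big)$, a union bound over the two tails produces the prefactor $2p$.

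The main obstacle is the diverging dimension: a naive union bound or an $\epsilon$-net over the sphere would contribute a factor of order $e^{cp}$ and destroy the estimate, so it is essential to use the trace-exponential method, which collapses the union bound into the single trace factor $p$. Two further points require care. First, the diagonal-balance constraint forces the rank-one atoms $\bu_{ij}\bu_{ij}^\top$ to have squared norm $2$, and it is precisely this factor that propagates into the variance proxy and fixes the constants in the denominator, so it must be tracked honestly. Second, the hypothesis as stated bounds $Z_{i,j}$ only from above, so the lower-tail step for $-\bZ$ cannot reuse the Bennett bound verbatim; there one leans instead on the variance bound, which is legitimate in the applications where the relevant entries (the centered statistics $b_{i,j},d_{i,j}$ and their analogues) are in fact two-sided bounded.
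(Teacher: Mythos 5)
Your proposal is correct in substance and rests on exactly the same decomposition as the paper: your rank-one atoms $Z_{i,j}\bu_{ij}\bu_{ij}^\top$ with $\bu_{ij}=\be_i+\be_j$ are precisely the paper's matrices $\bY_{i,j}$ (whose only nonzero entries are the $(i,i)$, $(j,j)$, $(i,j)$, $(j,i)$ ones, all equal to $Z_{i,j}$). Where you diverge is in what happens after the decomposition: the paper simply computes the two proxies $\|\bY_{i,j}\|_2\le 2b$ and $\big\|\sum_{i<j}\var(\bY_{i,j})\big\|_2\le 2\sigma^2\big\|(p-2)\bI+\mathbf{1}\mathbf{1}^\top\big\|_2=4\sigma^2(p-1)$ and then invokes the matrix Bernstein inequality (Wainwright, Theorem 6.17) as a black box, whereas you unroll the proof of that inequality: Laplace transform, Lieb--Tropp subadditivity, the rank-one identity $\log\E e^{\lambda\bM_{ij}}=\tfrac12\big(\log\E e^{2\lambda Z_{i,j}}\big)\bu_{ij}\bu_{ij}^\top$ (your $\preceq$ is in fact an equality, since $\tfrac12\bu_{ij}\bu_{ij}^\top$ is a projection), the scalar Bennett bound, and the spectrum of $(p-2)\bI+\mathbf{1}\mathbf{1}^\top$. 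The paper's route is two lines once the proxies are computed; yours is self-contained, slightly sharper in the $b\epsilon$ term (your Bennett optimization yields a denominator of roughly $8\sigma^2(p-1)+\tfrac43 b\epsilon$), and it makes explicit that the upper tail of $\lambda_{\max}(\bZ)$ requires only the one-sided bound $Z_{i,j}\le b$.

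Two of your side remarks expose genuine, though harmless, defects in the lemma itself. First, the constants: neither your derivation nor the paper's reproduces the stated denominator $2\sigma^2(p-1)+4b\epsilon$; the paper's own proof ends with $4\sigma^2(p-1)+4b\epsilon$, and a faithful application of the cited theorem with variance proxy $4\sigma^2(p-1)$ and norm bound $2b$ gives a still larger variance term. This mismatch is immaterial downstream, where only the shape $\exp\big(-c\,\epsilon^2/\{\sigma^2 p+b\epsilon\}\big)$ is ever used. Second, your point about one-sidedness is exactly right: with only $Z_{i,j}\le b$, the lower tail $\lambda_{\max}(-\bZ)$ admits no exponential control (already for $p=2$, an entry with bounded variance but a heavy left tail makes $P(\|\bZ\|_2>\epsilon)$ decay only polynomially), so the two-sided operator-norm statement genuinely requires $|Z_{i,j}|\le b$; the paper's proof uses this silently when asserting $\|\bY_{i,j}\|_2\le 2b$. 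In the paper's applications the entries are centered bounded statistics, so the two-sided bound holds and nothing breaks.
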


Proposition \ref{prop2}, Lemma \ref{bound_l2} and Lemma \ref{BersM} imply the theorem below.

\begin{thm}\label{thm1}
	Let condition (A1) hold, assume $\{\bX^t\} \sim P_{\btheta^*}$, 
	{\color{black} and $\kappa_{r} :=\|\bbeta_{r}^*\|_\infty$ where $r=0,1$ with $\kappa_{0}+\kappa_{1} \le c\log (np)$ for some small enough constant $c >0$.   Then as $np\rightarrow \infty$ with $n\ge 2$, we have that, with probability tending to one, there exists a unique MLE   in the $\ell_\infty$ ball $\bB_{\infty}\left( \btheta^*,r \right) = \left\{ \btheta: \|\btheta-\btheta^*\|_{\infty}\le r \right\}$ for some 	$r = c_r e^{-4\kappa_{0}-4\kappa_{1}}$, 	where $c_r>0$ is a constant.
	}	
\end{thm}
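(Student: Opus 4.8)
The plan is to reduce the existence-and-uniqueness claim to a single structural fact: that the normalized negative log-likelihood $l(\btheta)$ is strongly convex on the closed ball $\bB_{\infty}(\btheta^*,r)$ with probability tending to one. Once this is established the conclusion is immediate, since a continuous strongly convex function on a compact convex set attains its minimum at a unique point, and that minimizer is by definition the unique local MLE in the ball. Thus the whole content of the theorem is the uniform positive definiteness of the Hessian $\bV(\btheta)$ over $\bB_{\infty}(\btheta^*,r)$, and everything hinges on comparing $\bV(\btheta)$ with $\E(\bV(\btheta))$.

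First I would pin down the population Hessian. Applying Lemma \ref{bound_l2} with radii $\kappa_0+r$ and $\kappa_1+r$ (which only inflates the bound by the bounded factor $e^{-8r}\asymp 1$, as $r\le c_r$ is a small constant) yields $\ba^{\top}\E(\bV(\btheta))\ba\ge \mu_*:=Ce^{-4\kappa_0-4\kappa_1}$ uniformly for $\btheta\in\bB_{\infty}(\btheta^*,r)$ and $\|\ba\|_2=1$, i.e. $\E(\bV(\btheta))\succeq\mu_*\bI_{2p}$. It then suffices to keep the sample Hessian within $\mu_*/2$ of its mean in operator norm, uniformly over the ball. For this I would invoke Proposition \ref{prop2}: writing $\bV$ in its $2\times2$ block form, positive definiteness follows once $-\bV_2(\btheta)$, $\bV_2(\btheta)+\bV_3(\btheta)$ and $\bV_2(\btheta)+\bV_1(\btheta)$ are each positive definite. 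Each of these three matrices equals its mean plus a centered part that is symmetric, diagonally balanced, and has mutually independent off-diagonal entries (edges being independent across pairs), so Lemma \ref{BersM} controls its operator norm at a fixed $\btheta$.

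The crux, and the step I expect to be the main obstacle, is upgrading the fixed-$\btheta$ bound of Lemma \ref{BersM} to one that is uniform over the $2p$-dimensional ball while still beating the threshold $\mu_*\asymp e^{-4\kappa_0-4\kappa_1}$. A naive $\epsilon$-net over $\bB_{\infty}(\btheta^*,r)$ has cardinality $e^{\Theta(p)}$, whose union-bound cost the Bernstein exponent (whose effective variance is only of order $(np)^{-1}$) cannot absorb; this is precisely the breakdown of classical empirical-process arguments that the paper flags. The saving feature is a decoupling: the randomness in $\bV(\btheta)-\E(\bV(\btheta))$ enters only through the $\btheta$-free statistics $b_{i,j}-\E(b_{i,j})$ and $d_{i,j}-\E(d_{i,j})$, multiplied by coefficient functions of $\btheta$ that are uniformly bounded by $1/4$ (by AM-GM on the denominators) and Lipschitz with an $O(1)$ constant on the relevant range.

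I would therefore split the deviation into its value at $\btheta^*$, handled directly by Lemma \ref{BersM}, plus a perturbation term whose coefficients are of size $O(r)$. Here the identity $\ba^{\top}\bZ\ba=\sum_{i<j}Z_{i,j}(a_i+a_j)^2$, valid for any diagonally balanced symmetric $\bZ$, is the key tool: it expresses each quadratic form as a weighted sum of the independent centered statistics, so the perturbation can again be controlled by a Bernstein argument, but now on a much coarser net, the coarseness being afforded by the $O(r)$ coefficients. Assembling the pieces, the uniform operator-norm deviation is of order $\sqrt{\log(np)/(np)}$ up to lower-order terms, which is $o(e^{-4\kappa_0-4\kappa_1})$ exactly under the hypothesis $\kappa_0+\kappa_1\le c\log(np)$ for small enough $c$. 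Consequently $\bV(\btheta)\succeq(\mu_*/2)\bI_{2p}\succ0$ uniformly on the ball with probability tending to one, $l$ is strongly convex there, and the unique local MLE exists.
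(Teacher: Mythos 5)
Your proposal follows essentially the same route as the paper's proof: reduce existence and uniqueness to uniform positive definiteness of $\bV(\btheta)$ on $\bB_{\infty}(\btheta^*,r)$, lower-bound the population Hessian via Lemma \ref{bound_l2}, reduce to the three diagonally balanced blocks via Proposition \ref{prop2}, control the deviation at the fixed point $\btheta^*$ by Lemma \ref{BersM}, and handle uniformity over the ball by exactly the paper's decoupling device --- the randomness sits in the $\btheta$-free statistics $b_{i,j},d_{i,j}$, and the coefficient perturbation is deterministically $O(r)$ via the identity $\ba^{\top}\bZ\ba=\sum_{i<j}Z_{i,j}(a_i+a_j)^2$.

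Two small corrections to your assembly. First, no net (coarse or otherwise) is needed for the perturbation term: once $\max_{i<j}|b_{i,j}-\E(b_{i,j})|/n$ and $\max_{i<j}|d_{i,j}-\E(d_{i,j})|/n$ are controlled by Lemma \ref{mixing} (a union bound over only $O(p^2)$ pairs), the uniformity over $\btheta$ is purely deterministic, since the coefficient differences are bounded Lipschitz functions of $\btheta$ on the ball. Second, your final claim that the uniform operator-norm deviation is $o\bigl(e^{-4\kappa_0-4\kappa_1}\bigr)$ is not correct when $n$ is fixed (the theorem allows $n=2$): in that regime $\max_{i<j}|b_{i,j}-\E(b_{i,j})|/n$ is only $O(1)$, so the perturbation term is genuinely of order $r=c_r e^{-4\kappa_0-4\kappa_1}$, i.e.\ the same order as the population eigenvalue bound, not lower order. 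The argument still closes --- as in the paper --- because $c_r$ is at the prover's disposal and can be taken smaller than a fixed fraction of the ratio of the two absolute constants, which is consistent with your stated target of keeping the sample Hessian within half the population eigenvalue; only your ``lower-order terms'' bookkeeping needs this repair. (A cosmetic point: $\bV_1(\btheta)+\bV_2(\btheta)$ is deterministic and entrywise positive, so it needs no concentration at all.)
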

In the proof of Theorem \ref{thm1}, we have shown that with probability tending to 1, $l(\btheta)$ is convex  in the convex and closed set $\bB_{\infty}\left(\btheta^*,r \right)$. Consequently, we conclude that there exists a unique local MLE  in $\bB_{\infty}\left(\btheta^*,r \right)$.
From Theorem \ref{thm1} we can also see that when  $\kappa_{0}+\kappa_{1}$ becomes larger, the radius $r$ will be smaller, and when $\kappa_{0}+\kappa_{1}$ is bounded away from infinity, $r$ has a constant order. {\color{black} From the proof   we can also see that the constant $c_r$ can be larger if the smallest eigenvalue of the expected Hessian matrix $\E(\bV(\btheta))$ is larger.
}
{\color{black}		Further, by allowing the upper bound of $\|\bbeta^*_0\|_\infty$ to  grow to infinity, our theoretical analysis covers the case where networks are sparse.  Specifically, 
	under the condition that $\|\bbeta_0^*\|_\infty \le \kappa_0$, from \eqref{b8} we   can obtain the following  lower bound (which is achievable when $\beta_{1,0}^*=\ldots=\beta_{p,0}^*=-\kappa_0$) for the density of the stationary network: 
	\begin{equation*}
		\rho:=	\frac{2}{p(p-1)}\sum_{1\leq i< j\leq p}\bP\left(X_{i,j}^t=1 \right) \geq \frac{e^{-2\kappa_0}}{ 1+e^{-2\kappa_0}}=O\left(e^{-2\kappa_0} \right).
	\end{equation*}
	In particular, when $\kappa_0\le c\log (np)$ for some constant $c >0$, we have $\rho \geq  \frac{1}{ [1+(np)^{2c}]}$. 	}
Thus, compared to full dense network processes where the total number of edges for each network is of the order $p^2$, TWHM allows the networks with much fewer edges.

\subsection{Consistency of the local MLE}\label{S3.3}
In the previous subsection, we have proved that with probability tending to one, $l(\btheta)$ is  convex in $\bB_{\infty}\left( \btheta^*, r\right)$, where
{\color{black} $r=c_{r}e^{-4\kappa_{0}-4\kappa_{1}}$ 
}
is defined in Theorem \ref{thm1}.
Denote by $\hat\btheta$ the (local) MLE in $\bB_{\infty}\left( \btheta^*, r\right)$. We now evaluate the $\ell_2$ and $\ell_\infty$
distances between $\hat\btheta$ and the true value $\btheta^*$.

Based on Theorem \ref{thm5} we obtain a local deviation bound for $l(\btheta)$ as in Corollary \ref{Cor1} in Section \ref{S.4}, from which we  establish the following upper bound for the estimation error of $\hat{\btheta}$ under the $\ell_2$ norm:
\begin{thm}\label{thm2}
	Let condition (A1) hold, assume $\{\bX^t\} \sim P_{\btheta^*}$, 
	{\color{black} and $\kappa_{r} :=\|\bbeta_{r}^*\|_\infty$ where $r=0,1$ with $\kappa_{0}+\kappa_{1} \le c\log (np)$ for some small enough constant $c >0$.   Then as $np\rightarrow \infty$ with $n\ge 2$,   it holds with probability converging  to one that
		\begin{equation*}
			\frac{1}{\sqrt{p}}\left\|\hat{\btheta}-\btheta^*\right\|_{2}\lesssim e^{4\kappa_{0}+4\kappa_{1}}\sqrt{\frac{\log(np)}{np}}\left(1+ \frac{\log(np)}{\sqrt{ p}} \right).
		\end{equation*}
	}
\end{thm}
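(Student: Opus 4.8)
The plan is to run the standard basic-inequality argument for an M-estimator, playing the local strong convexity of the population loss (Lemma \ref{bound_l2}) against the uniform local deviation bound supplied by Corollary \ref{Cor1}. Write $\delta := \|\hat\btheta - \btheta^*\|_2$ and recall from Theorem \ref{thm1} that, with probability tending to one, $\hat\btheta$ lies in the convex ball $\bB_\infty(\btheta^*,r)$ with $r = c_r e^{-4\kappa_0 - 4\kappa_1}$ and minimizes the convex function $l$ there. Since $\btheta^*$ is feasible, the basic inequality $l(\hat\btheta) \le l(\btheta^*)$ holds; subtracting $\E l(\hat\btheta)$ and $\E l(\btheta^*)$ and rearranging gives
\[
\E l(\hat\btheta) - \E l(\btheta^*) \;\le\; -\big\{[\,l(\hat\btheta) - \E l(\hat\btheta)\,] - [\,l(\btheta^*) - \E l(\btheta^*)\,]\big\} \;=:\; -\Delta(\hat\btheta).
\]
It therefore suffices to bound the left-hand side below by a multiple of $\delta^2$ and to bound $|\Delta(\hat\btheta)|$ above.

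For the left-hand side, note that $\btheta^*$ minimizes $\E l$, so the conditional-score identity gives $\E\,\nabla l(\btheta^*) = \bzero$ and the linear term in a second-order Taylor expansion of $\E l$ around $\btheta^*$ vanishes. Hence, for some intermediate point $\bar\btheta$ on the segment joining $\btheta^*$ and $\hat\btheta$ (which stays in $\bB_\infty(\btheta^*,r)$ by convexity),
\[
\E l(\hat\btheta) - \E l(\btheta^*) \;=\; \tfrac12 (\hat\btheta - \btheta^*)^\top \E\big(\bV(\bar\btheta)\big)(\hat\btheta - \btheta^*) \;\ge\; \tfrac{C}{2}\,e^{-4\kappa_0 - 4\kappa_1}\,\delta^2,
\]
where the last step invokes the eigenvalue lower bound of Lemma \ref{bound_l2} uniformly over the ball. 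For the right-hand side, Corollary \ref{Cor1} controls $|\Delta(\btheta)|$ uniformly over $\btheta \in \bB_\infty(\btheta^*,r)$ by an expression whose leading part is linear in the $\ell_2$ radius, of the form $|\Delta(\hat\btheta)| \lesssim a_{n,p}\,\delta + b_{n,p}\,\delta^2$, where $a_{n,p} \asymp \sqrt{p\log(np)/(np)}\,\big(1 + \log(np)/\sqrt{p}\big)$ carries the statistical rate and $b_{n,p}$ is a lower-order remainder.

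Combining the two displays yields $\tfrac{C}{2}e^{-4\kappa_0-4\kappa_1}\delta^2 \le a_{n,p}\delta + b_{n,p}\delta^2$. The key step is then absorption: under $np \to \infty$ together with the admissibility constraint $\kappa_0 + \kappa_1 \le c\log(np)$, the factor $e^{-4\kappa_0-4\kappa_1}$ decays at most polynomially in $np$, so $b_{n,p} \le \tfrac{C}{4}e^{-4\kappa_0-4\kappa_1}$ holds eventually and the quadratic remainder is absorbed into the left-hand side. What remains, $\tfrac{C}{4}e^{-4\kappa_0-4\kappa_1}\delta^2 \le a_{n,p}\delta$, gives $\delta \lesssim e^{4\kappa_0+4\kappa_1}a_{n,p}$; dividing through by $\sqrt{p}$ reproduces the claimed bound $\tfrac{1}{\sqrt{p}}\|\hat\btheta - \btheta^*\|_2 \lesssim e^{4\kappa_0+4\kappa_1}\sqrt{\log(np)/(np)}\,(1 + \log(np)/\sqrt{p})$.

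The main obstacle lies not in this combination but in establishing the leading linear-in-radius form of Corollary \ref{Cor1} and, crucially, in showing that the quadratic remainder $b_{n,p}$ is genuinely of smaller order than the strong-convexity constant $e^{-4\kappa_0-4\kappa_1}$, so that absorption is legitimate over a neighborhood whose $\ell_2$ radius $\sqrt{p}\,r$ diverges. This is exactly where the sparsity of $L$ (vanishing of most higher-order derivatives) and the uniform deviation machinery of Theorem \ref{thm5} enter: without them the rapidly diverging higher-order derivatives of the sigmoid would prevent a uniform Taylor control of $\Delta$ over a growing-radius neighborhood, and the remainder could not be shown negligible.
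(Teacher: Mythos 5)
Your proposal follows essentially the same route as the paper's proof: the basic inequality $l(\hat\btheta)\le l(\btheta^*)$, the local strong convexity of $\E l$ via a Taylor expansion with vanishing gradient at $\btheta^*$ and the eigenvalue bound of Lemma \ref{bound_l2}, and the uniform local deviation bound of Corollary \ref{Cor1}, combined to yield $\|\hat\btheta-\btheta^*\|_2 \lesssim e^{4\kappa_0+4\kappa_1}\sqrt{\log(np)/n}\,\bigl(1+\log(np)/\sqrt{p}\bigr)$. The only discrepancy is your quadratic remainder $b_{n,p}\delta^2$ and the ensuing absorption step: Corollary \ref{Cor1} is in fact purely linear in $\|\btheta-\btheta^*\|_2$ (the $\ell_1$-to-$\ell_2$ conversion introduces no quadratic term), so absorption is vacuous and your argument collapses exactly to the paper's.
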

We discuss the implication of this theorem. When $n\to \infty$ and $p$ is finite,  that is, when we have a fixed number of nodes but a growing number of network snapshots, Theorem \ref{thm2} indicates that 
{$\left\|\hat{\btheta}-\btheta^*\right\|_{2}=O_{p}\left(\sqrt{\frac{\log^3 n}{n}} e^{4\kappa_{0}+4\kappa_{1}} \right)=o_p(1)$ when $c$ is small enough. 
}  
On the other hand, when $n$, $\kappa_{0}$ and $\kappa_{1}$ are finite,  Theorem \ref{thm2} indicates that as the number of parameters   $p$ increases, the $\ell_2$ error bound of $\hat{\btheta}$ increases at a much slower rate $O\left(\sqrt{\log p}\right)$. 

Although Theorem \ref{thm2} indicates that  $\frac{1}{\sqrt{p}}\left\|\hat{\btheta} -\btheta^*\right\|_{2}=o_p(1)$ as $np\rightarrow \infty$, it does not guarantee the uniform convergence of all the elements in $\hat{\btheta}$. To prove the uniform convergence in the $\ell_\infty$ norm, we exploit a special structure of the loss function and the $\ell_{2}$ norm bound obtained in Theorem \ref{thm2}.	Specifically, denote $l(\btheta)$ in \eqref{eq:lossfunction} as $l(\btheta)=l(\btheta_{(i)},\btheta_{(-i)})$ where $\btheta_{(i)}:=(\beta_{i,0} ,\beta_{i,1})^{\top}$, and $\btheta_{(-i)}$ contains the remaining elements of $\btheta$ except $\btheta_{(i)}$. Using this notation, we can analogously define $\btheta^{*}_{(i)}$ and $\btheta^{*}_{(-i)}$ for the true parameter 	$\btheta^{*}$, and $\hat{\btheta}_{(i)}$ and $\hat{\btheta}_{(-i)}$ for the local MLE $\hat{\btheta}$. 
We then have that $\btheta_{(i)}^*$ is the mimizer of $\E l\left( \cdot, \btheta_{(-i)}^* \right)$ while  $\hat{\btheta}_{(i)}$ is the minimizer of  $l\left( \cdot, \hat{\btheta}_{(-i)} \right)$. 
The  error of $\hat{\btheta}_{(i)}$ in estimating $\btheta_{(i)}^*$ then relies on the distance between   $\E l\left( \cdot, \btheta_{(-i)}^* \right)$ and $l\left( \cdot, \hat{\btheta}_{(-i)} \right)$, which on the other hand depends on both the $\ell_2$ bound of $\|\hat{\btheta}- \btheta^*\|_2$ and the uniform local deviation bound of  $l\left(\btheta_{(i)},\btheta_{(-i)} \right)$. 
Based on Theorem \ref{thm2}, Corollary \ref{Cor1} in Section \ref{S.4}, and a sequential approach (see equations \eqref{s_seq1} and \eqref{s_seq2} in the appendix), we obtain the following bound for the estimation error under the $l_\infty$ norm.
\begin{thm}\label{thm3}
	Let condition (A1) hold,  assume $\{\bX^t\} \sim P_{\btheta^*}$, 
	{\color{black}and $\kappa_{r} :=\|\bbeta_{r}^*\|_\infty$ where $r=0,1$ with $\kappa_{0}+\kappa_{1} \le c\log (np)$ for some small enough constant $c >0$. Then as $np\rightarrow \infty, n\ge 2$,  it holds with probability converging to one that
		\begin{equation*}
			\left\|\hat{\btheta}-\btheta^*\right\|_{\infty}\lesssim e^{8\kappa_{0}+8\kappa_{1}} \log\log(np) \sqrt{\frac{\log(np)}{np}}\left(1+ \frac{\log(np)}{\sqrt{p}} \right).
		\end{equation*}
	}
\end{thm}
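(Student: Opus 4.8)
The plan is to upgrade the $\ell_2$ bound of Theorem~\ref{thm2} to an $\ell_\infty$ bound by analysing the estimation of each node's two-dimensional block $\btheta_{(i)}=(\beta_{i,0},\beta_{i,1})^\top$ separately, using the additive pairwise structure of $l$ to profile out the remaining coordinates. Theorem~\ref{thm1} places $\hat{\btheta}$ in the interior of $\bB_{\infty}(\btheta^*,r)$ with $l$ convex there, so the first-order conditions $\partial l/\partial\btheta_{(i)}(\hat{\btheta})=\mathbf{0}$ hold for every $i$; likewise $\partial\,\E l/\partial\btheta_{(i)}(\btheta^*)=\mathbf{0}$ because $\btheta^*_{(i)}$ minimises $\E l(\cdot,\btheta^*_{(-i)})$. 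Subtracting these and applying the fundamental theorem of calculus to the sample score along the segment from $\btheta^*$ to $\hat{\btheta}$ gives, for each $i$, an exact identity of the form
\[
\mathbf{0} \;=\; S_i \;+\; \overline{\bV}_{ii}\,(\hat{\btheta}_{(i)}-\btheta^*_{(i)}) \;+\; \overline{\bV}_{i,-i}\,(\hat{\btheta}_{(-i)}-\btheta^*_{(-i)}),
\]
where $S_i=\partial(l-\E l)/\partial\btheta_{(i)}(\btheta^*)$ is the centred score at the truth and $\overline{\bV}_{ii}\in\mathbb{R}^{2\times2}$, $\overline{\bV}_{i,-i}\in\mathbb{R}^{2\times(2p-2)}$ are the corresponding Hessian blocks averaged over the segment. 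This is the role of the two displayed equations \eqref{s_seq1}--\eqref{s_seq2}, and it reduces the task to controlling $\|\overline{\bV}_{ii}^{-1}\|_2$, $\max_i\|S_i\|$, and the propagated error $\overline{\bV}_{i,-i}(\hat{\btheta}_{(-i)}-\btheta^*_{(-i)})$.

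I would control the three ingredients as follows. For the curvature, Proposition~\ref{prop2} specialised to the scalar blocks of a $2\times2$ matrix, together with Lemma~\ref{bound_l2}, shows that the smallest eigenvalue of $\E\overline{\bV}_{ii}$ is $\gtrsim e^{-4\kappa_{0}-4\kappa_{1}}$; since its entries are $1/p$-weighted sums over $j\neq i$ of the statistics $b_{i,j},d_{i,j}$, Lemma~\ref{mixing} yields element-wise concentration, and under $\kappa_{0}+\kappa_{1}\le c\log(np)$ the deviation $e^{4\kappa_{0}+4\kappa_{1}}r_{n,p}=o(1)$ keeps $\overline{\bV}_{ii}$ positive definite, so $\|\overline{\bV}_{ii}^{-1}\|_2\lesssim e^{4\kappa_{0}+4\kappa_{1}}$ uniformly in $i$ with high probability. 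For the score, because $S_i$ is a $1/p$-weighted sum over $j\neq i$ of pairwise terms that are independent across $j$ and $\alpha$-mixing in $t$, combining the independence in $j$ (which supplies the $1/\sqrt{np}$ scale) with the mixing tail bound behind Lemma~\ref{mixing} and a union bound over the $2p$ coordinates gives $\max_i\|S_i\|\lesssim \log\log(np)\sqrt{\log(np)/(np)}$, the origin of the $\log\log(np)$ factor. For the propagation, diagonal balance forces every off-diagonal Hessian entry to be $O(1/p)$ and, crucially, uniformly bounded (each relevant fraction is at most $1/4$, independent of $\kappa_{0},\kappa_{1}$), whence $\|\overline{\bV}_{i,-i}\|_2\le\|\overline{\bV}_{i,-i}\|_F\lesssim p^{-1/2}$; Theorem~\ref{thm2} then gives $\|\hat{\btheta}_{(-i)}-\btheta^*_{(-i)}\|_2\le\|\hat{\btheta}-\btheta^*\|_2\lesssim \sqrt{p}\,e^{4\kappa_{0}+4\kappa_{1}}\sqrt{\log(np)/(np)}\,(1+\log(np)/\sqrt{p})$, and the $\sqrt{p}$ is absorbed by $p^{-1/2}$.

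Multiplying the identity by $\overline{\bV}_{ii}^{-1}$ and maximising over $i$ then gives
\[
\|\hat{\btheta}-\btheta^*\|_\infty \;\lesssim\; e^{4\kappa_{0}+4\kappa_{1}}\Big[\max_i\|S_i\| + e^{4\kappa_{0}+4\kappa_{1}}\sqrt{\tfrac{\log(np)}{np}}\big(1+\tfrac{\log(np)}{\sqrt{p}}\big)\Big],
\]
where the leading $e^{4\kappa_{0}+4\kappa_{1}}$ is the inverse-curvature factor and the inner one is inherited from the $\ell_2$ rate of Theorem~\ref{thm2}; their product produces the $e^{8\kappa_{0}+8\kappa_{1}}$ prefactor, while the score term supplies the $\log\log(np)$, so that the common upper bound absorbing both contributions is exactly the stated $e^{8\kappa_{0}+8\kappa_{1}}\log\log(np)\sqrt{\log(np)/(np)}\,(1+\log(np)/\sqrt{p})$. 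Throughout, the fact that $\hat{\btheta}_{(-i)}$ is random — so that $l(\cdot,\hat{\btheta}_{(-i)})$ is not a fixed function — is handled by the local uniform deviation bound of Corollary~\ref{Cor1}, built on the Section~\ref{S.4} result for functions of the form $L$ in \eqref{Lfunction}, which certifies the required behaviour of the segment-averaged Hessian and the centred score simultaneously over the whole ball $\bB_{\infty}(\btheta^*,r)$.

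The main obstacle is not the algebra above but securing the increments of the centred empirical process sharply enough to obtain this linear-in-noise rate rather than the crude square-root rate that a naive basic inequality would yield: one must bound $\partial(l-\E l)/\partial\btheta_{(i)}$ uniformly over $\btheta$ in the diverging-radius ball and simultaneously over all $i$, for a loss whose higher-order derivatives are not uniformly bounded. This is precisely where the sparsity structure of $L$ (most of its higher-order derivatives vanish) and the growing-dimension deviation bound of Section~\ref{S.4} are indispensable, and where the interplay between the exponential curvature decay of Lemma~\ref{bound_l2} and the admissible growth $\kappa_{0}+\kappa_{1}\le c\log(np)$ must be tuned so that all remainder terms stay negligible.
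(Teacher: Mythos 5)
Your plan is a genuinely different route from the paper's. The paper never linearizes the score: it compares loss values, using the minimizing property of $\hat\btheta_{(i)}$ given $\hat\btheta_{(-i)}$, strong convexity of the \emph{expected} loss in each two-coordinate block, the uniform deviation bound of Corollary \ref{Cor1}(ii), and a separate algebraic decoupling inequality (Lemma \ref{thm3_1}) with tunable weights $z_1,z_2$; it then iterates the resulting estimate over a sequence of shrinking $\ell_\infty$ balls roughly $\log_2\log(np)$ times --- equations \eqref{s_seq1}--\eqref{s_seq2} are the radius bookkeeping of that iteration, not a Taylor identity as you surmise, and the $\log\log(np)$ factor is an artifact of making that iteration close, not of a union bound on the score. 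Your one-step sandwich, if its ingredients were secured, would in fact give the stated bound \emph{without} the $\log\log(np)$ factor, so the architecture is attractive; but two of your steps have genuine gaps.

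First, stationarity. Theorem \ref{thm1} gives existence and uniqueness of the MLE in the \emph{closed} ball $\bB_{\infty}(\btheta^*,r)$; it does not place $\hat\btheta$ in the interior, so you may not write $\partial l/\partial\btheta_{(i)}(\hat\btheta)=\mathbf{0}$. Establishing interiority a priori means showing $\|\hat\btheta-\btheta^*\|_\infty<r$, a qualitative form of the very statement being proved --- this circularity is exactly what the paper's shrinking-ball iteration is designed to break. Theorem \ref{thm2} cannot rescue you: it only yields $\|\hat\btheta-\btheta^*\|_\infty\le\|\hat\btheta-\btheta^*\|_2\lesssim e^{4\kappa_0+4\kappa_1}\sqrt{\log(np)/n}\,\bigl(1+\log(np)/\sqrt{p}\,\bigr)$, which diverges when $n$ is fixed and $p\to\infty$, hence vastly exceeds $r$. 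The gap is repairable without iteration: since $\hat\btheta_{(i)}$ minimizes $l(\cdot,\hat\btheta_{(-i)})$ over the convex block-ball containing $\btheta^*_{(i)}$, the first-order condition for constrained minima gives the variational inequality
\begin{equation*}
\Bigl(\frac{\partial l}{\partial \btheta_{(i)}}(\hat\btheta)\Bigr)^{\top}\bigl(\btheta^*_{(i)}-\hat\btheta_{(i)}\bigr)\ \ge\ 0,
\end{equation*}
and feeding your Taylor expansion of the score into this inequality produces the same sandwich bound; but as written your step is unsupported.

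Second, the curvature of the segment-averaged \emph{sample} block $\bar{\bV}_{ii}$. Note this is a stronger requirement than in the paper, whose proof only needs curvature of the expected Hessian $\E\bV^{(i)}$. You justify $\|\bar{\bV}_{ii}^{-1}\|_2\lesssim e^{4\kappa_0+4\kappa_1}$ by ``element-wise concentration'' from Lemma \ref{mixing} and the claim $e^{4\kappa_0+4\kappa_1}r_{n,p}=o(1)$. This is false precisely in the small-$n$-large-$p$ regime the theorem covers: $r_{n,p}\ge\sqrt{\log(np)/n}\to\infty$ for fixed $n$, because Lemma \ref{mixing} controls each pair $(i,j)$ separately and does not exploit averaging over the $p-1$ independent pairs $j\ne i$; and Corollary \ref{Cor1} concerns loss \emph{values}, not Hessians, so it cannot certify this either. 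The correct argument is the one in the proof of Theorem \ref{thm1}: matrix-Bernstein concentration (Lemma \ref{BersM}) at the fixed point $\btheta^*$, plus the deterministic bound $n^{-1}|b_{i,j}-\E(b_{i,j})|\le 1$ to control the $\btheta$-dependent difference terms, gives $\lambda_{\min}\bigl(\bV(\btheta)\bigr)\gtrsim e^{-4\kappa_0-4\kappa_1}$ uniformly over $\bB_{\infty}(\btheta^*,r)$ with high probability, and eigenvalue interlacing for principal submatrices transfers this to each $2\times 2$ block and hence to $\bar{\bV}_{ii}$. With these two repairs --- your propagation step is fine, since every off-diagonal Hessian entry is deterministically $O(1/p)$ --- your scheme does deliver the theorem, indeed slightly sharpened; as written, the two steps above do not stand.
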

Theorem \ref{thm3} indicates that $\left\|\hat{\btheta}-\btheta^*\right\|_{\infty}=o_{p}(1)$ as $np\rightarrow \infty$. Thus all the components of $\hat\btheta$ converge uniformly. On the other hand, when $\kappa= c\log(np)$ for some small enough positive constant $c$, we have
{  $e^{8\kappa_{0}+8\kappa_{1}} \log\log(np) \sqrt{\frac{\log(np)}{np}}\left(1+ \frac{\log(np)}{\sqrt{p}} \right)\leq o(c_{r}e^{-4\kappa_{0}-4\kappa_{1}})$.}	Compared with Theorem \ref{thm1}, we observe that although the radius $r$ in Theorem \ref{thm1} already tends to zero  when {\color{black}$\|\bbeta_{0}^*\|_{\infty}\le \kappa_{0},\|\bbeta_{1}^*\|_{\infty}\le \kappa_{1}$ and $ \kappa_{0}+ \kappa_{1} \le c\log(np)$ for some small enough constant $c>0$,
}
the $\ell_\infty$ error bound of $\hat{\btheta}$ has a smaller order asymptotically and thus gives a tighter convergence rate. 

We remark that in the MLE, $\bbeta_{0}^*$ and  $\bbeta_{1}^*$ are estimated jointly. As we can see from the log-likelihood function, the information related to $\beta_{i,0}$ is captured by $X_{i,j}^t$ and $X_{i,j}^tX_{i,j}^{t-1}$, $t=1,\ldots, n, j\ne i$, while that  related to $\beta_{i,1}$ is captured by $(1-X_{i,j}^t)(1-X_{i,j}^{t-1})$ and $X_{i,j}^tX_{i,j}^{t-1}$, $t=1,\ldots, n , j\ne i$. This indicates that the effective ``sample sizes” for estimating $\beta_{i,0}$ and $\beta_{i,1}$ are both of the order $O(np)$.  While the theorems we have established in this section is for $\hat{\btheta}=({\hat\bbeta_{0}}^\top,{\hat \bbeta_{1}}^\top)^\top$ jointly, we  would  expect ${\hat\bbeta_{0}}$ and  ${\hat \bbeta_{1}}$ to have the same rate of convergence.

\subsection{A method of moments estimator} \label{S3.4}
Having established the existence of a unique local MLE in $\bB_{\infty}\left(\btheta^*,r \right)$
and proved its convergence,  we still need to specify how to find this local MLE.  To this end, we propose an initial estimator lying in this neighborhood. Consequently we can adopt any convex optimization method such as the coordinate descent algorithm to locate the local MLE, thanks to the convexity of the loss function in this neighborhood. 
Based on (\ref{margP}),  an initial estimator of $\bbeta_0$ denoted as $\tilde{\bbeta}_{(0)}$ can be found by solving the following method of moments equations
\begin{equation}
	\label{MME1} \frac{\sum_{t=1}^{n} \sum_{j=1,\:j \neq i}^{p} X_{i,j}^t}{n} - \sum_{j=1,\:j \neq i}^{p} \frac{ e^{\beta_{i,0} +\beta_{j,0}} }{1+ e^{\beta_{i,0} +\beta_{j,0 }}}=0,\quad i=1,\cdots,p.
\end{equation}
These equations can be viewed as the score functions of the  pseudo loss function   $f(\bbeta_0):=\sum_{1\leq i,j\leq p}\log\{1+e^{\beta_{i,0} +\beta_{j,0 }}\}-n^{-1}\sum_{i=1}^p \{\beta_{i,0}{\sum_{t=1}^{n} \sum_{j=1,\:j \neq i}^{p} X_{i,j}^t}\}$. Since the Hessian matrix of  $f(\bbeta_0)$ is diagonally balanced with positive elements,  the Hessian matrix is positive definite, and, hence, $f(\bbeta_0)$ is strongly convex.   With the strong convexity, the solution of \eqref{MME1} {  is the minimizer of $f(\cdot)$} which can be easily obtained using any standard algorithms such as the gradient descent.  
On the other hand, note that
\begin{equation*} \E (X_{i,j}^tX_{i,j}^{t-1})  = \frac{e^{{\beta}_{i,0} + {\beta}_{j,0}}}{1+e^{{\beta}_{i,0} + {\beta}_{j,0}}}\left(1- \frac{1}{1+e^{{\beta}_{i,0} + {\beta}_{j,0}}+e^{\beta_{i,1} +\beta_{j,1}}}\right),
\end{equation*}
which motivates the use of the following estimating equations to obtain $\tilde\bbeta_1$, the initial estimator of $\bbeta_1$, 
\begin{equation}\label{MME2}
	\sum_{t=1}^{n} \sum_{j=1,\:j \neq i}^{p} \left\{X_{i,j}^tX_{i,j}^{t-1} -  \frac{e^{\tilde{\beta}_{i,0} + \tilde{\beta}_{j,0}}}{1+e^{\tilde{\beta}_{i,0} + \tilde{\beta}_{j,0}}}\left(1- \frac{1}{1+e^{\tilde{\beta}_{i,0} + \tilde{\beta}_{j,0}}+e^{\beta_{i,1} +\beta_{j,1}}}\right)\right\}=0,
\end{equation}
with $ i=1,\cdots,p$. {  Similar to \eqref{MME1}, we can formulate a pseudo loss function such that given  $\tilde{\bbeta}_{0}$, its Hessian matrix corresponding to the score equations \eqref{MME2} is positive definite, and hence	\eqref{MME2}  can also be solved via the standard gradient descent algorithm.  } 
Since $\tilde{\btheta}=(\tilde\bbeta_0^\top, \tilde\bbeta_1^\top)^\top$ is obtained by solving two sets of moment equations, we call it the method of moments estimator (MME). 
An interesting aspect of our construction of these moment equations is that the equations corresponding to the estimation of $\bbeta_0$ and $\bbeta_1$ are decoupled. While the estimator error in estimating $\bbeta_0$ propagates clearly in that of estimating $\bbeta_1$, we have the following existence, uniqueness, and a uniform upper bound for the estimation error of $\tilde{\btheta}$. Our results build on a novel application of  
the classical interior mapping theorem  \citep{gragg1974optimal, yan2012approximating, yan2016asymptotics}.
\begin{thm}\label{thm4}
	Let condition (A1) hold,  and $\{\bX^t\} \sim P_{\btheta^*}$. The MME $\tilde{\btheta}$ defined by  equations \eqref{MME1} and \eqref{MME2}  exists and is unique in probability.	{\color{black}Further, assume that  $\kappa_{r} :=\|\bbeta_{r}^*\|_\infty$ where $r=0,1$ with $\kappa_{0}+\kappa_{1} \le c\log (np)$ for some small enough constant $c >0$. Then as $np \to \infty$ and $n\ge 2$, it holds that	
		\begin{equation*}
			\left\|\tilde{\btheta}-\btheta^{*}\right\|_{\infty}\leq O_{p}\left(  e^{14\kappa_{0}+6\kappa_{1}}\sqrt{\frac{\log(n)\log(p)}{np}}\right) .
		\end{equation*}
	}
\end{thm}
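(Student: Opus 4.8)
The plan is to treat the two stages of the MME as two systems of estimating equations, each solved by a quantitative inverse-mapping (interior point) argument. For the first stage, rewrite \eqref{MME1} as $F(\bbeta_0)=\mathbf{0}$ with $F_i(\bbeta_0)=\bar d_i-\sum_{j\neq i}e^{\beta_{i,0}+\beta_{j,0}}/(1+e^{\beta_{i,0}+\beta_{j,0}})$ and $\bar d_i=n^{-1}\sum_{t=1}^n\sum_{j\neq i}X_{i,j}^t$. Its Jacobian is, up to sign, the Hessian of the strongly convex pseudo-loss $f(\bbeta_0)$, hence a diagonally balanced matrix with strictly positive off-diagonal entries---exactly the structure to which the interior mapping theorem of \citep{gragg1974optimal,yan2016asymptotics} applies, delivering existence, uniqueness and an $\ell_\infty$ error bound. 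For the second stage, given $\tilde\bbeta_0$, write \eqref{MME2} as $G(\bbeta_1;\tilde\bbeta_0)=\mathbf{0}$; differentiating its summand shows the Jacobian in $\bbeta_1$ is again diagonally balanced with positive entries, so the same machinery applies conditionally on $\tilde\bbeta_0$.

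For the first stage the residual at the truth is $F_i(\bbeta_0^*)=\sum_{j\neq i}\{n^{-1}\sum_t X_{i,j}^t-\E X_{i,j}^t\}$. The summands are independent across $j$ under TWHM and each is a time average of an $\alpha$-mixing sequence, so combining the per-edge concentration of Lemma \ref{mixing} with a Bernstein bound over the $p-1$ node-terms gives $\max_i|F_i(\bbeta_0^*)|\lesssim \sqrt{p\,e^{-2\kappa_0}\log p/n}$ up to the higher-order mixing term (the $\log n$ factor in the final rate enters here). Since the diagonal entries of the Jacobian are $\gtrsim p\,e^{-2\kappa_0}$, the explicit inverse approximation for diagonally balanced matrices \citep{hillar2012inverses} yields $\|J^{-1}F(\bbeta_0^*)\|_\infty\lesssim e^{\kappa_0}\sqrt{\log p/(np)}$; feeding this Newton step and the Lipschitz constant of $J$ into the interior mapping theorem gives existence and uniqueness of $\tilde\bbeta_0$ with probability tending to one, together with $\|\tilde\bbeta_0-\bbeta_0^*\|_\infty\lesssim e^{c\kappa_0}\sqrt{\log p/(np)}$.

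For the second stage I would split $G_i(\bbeta_1^*;\tilde\bbeta_0)$ into a statistical part $n^{-1}\sum_t\sum_j X_{i,j}^tX_{i,j}^{t-1}-\sum_j\E(X_{i,j}^tX_{i,j}^{t-1})$ and a plug-in part $\sum_j\{g(\beta_{i,0}^*,\beta_{j,0}^*,\cdot)-g(\tilde\beta_{i,0},\tilde\beta_{j,0},\cdot)\}$. The statistical part is controlled exactly as before, now using the concentration of $b_{i,j}$ from Lemma \ref{mixing} together with independence across $j$. For the plug-in part, a first-order Taylor expansion bounds it by $\sum_j|\partial g/\partial\beta_{j,0}|\,\|\tilde\bbeta_0-\bbeta_0^*\|_\infty\lesssim p\,e^{-2\kappa_0}\|\tilde\bbeta_0-\bbeta_0^*\|_\infty$. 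The diagonal strength of the second-stage Jacobian is of order $p\,e^{-2\kappa_0-2\kappa_1}$, so the inverse-Jacobian factor is $\sim e^{2\kappa_0+2\kappa_1}p^{-1}$; multiplying through and carrying all exponential prefactors through the interior mapping theorem yields the stated bound for $\tilde\bbeta_1$, and hence for $\tilde\btheta$. The hypothesis $\kappa_0+\kappa_1\le c\log(np)$ is exactly what keeps every prefactor subpolynomial so that the smallness conditions of the theorem hold with probability tending to one.

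The main obstacle is twofold. First, the interior mapping theorem requires a smallness condition linking the Newton step, the norm of the inverse Jacobian and the Lipschitz constant of the Jacobian; as $\kappa_0,\kappa_1$ grow the diagonal strength decays like $e^{-c(\kappa_0+\kappa_1)}$ while the curvature does not shrink correspondingly, so verifying this condition uniformly---and thereby pinning down the exact exponential prefactor---requires a careful and deliberately conservative accounting of every $e^{\kappa}$ factor. Second, since the two stages couple only through the plug-in of $\tilde\bbeta_0$ into $G$, the dominant contribution to $\|\tilde\bbeta_1-\bbeta_1^*\|_\infty$ is the propagated first-stage error rather than the fresh fluctuation of $b_{i,j}$; bounding this propagation sharply---so that the $p$ summed sensitivities of $g$ do not overwhelm the $p^{-1}$ gained from the inverse Jacobian---is the crux of the argument and is what produces the asymmetric exponents $14\kappa_0$ and $6\kappa_1$.
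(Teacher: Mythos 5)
Your proposal follows the same architecture as the paper's proof: two applications of the Gragg--Tapia interior mapping theorem (the paper's Lemma \ref{Lemma_IMT}), one per stage; diagonally balanced Jacobians; residual concentration via Lemma \ref{mixing} plus a Bernstein bound over the $p-1$ node terms; and a split of the second-stage residual at $\btheta_{(1)}^*$ into a fresh-fluctuation part and a plug-in part (the paper's $L_1+L_2$). Your stage-1 analysis is essentially the paper's, and your observation that the propagated first-stage error dominates the second stage is correct.

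The gap is in the quantitative core that actually produces the prefactor $e^{14\kappa_0+6\kappa_1}$, and it is twofold. First, your entrywise lower bound $e^{-2\kappa_0-2\kappa_1}/p$ on the stage-2 Jacobian is wrong: the entries are $\frac{1}{p}\frac{\tilde\alpha_{0,i,j}}{1+\tilde\alpha_{0,i,j}}\frac{e^{x_i+x_j}}{(1+\tilde\alpha_{0,i,j}+e^{x_i+x_j})^2}$, and the worst case is $\tilde\alpha_{0,i,j}\approx e^{2\kappa_0}$, where the squared denominator is of order $e^{4\kappa_0}$ and the entry degrades to order $e^{-4\kappa_0-2\kappa_1}/p$. (The sup-norm hypothesis $\|\bbeta_0^*\|_\infty\le\kappa_0$ allows large positive as well as large negative entries; this is also why your claim $|\partial g/\partial\beta_{j,0}|\lesssim e^{-2\kappa_0}$ fails --- that sensitivity is only $O(1)$, which is what the paper's $L_2\le 2\|\tilde\bbeta_0-\bbeta_0^*\|_\infty$ uses.) Second, and more consequentially, you take the $\ell_\infty$ norm of the inverse Jacobian to be the reciprocal of the diagonal strength, $\sim e^{2\kappa_0+2\kappa_1}p^{-1}$. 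For diagonally balanced matrices this is not a result you can invoke: the available tool (the explicit approximant $\bS$ of Yan--Xu, which is what the paper applies, or the Hillar-type bounds) controls $\|\bF'(\cdot)^{-1}-\bS\|_{\max}$ by a quantity of order $M^2/(m^3p^2)$ with $m,M$ the entrywise bounds, which yields $\|\bF'(\cdot)^{-1}\|_\infty\lesssim (M/m)^2 (pm)^{-1}\asymp e^{12\kappa_0+6\kappa_1}$. That cubic dependence on $1/m$ is exactly where $12\kappa_0+6\kappa_1$ comes from; multiplying by the residual, whose dominant piece is the propagated stage-1 error of order $e^{2\kappa_0}\sqrt{\log(n)\log(p)/(np)}$, gives $14\kappa_0+6\kappa_1$. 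With your factors the same multiplication would give roughly $e^{\kappa_0+2\kappa_1}\sqrt{\log p/(np)}$, so your concluding assertion that careful bookkeeping ``produces'' the exponents $14\kappa_0$ and $6\kappa_1$ is inconsistent with your own accounting: the exponents are asserted, not derived. The same constants are also needed to verify the smallness condition $h=2N\gamma\delta\le 1$ of the interior mapping theorem, so the proof cannot be completed without either importing the diagonally-balanced inverse approximation with its $M/m$-ratio dependence, or proving a genuinely sharper inverse bound (which would in fact strengthen the theorem).
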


When $np\to \infty$ and {\color{black}$\kappa_{0},\kappa_{1}$ are finite}, Theorem \ref{thm4} gives $\left\|\tilde{\btheta}-\btheta^{*}\right\|_{\infty}= O_p\left(\sqrt{\frac{\log(n)\log(p)}{np}}\right)$.  When { $\kappa_{0}+\kappa_{1}\asymp \log(np)$}, we see that the upper bound for the local MLE in Theorem \ref{thm3} is dominated by the upper bound of the MME in Theorem \ref{thm4}.   Moreover, when {\color{black}$\kappa_{0}+\kappa_{1} \le c\log(np)$ for some small enough constant $c>0$, }
we have  $\tilde{\btheta}\in \bB_{\infty} \left(\btheta^*,r \right)$, where $r$ is defined in Theorem \ref{thm1}. Thus,  $\tilde{\btheta}$ is in the small neighborhood of $\btheta^*$ as required.    
{\color{black}
	\subsection{The sparse case}
	In the previous results, the estimation error bounds depend on $\kappa_0$ and $\kappa_1$, i.e.,  the upper bounds  on $\|\bbeta_{0}^*\|_\infty$ and  $\|\bbeta_{1}^*\|_\infty$. Clearly, the larger $\kappa_0$ is, the more sparse the networks could be, and the larger $\kappa_1$ is, the lag-one correlations (c.f. equation \eqref{b9}) could be closer to one, indicating fewer fluctuations in the network process. To further characterize the effect of network sparsity,  in this section, we derive further properties under a relatively sparse scenario where $-\kappa_{0}\leq \beta_{i,0}^* \leq C_{\kappa}$ and $-\kappa_{1}\leq \beta_{i,1}^*\leq \kappa_{1}$ for all $i=1,\ldots, p$ and $C_{\kappa}>0$ here is a constant. Under this case,  there exist  constants $C>0$ and $C_{1}>0$ such that 
	$
	Ce^{-2\kappa_{0}}\leq \E \left( X_{i,j}^{t}\right) \leq C_{1}< 1.
	$
	In the most sparse case where $\beta_{0,i}=-\kappa_0, i=1,\ldots, p$,  the density of the stationary network is of the order $O(e^{-2\kappa_{0}})$.
	Similar to  Lemma \ref{bound_l2} and Theorem \ref{thm1}, the following corollary provides a lower bound for the smallest eigenvalue of $\E(\bV(\btheta))$ and the existence of the MLE.  
	\begin{coro}\label{cor1} 
		Let $\{\bX^t\} \sim P_{\btheta^*}$, 
		$\bB_{\infty}\left( \btheta^*,r \right) = \left\{ \btheta: \|\btheta-\btheta^*\|_{\infty}\le r \right\}$ for some 	$r = c_r e^{-2\kappa_{0}-4\kappa_{1}}$ where $c_r>0$ is a small enough constant. 
		Denote  $\bB'\left( \kappa_{0},\kappa_{1} \right) := \Big\{ \left(\bbeta_{0}, \bbeta_{1}\right) :- \kappa_{0}\le \bbeta_{i,0}\le C_{\kappa}, i=1,\ldots, p ,$ $ \|\bbeta_{1}\|_{\infty}\le \kappa_{1} \Big\}$ for some   constant $C_\kappa>0$. Then, 
		under condition  (A1), there exists a constant $C>0$ such that
		\begin{equation*} 
			\inf_{\btheta \in \bB_{\infty}\left( \btheta^*,r \right)\cap \bB'\left( \kappa_{0},\kappa_{1} \right) ;\|\ba\|_{2}=1}\ba^{\top}\E\left( \bV(\btheta)\right) \ba\geq  C e^{-2\kappa_{0}-4\kappa_{1}}.
		\end{equation*}
		Further, assume that $\btheta^*\in \bB'\left( \kappa_{0},\kappa_{1} \right)$ and $\kappa_{0} +2\kappa_{1}<c\log(np)$ for some positive constant $c<1/6$. Then,  as $np\rightarrow \infty$ with $n\ge 2$,   with probability tending to 1, there exists a unique MLE   in $\bB_{\infty}\left( \btheta^*,r \right)$.
		
	\end{coro}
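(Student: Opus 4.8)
The plan is to follow the same two-step route used for Lemma~\ref{bound_l2} and Theorem~\ref{thm1}, but to track the exponents carefully so as to exploit the \emph{asymmetric} bound $-\kappa_0\le\beta_{i,0}^*\le C_\kappa$. The one genuinely new ingredient is the observation that, because the upper bound on $\beta_{i,0}$ is a fixed constant rather than $\kappa_0$, the factor $e^{\beta_{i,0}+\beta_{j,0}}$ appearing in every denominator of the Hessian stays of constant order; this is exactly what replaces the $e^{-4\kappa_0}$ of Lemma~\ref{bound_l2} by $e^{-2\kappa_0}$.

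\textbf{Step 1 (eigenvalue lower bound).} Write $\ba=(\bx_1^\top,\bx_2^\top)^\top$ with $\|\ba\|_2=1$ and apply the identity from the proof of Proposition~\ref{prop2} to $\E\bV(\btheta)$:
\begin{equation*}
\ba^\top\E\bV(\btheta)\ba=\bx_1^\top\E(\bV_1+\bV_2)\bx_1+\bx_2^\top\E(\bV_3+\bV_2)\bx_2-(\bx_1-\bx_2)^\top\E\bV_2(\bx_1-\bx_2).
\end{equation*}
Writing $u=e^{\beta_{i,0}+\beta_{j,0}}$, $v=e^{\beta_{i,1}+\beta_{j,1}}$ and $D=1+u+v$, a direct calculation shows that the random terms proportional to $b_{i,j}$ cancel in the sums $\bV_1+\bV_2$ and $\bV_3+\bV_2$, leaving the deterministic off-diagonal entries
\begin{equation*}
\E(\bV_1+\bV_2)_{i,j}=\frac{1}{p}\frac{u}{D^2},\qquad \E(\bV_3+\bV_2)_{i,j}=\frac{1}{p}\frac{uv^2}{D^2(1+u)(1+v)},
\end{equation*}
where for the second identity I substitute $\E d_{i,j}/n=(1+v)/[(1+u)D]$, read off from the transition probabilities \eqref{transP}. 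On $\bB'(\kappa_0,\kappa_1)$ one has $u\in[e^{-2\kappa_0},e^{2C_\kappa}]$ and $v\in[e^{-2\kappa_1},e^{2\kappa_1}]$; since $u$ is bounded above by a constant, $D\lesssim e^{2\kappa_1}$ and both entries are bounded below by $c\,p^{-1}e^{-2\kappa_0-4\kappa_1}$. A sign check gives $\E(\bV_2)_{i,j}<0$, so $-\E\bV_2$ is diagonally balanced with positive entries and hence positive semidefinite, which lets me drop the last term. Moreover, because $\E(\bV_1+\bV_2)$ and $\E(\bV_3+\bV_2)$ are diagonally balanced with positive entries, for $\ba=(a_1,\ldots,a_p)^\top$ the quadratic form equals $\sum_{i<j}M_{i,j}(a_i+a_j)^2$, so their smallest eigenvalue is at least $(p-2)\min_{i<j}M_{i,j}\gtrsim e^{-2\kappa_0-4\kappa_1}$ (the complete-graph signless Laplacian having smallest eigenvalue $p-2$). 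This yields $\ba^\top\E\bV(\btheta)\ba\ge Ce^{-2\kappa_0-4\kappa_1}$ uniformly on $\bB_\infty(\btheta^*,r)\cap\bB'(\kappa_0,\kappa_1)$.

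\textbf{Step 2 (existence of the MLE).} As in Theorem~\ref{thm1}, it suffices to show that with probability tending to one $l(\btheta)$ is strongly convex on the closed convex ball $\bB_\infty(\btheta^*,r)$, which reduces to controlling $\sup_{\btheta\in\bB_\infty(\btheta^*,r)}\|\bV(\btheta)-\E\bV(\btheta)\|_2$; since $\btheta^*\in\bB'(\kappa_0,\kappa_1)$ and $r$ is small, the bound of Step~1 extends (with adjusted constants) to the whole ball. The randomness of $\bV(\btheta)$ enters only through $b_{i,j}$ and $d_{i,j}$, and Lemma~\ref{mixing} controls $\max_{i,j}\{|b_{i,j}-\E b_{i,j}|,|d_{i,j}-\E d_{i,j}|\}$ uniformly. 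Combining the matrix Bernstein inequality of Lemma~\ref{BersM} at $\btheta^*$ with a Lipschitz expansion of the uniformly bounded coefficient functions over the small ball gives $\sup_{\btheta}\|\bV(\btheta)-\E\bV(\btheta)\|_2\lesssim \sqrt{\log(np)/(np)}$ up to lower-order factors. Under $\kappa_0+2\kappa_1<c\log(np)$ with $c<1/6$ we have $e^{2\kappa_0+4\kappa_1}<(np)^{2c}$, so this deviation is $o(e^{-2\kappa_0-4\kappa_1})$; together with Step~1 this forces $\bV(\btheta)\succeq \tfrac12 Ce^{-2\kappa_0-4\kappa_1}\bI\succ0$ uniformly on the ball. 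Hence $l$ is strongly convex on the closed convex set $\bB_\infty(\btheta^*,r)$ and admits a unique minimizer there.

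\textbf{Main obstacle.} The delicate part is the uniform-in-$\btheta$ deviation in Step~2: a crude $\|\cdot\|_\infty$ bound only yields the rate $\sqrt{\log(np)/n}$, which fails to vanish when $n$ is fixed and $p\to\infty$, so one must retain the sharper $\sqrt{\log(np)/(np)}$ rate of Lemma~\ref{BersM} while still covering the entire ball. This is handled exactly as in Theorem~\ref{thm1}, by centering at $\btheta^*$ and using the smallness of $r$ together with the Lipschitz continuity of the coefficients; the threshold $c<1/6$ is chosen precisely so that $e^{2\kappa_0+4\kappa_1}$ cannot overwhelm this rate. The algebraic heart of the improvement—the cancellation of the $b_{i,j}$ terms and the resulting closed-form entries in Step~1—is routine but must be carried out explicitly to see that it is the constant upper bound $C_\kappa$ on $\beta_{i,0}$ that downgrades the exponent from $4\kappa_0$ to $2\kappa_0$.
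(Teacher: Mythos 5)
Your overall architecture (the Proposition~\ref{prop2} decomposition, entrywise lower bounds for the three diagonally balanced blocks, the $(p-2)$ bound for the quadratic form $\sum_{i<j}M_{i,j}(a_i+a_j)^2$, then Theorem~\ref{thm1}-style concentration) is the same as the paper's, and your closed-form algebra is correct \emph{at} $\btheta=\btheta^*$: indeed $(\bV_1+\bV_2)_{i,j}=u/(pD^2)$ holds identically, and $\E(\bV_2+\bV_3)_{i,j}=uv^2/[pD^2(1+u)(1+v)]$ and $-\E(\bV_2)_{i,j}=uv^2/[pD^2(u+v)(1+u)]$ hold when the Hessian is evaluated at the true parameter. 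The genuine gap is in Step 1: the expectation in $\E(\bV(\btheta))$ is always taken under the data-generating law $P_{\btheta^*}$, whereas your substitution $\E d_{i,j}/n=(1+v)/[(1+u)D]$ evaluates the stationary expectation at the \emph{same} $\btheta$ at which the Hessian is evaluated. For $\btheta\neq\btheta^*$ inside the ball one has $\E d_{i,j}/n=(1+v^*)/[(1+u^*)D^*]$ (Lemma~\ref{thm1_lemma}), so your closed forms — and the sign claim $\E(\bV_2)_{i,j}<0$ — are false pointwise in $\btheta$; what actually remains is the main positive term plus a mismatch term of order $\|\btheta-\btheta^*\|_{\infty}$ whose sign is uncontrolled. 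Showing this mismatch is dominated by the main term of order $e^{-2\kappa_0-4\kappa_1}/p$ is exactly why the radius must be $r=c_re^{-2\kappa_0-4\kappa_1}$ with $c_r$ small, and it is the substantive content of the paper's computations (cf.\ displays \eqref{v2ij} and \eqref{v23ij} in the proof of Lemma~\ref{bound_l2}, which Corollary~\ref{cor1} redoes under the asymmetric constraint). Your remark in Step 2 that ``the bound of Step~1 extends (with adjusted constants) to the whole ball'' gestures at the required perturbation argument but never carries it out, and since Step 1 is written as if the identities held at every $\btheta$, the need for it is not actually recognized. A clean repair exists: prove the bound at $\btheta^*$ using your formulas, note that each block of $\E\bV(\btheta)-\E\bV(\btheta^*)$ is diagonally balanced with entries $O(r/p)$, hence has spectral norm $O(r)$, and absorb this by taking $c_r$ small.

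A secondary inaccuracy occurs in Step 2: the uniform deviation $\sup_{\btheta\in\bB_\infty(\btheta^*,r)}\|\bV(\btheta)-\E\bV(\btheta)\|_2$ is not $\lesssim\sqrt{\log(np)/(np)}$ up to lower-order factors. As in the proof of Theorem~\ref{thm1}, it splits into the deviation at $\btheta^*$ (which has that rate) plus a Lipschitz term over the ball of order $r=c_re^{-2\kappa_0-4\kappa_1}$; the latter is \emph{not} $o(e^{-2\kappa_0-4\kappa_1})$ and is not made small by the condition $\kappa_0+2\kappa_1<c\log(np)$ — it is only a small constant multiple of the eigenvalue lower bound, absorbed by the choice of $c_r$. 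Your claim that the entire deviation is $o(e^{-2\kappa_0-4\kappa_1})$ under $c<1/6$ conflates these two mechanisms; the conclusion survives, but only once this bookkeeping is made explicit.
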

	
	Following Theorems \ref{thm2}-\ref{thm4}, we also   establish the estimation errors for the MLE and MME in the subsequent corollaries below. 
	\begin{coro}\label{cor2} Let condition (A1) hold. Assume  $\{\bX^t\} \sim P_{\btheta^*}$, $\|\bbeta_{1}^*\|_{\infty}\le \kappa_{1} $, and $-\kappa_{0}\leq \beta_{i,0}^*\leq C_{\kappa}$ for $i=1, \ldots, p$,  and some constant $C_\kappa>0$.
		Then as $np\rightarrow \infty$ with $n\ge 2$, it holds with probability tending  to one that
		\begin{equation*}
			\frac{1}{\sqrt{p}}\left\|\hat{\btheta}-\btheta^*\right\|_{2}\leq C e^{2\kappa_{0}+4\kappa_{1}}\sqrt{\frac{\log(np)}{np}}\left(1+ \frac{\log(np)}{\sqrt{ p}} \right),
		\end{equation*}	
		
		\begin{equation*}
			{\rm and}	~~~	\left\|\hat{\btheta}-\btheta^*\right\|_{\infty}\leq C e^{4\kappa_{0}+8\kappa_{1}} \log\log(np) \sqrt{\frac{\log(np)}{np}}\left(1+ \frac{\log(np)}{\sqrt{p}} \right).
		\end{equation*}
	\end{coro}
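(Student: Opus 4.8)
The plan is to re-run the proofs of Theorems~\ref{thm2} and~\ref{thm3} essentially verbatim, the single structural change being that every place where the strong-convexity modulus was drawn from Lemma~\ref{bound_l2} is now fed by the sharper bound of Corollary~\ref{cor1}. Write $\mu$ for this modulus, i.e. the infimum of $\ba^\top \E(\bV(\btheta))\ba$ over $\btheta\in\bB_\infty(\btheta^*,r)\cap\bB'(\kappa_0,\kappa_1)$ with $\|\ba\|_2=1$. In the general analysis one has $\mu\asymp e^{-4\kappa_0-4\kappa_1}$ (Lemma~\ref{bound_l2}), whereas in the present sparse regime Corollary~\ref{cor1} supplies the improved $\mu\gtrsim e^{-2\kappa_0-4\kappa_1}$. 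Before the error analysis begins, note that Corollary~\ref{cor1} already guarantees a unique local MLE $\hat\btheta$ inside $\bB_\infty(\btheta^*,r)$ with $r=c_r e^{-2\kappa_0-4\kappa_1}$ under $\kappa_0+2\kappa_1<c\log(np)$, so the optimization scaffolding is in place and only the two rate bounds remain to be recomputed.

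For the $\ell_2$ bound I would reproduce the convexity argument of Theorem~\ref{thm2}: since $\hat\btheta$ minimizes $l$ and $\btheta^*$ minimizes $\E l$ over the neighborhood, combining $l(\hat\btheta)\le l(\btheta^*)$ with the strong convexity of $\E l$ bounds $\tfrac{\mu}{2}\|\hat\btheta-\btheta^*\|_2^2$ by the uniform local deviation $[l(\btheta^*)-\E l(\btheta^*)]-[l(\hat\btheta)-\E l(\hat\btheta)]$, which is controlled by Corollary~\ref{Cor1}. The crucial observation is that the resulting $\ell_2$ rate depends on the modulus only through a single inverse factor: this is transparent in Theorem~\ref{thm2}, whose prefactor $e^{4\kappa_0+4\kappa_1}$ is exactly $\mu^{-1}$ in the general regime. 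Substituting the sparse $\mu\gtrsim e^{-2\kappa_0-4\kappa_1}$ therefore replaces $e^{4\kappa_0+4\kappa_1}$ by $e^{2\kappa_0+4\kappa_1}$, while the $\sqrt{\log(np)/(np)}\,(1+\log(np)/\sqrt{p})$ part, which originates from the deviation bound and the gradient at $\btheta^*$, is unchanged; this gives the first displayed inequality.

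For the $\ell_\infty$ bound I would follow the sequential marginalization of Theorem~\ref{thm3}: writing $l(\btheta)=l(\btheta_{(i)},\btheta_{(-i)})$, the component $\hat\btheta_{(i)}$ minimizes $l(\cdot,\hat\btheta_{(-i)})$ and $\btheta^*_{(i)}$ minimizes $\E l(\cdot,\btheta^*_{(-i)})$, and $\|\hat\btheta_{(i)}-\btheta^*_{(i)}\|$ is bounded by dividing the gap between these two two-dimensional objectives by the local (block) modulus, the gap splitting into a uniform deviation term from Corollary~\ref{Cor1} and a term that propagates the $\ell_2$ error of Theorem~\ref{thm2} into the $(-i)$ block. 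Because the $\ell_2$ input already carries one inverse modulus and the marginal division supplies a second, the $\ell_\infty$ rate scales as $\mu^{-2}$, as confirmed by the prefactor $e^{8\kappa_0+8\kappa_1}=\mu^{-2}$ of Theorem~\ref{thm3}. Substituting the sparse modulus and iterating over the coordinates exactly as in \eqref{s_seq1}--\eqref{s_seq2} turns $e^{8\kappa_0+8\kappa_1}$ into $e^{4\kappa_0+8\kappa_1}$, leaving the $\log\log(np)$ and the remaining factors intact, which is the second displayed inequality.

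The main obstacle I anticipate is not the rate bookkeeping but verifying that the modulus of Corollary~\ref{cor1} genuinely holds uniformly over the asymmetric region $\bB'(\kappa_0,\kappa_1)$, where $\beta_{i,0}^*$ may drift to $-\infty$ but is capped at the constant $C_\kappa$ from above. Unlike the symmetric box underlying Lemma~\ref{bound_l2}, here one must check edge by edge how the factors $e^{\beta_{i,0}+\beta_{j,0}}$ enter the Hessian entries so that the $e^{-2\kappa_0}$ (rather than $e^{-4\kappa_0}$) rate is attained, while re-confirming that the mixing-based concentration of Lemma~\ref{mixing}, the matrix deviation of Lemma~\ref{BersM}, and the uniform local deviation of Corollary~\ref{Cor1} all carry over under the one-sided constraint. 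Finally, one must ensure that the more generous radius $r=c_r e^{-2\kappa_0-4\kappa_1}$ together with $\kappa_0+2\kappa_1<c\log(np)$, $c<1/6$, still keeps both $\hat\btheta$ and the MME initial value of Theorem~\ref{thm4} inside the shrinking neighborhood, so that local convexity and hence the whole argument remain valid.
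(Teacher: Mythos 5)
Your proposal is correct and takes essentially the same route as the paper: the paper's own proof obtains the $\ell_2$ bound by feeding the sparse-case modulus $e^{-2\kappa_{0}-4\kappa_{1}}$ of Corollary \ref{cor1} into the argument of Theorem \ref{thm2}, and obtains the $\ell_\infty$ bound by re-running the marginalization and iteration scheme of Theorem \ref{thm3} (equations \eqref{s_seq1}--\eqref{s_seq2}) with the radius $r_s$, the starting index $s_0$, and the weights $z_1,z_2$ in Lemma \ref{thm3_1} rescaled by exactly the $\mu^{-1}$ and $\mu^{-2}$ factors you identify, yielding the prefactors $e^{2\kappa_{0}+4\kappa_{1}}$ and $e^{4\kappa_{0}+8\kappa_{1}}$.
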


	\begin{coro}\label{cor3}
		Let condition (A1) hold. Assume  $\{\bX^t\} \sim P_{\btheta^*}$, $\|\bbeta_{1}^*\|_{\infty}\le \kappa_{1} $, and $-\kappa_{0}\leq \beta_{i,0}^*\leq C_{\kappa}$ for $i=1, \ldots, p$  and some constant $C_\kappa>0$.
		Then as $np\rightarrow \infty$ with $n\ge 2$, it holds with probability tending  to one that
		the MME $\tilde{\btheta}$ defined in  equations \eqref{MME1} and \eqref{MME2}  exists uniquely, and when $\kappa_{0} +2\kappa_{1}<c\log(np)$ for some constant $c<1/12$, it holds that 
		\begin{equation*}
			\left\|\tilde{\btheta}-\btheta^{*}\right\|_{\infty}\leq O_{p}\left(  e^{4\kappa_{0}+6\kappa_{1}}\sqrt{\frac{\log(n)\log(p)}{np}}\right) .
		\end{equation*}
	\end{coro}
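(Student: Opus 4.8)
The plan is to re-run the two-stage argument behind Theorem \ref{thm4}, tracking the exponential factors in $\kappa_0,\kappa_1$ under the weaker, one-sided control $-\kappa_0\le\beta_{i,0}^*\le C_\kappa$. Because \eqref{MME1} and \eqref{MME2} are decoupled, I would estimate $\tilde\bbeta_0$ first and then $\tilde\bbeta_1$ conditionally on $\tilde\bbeta_0$, invoking the interior mapping theorem \citep{gragg1974optimal,yan2016asymptotics} at each stage. Existence and uniqueness ``in probability'' come from the strong convexity of the two pseudo loss functions (the one for $\bbeta_0$ is stated to be strongly convex, and the analogous one for $\bbeta_1$ given $\tilde\bbeta_0$ as well), while the interior mapping theorem localizes the solution near the truth and supplies the $\ell_\infty$ error bound.

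First stage ($\tilde\bbeta_0$): equation \eqref{MME1} is the static $\beta$-model score system with data $\bar d_i=n^{-1}\sum_{t=1}^n\sum_{j\neq i}X_{i,j}^t$. I would set $F_i(\bbeta_0)=\sum_{j\neq i}\tfrac{e^{\beta_{i,0}+\beta_{j,0}}}{1+e^{\beta_{i,0}+\beta_{j,0}}}-\bar d_i$, whose Jacobian is diagonally balanced with entries $\tfrac{e^{\beta_{i,0}+\beta_{j,0}}}{(1+e^{\beta_{i,0}+\beta_{j,0}})^2}$. Under the sparse regime $\beta_{i,0}+\beta_{j,0}\in[-2\kappa_0,2C_\kappa]$ these entries are bounded below by $\gtrsim e^{-2\kappa_0}$, so the minimal row sum is $\gtrsim p\,e^{-2\kappa_0}$ and the approximate inverse used in the $\beta$-model analysis has $\ell_\infty$ operator norm $\lesssim e^{2\kappa_0}/p$. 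The residual $\|F(\bbeta_0^*)\|_\infty=\max_i|\bar d_i-\E\bar d_i|$ is a sum over the $p-1$ edges incident to node $i$, which are independent across $j$; a Bernstein-type concentration fed by Lemma \ref{mixing} (whose per-edge variance is $\sim e^{-2\kappa_0}$ here) gives $\|F(\bbeta_0^*)\|_\infty\lesssim e^{-\kappa_0}\sqrt{p\log(p)/n}$ up to lower-order terms. Crucially, the \emph{constant} upper endpoint $C_\kappa$ keeps every factor $e^{\beta_{i,0}+\beta_{j,0}}$ appearing in the third-order remainder of the interior mapping bound at $O(1)$; this is exactly what lowers the exponent on $\kappa_0$. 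Combining, the theorem yields $\|\tilde\bbeta_0-\bbeta_0^*\|_\infty\lesssim e^{O(\kappa_0)}\sqrt{\log(n)\log(p)/(np)}$.

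Second stage ($\tilde\bbeta_1$): with $\tilde\bbeta_0$ fixed, \eqref{MME2} is again a diagonally balanced $\beta$-model-type system in $\bbeta_1$, with Jacobian entries $\tfrac{u}{1+u}\tfrac{w}{(1+u+w)^2}$, where $u=e^{\beta_{i,0}+\beta_{j,0}}$ and $w=e^{\beta_{i,1}+\beta_{j,1}}$. Under the sparse constraints these are $\gtrsim e^{-2\kappa_0-2\kappa_1}$, so the approximate inverse has $\ell_\infty$ norm $\lesssim e^{2\kappa_0+2\kappa_1}/p$. The residual at $\bbeta_1^*$ splits into a statistical fluctuation of $\sum_t X_{i,j}^tX_{i,j}^{t-1}$ about its mean, bounded via Lemma \ref{mixing}, and a plug-in bias $\E(\cdot)|_{\tilde\bbeta_0}-\E(\cdot)|_{\bbeta_0^*}$, which I would control by $\|\tilde\bbeta_0-\bbeta_0^*\|_\infty$ times the $\beta_{i,0}$-derivative of the expectation summed over $j$. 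A second application of the interior mapping theorem, combined with the first-stage bound, produces the claimed order and hence $\|\tilde\btheta-\btheta^*\|_\infty\lesssim e^{4\kappa_0+6\kappa_1}\sqrt{\log(n)\log(p)/(np)}$.

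The main obstacle is the bookkeeping of exponential factors through the second stage, where the first-stage error enters both the residual and, via $\tilde\bbeta_0$, the Jacobian of \eqref{MME2}, and where the smallness condition $h=\bar\beta\,\omega\,\bar\eta\le 1/2$ of the interior mapping theorem must be verified. Verifying this smallness --- equivalently, that the MME remains inside the ball on which the Jacobian is nondegenerate --- is what forces the sharp threshold: the dominant exponential in the Newton--Kantorovich product $h$ is of order $e^{6(\kappa_0+2\kappa_1)}$, so requiring $e^{6(\kappa_0+2\kappa_1)}\sqrt{\log(np)/(np)}\le 1/2$ yields precisely $\kappa_0+2\kappa_1<c\log(np)$ with $c<1/12$. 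Relative to Theorem \ref{thm4} no new machinery is needed; the sole gain is that replacing the two-sided bound $|\beta_{i,0}^*|\le\kappa_0$ by the one-sided $\beta_{i,0}^*\le C_\kappa$ turns every ``large-$u$'' factor into a constant, cutting the exponent on $\kappa_0$ from $14$ to $4$.
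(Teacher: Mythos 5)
Your proposal follows essentially the same route as the paper: the paper proves Corollary \ref{cor3} precisely by re-running the two-stage interior-mapping-theorem argument of Theorem \ref{thm4}, replacing the Jacobian entry bounds by the sparse-regime ones (row sums of $\bF'$ bracketed between $C_1pe^{-2\kappa_0-2\kappa_1}$ and $p/4$), which updates the three inequalities in \eqref{MME_con} to $\|\bF'(\btheta_{(1)}^*;\tilde\btheta_{(0)})^{-1}\|_\infty\le C_2e^{2\kappa_0+6\kappa_1}$ and $\|\bF'^{-1}\bF\|_\infty\le C_2e^{4\kappa_0+6\kappa_1}\sqrt{\log(n)\log(p)/(np)}$, and then invokes Lemma \ref{Lemma_IMT}; your identification of $h\asymp e^{6(\kappa_0+2\kappa_1)}\sqrt{\log(np)/(np)}$ as the source of the $c<1/12$ threshold matches the paper exactly. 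The only caveat is a bookkeeping wobble: your stated bound $e^{2\kappa_0+2\kappa_1}/p$ covers only the approximate inverse $\bS$, whereas the paper's working bound on the true inverse (including the approximation-error term from Yan et al.'s theorem) is $e^{2\kappa_0+6\kappa_1}$, which is what your own final exponents implicitly use.
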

	
	From Corollary \ref{cor2}, we can see that when $\kappa_1\asymp O(1)$, the MLE is consistent when $\kappa_0\le c\log(np)$ for some positive constant $c<1/8$, with the corresponding lower bound in the density as $O(e^{-2c\log(np)}) \succ O((np)^{-1/4})$.
	Similarly,  from Corollary \ref{cor3} we can see that when $\kappa_1\asymp O(1)$, the density of the networks can be as small as $O(e^{-2c\log(np)})$ for some constant $c<1/12$, i.e., the density has a larger order than  $(np)^{-1/6}$ for the estimation of the MME. 
	Further, 
	when $6\kappa_{0}+10\kappa_{1} \le c_1\log(np)$ for some constant $c_1<1/2$, 
	we have  $\tilde{\btheta}\in \bB_{\infty} \left(\btheta^*,r \right)$, where $r$ is defined in Corollary \ref{cor1}. This implies the validity of using  $\tilde{\btheta}$   as an initial estimator for computing the local MLE.

}
\section{A  uniform local deviation bound under high dimensionality}\label{S.4}
As we have discussed, a key to establish the consistency of the local MLE is to evaluate the magnitude of $\big|[  l(\btheta) - \E  l(\btheta)]-[ l(\btheta^*) - \E  l(\btheta^*)]\big|$ for all $\btheta \in \bB_{\infty}\left(\btheta^*,r \right)$ with $r$ specified in Theorem \ref{thm1}. Such local deviation bounds are important for establishing error bounds for general M-estimators in the empirical processes  \citep{van1996weak}.  
Note that
\begin{align}\label{decomp}
	l(\btheta)-\E l(\btheta)&
	=	   -\frac{1}{p}  \sum_{1\leq i<j\leq p  } \Bigg\{  \left( \beta_{i,0}+\beta_{j,0}\right)     \Big(\frac{a_{i,j}-\E(a_{i,j})}{n}\Big) \\+  & \log\left(1+e^{\left( \beta_{i,1}+ \beta_{j,1}\right)}\right) \Big(\frac{d_{i,j}-\E(d_{i,j})}{n}\Big)\nonumber\\
	+& \log\left(1+e^{\left( \beta_{i,1}-\beta_{i,0}\right)  +\left( \beta_{j,1}-\beta_{j,0}\right)}\right)\Big(\frac{b_{i,j}-\E(b_{i,j})}{n}\Big) \Bigg\} \nonumber 
\end{align}
where  $a_{i,j}, b_{i,j}$ and $d_{i,j}$ are defined in \eqref{abc}.
The three terms on the right-hand side all admit the following form
\begin{equation}\label{Lfunction}
	\bL\left( \btheta\right) = \frac{1}{p}\sum_{1\leq i\neq j\leq p  }l_{i,j} \left( \theta_i, \theta_j\right) Y_{i,j},
\end{equation}
for some functions $\bL:\mathbb{R}^{p}\to  \mathbb{R} $, $l_{i,j}:\mathbb{R}^{2}\to \mathbb{R} $,  and centered random variables $Y_{i,j}$ $(1\leq i,j\leq p)$. Instead of establishing the uniform bound for each term in \eqref{decomp} separately, below we will establish a unified result for bounding $|\bL\left( \btheta\right)- \bL\left( \btheta'\right)|$ over a local $\ell_\infty$ ball defined as $ \btheta\in\bB_{\infty}( \btheta',\cdot)$ for 
a general $\bL$ function as in \eqref{Lfunction}. We remark that in general without further assumptions on $\bL$, establishing uniform deviation bounds is impossible when the dimension of the problem diverges. For our TWHM however, the decomposition  \eqref{decomp} is of a particularly appealing structure in the sense that only two-way interactions between parameters $\theta_i$ exist. Based on
this ``sparsity” structure, we develop a novel reformulation (c.f. equation \eqref{S2_bound}) for the main components of the Taylor series of 
$\bL(\btheta)$ satisfying the following two conditions. 

\begin{itemize}
	\item [(L-A1)]     There exists a constant $\alpha>0$, such that for any $1\le i\ne j\le p$, any positive integer $k$, and any non-negative integer $s\le k$, we have:
	\begin{equation*}
		\frac{\partial^k l_{i,j}\left( \theta_i,\theta_j\right) }{\partial \theta_{i}^{s}\partial \theta_{j}^{k-s} }\leq \frac{\left(k-1 \right)!}{\alpha^k}.
	\end{equation*}
	
	\item [(L-A2)]   Random variables $Y_{i,j}, 1\le i \ne j\le p$ are independent satisfying  $\E\left(Y_{i,j} \right)=0 $, $|Y_{i,j}| \le b_{(p)}$ and $\var\left( Y_{i,j}\right)\le \sigma_{(p)}^2 $ for any $i$ and $j$, where $b_{(p)}$ and $\sigma_{(p)}^2$ are constants depending on $n$ and $p$ but independent of $i$ and $j$.
\end{itemize}
Loosely speaking, Condition  (L-A1)  can be seen as a smoothness assumption on the higher order derivatives of  $l_{i,j}\left( \theta_i,\theta_j\right)$ so that we can properly bound these derivatives when Taylor expansion is applied. On the other hand, the upper bound for these derivatives is mild as it can diverge very quickly as $k$ increases.  
{\color{black} For our TWHM, 
	it can be verified that  (L-A1) holds for $l_{i,j}(\theta_i,\theta_j)=\theta_i+\theta_j$ and $l_{i,j}(\theta_i,\theta_j)=\log(1+e^{\theta_i+\theta_j})$; see \eqref{eq:lossfunction}.  }
For the latter,  note that the first derivative of function $l(x)=\log(1+e^x)$ is seen as the Sigmoid function:
\begin{equation*}
	S\left( x\right)= \frac{e^x}{1+e^x}= \frac{1}{1+e^{-x}}.
\end{equation*}
By the expression of the higher order derivatives of the Sigmoid function \citep{minai1993derivatives}, the $k$-th order derivative of $l$ is
\begin{equation*}
	\frac{\partial^k l\left( x\right) }{\partial x^{k} }=\frac{\sum_{m=0}^{k-2} -A\left(k-1,m \right) \left(-e^{x} \right)^{m+1}  }{\left(1+e^{x} \right)^{k} }, 
\end{equation*} 
where $k\geq 2$ and $A\left(k-1,m \right)$ is the Eulerian number. 	Now for any $x$, we have
\begin{equation*}
	\left|\frac{\sum_{m=0}^{k-2} -A\left(k-1,m \right) \left(-e^{x} \right)^{m+1}  }{\left(1+e^{x} \right)^{k} }\right|\leq \sum_{m=0}^{k-2}A\left(k-1,m \right)=\left( k-1\right) !.
\end{equation*}
Therefore,
\begin{equation*}
	\left|\frac{\partial^k l\left( x\right) }{\partial x^{k} }\right|\leq  {\left(k-1 \right)! }
\end{equation*}
holds for all $x\in \mathbb{R}$ and $k\ge 2$. With extra arguments using the chain rule, this in return implies that (L-A1) is satisfied with $\alpha=1$ when   $l_{i,j}(\btheta) =\log\left(1+e^{\theta_i+\theta_j} \right)$.

Condition (L-A2) is a regularization assumption for the random variables $Y_{i,j}, 1\leq i, j\le p$, and the bounds on their moments are imposed to ensure point-wise concentration. For our TWHM,  from     Lemma \ref{mixing} and Lemma \ref{var_bound}, we have that there exist  large enough constants $C>0$ and $c>0$ such that  with probability greater than $1-(np)^{-c}$, the random variables $\frac{a_{i,j}-\E(a_{i,j})}{n}$, $\frac{b_{i,j}-\E(b_{i,j})}{n}$ and $\frac{d_{i,j}-\E(d_{i,j})}{n}$ all 
satisfy condition (L-A2) with   $ b_{(p)}= C \sqrt{n^{-1}\log(np)}+C n^{-1}\log\left(n\right) \log\log\left(n\right) \log\left(np\right)$ and $\sigma_{(p)}^2=Cn^{-1} $.

We present the uniform upper bound on the deviation of  $\bL(\btheta)$ below.
\begin{thm}\label{thm5}
	Assume conditions  (L-A1)  and  (L-A2). For any given $\btheta'\in \mathbb{R}^{p}$ and $\alpha_{0}\in (0,\alpha/2)$,   there exist large enough constants $C>0$ and $c>0$ which are independent of $\btheta'$, such that, as $np\to \infty$, with probability greater than $1-(np)^{-c}$,  
	\begin{equation*}
		\left|\bL\left( \btheta\right)-\bL\left( \btheta'\right)\right|
		\le C\frac{ b_{(p)}\log(np)+ \sigma_{(p)}\sqrt{p\log(np)}}{p} \left\|\btheta-\btheta'\right\|_{1} 
	\end{equation*}
	holds uniformly for all $\btheta\in \bB_{\infty}\left( \btheta',\alpha_{0}\right)$.
\end{thm}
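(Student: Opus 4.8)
The plan is to Taylor-expand $\bL(\btheta)-\bL(\btheta')$ about the \emph{fixed} centre $\btheta'$ and to organise the expansion so that every random quantity appearing in it is independent of $\btheta$, the whole $\btheta$-dependence being carried by deterministic factors controlled by $\|\bdelta\|_1$ and $\alpha_0$, where $\bdelta:=\btheta-\btheta'$ with $|\delta_i|\le\alpha_0<\alpha/2$. This is precisely what lets me pass from pointwise concentration to a bound uniform over the continuum $\bB_{\infty}(\btheta',\alpha_0)$ with no chaining. Writing $\rho_{n,p}:=\big(b_{(p)}\log(np)+\sigma_{(p)}\sqrt{p\log(np)}\big)/p$ for the target rate and using the two-way structure of \eqref{Lfunction}, I expand each summand as
\[
l_{i,j}(\theta_i,\theta_j)-l_{i,j}(\theta_i',\theta_j')=\sum_{k\ge 1}\frac{1}{k!}\sum_{s=0}^{k}\binom{k}{s}\,\partial_i^{s}\partial_j^{k-s}l_{i,j}(\theta_i',\theta_j')\,\delta_i^{s}\delta_j^{k-s},
\]
so that $\bL(\btheta)-\bL(\btheta')=\sum_{k\ge1}T_k$ with $T_k=\tfrac1p\sum_{s=0}^k R^{(k,s)}$ and $R^{(k,s)}=\sum_{i\ne j}c^{(k,s)}_{i,j}Y_{i,j}\,\delta_i^s\delta_j^{k-s}$, where by (L-A1) the deterministic coefficients satisfy $|c^{(k,s)}_{i,j}|\le\binom{k}{s}/(k\alpha^k)$. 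The reformulation (this is the content of \eqref{S2_bound}) is that each $R^{(k,s)}$ factors into a $\btheta$-independent random object times a deterministic vector built from the $\delta_i$.

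I would then treat three regimes. For the first-order term I group by index, $T_1=\tfrac1p\sum_i g_i\delta_i$ with $g_i=\sum_{j\ne i}\big[\partial_i l_{i,j}(\btheta')Y_{i,j}+\partial_i l_{j,i}(\btheta')Y_{j,i}\big]$ a sum of $2(p-1)$ independent centred variables obeying the moment bounds in (L-A2); Bernstein together with a union bound over the $p$ indices gives $\max_i|g_i|\lesssim\sigma_{(p)}\sqrt{p\log(np)}+b_{(p)}\log(np)$ with high probability, whence $|T_1|\le\tfrac1p\max_i|g_i|\,\|\bdelta\|_1\lesssim\rho_{n,p}\|\bdelta\|_1$, exactly the claimed rate. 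For $k\ge2$ I split into pure-power terms ($s\in\{0,k\}$) and mixed terms ($1\le s\le k-1$). For $s=0$ I write $R^{(k,0)}=\sum_j\delta_j^k h_j$ with $h_j=\sum_{i\ne j}c^{(k,0)}_{i,j}Y_{i,j}$; a Bernstein bound on $\max_j|h_j|$ and $\sum_j|\delta_j|^k\le\alpha_0^{k-1}\|\bdelta\|_1$ yield $\tfrac1p|R^{(k,0)}|\lesssim \tfrac{1}{k\alpha_0}(\alpha_0/\alpha)^k\rho_{n,p}\|\bdelta\|_1$ (symmetrically for $s=k$). For the mixed terms I write $R^{(k,s)}=\bu^{\top}\bM^{(k,s)}\bv$ with $u_i=\delta_i^s,\ v_j=\delta_j^{k-s}$, $\bM^{(k,s)}_{i,j}=c^{(k,s)}_{i,j}Y_{i,j}$ and zero diagonal, so that $|R^{(k,s)}|\le\|\bM^{(k,s)}\|_2\|\bu\|_2\|\bv\|_2$; here $\|\bu\|_2\|\bv\|_2\le\alpha_0^{k-1}\|\bdelta\|_1$ since $\sum_i\delta_i^{2s}\le\alpha_0^{2s-1}\|\bdelta\|_1$ for $s\ge1$, and crucially $\|\bM^{(k,s)}\|_2$ is a $\btheta$-independent random variable that a matrix Bernstein inequality (in the spirit of Lemma~\ref{BersM}) bounds by $\tfrac{\binom{k}{s}}{k\alpha^k}\big(\sigma_{(p)}\sqrt{p\log(np)}+b_{(p)}\log(np)\big)$ with high probability.

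Summing, each $T_k$ with $2\le k\le K$ contributes $\lesssim\tfrac{1}{k\alpha_0}(2\alpha_0/\alpha)^k\rho_{n,p}\|\bdelta\|_1$ after using $\sum_{s}\binom{k}{s}=2^k$; since $2\alpha_0/\alpha<1$ the geometric series over $k$ sums to a constant multiple of $\rho_{n,p}\|\bdelta\|_1$. I truncate at $K\asymp\log(np)$ and bound the tail deterministically via $|Y_{i,j}|\le b_{(p)}$ and $\sum_{i\ne j}(|\delta_i|+|\delta_j|)^k\le 2^k(p-1)\alpha_0^{k-1}\|\bdelta\|_1$, which gives $\sum_{k>K}|T_k|\lesssim\tfrac{b_{(p)}}{\alpha_0}(2\alpha_0/\alpha)^K\|\bdelta\|_1\le\rho_{n,p}\|\bdelta\|_1$ once $K=C\log(np)$ with $C$ large. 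A final union bound over the $O(\log^2(np))$ concentration events (one for $T_1$, at most $k+1$ per order $k\le K$), each of failure probability $O(p\,(np)^{-c_1})$, keeps the total below $(np)^{-c}$ after the constants are taken large, matching the statement.

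The two genuinely delicate points are what I expect to be the main obstacles. First, taming the infinitely many Taylor terms despite the diverging derivative bounds in (L-A1): this is resolved by the geometric decay forced by $\alpha_0<\alpha/2$ together with truncation at order $\log(np)$, and is the reason the hypothesis permits only the local ball $\bB_{\infty}(\btheta',\alpha_0)$ with $\alpha_0<\alpha/2$. Second, obtaining a bound uniform over a continuum while $p\to\infty$ (where classical chaining fails): this is exactly why I expand about the fixed $\btheta'$ and isolate all randomness into $\btheta$-independent marginal sums $g_i,h_j$ and spectral norms $\|\bM^{(k,s)}\|_2$. The spectral-norm handling of the mixed terms is essential for the rate to come out proportional to $\|\bdelta\|_1$ with the correct $\sigma_{(p)}\sqrt{p}/p$ and $b_{(p)}/p$ scalings; a crude entrywise bound would discard the centring of $Y_{i,j}$ and inflate the rate by a factor of order $p/\log(np)$.
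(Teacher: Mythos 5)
Your proposal is correct and follows essentially the same route as the paper's proof: Taylor expansion about the fixed centre $\btheta'$ truncated at order $K\asymp\log(np)$, scalar Bernstein plus a union bound for the pure-power terms (the paper's $S^{(1)}$), a spectral-norm/matrix-Bernstein bound on a $\btheta$-independent random matrix for the mixed terms (the paper's $S^{(2)}$, via the quadratic form \eqref{S2_bound}), geometric summability from $\alpha_0<\alpha/2$, and a deterministic bound on the remainder (the paper's $S^{(3)}$). The only differences are cosmetic — you apply matrix Bernstein to each bilinear form $\bu^\top\bM^{(k,s)}\bv$ separately while the paper stacks the $\bY_k^s$ into one block matrix $\bW_k$ per order $k$, and you truncate the infinite series (with coefficients at $\btheta'$) where the paper uses a Lagrange remainder at an intermediate point, both of which are equivalent given the uniform derivative bounds in (L-A1).
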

One of the main difficulties in analyzing $\bL(\btheta)$ defined in \eqref{Lfunction} is that $l_{i,j}(\theta_i,\theta_j)$ and $Y_{i,j}$ are coupled, giving rise to complex terms involving both in the  Taylor expansion of $\bL(\btheta)$.
When Taylor expansion with order $K$ is used, condition (L-A1) can reduce the number of higher order terms from $O(p^{K})$ to $O(p^22^{K})$. On the other hand, by formulating the main terms in the Taylor series into a matrix form in  \eqref{S2_bound}, the 
uniform convergence of the sum of these terms is equivalent to that of the spectral norm of a centered random matrix, which is independent of the parameters.  Further details can be found in the proofs of Theorem \ref{thm5}.

Define the marginal functions of $\bL\left( \btheta\right)$ as
\begin{equation*}
	\bL_{i}\left( \btheta\right) = \frac{1}{p}\sum_{j=1,\:j \neq i}^{p} l_{i,j}\left( \theta_i,\theta_j \right) Y_{i,j}, \quad i=1, \ldots, p,
\end{equation*}
by retaining only those terms related to $\theta_i$. 
Similar to Theorem \ref{thm5}, we state the following upper bound for these marginal functions. With some abuse of notation, let $\btheta_{-i}:=\left(\theta_{1},\cdots,\theta_{i-1},\theta_{i+1},\cdots,\theta_{p} \right)^\top$   be the vector containing all the elements in  $\btheta$ except $\theta_{i}$.
\begin{thm}\label{thm6}
	If conditions  (L-A1)  and  (L-A2)  hold, then     for any given $\btheta'\in \mathbb{R}^{p}$ and $\alpha_{0}\in (0,\alpha/2)$,   there exist large enough constants $C>0$ and $c>0$ which are independent of $\btheta'$, such that, as $np\to \infty$,   with probability greater than $1-(np)^{-c}$,
	\begin{multline*}
		\left|\bL_{i}\left( \btheta\right)-\bL_{i}\left( \btheta'\right)\right|\\
		\le C\frac{b_{(p)}}{p} \left\|\btheta_{-i}-\btheta_{-i}' \right\|_{1}+C\left( \left\|\btheta_{-i}-\btheta_{-i}' \right\|_{1}+1\right) \left|\theta_{i} -\theta'_{i} \right|\frac{b_{(p)} \log(np)+ \sigma_{(p)}\sqrt{p\log(np)}}{p}
	\end{multline*}
	holds uniformly for all $\btheta\in \bB_{\infty}\left( \btheta',\alpha_{0}\right)$,
	and $i=1,\cdots,p$. 
\end{thm}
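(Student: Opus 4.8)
The plan is to adapt the Taylor-expansion strategy behind Theorem~\ref{thm5}, while exploiting the fact that the marginal function $\bL_i$ carries only a single summation index. First I would expand each summand jointly around the centre $(\theta_i',\theta_j')$, writing
\begin{equation*}
\bL_i(\btheta)-\bL_i(\btheta')=\frac{1}{p}\sum_{j\neq i}\sum_{s+m\ge 1}\frac{(\theta_i-\theta_i')^s(\theta_j-\theta_j')^m}{s!\,m!}\,\frac{\partial^{s+m}l_{i,j}}{\partial\theta_i^{s}\partial\theta_j^{m}}(\theta_i',\theta_j')\,Y_{i,j},
\end{equation*}
and split the right-hand side into three groups according to $(s,m)$: group (i) with $s=0,\,m\ge 1$; group (ii) with $s\ge 1,\,m=0$; and group (iii) with $s\ge 1,\,m\ge 1$. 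To keep the series manipulations rigorous I would truncate the $\theta_i$-expansion at order $K\asymp\log(np)$ and bound the Lagrange remainder deterministically, using the derivative bound $|\partial^{K+1}l_{i,j}|\le K!/\alpha^{K+1}$ from (L-A1) together with $\alpha_0<\alpha/2$ and $|Y_{i,j}|\le b_{(p)}$; the factor $(\alpha_0/\alpha)^{K+1}\le 2^{-(K+1)}$ then makes the remainder of polynomially small order in $(np)^{-1}$.

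For group (i), which equals $\tfrac1p\sum_{j\neq i}\big[l_{i,j}(\theta_i',\theta_j)-l_{i,j}(\theta_i',\theta_j')\big]Y_{i,j}$, I would use the mean value theorem with the first-derivative bound $|\partial_{\theta_j}l_{i,j}|\le 1/\alpha$ (from (L-A1) with $k=1$) and $|Y_{i,j}|\le b_{(p)}$, yielding the purely deterministic bound $\lesssim \tfrac{b_{(p)}}{p}\|\btheta_{-i}-\btheta_{-i}'\|_1$, which is the first term of the claimed inequality. For group (iii), the factor $(\theta_j-\theta_j')^m$ with $m\ge1$ contributes $|\theta_j-\theta_j'|\alpha_0^{m-1}$, so summing $|Y_{i,j}|\le b_{(p)}$ over $j$ and applying the factorial bound $\tfrac{(s+m-1)!}{s!\,m!}\le 2^{s+m}$ gives a convergent double geometric series (again using $\alpha_0<\alpha/2$); factoring one power of $|\theta_i-\theta_i'|$ out of the $s\ge1$ sum produces $\lesssim |\theta_i-\theta_i'|\,\|\btheta_{-i}-\btheta_{-i}'\|_1\,\tfrac{b_{(p)}}{p}$, which is dominated by the second term of the claim. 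The decisive group is (ii): its coefficients $T_s:=\tfrac1p\sum_{j\neq i}\partial_{\theta_i}^{s}l_{i,j}(\theta_i',\theta_j')\,Y_{i,j}$ are evaluated at the \emph{fixed} centre and are therefore independent of $\btheta$, so each $T_s$ is merely a sum of independent, mean-zero, bounded random variables to which Bernstein's inequality applies directly — no empirical-process uniformity over $\btheta$ is needed. This $\btheta$-independence is precisely the structural simplification that lets the marginal case avoid the spectral-norm/random-matrix argument used for Theorem~\ref{thm5}. With $|T_s|\le \tfrac{(s-1)!}{\alpha^s}\cdot C\,\tfrac{b_{(p)}\log(np)+\sigma_{(p)}\sqrt{p\log(np)}}{p}$, summing $\sum_{s\ge1}\tfrac{|\theta_i-\theta_i'|^s}{s!}|T_s|$ telescopes the factorials to $\sum_{s\ge1}\tfrac1s(|\theta_i-\theta_i'|/\alpha)^s$, a convergent series from which $|\theta_i-\theta_i'|$ factors out, giving exactly the $+1$ part of the second term.

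Combining the three groups and the negligible remainder yields the stated bound, and a union bound over $i=1,\dots,p$ together with the truncation orders $s\le K$ (inflating the constant $c$, using $p\le np$) delivers the probability $1-(np)^{-c}$. The main obstacle I anticipate is controlling group (ii) uniformly over the Taylor order $s$: because (L-A1) only tames the derivatives up to the factorial factor $(s-1)!$, one must verify that the Taylor weights $1/s!$ and the condition $\alpha_0<\alpha/2$ together force geometric decay, and that the union bound over $s$ stays summable. The clean fix is to run Bernstein at the slightly inflated level obtained by replacing $\log(np)$ with $\log(np)+s$, so that the per-$s$ failure probability is $\lesssim (np)^{-c}e^{-cs}$; the extra linear-in-$s$ and $\sqrt{s}$ corrections are then absorbed by the factor $1/s$ in the summation and do not affect the leading rate.
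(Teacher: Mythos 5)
Your proposal is correct, and its skeleton---truncating the Taylor expansion at order $K\asymp\log(np)$, applying Bernstein's inequality to the pure-$\theta_i$ coefficients evaluated at the fixed centre $\btheta'$, and bounding the Lagrange remainder deterministically via $(2\alpha_0/\alpha)^{K+1}$---coincides with the paper's proof; in particular, your observation that the fixed-centre coefficients make empirical-process uniformity over $\btheta$ unnecessary is exactly the simplification the paper exploits (the random-matrix/spectral-norm argument of Theorem \ref{thm5} is indeed absent from the paper's Theorem \ref{thm6} proof). Where you genuinely diverge is in the cross terms ($s\ge 1$, $m\ge 1$). The paper handles them by Cauchy--Schwarz over the index pair $(j,s)$ combined with a Bernstein bound for $\max_i \sum_{j\neq i} Y_{i,j}^2$, producing the random factor $(p\log(np))^{1/4}\sqrt{\sigma_{(p)}b_{(p)}}+b_{(p)}\sqrt{\log(np)}$; you instead use the purely deterministic bound $|Y_{i,j}|\le b_{(p)}$ together with $\tfrac{(s+m-1)!}{s!\,m!}\le 2^{s+m}$ and the geometric decay from $\alpha_0<\alpha/2$, obtaining $\lesssim \tfrac{b_{(p)}}{p}\,|\theta_i-\theta_i'|\,\|\btheta_{-i}-\btheta_{-i}'\|_1$, which is dominated by the stated bound since $b_{(p)}\le b_{(p)}\log(np)$. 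Your route buys simplicity (one fewer concentration argument) and, as it happens, sidesteps a small slip in the paper's step: the paper's bound for $\sum_{j\neq i}Y_{i,j}^2$ omits the mean term $(p-1)\,\E Y_{i,j}^2 \le p\sigma_{(p)}^2$, which can dominate the fluctuation terms it keeps; this omission is immaterial for the theorem (the corrected bound still fits inside $\sigma_{(p)}\sqrt{p\log(np)}$), but your deterministic bound avoids the issue entirely. Similarly, your mean-value-theorem treatment of the pure-$\theta_j$ group is equivalent to, and slightly cleaner than, the paper's summation of $\sum_k \tfrac1k\|(\btheta_{-i}-\btheta_{-i}')/\alpha\|_k^k$. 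Two presentational cautions: your opening display writes $\bL_i(\btheta)-\bL_i(\btheta')$ as an infinite double series, which presupposes convergence of the series to the function, so make the truncation-plus-Lagrange-remainder version (which you already describe) the primary formulation; and when discarding the remainder, note that the uniform bound $b_{(p)}(np)^{-c}$ does not literally vanish as $\btheta\to\btheta'$, so one should extract a factor of $|\theta_i-\theta_i'|$ or $\|\btheta_{-i}-\btheta_{-i}'\|_1$ from $\left(2\|\btheta-\btheta'\|_\infty/\alpha\right)^{K+1}$ before killing the rest with $(np)^{-c}$---a refinement the paper's own $S_i^{(3)}$ step also quietly needs.
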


Similar  to \eqref{decomp}, we can also decompose  $ l\left(\btheta_{(i)} ,\btheta_{(-i)} \right)- \E l \left(\btheta_{(i)},\btheta_{(-i)} \right)$ into the sum of three components taking the form \eqref{Lfunction}. Consequently, by setting $\btheta'$ in Theorems \ref{thm5} and \ref{thm6} to be   the true parameter  $\btheta^*$, we can obtain   the following upper bounds. 
\begin{coro}\label{Cor1} 
	For any given $0<\alpha_{0}<1/4$, there exist large enough positive constants $c_1, c_2,$ and $C$  such that    
	\begin{itemize}
		\item[(i)]   with probability greater than $1-(np)^{-c_1}$, 
		\begin{equation}\label{L_up1}
			\left| \big(l(\btheta)-l(\btheta^*)\big) - \big( \E l(\btheta)-\E l(\btheta^*)\big)\right| 
			\leq C_1\left(1+ \frac{\log(np)}{\sqrt{ p}} \right)\sqrt{\frac{\log(np)}{n}} \left\|\btheta-\btheta^*\right\|_{2}
		\end{equation}
		holds  uniformly for all $\btheta\in \bB_{\infty}\left(\btheta^*,\alpha_{0} \right)$ with some constant $\alpha_{0}<1/2$;
		\item[(ii)]   with probability greater than $1-(np)^{-c_2}$, 
		\begin{multline}\label{L_up2}
			\left|l\left(\btheta_{(i)},\btheta_{(-i)}^* \right) -l\left(\btheta_{(i)}^*, \btheta_{(-i)}^{*}\right)-\left[\E l\left( \btheta_{(i)} ,\btheta_{(-i)}^* \right) -\E l\left(\btheta_{(i)}^*, \btheta_{(-i)}^{*}\right) \right]\right|\\
			\leq C_{2}\left(1+ \frac{\log(np)}{\sqrt{ p}} \right)\sqrt{\frac{\log(np)}{n}}\left\|\btheta_{(i)}-\btheta^*_{(i)}\right\|_{2}
		\end{multline}
		holds  uniformly for all $\btheta_{(i)}\in \bB_{\infty}\left(\btheta^*_{(i)},\alpha_{0} \right)$ with some constant $\alpha_{0}<1/2$.
	\end{itemize}
	
\end{coro}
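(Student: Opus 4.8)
The plan is to read off both bounds directly from the general deviation results, Theorem~\ref{thm5} and Theorem~\ref{thm6}, by specializing them to the three pieces produced by the decomposition \eqref{decomp}. Writing $l(\btheta)-\E l(\btheta)$ as in \eqref{decomp}, I would identify the three summands as instances of the template \eqref{Lfunction}: the first with $l_{i,j}=\beta_{i,0}+\beta_{j,0}$ and $Y_{i,j}=n^{-1}(a_{i,j}-\E a_{i,j})$; the second with $l_{i,j}=\log(1+e^{\beta_{i,1}+\beta_{j,1}})$ and $Y_{i,j}=n^{-1}(d_{i,j}-\E d_{i,j})$; and the third with $l_{i,j}=\log(1+e^{\gamma_i+\gamma_j})$, where $\gamma_i:=\beta_{i,1}-\beta_{i,0}$, and $Y_{i,j}=n^{-1}(b_{i,j}-\E b_{i,j})$. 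Condition (L-A1) has already been checked in the text (trivially for the linear piece, and with $\alpha=1$ for the two log-sigmoid pieces), while (L-A2) holds, with $b_{(p)}=C\sqrt{n^{-1}\log(np)}+Cn^{-1}\log(n)\log\log(n)\log(np)$ and $\sigma_{(p)}^2=Cn^{-1}$, on the event furnished by Lemma~\ref{mixing} and Lemma~\ref{var_bound}. Note that the third piece is naturally parametrized by $\bgamma=\bbeta_1-\bbeta_0$, so that $\btheta\in\bB_\infty(\btheta^*,\alpha_0)$ forces $\bgamma\in\bB_\infty(\bgamma^*,2\alpha_0)$; requiring $2\alpha_0<\alpha/2$ is exactly the origin of the constraint $\alpha_0<1/4$.

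For part (i) I would apply Theorem~\ref{thm5} to each piece with $\btheta'=\btheta^*$ and combine by the triangle inequality. Each application returns a bound proportional to $p^{-1}\{b_{(p)}\log(np)+\sigma_{(p)}\sqrt{p\log(np)}\}$ times the $\ell_1$ distance of the relevant sub-vector ($\bbeta_0$, $\bbeta_1$, or $\bgamma$). Using $\|\bv\|_1\le\sqrt{p}\,\|\bv\|_2$ for a $p$-vector (and $\|\bgamma-\bgamma^*\|_1\le\sqrt{2p}\,\|\btheta-\btheta^*\|_2$ for the cross term) converts each into $p^{-1/2}\{b_{(p)}\log(np)+\sigma_{(p)}\sqrt{p\log(np)}\}\,\|\btheta-\btheta^*\|_2$. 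Substituting the explicit rates, the variance term contributes $\sigma_{(p)}\sqrt{\log(np)}\asymp\sqrt{\log(np)/n}$, and the leading part of the boundedness term contributes $p^{-1/2}\sqrt{n^{-1}\log(np)}\log(np)=\frac{\log(np)}{\sqrt p}\sqrt{\log(np)/n}$; together these give the stated factor $(1+\log(np)/\sqrt p)\sqrt{\log(np)/n}$. A short case check confirms that the remaining $n^{-1}\log(n)\log\log(n)\log(np)$ part of $b_{(p)}$ is absorbed: when $n\to\infty$ it is dominated by $\sqrt{n^{-1}\log(np)}$, and when $n$ is bounded it is $O(\log(np))$, whence its contribution $O((\log np)^2/\sqrt p)$ is $o(\sqrt{\log(np)})$ because $p\to\infty$. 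A union bound over the three pieces yields the probability $1-(np)^{-c_1}$.

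For part (ii) I would use the same three-fold decomposition, but now only the block $\btheta_{(i)}=(\beta_{i,0},\beta_{i,1})^\top$ varies while $\btheta_{(-i)}=\btheta^*_{(-i)}$ is held fixed. Consequently every summand not involving node $i$ cancels, and $l(\btheta_{(i)},\btheta^*_{(-i)})-l(\btheta^*)$ reduces (after centering) to the sum of the three marginal functions $\bL_i^{(k)}$. Applying Theorem~\ref{thm6} with $\btheta'=\btheta^*$ and $\btheta_{-i}=\btheta^*_{-i}$, the term carrying $\|\btheta_{-i}-\btheta^*_{-i}\|_1$ vanishes, leaving only $C\,|\theta_i-\theta^*_i|\,p^{-1}\{b_{(p)}\log(np)+\sigma_{(p)}\sqrt{p\log(np)}\}$; since $|\beta_{i,0}-\beta^*_{i,0}|$, $|\beta_{i,1}-\beta^*_{i,1}|$ and $|\gamma_i-\gamma^*_i|$ are all bounded by a constant multiple of $\|\btheta_{(i)}-\btheta^*_{(i)}\|_2$, the same rate substitution as above, but now without the $\ell_1$-to-$\ell_2$ inflation factor $\sqrt p$, produces a coefficient smaller by $p^{-1/2}$ than in part (i) and therefore, in every regime, comfortably within the stated $(1+\log(np)/\sqrt p)\sqrt{\log(np)/n}$. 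A union bound over $i=1,\dots,p$ (costing a factor $p$, hence a slightly smaller $c_2$) gives the simultaneous statement needed later for the $\ell_\infty$ analysis.

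Since the analytic heavy lifting is entirely contained in Theorems~\ref{thm5} and~\ref{thm6}, the only genuine technical points are bookkeeping. The one requiring care is that (L-A2) demands almost-sure boundedness $|Y_{i,j}|\le b_{(p)}$, whereas Lemma~\ref{mixing} only delivers this with probability $1-(np)^{-c}$; I would therefore restrict all arguments to the event $\mathcal{E}$ on which $\max_{i<j}|Y_{i,j}|\le b_{(p)}$ holds (equivalently, truncate at level $b_{(p)}$ and control the negligible remainder), intersect it with the high-probability events from Theorems~\ref{thm5} and~\ref{thm6}, and track the resulting union bound. The second point is the dominant-term analysis just sketched, which must be verified in both the $n$-fixed with $p\to\infty$ and the $n\to\infty$ regimes to confirm that every contribution collapses into the single stated coefficient; this is the step most likely to hide an off-by-a-log error if done carelessly.
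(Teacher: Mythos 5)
Your proposal is correct and follows essentially the same route as the paper: the paper's own justification of Corollary \ref{Cor1} is exactly this specialization of Theorems \ref{thm5} and \ref{thm6} to the three terms of the decomposition \eqref{decomp} with $\btheta'=\btheta^*$, with (L-A1)/(L-A2) verified as in Section \ref{S.4} and the conversion $\|\bx\|_{1}\leq \sqrt{p}\,\|\bx\|_{2}$ applied at the end (your observation that the reparametrization $\gamma_i=\beta_{i,1}-\beta_{i,0}$ forces the radius to double and hence yields the constraint $\alpha_{0}<1/4$ is precisely the point the paper leaves implicit). The only imprecision is in your case check for the $n^{-1}\log(n)\log\log(n)\log(np)$ part of $b_{(p)}$: when $n\to\infty$ it is \emph{not} always dominated by $\sqrt{n^{-1}\log(np)}$ (e.g.\ when $\log p \gg n$), but its contribution is nevertheless absorbed because $\log(n)\log\log(n)\log^{3/2}(np)/\sqrt{np}\to 0$ whenever $np\to\infty$, so the stated bound holds and your conclusion is unaffected.
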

In \eqref{L_up1} and \eqref{L_up2} we have replaced the $\ell_1$ norm based upper bounds in Theorems \ref{thm5} and \ref{thm6} with $\ell_2$ norm based upper bounds using the fact that for all $\bx\in \mathbb{R}^{p}$, $\|\bx\|_{1}\leq \sqrt{p}\|\bx\|_{2}$. {\color{black} 
	It is recognized that networks often exhibit diverse characteristics, including dynamic changes, node heterogeneity, homophily, and transitivity, among others. In this paper, our primary emphasis is on addressing node heterogeneity within dynamic networks. When integrated with other stylized features, the objective function may adopt a similar structure to the $\bL(\btheta)$ function defined in Equation  { $\eqref{Lfunction}$}. Moreover, many other models that incorporate node heterogeneity can express their log-likelihood functions in a form analogous to Equation \eqref{Lfunction}. For instance, the general category of network models with edge formation probabilities represented as $f(\alpha_i,\beta_j)$, where $f(\cdot)$ is a density or probability mass function, and $(\alpha_i, \beta_i)$ denote node-specific parameters for node  $i$. This encompasses models such as the $p_1$ model \citep{holland1981}, the directed $\beta$-model \citep{Yan:Leng:Zhu:2015}, and the bivariate gamma model \citep{han2020bivariate}. Additionally, in the domain of ranking data analysis, it is common to introduce individual-specific parameters or scores for ranking, as observed in the classical Bradley-Terry model and its variants \citep{han2023general}. Our discoveries have the potential for application in the theoretical examination of these models or their modifications when considering additional stylized features alongside node heterogeneity.
}

\section{Numerical study}\label{sec: numerical study}
In this section, we assess the  performance of the local MLE. For comparison, we have also computed a regularized MME that is numerically more stable than the vanilla MME in \eqref{MME2}. Specifically, for the former, we solve
\begin{equation}\label{MME3}
	\frac{-1}{np}\sum_{t=1}^{n} \sum_{j=1,\:j \neq i}^{p} \left\{X_{i,j}^tX_{i,j}^{t-1} -  \frac{e^{\tilde{\beta}_{i,0} + \tilde{\beta}_{j,0}}}{1+e^{\tilde{\beta}_{i,0} + \tilde{\beta}_{j,0}}}\left(1- \frac{1}{1+e^{\tilde{\beta}_{i,0} + \tilde{\beta}_{j,0}}+e^{\beta_{i,1} +\beta_{j,1}}}\right)\right\}+ \lambda\beta_{i,1} =0,
\end{equation}
with $i=1,\cdots,p$, where $\lambda\beta_{i,1}$ can be seen as 
a ridge penalty with $\lambda>0$ as the regularization parameter.   Denote the 
regularized MME as $\tilde{\btheta}_\lambda$.  
Similar to Theorem \ref{thm4}, by choosing  $\lambda=C_{\lambda}e^{2\kappa}\sqrt{\frac{\log(np)}{np}}$ for some constant $C_{\lambda}$, we can show that
$
\left\|\tilde{\btheta}_\lambda-\btheta^{*}\right\|_{\infty}\leq O_{p}\left(  e^{26\kappa}\sqrt{\frac{\log(n)\log(p)}{np}}\right).
$ In our implementation we take $\lambda=\sqrt{\frac{\log(np)}{np}}$. The MLE of TWHM is obtained via gradient descent using $\tilde{\btheta}_\lambda$ as the initial value.

\subsection{Non-convexity of $l(\btheta)$ and $\E l(\btheta)$}\label{Sim.H}
{ Given the form of $l(\btheta)$, it is intuitively true that it may not be convex everywhere. 
	We confirm this via a simple example. 
		Take $(n, p)=(2, 1000)$ and set $\bbeta^*_0, \bbeta^*_1$ to be $\textbf{0.2}_{p}$, $\textbf{0.5}_{p}$ or $\textbf{1}_{p}$. We evaluate the smallest eigenvalue of the Hessian matrix of $l(\btheta)$  and its expectation $\E l(\btheta)$  at  the true parameter value $\btheta^*=(\bbeta^{*\top}_0, \bbeta^{*\top}_1)^\top$, or at $\btheta=\textbf{0}_{2p}$ in one experiment. 	 
		\begin{table}[tb]\caption{Signs of the smallest eigenvalues of the Hessian matrices of $l(\btheta)$ and   $\E l(\btheta)$  evaluated at $\btheta=\btheta^*$ or $\textbf{0}_{2p}$ when different values of $\btheta^*=(\bbeta^{*\top}_0, \bbeta^{*\top}_1)^\top$ are used to generate data.}\label{Table_Convexity}
			\centering
			\begin{tabular}{cccc|cccc}
				\hline
				\multicolumn{4}{c|}{Sign of the smallest eigenvalue of $l(\btheta^*)$}&\multicolumn{4}{c}{Sign of the smallest eigenvalue of $\E l(\btheta^*)$}\\
				\hline
				&$\bbeta^*_{0}$= $\textbf{0.2}_{p}$&$\bbeta^*_{0}$= $\textbf{0.5}_{p}$&	$\bbeta^*_{0}$=  $\textbf{1}_{p}$&& $\bbeta^*_{0}$= $\textbf{0.2}_{p}$	&$\bbeta^*_{0}$= $\textbf{0.5}_{p}$&	$\bbeta^*_{0}$=  $\textbf{1}_{p}$		\\				
				$\bbeta^*_{1}$= $\textbf{0.2}_{p}$&	$+$&	$+$&	$+$&$\bbeta^*_{1}$= $\textbf{0.2}_{p}$&	$+$ &  $+$&  $+$	\\
				$\bbeta^*_{1}$= $\textbf{0.5}_{p}$&	$+$&	$+$&	$+$&$\bbeta^*_{1}$= $\textbf{0.5}_{p}$&	$+$&$+$&	$+$	\\
				$\bbeta^*_{1}$= $\textbf{1}_{p}$	&$+$	&$+$	&$+$&$\bbeta^*_{1}$= $\textbf{1}_{p}$&	$+$&	$+$&	$+$	\\
				\hline 
				\multicolumn{4}{c|}{Sign of the smallest eigenvalue of $l(\textbf{0}_{2p})$}&\multicolumn{4}{c}{Sign of the smallest eigenvalue of $\E l(\textbf{0}_{2p})$}\\\hline
				&$\bbeta^*_{0}$= $\textbf{0.2}_{p}$	&$\bbeta^*_{0}$= $\textbf{0.5}_{p}$&	$\bbeta^*_{0}$=  $\textbf{1}_{p}$&&$\bbeta^*_{0}$= $\textbf{0.2}_{p}$	&$\bbeta^*_{0}$= $\textbf{0.5}_{p}$&	$\bbeta^*_{0}$=  $\textbf{1}_{p}$		\\	
				$\bbeta^*_{1}$= $\textbf{0.2}_{p}$&	$+$&	$+$&	$-$&$\bbeta^*_{1}$= $\textbf{0.2}_{p}$&	$+$&	$+$&	$-$\\
				$\bbeta^*_{1}$= $\textbf{0.5}_{p}$&	$-$	&$-$&	$-$&$\bbeta^*_{1}$= $\textbf{0.5}_{p}$&	$+$&	$+$&	$-$\\
				$\bbeta^*_{1}$= $\textbf{1}_{p}$	&$-$&$-$	&$-$&$\bbeta^*_{1}$= $\textbf{1}_{p}$&	$-$&$	-$&	$-$\\
				\hline	
			\end{tabular}
		\end{table}	
		From the top half of Table \ref{Table_Convexity} we can see that, when evaluated at $\btheta^*$, the Hessian matrices are all positive definite. However, when evaluated at $\btheta=\textbf{0}_{2p}$, from the bottom half of the table we can see that the Hessian matrices are no longer positive definite when $\btheta^*$ is far away from $\textbf{0}_{2p}$. Even when  the Hessian matrix of $\E l(\btheta)$ is so at $\btheta=\textbf{0}_{2p}$ with $\btheta^*=\textbf{0.5}_{2p}$, the corresponding Hessian matrix of $l(\btheta)$ at this point has a negative eigenvalue. Thus, $\E l(\btheta)$ and $l (\btheta)$ are not globally convex.
	}

	\subsection{Parameter estimation} 
	
	We first evaluate the error rates of the MLE and MME under different combinations of $n$ and $p$.  We set $n=2, 5, 10,$ or $20$ and $p \in \lfloor 200\times 1.2^{0:6}\rfloor = \{200, 240, 288, 346, 415, 498, 598\}$, which results in a total of 28 different combinations of $(n,p)$. 
	For each $(n,p)$, the data are generated  such that $\{\bX^t\} \sim P_{\btheta^*}$ where the parameters $\beta_{i,0}^*$ and $\beta_{i,1}^*$($1 \leq i\leq p$) are drawn independently  from the uniform distribution with parameters in $(-1,1)$. Each experiment is repeated 100 times under each setting.  
	Denote the estimator (which is either the MLE or the MME)   as $\hat{\btheta} $, and the true parameter value as $\btheta^*$. We report the average $\ell_2$ error in terms of $\frac{\|\hat{\btheta}-\btheta^*\|_2}{\sqrt{p}}$ and the average $\ell_\infty$ error  $\|\hat{\btheta}-\btheta^*\|_{\infty}$ in Figure \ref{fig:error}.
	\begin{figure}[tb]
		\includegraphics[width=1\textwidth]{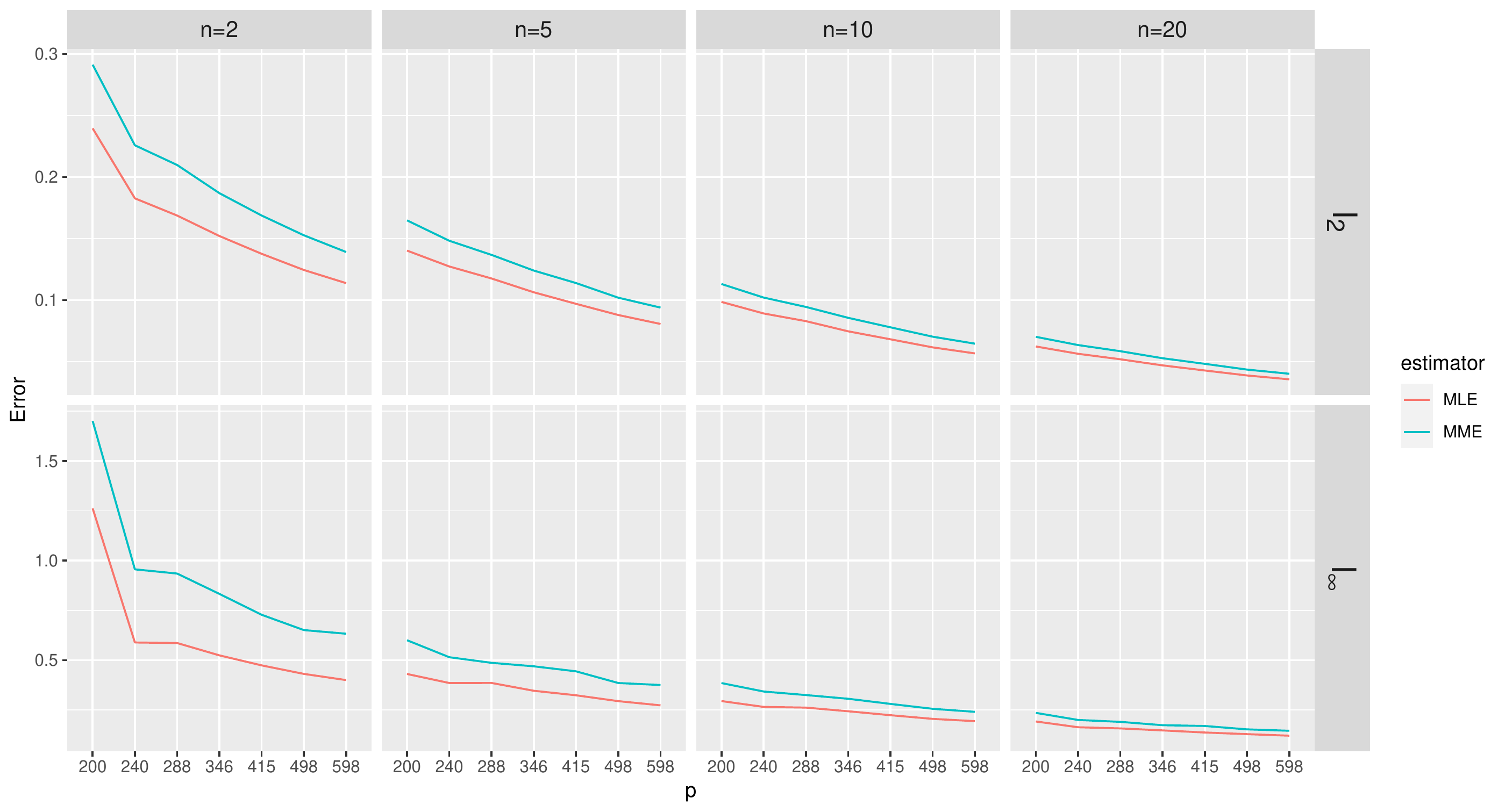}
		\caption{Mean errors of  MME and MLE in terms of the $\ell_2$ and $\ell_{\infty}$ norm.  
		}
		\label{fig:error}
	\end{figure}
	From this figure, we can see that the errors in terms of the   $\ell_{\infty}$ norm and the   $\ell_2$ norm decrease for MME and MLE as $n$   or $p$ increases, while the errors of MLE are smaller across all settings. These observations are 	 consistent with our findings in the main theory.

	Next, we provide more numerical simulation to evaluate the performance of MLE and MME by imposing different structures on $\bbeta_{0}^*$ and $\bbeta_{1}^*$. 
	In particular, we want to evaluate how the estimation accuracy changes by varying the sparsity of the networks 
	as well as varying the correlations of the network sequence.
	Note that the expected density of the stationary distribution of the network process is simply  
	\begin{equation*}\label{ED}
		\frac{1}{p(p-1)}\E\left( \sum_{1\leq i\neq j\leq p}X_{i,j}^{t}\right)= \frac{1}{p(p-1)} \left( \sum_{1\leq i\neq j\leq p}\frac{e^{\beta_{i,0}^*+\beta_{j,0}^*}}{1+e^{\beta_{i,0}^*+\beta_{j,0}^*}}\right).
	\end{equation*}
	In the sequel, we will use two parameters $a$ and $b$ to generate
	$\bbeta_{r}^*$, $r=0,1$, according to the following four settings:    
	
	
	\begin{itemize}   
		\item[Setting 1.] $\{a\}$: all the elements    in $\bbeta_r^*$ are set to be equal to $a$.
		\item[Setting 2.] ${\{a,b\}}$: the first 10$\%$ elements of $\bbeta_r^*$ are set to be equal to  $a$, while the other elements are set to be equal to $b$. 
		\item[Setting 3.] $\mathcal{L}_{(a,b)}$: the parameters take	values in a linear form  as $\beta_{i,r}^*= a + (a-b)*(i-1)/(p-1)$, $i=1,\cdots,p$. 
		\item[Setting 4.]  $U_{(a,b)}$: the $p$ elements in $\bbeta_r^*$  are generated independently from the uniform distribution with parameters $a$ and $b$. 	
	\end{itemize}
	In Table \ref{T2}, we generate $\bbeta_1^*$ using Setting 1 with $a=0$, and generate $\bbeta_0^*$ using Setting 2 with different choices for $a$ and $b$ to  obtain networks with different expected density. In Table \ref{T3}, we generate $\bbeta_0^*$ and $\bbeta_1^*$ using combinations of these four settings with different parameters  such that the resulting networks have expected density either around 0.05 (sparse) or 0.5 (dense). The number of networks in each process and the number of nodes in each network are set as  $(n, p)=(20, 200),(20, 500),(50, 200)$ or $(50, 500)$. The errors for estimating $\btheta^*$ in terms of the $\ell_\infty$ and    $\ell_2$  norms are reported via 100 replications.  To further compare the  accuracy for estimating $\bbeta_0^*$ and $\bbeta_1^*$, in Table \ref{T5}, we have  conducted experiments under Settings 3 and 4, and reported the estimation errors for $\bbeta_0^*$ and $\bbeta_1^*$ separately. We summarize the simulation results below:
	\begin{table}[tb]
		\centering
		\caption{The estimation errors of  MME and MLE  under Setting 1 and Setting 2 for $\bbeta_0^*$ by setting $\bbeta^*_1=\boldsymbol{0}_p$.}
		\begin{tabular}{|c|c|c|Hc|c|c|cH|}
			\hline
			$n$     & $p$     & $\bbeta_{0}^*$ & $\bbeta_{1}^*$ & MME, $\ell_{2}$ & MME, $\ell_\infty$ & MLE, $\ell_{2}$ & MLE, $\ell_\infty$  & Density \\
			\hline
			20    & 200   & \{0\} & \{0\}     & 0.074&	0.219&	0.071&	0.212	 & 0.5 \\
			\hline
			50    & 200  & \{0\} & \{0\}    & 0.046&	0.138&	0.045&	0.136 & 0.5 \\
			\hline
			20    & 500   & \{0\} & \{0\}     &0.046&	0.150 &	0.045&	0.146& 0.5 \\
			\hline
			50    & 500   & \{0\} & \{0\}     & 0.029&	0.093&	0.028&	0.092& 0.5 \\
			\hline
			20    & 200   & $\{0.5,-0.5\}$ & \{0\}     &0.092  & 0.222 & 0.091  & 0.217 & 0.32 \\
			\hline
			50    & 200   & $\{0.5,-0.5\}$ & \{0\}     & 0.058 & 0.140 & 0.058 & 0.139 & 0.32 \\
			\hline
			20    & 500   &$\{0.5,-0.5\}$ & \{0\}     & 0.058 & 0.154 &0.057  & 0.148 & 0.32 \\
			\hline
			50    & 500   & $\{0.5,-0.5\}$& \{0\}     & 0.036  & 0.095 & 0.036  & 0.093 & 0.32 \\
			\hline
			20    & 200   & $\{1,-1\}$& \{0\}    &0.120	&0.305	&0.117&0.284	 & 0.19 \\
			\hline
			50    & 200   & $\{1,-1\}$ & \{0\}     &0.074	&0.186&	0.074&0.177	 & 0.19 \\
			\hline
			20    & 500   & $\{1,-1\}$ & \{0\}    & 0.075	&0.200	&0.073&	0.190	& 0.19 \\
			\hline
			50    & 500   & $\{1,-1\}$ & \{0\}   & 0.038&	0.125&	0.036&	0.119		 & 0.19 \\
			\hline
			20    & 200   & $\{1.5,-1.5\}$ & \{0\}     & 0.164  & 0.436 & 0.156  & 0.397 & 0.14 \\
			\hline
			50    & 200   & $\{1.5,-1.5\}$ & \{0\}    & 0.102  & 0.255 & 0.097  & 0.236 & 0.14 \\
			\hline
			20    & 500   & $\{1.5,-1.5\}$  & \{0\}    & 0.103  & 0.287 & 0.097 & 0.262 & 0.14 \\
			\hline
			50    & 500   & $\{1.5,-1.5\}$  & \{0\}    & 0.065  & 0.178 & 0.061  & 0.164 & 0.14\\
			\hline
		\end{tabular}%
		\label{T2}
	\end{table}%
	
	\begin{table}[tb]
		\centering
		\caption{The average estimation errors  of MME and MLE under combinations of different settings}
		\begin{tabular}{|c|c|c|c|c|c|c|c|}
			\hline
			\multicolumn{8}{|c|}{Density = 0.05}\\
			\hline
			$n$     & $p$     & $\bbeta_{0}^*$ & $\bbeta_{1}^*$ & MME, $\ell_{2}$ & MME, $\ell_\infty$ & MLE, $\ell_{2}$ & MLE, $\ell_\infty$   \\
			\hline
			20    & 200   & $\mathcal{L}_{(-4,0)}$ &  $U_{(-1,1)}$ & 0.419  & 1.833 & 0.392 & 1.8   \\
			\hline
			50    & 200   & $\mathcal{L}_{(-4,0)}$  &  $U_{(-1,1)}$ & 0.253  & 0.913 & 0.227  & 0.82  \\
			\hline
			20    & 500   & $\mathcal{L}_{(-4,0)}$  &  $U_{(-1,1)}$ & 0.246  & 1.119  & 0.218 & 0.9   \\
			\hline
			50    & 500   & $\mathcal{L}_{(-4,0)}$  &  $U_{(-1,1)}$ & 0.170  & 0.626 & 0.148   & 0.621 \\
			\hline
			20    & 200   & $\mathcal{L}_{(-4,0)}$ &  \{0\} &0.275&	1.452&	0.280&	1.516  \\
			\hline
			50    & 200   & $\mathcal{L}_{(-4,0)}$  &  \{0\} & 0.161&	0.771&	0.162&	0.774   \\
			\hline
			20    & 500   & $\mathcal{L}_{(-4,0)}$  &  \{0\} & 0.160&	0.892&	0.162&	0.904   \\
			\hline
			50    & 500   & $\mathcal{L}_{(-4,0)}$  &  \{0\} &0.098&	0.506&	0.099&	0.507 \\
			\hline
			20    & 200   & $\{-1.47\}$ &  $U_{(-1,1)}$ &0.187&	0.588&	0.161&	0.514		 \\
			\hline
			50    & 200   & $\{-1.47\}$  &  $U_{(-1,1)}$ & 0.116&	0.351&0.099&	0.305	 \\
			\hline
			20    & 500   & $\{-1.47\}$  &  $U_{(-1,1)}$ & 0.114&	0.387&	0.099&	0.339	 \\
			\hline
			50    & 500   & $\{-1.47\}$  &  $U_{(-1,1)}$ & 0.073&	0.246&	0.062&	0.208	 \\
			\hline
			20    & 200   & $\{-1.47\}$ &  \{0\} &0.150&	0.482&	0.151&	0.484		 \\
			\hline
			50    & 200   & $\{-1.47\}$  &  \{0\} & 0.93&	0.289&0.093&	0.29 \\
			\hline
			20    & 500   & $\{-1.47\}$  &  \{0\} & 0.93&	0.309	&0.093&	0.311	 \\
			\hline
			50    & 500   & $\{-1.47\}$  &  \{0\} & 0.058	&0.195	&0.058	&0.195	 \\
			\hline
			\multicolumn{8}{|c|}{Density = 0.5}\\
			\hline
			20    & 200   & $\mathcal{L}_{(-2,2)}$ & $U_{(-0.1,0.1)}$ & 0.132  & 0.415 & 0.012  & 0.318  \\
			\hline
			50    & 200   & $\mathcal{L}_{(-2,2)}$ & $U_{(-0.1,0.1)}$ & 0.080 & 0.238 & 0.069 & 0.194  \\
			\hline
			20    & 500   & $\mathcal{L}_{(-2,2)}$ & $U_{(-0.1,0.1)}$ & 0.080  & 0.272 & 0.068  & 0.217  \\
			\hline
			50    & 500   & $\mathcal{L}_{(-2,2)}$ &$U_{(-0.1,0.1)}$ & 0.050  & 0.168 & 0.043 & 0.135  \\
			\hline
			20    & 200   & $\mathcal{L}_{(-1,1)}$ & $U_{(-1,1)}$ & 0.107 & 0.324 & 0.095  & 0.264  \\
			\hline
			50    & 200   & $\mathcal{L}_{(-1,1)}$ &  $U_{(-1,1)}$ & 0.067 & 0.194 & 0.060 & 0.163  \\
			\hline
			20    & 500   & $\mathcal{L}_{(-1,1)}$ &  $U_{(-1,1)}$ & 0.071 & 0.267 & 0.061  & 0.205  \\
			\hline
			50    & 500   & $\mathcal{L}_{(-1,1)}$ &  $U_{(-1,1)}$ & 0.044 & 0.156 & 0.039 & 0.130   \\
			\hline
			20    & 200   & $\mathcal{L}_{(-2,2)}$ &  $U_{(-1,1)}$ & 0.137  & 0.478 & 0.112  & 0.329  \\
			\hline
			50    & 200   & $\mathcal{L}_{(-2,2)}$ &  $U_{(-1,1)}$ & 0.084  & 0.274 & 0.070 & 0.205  \\
			\hline
			20    & 500   & $\mathcal{L}_{(-2,2)}$ &  $U_{(-1,1)}$ & 0.087  & 0.352 & 0.071  & 0.250   \\
			\hline
			50    & 500   & $\mathcal{L}_{(-2,2)}$ &  $U_{(-1,1)}$ & 0.054  & 0.211 & 0.044 & 0.150   \\
			\hline
		\end{tabular}%
		\label{T3}
	\end{table}
	
	\begin{table}[tb]
		\small
		\centering
		\caption{The means and standard deviations of the errors of MME and MLE for estimating $\bbeta_0^*$ and $\bbeta_1^*$.}
		\begin{tabular}{|c|c|c|c|c|c|}
			\hline
			\multicolumn{2}{|c|}{$n$} & 20        & 100   & 20        & 100 \\
			\hline
			\multicolumn{2}{|c|}{$p$} & 200     & 200   & 500      & 500 \\
			\hline
			\multicolumn{6}{|c|}{$\bbeta_0^*\sim \mathcal{L}_{(-1,1)}$ and $\bbeta_1^*\sim U_{(0,2)}$}\\
			\hline
			\multirow{2}{*}{MME, $\ell_{2}$} & $\bbeta_{0}^*$ & 0.163(0.010)&		0.096(0.006)&	0.099(0.004)	&	0.057(0.002)\\
			\cline{2-6}          & $\bbeta_{1}^*$ & 0.177(0.010)	&	0.084(0.005)&	0.104(0.004)&	0.050(0.002)\\
			
			\hline
			\multirow{2}{*}{MME, $\ell_{\infty}$} & $\bbeta_{0}^*$ & 0.570(0.103)  & 0.367(0.085) & 0.395(0.070) &  0.241(0.042) \\
			\cline{2-6}          & $\bbeta_{1}^*$ & 0.658(0.137) & 0.421(0.079)  & 0.438(0.076)  & 0.214(0.037) \\
			\hline
			\multirow{2}{*}{MLE, $\ell_{2}$} & $\bbeta_{0}^*$ & 0.211(0.013)	&	0.091(0.006)&	0.121(0.005)&	0.054(0.002)\\
			\cline{2-6}          & $\bbeta_{1}^*$ & 0.166(0.011)&	0.072(0.005)&	0.096(0.004)&		0.043(0.002)\\
			\hline
			\multirow{2}{*}{MLE, $\ell_{\infty}$ } & $\bbeta_{0}^*$ & 0.809(0.180)  & 0.354(0.076) & 0.532(0.098) & 0.232(0.041) \\
			\cline{2-6}          & $\bbeta_{1}^*$ & 0.617(0.116)  & 0.265(0.052) & 0.399(0.065) &  0.172(0.028) \\
			\hline
			\multicolumn{6}{|c|}{$\bbeta_0^*\sim \mathcal{L}_{(-2,0)}$ and $\bbeta_1^*\sim U_{(0,2)}$}\\
			\hline
			\multirow{2}{*}{MME, $\ell_{2}$} & $\bbeta_{0}^*$ & 0.133(0.012)&	0.080(0.007)&	0.081(0.004)&		0.047(0.002)\\			
			\cline{2-6}          & $\bbeta_{1}^*$ & 0.093(0.006)&	0.053(0.004)&	0.056(0.003)&	0.032(0.002)\\
			\hline
			\multirow{2}{*}{MME, $\ell_{\infty}$} & $\bbeta_{0}^*$ & 0.568(0.104) & 0.365(0.087) & 0.394(0.071) &  0.241(0.042) \\
			\cline{2-6}          & $\bbeta_{1}^*$ & 0.387(0.069) & 0.236(0.043) & 0.258(0.037) &  0.162(0.025) \\
			\hline
			\multirow{2}{*}{MLE, $\ell_{2}$} &$\bbeta_{0}^*$ &0.176(0.016)&		0.076(0.007)&	0.100(0.006)&		0.044(0.002)	 \\
			\cline{2-6}          & $\bbeta_{1}^*$ & 0.116(0.009)&	0.051(0.004)&	0.068(0.003)&		0.031(0.002) \\
			\hline
			\multirow{2}{*}{MLE, $\ell_{\infty}$ } &$\bbeta_{0}^*$ & 0.809(0.181) & 0.351(0.078) & 0.531(0.099) &  0.232(0.041) \\
			\cline{2-6}          & $\bbeta_{1}^*$ & 0.513(0.088) &  0.227(0.047) & 0.348(0.058) & 0.158(0.024) \\
			\hline
		\end{tabular}%
		\label{T5}
	\end{table}%

	\begin{itemize}
		\item  The effect of $(n,p)$.  Similar to what we have observed in Figure \ref{fig:error}, the estimation errors become smaller when $n$ or $p$ becomes larger.    Interestingly, from Tables \ref{T2}--\ref{T3} we can observe that, under the same setting,  the errors in $\ell_{2}$ norm when $(n,p)=(50,200)$ are very close to those when $(n,p)=(20,500)$. This is to some degree consistent with our finding in Theorem \ref{thm2}  where the upper bound depends on $(n,p)$ through their product $np$. 
		\item  The effect of sparsity.  From Table \ref{T2} we can see that, as the expected density decreases, the estimation errors increase in almost all the cases. On the other hand, even though the parameters take different values in Table \ref{T3}, the errors in the sparse cases are in general larger than those in the dense cases.

		\item   {\color{black} The impact of  $\kappa_{0}:=\|\bbeta_{0}^*\|_\infty$. Typically, estimation errors tend to increase with larger values of $\kappa_0$, as evidenced in Table \ref{T2}. Additionally, when maintaining the same overall sparsity level, larger $\kappa_0$ values are correlated with greater estimation errors, as illustrated in Table \ref{T3}.		}

		\item  MLE vs MME.  In general,  the estimation errors of the MLE are  smaller than those of the MME in most cases as can be seen in Tables \ref{T2} and Table \ref{T3}. In Table \ref{T5} where the estimation errors for $\bbeta_0^*$ and $\bbeta_1^*$ are reported separately, we 
		can see that the estimation errors of the MME of $\bbeta_{1}^*$ are generally larger  than those of the MLE of $\bbeta_{1}^*$, especially when $n$ is large. 
	\end{itemize}

	\subsection{Real data}
	In this section, we  apply our TWHM to a real dataset to examine an insect interaction network process \citep{mersch2013tracking}. 	We focus on a subset of the data named insecta-ant-colony4  that contains the social interactions of  102 ants in 41 days. In this dataset, the position and orientation of all the ants were recorded twice per second to infer their movements and interactions, based on which 41 daily networks were constructed.  
	More specifically, 
	$X_{i,j}^{t}$ is 1  if there is an interaction between ants $i$ and $j$ during day $t$, and 0 otherwise.
	In the ACF and PACF plots of the degree sequences of selected ants (c.f. Figure 1   in Appendix B.1),  we  can observe  patterns similar to those of a first-order autoregressive   model with long memory. This motivates the use of TWHM for the analysis of this dataset.
	
	In \cite{mersch2013tracking}, the 41 daily networks  were split  into four periods with 11, 10, 10, and 10 days respectively, because the corresponding days separating these periods were identified as change-points. By excluding ants that did not interact with others, we are left with $p=102$ nodes in period one, $p=73$ nodes in period two, $p=55$ nodes in period three and $p=35$ nodes in period four.  Thus we take the networks on day 1, day 12, day 22 and day 32  as the initial networks and fit four different TWHMs, one for each of the four periods.

	To appreciate how TWHM captures static heterogeneity, we present a subgraph of 10 nodes during the fourth period ($t=32$--$41$), 5 of which have the largest and 5 have the smallest fitted $\beta_{i,0}$ values. The edges of this subgraph are drawn to represent aggregated static connections  defined as
	$(\bX^{32}+\cdots+\bX^{42})/10$ between these ants. We can see from the left panel of Figure \ref{fig:sub}  that the magnitudes of the fitted static heterogeneity parameters agree in principle with the activeness of each ant making connections.
	\begin{figure}[tb]
		\begin{minipage}[b]{0.45\textwidth}
			\centering
			\includegraphics[width=1\textwidth]{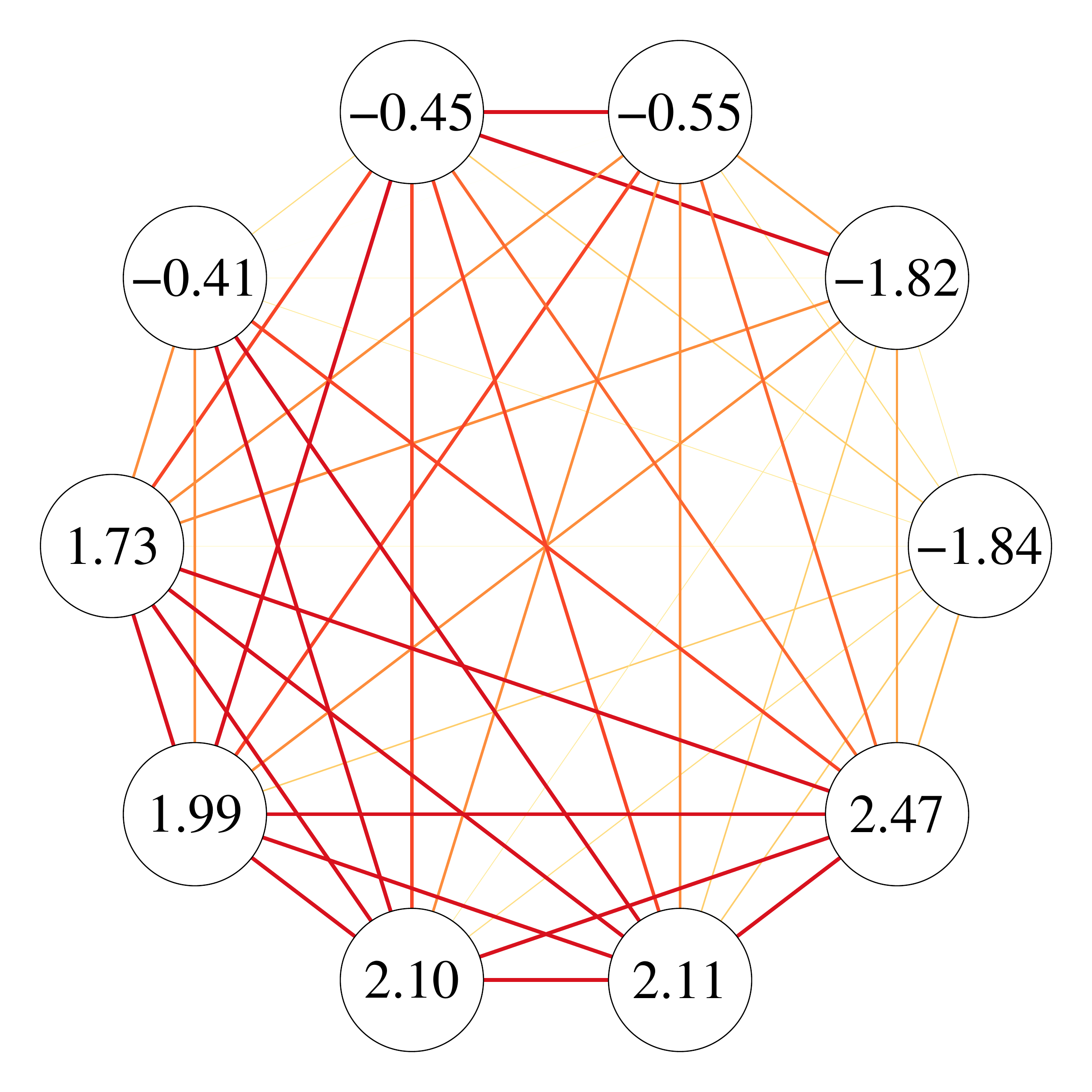}
		\end{minipage}
		\hspace{5mm}
		\begin{minipage}[b]{0.45\textwidth}
			\centering
			\includegraphics[width=1\textwidth]{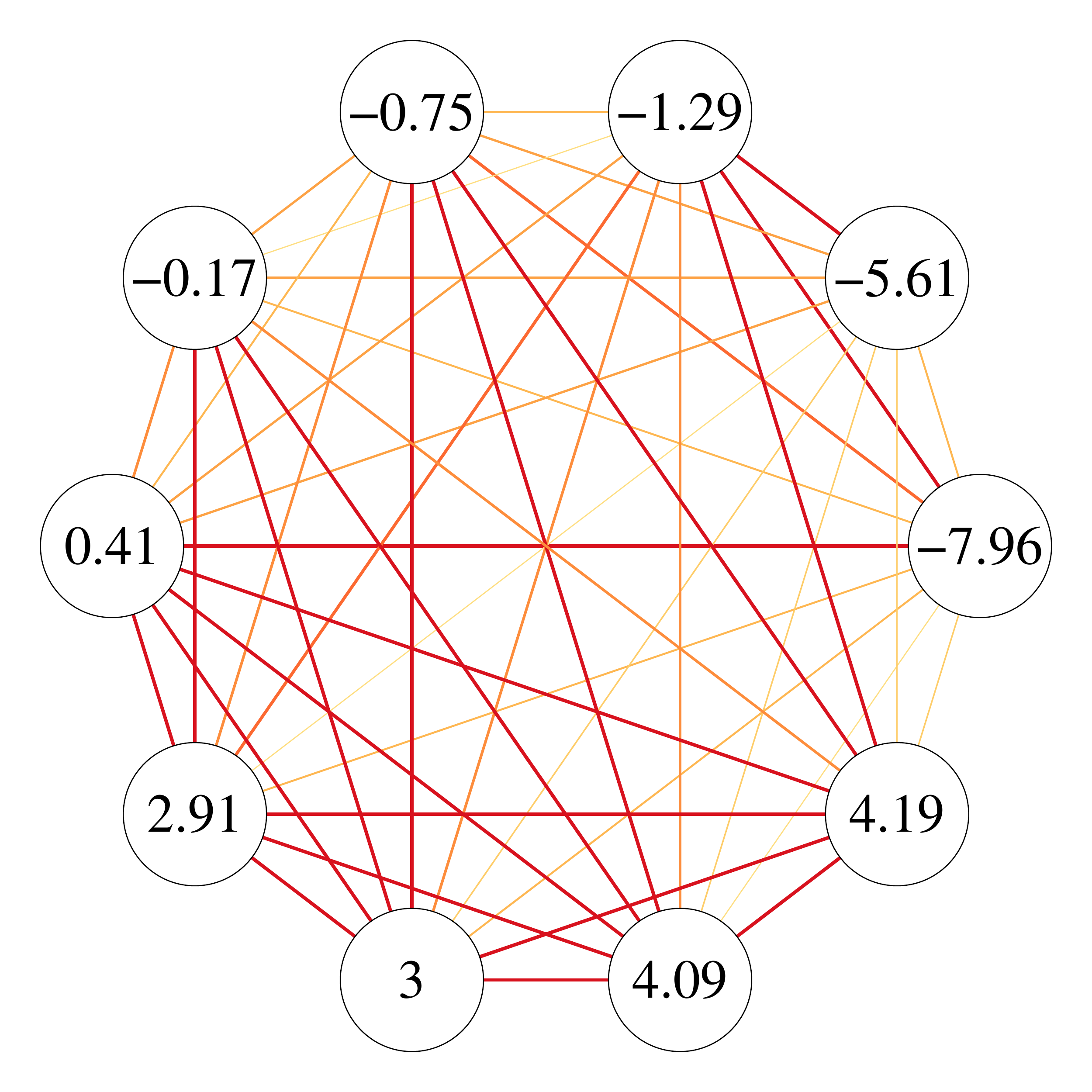}
		\end{minipage}
		\caption{The aggregated networks of 10 selected ants during the fourth period reflect static heterogeneity (Left) and dynamic heterogeneity (Right) respectively.  The thickness of each edge is proportional to the aggregation. The number in the nodes are the fitted $\beta_{i,0}$ (Left) and $\beta_{i,1}$ (Right).
		}
		\label{fig:sub}
	\end{figure}
	On the other hand, we examine how TWHM can capture dynamic heterogeneity. Towards this, we plot a subgraph of the 10 nodes having the smallest fitted $\beta_{i,0}$ values in Figure \ref{fig:sub}(b), where edges represent the magnitude of $\sum_{t=33}^{41} I\left(X_{i,j}^{t}=X_{i,j}^{t-1}\right)/9$ which is a measure of the extent that an edge is preserved across the whole period and hence dynamic heterogeneity. Again, we can see an agreement between the fitted $\bbeta^*_1$ and how likely these nodes will preserve their ties.		 
	
	To evaluate how TWHM performs when it comes to making prediction, we further carry out the following experiments:
	\begin{itemize}
		\item[(i)] From \eqref{Def1}, given the MLE $\{\hat{\beta}_{i,r}, i=1,\ldots, p, r=0,1\}$ and the network  at time $t-1$, we can estimate the conditional expectation of node $i$'s degree as
		\begin{eqnarray*}
			\tilde{d}_{i}^{t}&:=& \sum_{j=1,\:j \neq i}^{p}\E\left( X_{i,j}^{t}\Big|X_{i,j}^{t-1},\hat{\btheta}\right)\\
			& =&\sum_{j=1,\:j \neq i}^{p} \left( \frac{e^{\hat{\beta}_{i,0}+\hat{\beta}_{j,0}}}{1+e^{\hat{\beta}_{i,0} +\hat{\beta}_{j,0}}+e^{\hat{\beta}_{i,1} +\hat{\beta}_{j,1}}}+ \frac{e^{\hat{\beta}_{i,1}+\hat{\beta}_{j,1}}}{1+e^{\hat{\beta}_{i,0} +\hat{\beta}_{j,}}+e^{\hat{\beta}_{i,1} +\hat{\beta}_{j,1}}}X_{i,j}^{t-1}\right).
		\end{eqnarray*} 
		We can then compare the density of the estimated degree sequence $\{\tilde{d}_{i}^{t}, i=1, \ldots, p\}$ with that of the observed degree sequence $\{d_{i}^{t}, i=1,\ldots, p\}$ at time $t$. 	{\color{black}
			To provide a comparison, we treat networks in each period as i.i.d. observations and utilize the classical $\beta$-model to derive the degree sequence estimator $\{\check{\bd}^{t}\}$ for the four periods. The fitted degree distributions are depicted in Figure \ref{fig:ant}, revealing a close resemblance between the estimated and observed densities. This observation suggests that the TWHM demonstrates strong performance in one-step-ahead prediction. 
			
			To further assess the similarity between the estimated degree sequences $\{\tilde{\bd}^{t}\}$, $\{\check{\bd}^{t}\}$, and the true degree sequence $\{\bd^{t}\}$, we compute the Kolmogorov-Smirnov (KS) distance and conduct the KS test for $t = 2, \ldots, 41$. The mean and standard deviation of the KS distances, the p-values of the KS test, and the rejection rate are summarized in Table \ref{tab:KS}. Notably, at a significance level of 0.05, out of the 40 KS tests, we fail to reject the null hypothesis that $\{\tilde{\bd}^{t}\}$ and $\{\bd^{t}\}$ originate from the same distribution in 38 instances, resulting in a rejection rate consistent with the significance level. Conversely, for the degree sequence estimators based on the $\beta$-model $\{\check{\bd}^{t}\}$, 8 out of the 40 tests were rejected. These findings indicate that our model exhibits highly promising performance in recovering the degree sequences.}
		
		\begin{table}[ht]
			\def\arraystretch{1.5}
			\centering
			\caption{The mean and standard deviation  of the KS distances, the  p-values of KS test, and the rejection rate between the true degree sequence $\{\bd^{t}\}$, and the THWM based  estimator	$\tilde{\bd}^{t}$ and the $\beta$-model based estimator $\check{\bd}^{t}$ over the 40 networks ($t=2,\ldots, 41$) in the ant dataset.}
			\begin{tabular}{|c|c|c|c|}
				\hline
				& KS distance & KS test p-value & Rejection rate \\
				\hline
				$\tilde{\bd}^{t}$ vs $\bd^{t}$ & 0.179(0.058) & 0.361(0.267)  & 0.05\\
				\hline
				$\check{\bd}^{t}$ vs $\bd^{t}$ & 0.192(0.061) & 0.298(0.246) & 0.20\\
				\hline
			\end{tabular}%
			\label{tab:KS}%
		\end{table}%
		
		\begin{figure}[!htbp]
			\centering
			\includegraphics[width=0.85\linewidth]{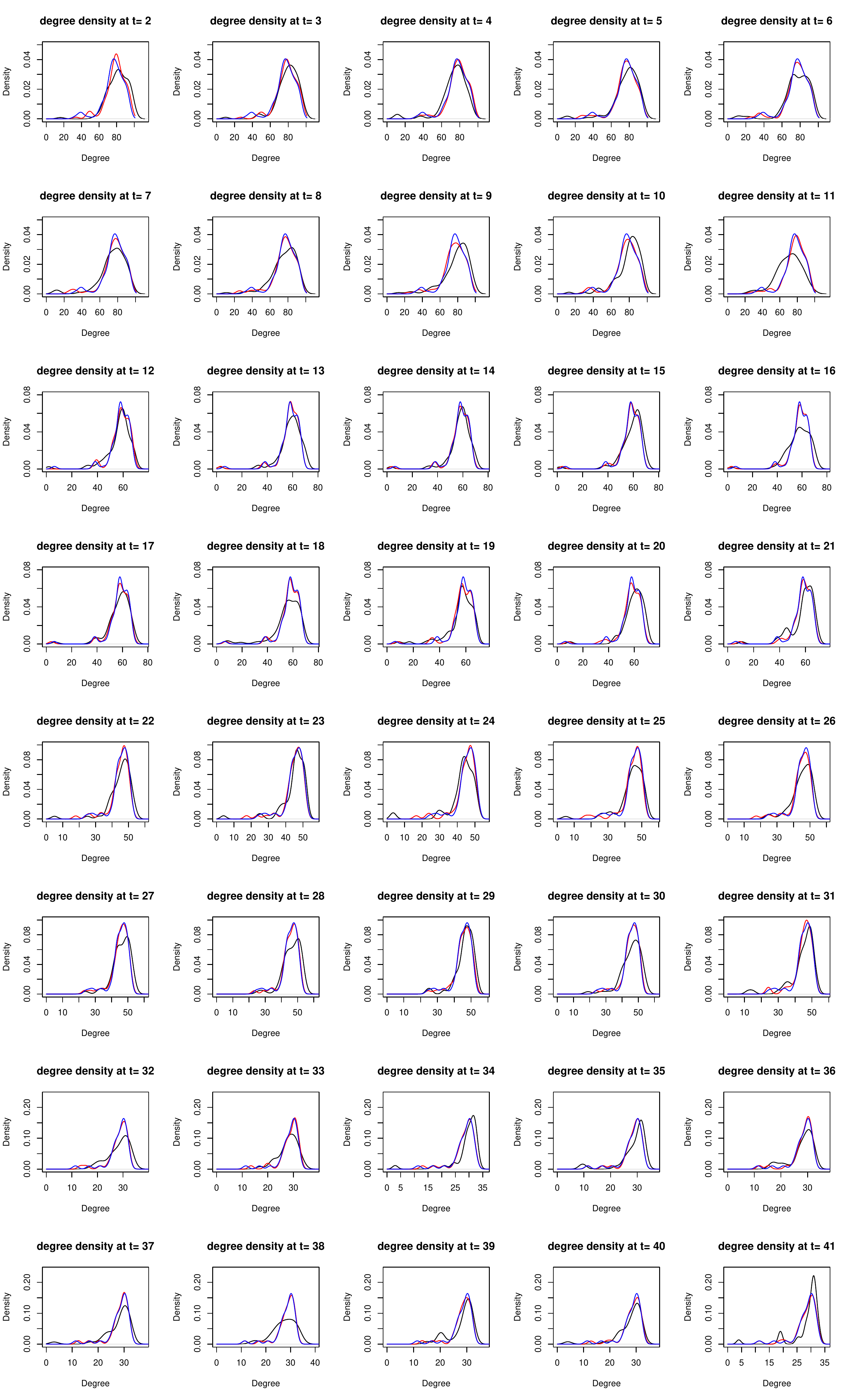}
			\caption{The observed and estimated degree distributions. X-axis: the node degrees; Red curves: the smoothed degree distributions of the estimated degree sequences;  
				Black curves:  the smoothed degree distributions of the observed degree sequences.
			}
		\label{fig:ant}
	\end{figure}
	\item[(ii)] By incorporating network dynamics, TWHM naturally enables one-step-ahead link prediction via
	\begin{equation}\label{linkpre}
		\bP\left( \hat{X}_{i,j}^{t}=1\Big|X_{i,j}^{t-1}\right)=\frac{e^{\hat{\beta}_{i,0}+\hat{\beta}_{j,0}}}{1+ e^{\hat{\beta}_{i,0} +\hat{\beta}_{j,0}}+e^{\hat{\beta}_{i,1} +\hat{\beta}_{j,1}}}+ \frac{e^{\hat{\beta}_{i,1} +\hat{\beta}_{j,1}}}{1+e^{\hat{\beta}_{i,0} +\hat{\beta}_{j,0}}+e^{\hat{\beta}_{i,1} +\hat{\beta}_{j,1}}}X_{i,j}^{t-1}.
	\end{equation}
	To transform these probabilities into links, we threshold them by setting $\hat{X}_{i,j}^{t}=1$ when $\bP\left(\hat{X}_{i,j}^{t}=1\right)$ $\geq c_{i,j}$ and  $\hat{X}_{i,j}^{t}=0$ when $\bP\left( \hat{X}_{i,j}^{t}=1\right)<c_{i,j}$ for some cut-off constants $c_{i,j}$. As an illustration, we  first consider simply setting  $c_{i,j}=0.5$ for all $1\le i<j\le p$ for predicting links. We  shall denote this approach as TWHM$_{0.5}$. 
	
	As an alternative, owing to the fact that networks may change slowly, for a given   parameter $\omega$, we also consider the following adaptive approach for choosing  $c_{i,j}$:
	\begin{equation}\label{es}
		\tilde{X}_{i,j}^{t} :=I\{ \omega\bP\left( \hat{X}_{i,j}^{t}=1\right) + \left(1-\omega \right) X_{i,j}^{t-1}>0.5\}.
	\end{equation}
	It can be shown that the above estimator is equivalent to the prediction rule $I\left\{  \bP\left( \hat{X}_{i,j}^{t}=1\right)  >c_{i,j}\right\}$ with  cut-off values specified as
	\begin{equation*}
		c_{i,j}=\frac{0.5e^{\hat{\beta}_{i,1} +\hat{\beta}_{j,1}}+\left( 1-w\right)e^{\hat{\beta}_{i,0}+\hat{\beta}_{j,0}} }{\left( 1-w\right) +e^{\hat{\beta}_{i,1} +\hat{\beta}_{j,1}}+\left( 1-w\right)e^{\hat{\beta}_{i,0}+\hat{\beta}_{j,0}}},\quad  1\le i<j\le p.
	\end{equation*}
	This method is denoted as TWHM$_{adaptive}$.
	Lastly,  as a benchmark, we have also considered a naive approach that simply predicts $\bX^{t}$ as $\bX^{t-1}$.
	
	
	In this experiment, we set the number of training samples to be $n_{train}=2, 5$ or $8$.  
	For a given training sample size $n_{train}$  and a period with $n$ networks,   we predict the  graph ${\bf X}^{n_{train }+i}$ based on the previous $n_{train}$ networks $\{{\bf X}^{t}, t=i, \ldots,n_{train }+i-1\}$ for  $i=1,\ldots, n-n_{train}$.  That is, over the four periods in the data, we have predicted 33, 21 and 9 networks, with 5151 edges in each network in the first period, 2628 in the second period, 1485 in the third period, and 595 in the fourth period for our choices of $n_{train}$. The $\omega$ parameter employed in TWHM$_{adaptive}$ is selected as follows. 
	{For prediction in each period}, 
	we choose the value in a sequence of $\omega$ values that produces the highest prediction accuracy in predicting $\bX^{n_{train}+i-1}$ for predicting 
	$\bX^{n_{train}+i}$.   
	For example, in the first period with $n=11$ networks, when $n_{train}=8$,  we used  $\{\bX^{t}, t=i,\cdots, i+7\}$ to predict $\bX^{i+8}$ for $i=1, 2, 3$. For each $i$, let $\tilde{\bf X}^{i+7}$ be defined as in \eqref{es}. 
	A set of candidate values for $\omega$ were used to compute $\tilde{\bf X}^{i+7}$, and the one that  returns the smallest misclassification rate (in predicting ${\bf X}^{i+7}$) was used in TWHM$_{adaptive}$ for predicting  $\bX^{i+8}$.   {\color{black} The mean of the chosen $\omega$  is $0.936$ when $n=2$,   $0.895$ when $n=5$, and   $0.905$ when $n=8$.}
	The prediction accuracy of the above-mentioned methods, defined as the percentages of correctly predicted links, are reported in Table \ref{T7}. 
	We can see that TWHM$_{0.5}$ and TWHM$_{adaptive}$ both perform  better than the naive approach in all the cases. {\color{black}On the other hand, TWHM coupled with adaptive cut-off points can improve the prediction accuracy of TWHM with a cur-off value 0.5 in most periods.}

\end{itemize}

\begin{table}[tb]
	\centering
	\caption{The prediction accuracy of TWHM with $0.5$ as a cut-off point, TWHM with adaptive cut-off points, and the naive estimator $\bX^{t-1}$.}
	\begin{tabular}{|c|cccc|}
		\hline
		$n_{train}$ & Period & TWHM$_{0.5}$ & TWHM$_{adaptive}$ & Naive \\
		\hline
		\multirow{5}[2]{*}{ 2 } & One     & 0.773 & 0.800 & 0.749 \\
		& Two     &  0.817 & 0.817 & 0.780 \\
		& Three     & 0.837 & 0.837 & 0.806\\
		& Four     &  0.824 & 0.831 & 0.807 \\
		& Overall & 0.811 &   0.822 &  0.784 \\
		\hline
		\multirow{5}[2]{*}{ 5 } & One     & 0.789 &   0.807 &  0.759 \\
		& Two     & 0.826 &  0.823 & 0.779 \\
		& Three     & 0.846 & 0.849 & 0.805 \\
		& Four     & 0.833 & 0.842 & 0.805 \\
		& Overall &0.822 & 0.829 & 0.786\\
		\hline
		\multirow{5}[2]{*}{ 8 } & One    & 0.795 & 0.800 & 0.759 \\
		& Two     & 0.832 & 0.832 & 0.778 \\
		& Three     & 0.855 &  0.845 & 0.823 \\
		& Four     &  0.831 & 0.863 & 0.779 \\
		& Overall &  0.825 & 0.831&  0.782 \\
		\hline
	\end{tabular}%
	\label{T7}
\end{table}%

\section{Summary and Discussion}\label{sec: conclusion}
We have proposed a novel two-way heterogeneity model that utilizes two sets of parameters  to explicitly capture static heterogeneity and dynamic heterogeneity.  In a high-dimension setup, we have provided the existence and the rate of convergence of its local MLE, and proposed a novel method of moments estimator as an initial value to find this local MLE. To the best of our knowledge, this is the first model in the network literature that the {local} MLE is obtained for a non-convex loss function. The theory of our model is established by developing new uniform upper bounds for the deviation of the loss function.

While we have focused on the estimation of the parameters in this paper, how to conduct statistical inference for the local MLE is a natural next step for research. 		In our setup, we assume that the parameters are time invariant but this need not be the case. A future direction is to allow the static heterogeneity parameter $\bbeta_0$ and/or the dynamic heterogeneity parameter $\bbeta_1$ to depend on time, giving rise to non-stationary network processes. 
{ In case when these parameters change smoothly over time,
	we may consider estimating the parameters $\beta_{i,0}^\tau,\beta_{i,1}^\tau$ at time $\tau$ by kernel smoothing, that is,  by maximizing the following smoothed log-likelihood: 			
	\begin{multline*}
		\tilde{L}(\tau, \bX^n,\bX^{n-1}, \cdots, \bX^1|\bX^0 ) 	\nonumber \\
		=    {\sum_{t=1}^n w_t\sum_{1\leq i<j\leq p  }   \Bigg\{ -\log\Big({1+ e^{\beta_{i,0} +\beta_{j,0 } }+e^{\beta_{i,1} +\beta_{j,1 } }}\Big)+  \left( \beta_{i,0} +\beta_{j,0 }\right)    X_{i,j}^t\left( 1-X_{i,j}^{t-1}\right) }     \nonumber\\
		+ 
		\left( 1-X_{i,j}^t\right) \left( 1-X_{i,j}^{t-1}\right) \log\left(  1+e^{\beta_{i,1} +\beta_{j,1} }\right) 
		+  X_{i,j}^t X_{i,j}^{t-1} \log\big( e^{\beta_{i,0} +\beta_{j,0} }+e^{\beta_{i,1} +\beta_{j,1} }\big) \Bigg\} ,
	\end{multline*}
	with
	$
	w_t=\frac{K(h^{-1}|t-\tau|)}{\sum_{t=1}^nK(h^{-1}|t-\tau|)},
	$
	where $K(\cdot)$ is a kernel function and $h$ is the bandwidth parameter. As another line of research, note that TWHM is formulated as an AR(1) process. 
	We can  extend it by including more time lags. For example, we can extend TWHM to include lag-$k$ dependence by writing 
	\begin{equation*} 
		X^t_{i,j} = I( \ve_{i,j}^t=0)+\sum_{r=1}^{k}X^{t-r}_{i,j} I( \ve_{i,j}^t=r)  ,
	\end{equation*}
	where the innovations
	$\ve_{i,j}^t$ are independent such that
	\begin{equation*} 
		P(\ve_{i,j}^t =r) = \frac{e^{\beta_{i,r}+\beta_{j,r}}}{1+\sum_{s=0}^k e^{\beta_{i,s}+\beta_{j,s}}}~~ {\rm for} ~ r=0,\cdots,k; \quad 	P(\ve_{i,j}^t = -1) =  \frac{1}{1+\sum_{s=0}^k e^{\beta_{i,s}+\beta_{j,s}}},
	\end{equation*}
	with parameter $\bbeta_0=(\beta_{1,0},\ldots, \beta_{p,0})^\top$ denoting node-specific static heterogeneity and $\bbeta= \left( \beta_{i,r}\right)_{1\leq i\leq p;1\leq r\leq k}$ $ \in \mathbb{R}^{p\times k}$ denoting lag-$k$ dynamic fluctuation.
}
Other future lines of research include adding covariates to model the tendency of nodes making connections  \citep{yan2019statistical} and exploring additional structures \citep{chen2021analysis}. 
		
			\newpage
			\bibliographystyle{apalike}
			\bibliography{reference}      
			\newpage
			\begin{appendix}
				\section{Technical proofs}
				For brevity,  we denote
				$\alpha_{0,i,j}:=e^{\beta_{i,0} +\beta_{j,0}}$ and $\alpha_{1,i,j}:=e^{\beta_{i,1} +\beta_{j,1}}$,   and  define	$\alpha^*_{0,i,j}, \alpha^*_{1,i,j}$ and $\hat{\alpha}_{0,i,j}, \hat{\alpha}_{1,i,j}$ similarly based on  the true parameter $\btheta^*$ and the MLE $\hat{\btheta}$.
				
				\subsection{Some technical lemmas}
				Before presenting the proofs for our main results, we first provide some technical lemmas which will be used from time to time in our proofs.
				Lemmas \ref{thm1_lemma} and  \ref{var_bound} below provide further properties about the process $\{\bX^t\}$.
				
				\begin{lem}\label{thm1_lemma}
					Let $\{\bX^t\} \sim P_{\btheta}$
					We have:
					\item[(i)] $\{ \bX^t\circ \bX^{t-1}, t=0, 1, 2, \cdots \}$ where $\circ$  is the Hadamard product operator, is strictly stationary. Furthermore for any $1\leq i<j\leq p, 1\leq \ell<m\leq p  $ and $| t-s| \ge 1$, we have
					\begin{eqnarray*}
						&&\E\left( X_{i,j}^tX_{i,j}^{t-1}\right) =\frac{\alpha_{0,i,j}(\alpha_{0,i,j}+\alpha_{1,i,j})} {(1+\alpha_{0,i,j}) (1+\alpha_{0,i,j} +\alpha_{1,i,j})}, \nonumber\\
						&&\var\left( X_{i,j}^t X_{i,j}^{t-1}\right)  =\frac{\alpha_{0,i,j}(\alpha_{0,i,j}+\alpha_{1,i,j}) (2\alpha_{0,i,j}+ \alpha_{1,i,j}+1)}{(1+\alpha_{0,i,j})^2(1+\alpha_{0,i,j} +\alpha_{1,i,j})^2}, \\
						&&\cov(X_{i,j}^tX_{i,j}^{t-1},X_{l,m}^{s} X_{l,m}^{s-1})= \\
						&&\left\{
						\begin{array}{ccc}
							\left(  \frac{\alpha_{1,i,j}}{1+\alpha_{0,i,j} +\alpha_{1,i,j}} \right) ^{| t-s|-1}\frac{\alpha_{0,i,j}(\alpha_{0,i,j} +\alpha_{1,i,j})^2}{(1+\alpha_{0,i,j})^2(1+\alpha_{0,i,j} +\alpha_{1,i,j})^2} ,  &{(i,j)=(l,m)},\\
							\\
							0, &{(i,j)\neq(l,m)}.
						\end{array}
						\right.
					\end{eqnarray*}
					\item[(ii)] $\{ \left( 1-\bX^t\right) \circ\left( 1- \bX^{t-1}\right) , t=0, 1, 2, \cdots \}$ is strictly stationary. Furthermore for any $1\leq i<j\leq p, 1\leq \ell<m\leq p  $ and $| t-s| \ge 1$, we have
					\begin{eqnarray*}
						&&\E\left(  \Big( 1-X_{i,j}^{t}\Big) \Big( 1-X_{i,j}^{t-1}\Big) \right) =\frac{1+\alpha_{1,i,j}} {(1+\alpha_{0,i,j}) (1+\alpha_{0,i,j} +\alpha_{1,i,j})},\\
						&&\var\left( \Big( 1-X_{i,j}^{t}\Big) \Big( 1-X_{i,j}^{t-1}\Big) \right) =\frac{\alpha_{0,i,j}(1+\alpha_{1,i,j}) (\alpha_{0,i,j}+ \alpha_{1,i,j}+2)}{(1+\alpha_{0,i,j})^2(1+\alpha_{0,i,j} +\alpha_{1,i,j})^2},\\
						&&\cov\left( \Big( 1-X_{i,j}^{t}\Big) \Big( 1-X_{i,j}^{t-1}\Big),(1-X_{l.m}^{s})(1-X_{l,m}^{s-1})   \right) = \\
						&&\left\{
						\begin{array}{ccc}
							\left(  \frac{\alpha_{1,i,j}}{1+\alpha_{0,i,j} +\alpha_{1,i,j}} \right) ^{| t-s|-1}\frac{\alpha_{0,i,j}(1+\alpha_{1,i,j})^2 }{(1+\alpha_{0,i,j})^2 (1+\alpha_{0,i,j} +\alpha_{1,i,j})^2} ,  &{(i,j)=(l,m)},\\
							\\
							0, &{(i,j)\neq(l,m)}.
						\end{array}
						\right.
					\end{eqnarray*}
				\end{lem}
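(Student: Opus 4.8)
The plan is to exploit the fact that, for each fixed pair $(i,j)\in\calJ$, the univariate process $\{X_{i,j}^t, t\ge 0\}$ is a stationary two-state Markov chain on $\{0,1\}$, and to reduce every claimed identity to elementary computations with its stationary distribution and its one- and $k$-step transition probabilities. From the definition \eqref{b1}--\eqref{bbeta}, conditioning on the innovation $\ve_{i,j}^t\in\{-1,0,1\}$ yields exactly the transition probabilities recorded in \eqref{transP}; in the notation $\alpha_{0,i,j}=e^{\beta_{i,0}+\beta_{j,0}}$, $\alpha_{1,i,j}=e^{\beta_{i,1}+\beta_{j,1}}$ these are $p_{11}=(\alpha_{0,i,j}+\alpha_{1,i,j})/(1+\alpha_{0,i,j}+\alpha_{1,i,j})$ and $p_{01}=\alpha_{0,i,j}/(1+\alpha_{0,i,j}+\alpha_{1,i,j})$, with stationary mass $\pi_1=\alpha_{0,i,j}/(1+\alpha_{0,i,j})$ from \eqref{margP}. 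The second eigenvalue of the $2\times2$ transition matrix is $\lambda_{i,j}=p_{00}+p_{11}-1=\alpha_{1,i,j}/(1+\alpha_{0,i,j}+\alpha_{1,i,j})$, precisely the autocorrelation base in \eqref{b9}, so that the spectral decomposition gives $P(X_{i,j}^{t+k}=1\mid X_{i,j}^t=1)=\pi_1+(1-\pi_1)\lambda_{i,j}^{\,k}$.

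Strict stationarity of $\{\bX^t\circ\bX^{t-1}\}$ and of $\{(1-\bX^t)\circ(1-\bX^{t-1})\}$ follows immediately: each is a fixed, shift-equivariant measurable functional of two consecutive terms of the strictly stationary process $\{\bX^t\}$ (Proposition 1 of \cite{jiang2020autoregressive}), hence is itself strictly stationary. For the moment formulas in part (i), I note that $X_{i,j}^tX_{i,j}^{t-1}$ is a Bernoulli variable equal to $1$ iff $X_{i,j}^{t-1}=X_{i,j}^t=1$, so $\E(X_{i,j}^tX_{i,j}^{t-1})=\pi_1 p_{11}$, which rearranges into the stated expression; the variance is then $\mu(1-\mu)$ with $\mu=\pi_1p_{11}$, and a one-line simplification in which the numerator of $1-\mu$ collapses to $1+2\alpha_{0,i,j}+\alpha_{1,i,j}$ produces the claimed form. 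Part (ii) is entirely parallel with $0$ in place of $1$: $\E\big((1-X_{i,j}^t)(1-X_{i,j}^{t-1})\big)=\pi_0 p_{00}$ and $\var=\nu(1-\nu)$ with $\nu=\pi_0p_{00}$, the numerator of $1-\nu$ collapsing to $\alpha_{0,i,j}(2+\alpha_{0,i,j}+\alpha_{1,i,j})$.

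The covariance identities are the core. For $(i,j)\ne(\ell,m)$ the two chains are driven by independent innovation sequences and the stationary initialization factorizes across pairs, so the two product processes are independent and the covariance vanishes. For $(i,j)=(\ell,m)$ and $|t-s|=k\ge1$, assume $t>s$ and order the four relevant times. By the Markov property and stationarity, $\E(X_{i,j}^tX_{i,j}^{t-1}X_{i,j}^sX_{i,j}^{s-1})=\pi_1\,p_{11}\,p_{11}^{(k-1)}\,p_{11}$, where the middle factor is the $(k-1)$-step transition from time $s$ to time $t-1$; substituting $p_{11}^{(k-1)}=\pi_1+(1-\pi_1)\lambda_{i,j}^{\,k-1}$ and subtracting $\mu^2=\pi_1^2p_{11}^2$ leaves exactly $\pi_1(1-\pi_1)p_{11}^2\lambda_{i,j}^{\,k-1}$, which is the stated formula. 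The same scheme with $(0,p_{00})$ replacing $(1,p_{11})$ yields part (ii). The one case needing mild care is $k=1$, where $s=t-1$ and the middle transition degenerates to the trivial zero-step step $p_{11}^{(0)}=1$; the formula remains valid since $\lambda_{i,j}^{\,0}=1$, so no separate argument is required.

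The main obstacle is bookkeeping rather than conceptual: correctly identifying that the fourfold product reduces, via the Markov property, to a product of a stationary mass, two one-step transitions, and a single $(k-1)$-step transition, and then recognizing that the only $k$-dependent piece is the geometric factor $\lambda_{i,j}^{\,k-1}$ arising from the spectral decomposition of the transition matrix. Once this reduction is in place, every displayed identity is a routine algebraic simplification.
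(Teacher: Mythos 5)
Your proposal is correct and follows essentially the same route as the paper: both treat each $\{X_{i,j}^t\}$ as a stationary two-state Markov chain, factor $\E(X_{i,j}^tX_{i,j}^{t-1}X_{i,j}^sX_{i,j}^{s-1})$ via the Markov property into a stationary mass, two one-step transitions, and one $(t-s-1)$-step transition, and identify the geometric factor $\alpha_{1,i,j}/(1+\alpha_{0,i,j}+\alpha_{1,i,j})$ raised to the power $|t-s|-1$ (with part (ii) obtained by the same scheme for the state $0$). The only differences are presentational: you evaluate the multi-step diagonal transition directly by the spectral decomposition $p_{11}^{(k-1)}=\pi_1+(1-\pi_1)\lambda_{i,j}^{k-1}$ and handle all lags $k\ge 1$ uniformly, whereas the paper computes the lag-one covariance separately and, for $k\ge 2$, rewrites the multi-step transition through the marginal autocovariance \eqref{b9}; both yield identical formulas.
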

				
				\begin{proof}
					(i)   Denote $\mu_{i,j}=\E\left( X_{i,j}^tX_{i,j}^{t-1}\right) $, $\gamma_{i,j}(k)= \cov\left( X_{i,j}^t X_{i,j}^{t-1},X_{i,j}^{t-k} X_{i,j}^{t-k-1}\right) $ and $\rho_{i,j}(k)=$ $ \gamma_{i,j}(k)/\gamma_{i,j}(1)$ ($k\geq 1$). For every $i< j$, we have
					\begin{eqnarray*}
						&&\E\left( X_{i,j}^tX_{i,j}^{t-1}\right) = P\left(X_{i,j}^t=1\Big|X_{i,j}^{t-1}=1 \right)P\left(X_{i,j}^{t-1}=1\right)=\frac{\alpha_{0,i,j}(\alpha_{0,i,j}+\alpha_{1,i,j})} {(1+\alpha_{0,i,j}) (1+\alpha_{0,i,j} +\alpha_{1,i,j})},
						\\
						&&		\var\left( X_{i,j}^t X_{i,j}^{t-1}\right) = \E\left( \Big(X_{i,j}^tX_{i,j}^{t-1}\Big)^2\right) - \E\left( X_{i,j}^tX_{i,j}^{t-1}\right)^2 \\ &&=\left(1-  \E\left( X_{i,j}^tX_{i,j}^{t-1}\right)\right)  \E\left( X_{i,j}^tX_{i,j}^{t-1}\right) 	=\frac{\alpha_{0,i,j}(\alpha_{0,i,j}+\alpha_{1,i,j}) (2\alpha_{0,i,j}+ \alpha_{1,i,j}+1)}{(1+\alpha_{0,i,j})^2(1+\alpha_{0,i,j} +\alpha_{1,i,j})^2},
					\end{eqnarray*}
					and
					\begin{eqnarray*}
						\gamma_{i,j}(1) &=& \E\left( X_{i,j}^t(X_{i,j}^{t-1})^2X_{i,j}^{t-2}\right) - \mu^2_{i,j}=P\left( X_{i,j}^tX_{i,j}^{t-1}X_{i,j}^{t-2}=1\right) - \mu^2_{i,j}\\
						&=&P\left( X_{i,j}^t\Big| X_{i,j}^{t-1}X_{i,j}^{t-2}=1 \right)P\left(X_{i,j}^{t-1}X_{i,j}^{t-2}=1 \right)   - \mu^2_{i,j}\\
						&=&\E\left( X_{i,j}^t\Big| X_{i,j}^{t-1}=1\right) \mu_{i,j}   -\mu^2_{i,j}\\
						&=& \frac{\alpha_{0,i,j}(\alpha_{0,i,j} +\alpha_{1,i,j})^2}{(1+\alpha_{0,i,j})^2(1+\alpha_{0,i,j} +\alpha_{1,i,j})^2}.
					\end{eqnarray*}
					For $k\geq2$, by Proposition \ref{prop1}, we have,
					\begin{eqnarray*}
						\gamma_{i,j}(k) &=& \E\left( X_{i,j}^tX_{i,j}^{t-1}X_{i,j}^{t-k} X_{i,j}^{t-k-1}\right) - \mu^2_{i,j}\\
						&=&P\left( X_{i,j}^tX_{i,j}^{t-1}=1\Big| X_{i,j}^{t-k}X_{i,j}^{t-k-1}=1\right) P\left(X_{i,j}^{t-k}X_{i,j}^{t-k-1  }=1 \right)  - \mu^2_{i,j}\\
						&=&P\left(X_{i,j}^t=1\Big|X_{i,j}^{t-1}=1 \right)P\left(X_{i,j}^{t-1}=1\Big|X_{i,j}^{t-k}=1 \right)\mu_{i,j}- \mu_{i,j}^2\\
						& =&P\left(X_{i,j}^t=1\Big|X_{i,j}^{t-1}=1 \right)P\left(X_{i,j}^{t-1}X_{i,j}^{t-k}=1 \right)P\left( X_{i,j}^{t-k}=1 \right)^{-1} \mu_{i,j}- \mu^2_{i,j}\\
						&=&P\left(X_{i,j}^t=1\Big|X_{i,j}^{t-1}=1 \right)P\left(X_{i,j}^{t-1}X_{i,j}^{t-k}=1 \right)P\left(X_{i,j}^{t-k+1}=1\Big|X_{i,j}^{t-k}=1 \right)- \mu^2_{i,j}\\
						&=&P\left(X_{i,j}^{t-1}X_{i,j}^{t-k}=1 \right)P\left(X_{i,j}^t=1\Big|X_{i,j}^{t-1}=1 \right)^2- \mu^2_{i,j}\\
						&=&\left(P\left(X_{i,j}^{t-1}X_{i,j}^{t-k}=1 \right)-\E\left(X_{i,j}^{t}\right) ^2 \right)  P\left(X_{i,j}^t=1\Big|X_{i,j}^{t-1}=1 \right)^2\\
						&=&\cov\left( X_{i,j}^{t-1},X_{i,j}^{t-k}\right) \frac{(\alpha_{0,i,j} +\alpha_{1,i,j})^2}{(1+\alpha_{0,i,j} +\alpha_{1,i,j})^2}\\
						&= &\left(  \frac{\alpha_{1,i,j}}{1+\alpha_{0,i,j} +\alpha_{1,i,j}} \right) ^{k-1} \frac{\alpha_{0,i,j}}{(1+\alpha_{0,i,j})^2}\frac{(\alpha_{0,i,j} +\alpha_{1,i,j})^2}{(1+\alpha_{0,i,j} +\alpha_{1,i,j})^2}\\
						&=&\left(  \frac{\alpha_{1,i,j}}{1+\alpha_{0,i,j} +\alpha_{1,i,j}} \right) ^{k-1}\gamma_{i,j}(1).
					\end{eqnarray*}
					This proves (i).
					
					(ii)	Let $\mu_{i,j}'=\E\left(  \Big( 1-X_{i,j}^{t}\Big) \Big( 1-X_{i,j}^{t-1}\Big) \right) $, $\gamma_{i,j}'(k)= \cov\Bigg(  \Big( 1-X_{i,j}^{t}\Big) \Big( 1-X_{i,j}^{t-1}\Big),$ $  \left( 1-X_{i,j}^{t-k}\right) \left( 1-X_{i,j}^{t-k-1}\right) \Bigg) $ and $\rho_{i,j}'(k)=$ $ \gamma_{i,j}'(k)/\gamma_{i,j}'(1)$  ($k\geq 1$). Similarly, for every $i< j$, we have
					\begin{eqnarray*}
						\E\left(  \Big( 1-X_{i,j}^{t}\Big) \Big( 1-X_{i,j}^{t-1}\Big) \right)  &=& P\left(X_{i,j}^t=0\Big|X_{i,j}^{t-1}=0 \right)P\left(X_{i,j}^{t-1}=0\right) \\
						&=&\frac{1+\alpha_{1,i,j}} {(1+\alpha_{0,i,j}) (1+\alpha_{0,i,j} +\alpha_{1,i,j})},\\
						\var\left( \Big( 1-X_{i,j}^{t}\Big) \Big( 1-X_{i,j}^{t-1}\Big) \right)
						&=&\left( 1- \mu'_{i,j}\right) \mu'_{i,j}=\frac{\alpha_{0,i,j}(1+\alpha_{1,i,j}) (\alpha_{0,i,j}+ \alpha_{1,i,j}+2)}{(1+\alpha_{0,i,j})^2(1+\alpha_{0,i,j} +\alpha_{1,i,j})^2},
					\end{eqnarray*}
					and
					\begin{eqnarray*}
						\gamma_{i,j}'(1)&=&\cov\left(  \Big( 1-X_{i,j}^{t}\Big) \Big( 1-X_{i,j}^{t-1}\Big),  \left( 1-X_{i,j}^{t-1}\right) \left( 1-X_{i,j}^{t-2}\right) \right) \\
						&=&\E\left( \Big( 1-X_{i,j}^t\Big) \Big( 1-X_{i,j}^{t-1}\Big)^2 \left( 1-X_{i,j}^{t-2}\right) \right)-  \E\left(\Big( 1-X_{i,j}^{t}\Big) \Big( 1-X_{i,j}^{t-1}\Big) \right)^2\\
						&=&P\left( \Big( 1-X_{i,j}^t\Big) \Big( 1-X_{i,j}^{t-1}\Big)^2 \left( 1-X_{i,j}^{t-2}\right) =1\right)-  \E\left(\Big( 1-X_{i,j}^{t}\Big) \Big( 1-X_{i,j}^{t-1}\Big) \right)^2\\
						&=&P\left((1-X_{i,j}^{t-1})(1-X_{i,j}^{t-2}) =1\right) P\left(X_{i,j}^t=0\Big|X_{i,j}^{t-1}=0 \right)-\left( \mu'_{i,j}\right) ^2\\
						&=&(1+\alpha_{0,i,j})\left( \mu'_{i,j}\right) ^2-\left( \mu_{i,j}'\right) ^2\\
						&=&\alpha_{0,i,j}\left( \mu'_{i,j}\right) ^2\\
						&=&\frac{\alpha_{0,i,j}(1+\alpha_{1,i,j})^2 }{(1+\alpha_{0,i,j})^2 (1+\alpha_{0,i,j} +\alpha_{1,i,j})^2}.
					\end{eqnarray*}
					For   $k\ge 2$ we have,
					\begin{eqnarray*}
						\gamma_{i,j}'(k) &=& \E\left( \Big( 1-X_{i,j}^t\Big) \Big( 1-X_{i,j}^{t-1}\Big) \left( 1-X_{i,j}^{t-k}\right) \left( 1-X_{i,j}^{t-k-1}\right) \right)- \left( \mu'_{i,j}\right) ^2\\
						&=&P\left(\Big( 1-X_{i,j}^t\Big) \Big( 1-X_{i,j}^{t-1}\Big)=1\Big| \left( 1-X_{i,j}^{t-k}\right) \left( 1-X_{i,j}^{t-k-1}\right) =1\right) \mu'_{i,j}  - \left( \mu'_{i,j}\right) ^2\\
						&=&P\left(X_{i,j}^t=0\Big|X_{i,j}^{t-1}=0 \right)P\left(X_{i,j}^{t-1}=0\Big|X_{i,j}^{t-k}=0 \right)\mu'_{i,j}- \left( \mu'_{i,j}\right) ^2\\
						&=&P\left(X_{i,j}^t=0\Big|X_{i,j}^{t-1}=0 \right)P\left(X_{i,j}^{t-1}=0,X_{i,j}^{t-k}=0\right) P\left( X_{i,j}^{t-k}=0 \right)^{-1}\mu'_{i,j}- \left( \mu'_{i,j}\right) ^2,
					\end{eqnarray*}
					with
					\begin{eqnarray*}
						P\left(X_{i,j}^{t-1}=0,X_{i,j}^{t-k}=0\right) &=& \cov\left(  1-X_{i,j}^{t-1},1-X_{i,j}^{t-k}\right) +\E\left(  1-X_{i,j}^{t} \right)^2\\
						&= &\cov\left(  X_{i,j}^{t-1},X_{i,j}^{t-k}\right) + \frac{1}{(1+\alpha_{0,i,j})^2}\\
						&=&\left(  \frac{\alpha_{1,i,j}}{1+\alpha_{0,i,j} +\alpha_{1,i,j}} \right) ^{k-1} \frac{\alpha_{0,i,j}}{(1+\alpha_{0,i,j})^2}+ \frac{1}{(1+\alpha_{0,i,j})^2}\\
						&=&\frac{1}{(1+\alpha_{0,i,j})^2}\left(\left(  \frac{\alpha_{1,i,j}}{1+\alpha_{0,i,j} +\alpha_{1,i,j}} \right) ^{k-1}\alpha_{0,i,j}+1 \right) ,
					\end{eqnarray*}
					and
					\begin{equation*}
						P\left(X_{i,j}^t=0\Big|X_{i,j}^{t-1}=0 \right)\frac{1}{(1+\alpha_{0,i,j})^2} P\left( X_{i,j}^{t-k}=0 \right)^{-1}=\frac{1+\alpha_{1,i,j}} {(1+\alpha_{0,i,j}) (1+\alpha_{0,i,j} +\alpha_{1,i,j})}.
					\end{equation*}
					Thus,
					\begin{eqnarray*}
						&&\gamma_{i,j}'(k)\\
						&=&P\left(X_{i,j}^t=0\Big|X_{i,j}^{t-1}=0 \right)P\left(X_{i,j}^{t-1}=0,X_{i,j}^{t-k}=0\right) P\left( X_{i,j}^{t-k}=0 \right)^{-1}\mu'_{i,j}- \left( \mu'_{i,j}\right) ^2\\
						&=&\left( P\left(X_{i,j}^t=0\Big|X_{i,j}^{t-1}=0 \right)P\left(X_{i,j}^{t-1}=0,X_{i,j}^{t-k}=0\right) P\left( X_{i,j}^{t-k}=0 \right)^{-1}- 1\right) \left( \mu'_{i,j}\right) ^2\\
						&=&	P\left(X_{i,j}^t=0\Big|X_{i,j}^{t-1}=0 \right)\left(  \frac{\alpha_{1,i,j}}{1+\alpha_{0,i,j} +\alpha_{1,i,j}} \right) ^{k-1} \frac{\alpha_{0,i,j}}{(1+\alpha_{0,i,j})^2}P\left( X_{i,j}^{t-k}=0 \right)^{-1}\mu'_{i,j}\\
						&=&\left(  \frac{\alpha_{1,i,j}}{1+\alpha_{0,i,j} +\alpha_{1,i,j}} \right) ^{k-1}\frac{\alpha_{0,i,j}(1+\alpha_{1,i,j})^2 }{(1+\alpha_{0,i,j})^2 (1+\alpha_{0,i,j} +\alpha_{1,i,j})^2}\\
						&=&\left(  \frac{\alpha_{1,i,j}}{1+\alpha_{0,i,j} +\alpha_{1,i,j}} \right) ^{k-1}	\gamma_{i,j}'(1).
					\end{eqnarray*}
					This proves (ii).
				\end{proof}

				\begin{lem}\label{var_bound}
					Let $\{\bX^{t}\} \sim P_{\btheta}$ hold. Under  condition (A1) we have,
					\begin{equation*}
						\sup_{1\leq i<j\leq p} \left\{\var\left( \sum_{t=1}^{n}X_{i,j}^tX_{i,j}^{t-1}\right) ,\quad \var\left( \sum_{t=1}^{n}X_{i,j}^t\right) \right\} =O(n).
					\end{equation*}
				\end{lem}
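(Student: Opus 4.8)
The plan is to treat the two partial sums by the same device. For each fixed pair $(i,j)$ both $\{X_{i,j}^t\}$ and $\{X_{i,j}^tX_{i,j}^{t-1}\}$ are strictly stationary, by Proposition \ref{prop1} and Lemma \ref{thm1_lemma}, so I can express the variance of a length-$n$ partial sum through the autocovariance function of the relevant sequence. Writing $\gamma(k)$ for its lag-$k$ autocovariance, the elementary identity
\begin{equation*}
\var\Big(\sum_{t=1}^n Z_t\Big)=n\gamma(0)+2\sum_{k=1}^{n-1}(n-k)\gamma(k)
\end{equation*}
reduces the task to controlling $\gamma(0)$ and the decay of $\gamma(k)$ in $k$, uniformly over $(i,j)$.

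Next I would substitute the explicit autocovariances already computed in the excerpt. For $\sum_t X_{i,j}^t$, equation (\ref{b9}) gives $\gamma(k)=\var(X_{i,j}^t)\,\lambda_{i,j}^{\,k}$ with $\lambda_{i,j}=\alpha_{1,i,j}/(1+\alpha_{0,i,j}+\alpha_{1,i,j})$; for $\sum_t X_{i,j}^tX_{i,j}^{t-1}$, Lemma \ref{thm1_lemma}(i) gives $\gamma(k)=\lambda_{i,j}^{\,k-1}\gamma(1)$ for $k\ge1$. In both cases the autocovariances decay geometrically with the \emph{same} ratio $\lambda_{i,j}$, and $\gamma(0),\gamma(1)$ are bounded by $1/4$ because every summand lies in $\{0,1\}$. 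Summing the geometric series, $\sum_{k=1}^{n-1}(n-k)\lambda_{i,j}^{\,k-1}\le n\sum_{k\ge0}\lambda_{i,j}^{\,k}=n/(1-\lambda_{i,j})$, so that $\var(\sum_t Z_t)\le n\big(\gamma(0)+2\gamma(1)/(1-\lambda_{i,j})\big)$ for each of the two series.

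The crux, and the step I expect to be the main obstacle, is to show that $\lambda_{i,j}$ is bounded away from $1$ uniformly over $(i,j)\in\calJ$, equivalently that $\sup_{i,j}\alpha_{1,i,j}/(1+\alpha_{0,i,j})<\infty$: only then does $1/(1-\lambda_{i,j})$ contribute a bounded factor and the whole expression become $O(n)$ with a constant free of $(i,j)$. This is precisely the role of condition (A1), which bounds $\beta^*_{i,1}-\max(\beta^*_{i,0},0)$ node by node; the delicate part is converting these \emph{per-node} bounds into a \emph{pairwise} bound on the ratio $\alpha_{1,i,j}/(1+\alpha_{0,i,j})$, since $\lambda_{i,j}$ couples the parameters of nodes $i$ and $j$. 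Using $1+\alpha_{0,i,j}\ge e^{\max(0,\beta_{i,0}+\beta_{j,0})}$ together with the multiplicative split of $e^{\beta_{i,1}+\beta_{j,1}}$, one aims to produce a uniform constant $c>0$ with $\lambda_{i,j}\le 1-c$. Once this uniform bound is in hand, the previous display yields $\var(\sum_t Z_t)\le \tfrac{n}{4}\big(1+2/c\big)$ simultaneously for both series, and taking the supremum over $(i,j)$ gives the stated $O(n)$ with a constant independent of the pair.
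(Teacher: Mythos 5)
Your proposal follows essentially the same route as the paper's proof: both reduce each partial sum to a strictly stationary $\{0,1\}$-valued sequence whose autocovariances decay geometrically (Proposition \ref{prop1} for $X_{i,j}^t$, Lemma \ref{thm1_lemma} for $X_{i,j}^tX_{i,j}^{t-1}$), and both sum the geometric series to pick up a factor of order $1/(1-\lambda_{i,j})$; the paper merely packages this as the generic inequality $\var\bigl(\sum_{t=1}^n Y_t\bigr)\le 2n\sigma^2/(1-\rho)$ for stationary Bernoulli variables with $\cov(Y_s,Y_t)\le \sigma^2\rho^{|s-t|}$.

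The one step you leave unfinished --- extracting a uniform constant $c>0$ with $\lambda_{i,j}\le 1-c$ from the per-node condition (A1) --- is exactly the step the paper does not carry out either: its proof ends by asserting that the lemma ``follows directly'' from Proposition \ref{prop1}, Lemma \ref{thm1_lemma}, condition (A1) and the generic inequality, the only justification being the remark made after (A1) is stated, namely that (A1) ensures the ACFs in \eqref{b9} are bounded away from $1$. Your instinct that this per-node-to-pairwise conversion is delicate is well founded. From $\beta_{i,1}\le \max(\beta_{i,0},0)+K$ one only gets $e^{\beta_{i,1}+\beta_{j,1}}\le e^{2K}\max\bigl\{e^{\beta_{i,0}+\beta_{j,0}},\,e^{\beta_{i,0}},\,e^{\beta_{j,0}},\,1\bigr\}$, and the cross terms $e^{\beta_{i,0}},e^{\beta_{j,0}}$ are not dominated by $1+e^{\beta_{i,0}+\beta_{j,0}}$ when the $\beta_{i,0}$'s are unbounded: taking $\beta_{i,0}=M$, $\beta_{j,0}=-M$, $\beta_{i,1}=M$, $\beta_{j,1}=0$, both nodes satisfy (A1) with any $K>0$, yet $\lambda_{i,j}=e^{M}/(2+e^{M})\to 1$ as $M$ grows, so no uniform $c$ exists. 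The implication does hold in the regimes the paper emphasizes, e.g.\ when all $\beta_{i,0}\le 0$ (sparse networks), where (A1) forces $\beta_{i,1}<K$ and hence $\lambda_{i,j}\le e^{2K}/(1+e^{2K})$, or when $\|\bbeta_0\|_\infty=O(1)$. In short, you have reproduced the paper's argument in full and at the same level of rigor; the gap you flag is one the paper resolves only by fiat, in its gloss of what (A1) is assumed to guarantee.
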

				\begin{proof}
					Let $Y_1, Y_2, \ldots $ be a sequence of Bernoulli random variables with $\E Y_i =\mu $,  $\var(Y_i)=\sigma^2$ for all $i=1, 2\dots $,  and assume that 
					$\cov\left( Y_i,Y_j\right)  \leq \sigma^2 \rho^{|i-j|}$ for some $0\le \rho <1$. We have
					\begin{eqnarray*}
						\var\left(\sum_{i=1}^{n} Y_i \right)  &=& \left( \sum_{i=1}^{n} \var\left( Y_i\right)   + \sum_{1\leq i\neq j\leq n}  \cov\left( Y_i,Y_j\right)    \right) \\
						&\leq&  {\sigma^2} \left( n +  2\rho(n-1) + 2\rho^2(n-2) +\cdots+2\rho^{n-1}\right)\\
						&\leq&  {2\sigma^2} \left( n +  \rho(n-1) + \rho^2(n-2) +\cdots+\rho^{n-1}\right) \\
						&	\le & \frac{2n\sigma^2}{  1-\rho }.
					\end{eqnarray*}
					Lemma \ref{var_bound} then follows directly from Proposition \ref{prop1}, Lemma \ref{thm1_lemma}, condition (A1), and the above inequality.
				\end{proof}
				
				{ 	
					\begin{lem}\label{Bers}
						Suppose $Z_{i}$, $i=1,\cdots,p $ are independent random variables with
						\begin{equation*}
							\E\left( Z_{i} \right)=0,\quad  \var\left( Z_{i} \right) \leq \sigma^2,
						\end{equation*}
						and $Z_{i} \leq b$ almost surely. We have, for any constant $c>0$, there exists a large enough constant $C>0$ such that, 	with probability greater than $1-(p)^{-c}$,
						\begin{equation*}
							\left| \sum_{i=1}^{p} Z_{i} \right| \le C[\sqrt{p\log(p)}\sigma + b\log(p)] .
						\end{equation*}
					\end{lem}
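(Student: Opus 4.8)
The statement of Lemma \ref{Bers} is a classical Bernstein-type concentration inequality, so the plan is to run the Chernoff / moment-generating-function argument and then calibrate the deviation level against $\log p$. First I would establish a one-variable exponential moment bound. For $\lambda>0$, write $e^{\lambda z}-1-\lambda z=\phi(\lambda z)(\lambda z)^2$ with $\phi(u)=(e^u-1-u)/u^2$, and use that $\phi$ is increasing on the whole real line. Since $\E(Z_i)=0$ and $Z_i\le b$ almost surely, this yields
\[
\E e^{\lambda Z_i}=1+\E\big[\phi(\lambda Z_i)\lambda^2 Z_i^2\big]\le 1+\phi(\lambda b)\lambda^2\sigma^2\le \exp\!\Big\{\tfrac{\sigma^2}{b^2}\big(e^{\lambda b}-1-\lambda b\big)\Big\}.
\]
The key point is that this step uses only the \emph{one-sided} bound $Z_i\le b$ together with $\var(Z_i)\le\sigma^2$, because $\phi(\lambda Z_i)\le\phi(\lambda b)$ whenever $Z_i\le b$.

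Next I would take the product over the independent $Z_i$, apply Markov's inequality to $e^{\lambda\sum_i Z_i}$, and optimize over $\lambda>0$ in the exponent $-\lambda t+p\sigma^2(e^{\lambda b}-1-\lambda b)/b^2$. This produces the standard Bennett/Bernstein tail bound
\[
P\Big(\sum_{i=1}^p Z_i\ge t\Big)\le \exp\!\Big(-\frac{t^2}{2\big(p\sigma^2+bt/3\big)}\Big).
\]
The lower tail is handled symmetrically by applying the same argument to $-Z_i$ (which is likewise bounded above), giving an extra factor of $2$ and the two-sided estimate for $\big|\sum_i Z_i\big|$. With these ingredients the concentration inequality itself is routine; what remains is purely the calibration.

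Finally I would set $t=C\big(\sigma\sqrt{p\log p}+b\log p\big)$ and verify that the exponent exceeds $(c+1)\log p$ once $C$ is large. The main (and only real) obstacle is that the Bernstein denominator $p\sigma^2+bt/3$ interpolates between a sub-Gaussian and a sub-exponential regime, so I would split into two cases. When $\sigma\sqrt{p\log p}\ge b\log p$ the variance term dominates, and after bounding $bt$ by $O(p\sigma^2)$ the exponent is $\gtrsim C\log p$; when $b\log p>\sigma\sqrt{p\log p}$ the boundedness term dominates, $p\sigma^2$ is absorbed into $bt$, and the exponent is again $\gtrsim C\log p$. In either regime, choosing $C$ large enough forces $2\exp(-\text{exponent})\le 2p^{-(c+1)}\le p^{-c}$ for $p\ge2$, which is the claimed bound. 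I expect the case split and the elementary but slightly fiddly comparisons between $\sigma\sqrt{p\log p}$, $b\log p$, and $p\sigma^2$ to be the part requiring the most care, while everything else is a direct invocation of the moment-generating-function machinery.
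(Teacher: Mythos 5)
Your upper-tail argument is correct and is essentially the standard derivation of Bennett/Bernstein: the monotonicity of $\phi(u)=(e^u-1-u)/u^2$ does give $\E e^{\lambda Z_i}\le \exp\{\sigma^2(e^{\lambda b}-1-\lambda b)/b^2\}$ from the one-sided bound $Z_i\le b$ alone, and your case-split calibration of $t=C(\sigma\sqrt{p\log p}+b\log p)$ against the denominator $p\sigma^2+bt/3$ is sound (in both regimes the exponent is $\gtrsim C\log p$, so taking $C$ large enough beats $(c+1)\log p$). Since the paper disposes of this lemma by simply citing Bernstein's inequality, reproducing that machinery in full is a legitimate, if more laborious, route.

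The genuine gap is the lower tail. You write that the symmetric bound follows by "applying the same argument to $-Z_i$ (which is likewise bounded above)" --- but under the stated hypothesis $Z_i\le b$, the variable $-Z_i$ is bounded \emph{below} by $-b$, not above by anything, so the key step $\phi(-\lambda Z_i)\le\phi(\lambda b)$ is unavailable. This is not a cosmetic issue: with only a one-sided bound, $\E e^{-\lambda Z_i}$ need not even be finite (take $Z_i$ with mean zero, $Z_i\le b$, and a polynomial left tail $P(Z_i<-x)\asymp x^{-3}$, which has finite variance), and then $P\bigl(\sum_i Z_i\le -t\bigr)\gtrsim p\,t^{-3}\asymp p^{-1/2}(\log p)^{-3/2}$ at the claimed deviation level, which exceeds $p^{-c}$ for any $c>1/2$; the two-sided conclusion is simply false under the literal hypothesis. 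The inconsistency is sharpened by your own earlier remark that the MGF step "uses only the one-sided bound" --- that is exactly why symmetry cannot be invoked afterwards. The repair is to read the hypothesis as $|Z_i|\le b$ almost surely (this is what the Bernstein inequality cited by the paper assumes, and it is what holds in every application in the paper, where the $Z_i$ are centered differences of bounded counts divided by $n$); under the two-sided bound your symmetric treatment of $-Z_i$, and hence the whole proof, goes through.
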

				}
				\begin{proof}
					This is a direct result of Bernstein's inequality \cite{lin2011probability}.
				\end{proof}
				
				\subsection{Proof of Proposition \ref{prop1} }

				\begin{proof}\mbox{}\\
					$(i)$ follows directly from Proposition 1 of \cite{jiang2020autoregressive}. For  (ii), we have, 	
					\begin{eqnarray*}
						\E \left( d_i^t\right) &=& \sum_{k=1,\:k \neq i}^{p} \E\left( X_{i,k}^t\right)= \sum_{k=1,\:k \neq i}^{p}\frac{e^{\beta_{i,0} +\beta_{k,0} }}{1+e^{\beta_{i,0} +\beta_{k,0}}},\\
						\var\left( d_i^t\right)  &=& \sum_{k=1,\:k \neq i}^{p} \var\left( X_{i,k}^t\right)=\sum_{k=1,\:k \neq i}^{p}\frac{e^{\beta_{i,0} +\beta_{k,0} }}{(1+e^{\beta_{i,0} +\beta_{k,0}})^2}, \\
						\cov\left( d_i^t,d_{i}^{s}\right)&=&\sum_{k=1,\:k\neq i}^{p} \cov\left( X_{i,k}^t,X_{i,k}^{s}\right)\\
						&=&\sum_{k=1,\:k \neq i}^{p}\left( \frac{e^{\beta_{i,1} +\beta_{k,1}} }{ 1+\sum_{r=0}^1e^{\beta_{i,r} +\beta_{k,r }}  }\right)^{ |t-s|}\frac{e^{\beta_{i,0} +\beta_{k,0} }}{(1+e^{\beta_{i,0} +\beta_{k,0}})^2}.
					\end{eqnarray*}
					Thus,
					\begin{eqnarray*}
						\rho^d_{i,j}(|t-s|) &\equiv&\cor(d_{i }^t, d_{j}^{s}) \\
						&=&\begin{cases}
							C_{i,\rho}\sum_{k=1,\:k \neq i}^{p}\left( \frac{e^{\beta_{i,1} +\beta_{k,1}} }{ 1+\sum_{r=0}^1e^{\beta_{i,r} +\beta_{k,r }}  }\right)^{ |t-s|}\frac{e^{\beta_{i,0} +\beta_{k,0} }}{(1+e^{\beta_{i,0} +\beta_{k,0}})^2}\quad & {\rm if} \; i=j,\\
							0 & {\rm if} \; i\neq j,
						\end{cases}
					\end{eqnarray*}
					where $C_{i,\rho}=\left( \sum_{k=1,\:k \neq i}^{p}\frac{e^{\beta_{i,0} +\beta_{k,0} }}{(1+e^{\beta_{i,0} +\beta_{k,0}})^2}\right)^{-1} $.
					
				\end{proof}

				\subsection{Proof of Lemma \ref{prop1} }
				
				\begin{proof}
					By  Proposition 3 of \cite{jiang2020autoregressive} and Lemma \ref{thm1_lemma}, we have that the process $\{\bX^{t},t=1,2,\ldots\}$ is $\alpha$-mixing with exponential decaying rate.	Since the mixing property is hereditary,   the processes   $ \{\left( 1-\bX^t\right) \circ\left( 1- \bX^{t-1}\right) ,t=1,2,\ldots \}$ and $ \{\bX^t \circ\bX^{t-1}  ,t=1,2,\ldots \}$ are also $\alpha$-mixing with exponential decaying rate. From  Theorem 1 in \cite{merlevede2009bernstein},  we obtain the following concentration inequalities: there exists a positive constant $C$ such that,
					\begin{eqnarray*}
						&&	\bP\left(\left|\sum_{t=1}^{n}\left\{ X_{i,j}^{t}- \E\left(X_{i,j}^{t} \right)  \right\}\right| > \epsilon\right)  \leq  \exp\left( \frac{-C \epsilon^2 }{n+\epsilon\log\left(n\right)\log\log\left(n\right) }\right),   \\
						&&	\bP\left(\left|\sum_{t=1}^{n}\left\{ X_{i,j}^{t}X_{i,j}^{t-1} - \E\left(X_{i,j}^{t}X_{i,j}^{t-1}  \right)\right\} \right| > \epsilon\right)  \leq  \exp\left( \frac{-C \epsilon^2 }{n+\epsilon\log\left(n\right)\log\log\left(n\right) }\right),   \\
						&&	\bP\left(\left|\sum_{t=1}^{n}\left\{ (1-X_{i,j}^t) (1-X_{i,j}^{t-1})-\E\left((1-X_{i,j}^t) (1-X_{i,j}^{t-1}) \right) \right\}\right|  > \epsilon\right)\\
						&\leq&  \exp\left( \frac{-C \epsilon^2 }{n+\epsilon\log\left(n\right)\log\log\left(n\right) }\right),
					\end{eqnarray*}
					hold for all $1\leq i< j\leq p$.
					For any positive constant $c>0$, by setting $\epsilon=c_1\sqrt{n\log(np)}+ c_1\log\left(n\right) \log\log\left(n\right)\log\left(np\right)$ with a big enough constant $c_{1}>0$, we have, with probability greater than $1-(np)^{-c}$,
					\begin{eqnarray*}
						&&	\left|\sum_{t=1}^{n}\left\{ X_{i,j}^{t}- \E\left(X_{i,j}^{t} \right) \right\}\right| \leq \epsilon,\\
						&&		\left|\sum_{t=1}^{n}\left\{ X_{i,j}^{t}X_{i,j}^{t-1} - \E\left(X_{i,j}^{t}X_{i,j}^{t-1}  \right)\right\} \right|\leq \epsilon,\\
						&&		\left|\sum_{t=1}^{n}\left\{ (1-X_{i,j}^t) (1-X_{i,j}^{t-1})-\E\left((1-X_{i,j}^t) (1-X_{i,j}^{t-1}) \right) \right\}\right| \leq  \epsilon,
					\end{eqnarray*}
					hold for all $1\leq i< j\leq p$.
				\end{proof}
				\subsection{Proof of Lemma \ref{bound_l2}}
				\begin{proof}
					Note that for $a,b,c,d\in [0,1]$, we have
					\begin{equation*}
						\left|ab-cd\right|\leq \left|ab-cb\right|+\left|cb-cd\right|= \left|a-c\right|b+\left|b-d\right|c\leq \left|a-c\right|+\left|b-d\right|.
					\end{equation*}
					{ With $-\kappa_{0}\leq \theta_{i,0} \leq \kappa_{0} $ and $-\kappa_{1}\leq \theta_{i,1} \leq \kappa_{1} $,  for all $1\leq i\leq p$ and $\kappa_{r}=\max\left( \kappa_{r}, \kappa_{r}\right) $ for $r=0,1$, }
					we then have,  there exist positive constants $C_{1}, C_{2}$ such that, for all $1\leq i\neq j\leq p$ and
					{ $\btheta \in \bB_{\infty}\left( \btheta^*, c_{r}e^{-4\kappa_{0}-4\kappa_{1}}\right)$ }
					where  $c_r>0$ is a small enough constant, 
					\begin{equation*}
						\E(\bV_2(\btheta)+\bV_1(\btheta))_{i,j}= \frac{1}{p}\frac{\alpha_{0,i,j}} {(1+\alpha_{0,i,j} +\alpha_{1,i,j})^2}\geq C_{1} \frac{ e^{-4\kappa_{1}- 2\kappa_{0} }}{p},
					\end{equation*}
					\begin{eqnarray}\label{v2ij}
						&& \quad\quad\E(-\bV_2(\btheta))_{i,j} \\
						&=& \frac{1}{p} \left( \frac{\alpha_{0,i,j}\alpha_{1,i,j}} {(1+\alpha_{0,i,j} +\alpha_{1,i,j})^2} - \E(b_{i,j}) \frac{\alpha_{0,i,j}\alpha_{1,i,j}} {(\alpha_{0,i,j} +\alpha_{1,i,j})^2}\right) \nonumber\\
						&=& \frac{1}{p} \frac{\alpha_{0,i,j}\alpha_{1,i,j}} {(\alpha_{0,i,j} +\alpha_{1,i,j})^2}\left( \frac{(\alpha_{0,i,j} +\alpha_{1,i,j})^2} {(1+\alpha_{0,i,j} +\alpha_{1,i,j})^2} - \E(b_{i,j}) \right) \nonumber\\
						&=&\frac{1}{p}\frac{\alpha_{0,i,j}\alpha_{1,i,j}} {(\alpha_{0,i,j} +\alpha_{1,i,j})^2}\Bigg\{ \left(\frac{(\alpha_{0,i,j} +\alpha_{1,i,j})^2} {(1+\alpha_{0,i,j} +\alpha_{1,i,j})^2}- \frac{\alpha_{0,i,j}(\alpha_{0,i,j}+\alpha_{1,i,j})}{(1+\alpha_{0,i,j})(1+\alpha_{0,i,j} +\alpha_{1,i,j})}\right)\nonumber\\
						&&-	 \left( \frac{\alpha^*_{0,i,j}(\alpha^*_{0,i,j}+ \alpha^*_{1,i,j})} {(1+\alpha^*_{0,i,j})(1+\alpha^*_{0,i,j} +\alpha^*_{1,i,j})} -\frac{\alpha_{0,i,j} (\alpha_{0,i,j} +\alpha_{1,i,j})}{(1+\alpha_{0,i,j})(1+\alpha_{0,i,j} +\alpha_{1,i,j})}\right) \Bigg\}\nonumber\\
						&\geq&\frac{1}{p}\frac{\alpha_{0,i,j}\alpha_{1,i,j}} {(\alpha_{0,i,j} +\alpha_{1,i,j})^2}\Bigg\{\frac{\alpha_{1,i,j}(\alpha_{0,i,j}+\alpha_{1,i,j})}{(1+\alpha_{0,i,j})(1+\alpha_{0,i,j}+\alpha_{1,i,j})^2 }\nonumber\\
						&&- \left( \left| \frac{\alpha^*_{0,i,j}} {1+\alpha^*_{0,i,j}} -\frac{\alpha_{0,i,j} }{1+\alpha_{0,i,j}}\right|+\left| \frac{\alpha^*_{0,i,j}+ \alpha^*_{1,i,j}} {1+\alpha^*_{0,i,j} +\alpha^*_{1,i,j}} -\frac{\alpha_{0,i,j} +\alpha_{1,i,j}}{1+\alpha_{0,i,j} +\alpha_{1,i,j}}\right|\right)\Bigg\}\nonumber \\
						&\geq&\frac{1}{p}\frac{\alpha_{0,i,j}\alpha_{1,i,j}} {(\alpha_{0,i,j} +\alpha_{1,i,j})^2}\Bigg\{\frac{\alpha_{1,i,j}(\alpha_{0,i,j}+\alpha_{1,i,j})}{(1+\alpha_{0,i,j})(1+\alpha_{0,i,j}+\alpha_{1,i,j})^2 }\nonumber\\
						&&-  \left( \left|\frac{\alpha_{0,i,j}\left( \frac{\alpha^*_{0,i,j}}{\alpha_{0,i,j}} -1 \right) }{\big( 1+\alpha_{0,i,j}\big) \big(1+\alpha^*_{0,i,j} \big)  }\right|+\left| \frac{\alpha_{0,i,j}\left( \frac{\alpha^*_{0,i,j}}{\alpha_{0,i,j}} -1 \right)+\alpha_{1,i,j}\left( \frac{\alpha^*_{1,i,j}}{\alpha_{1,i,j}} -1 \right)} {\big(1+\alpha^*_{0,i,j} +\alpha^*_{1,i,j}\big)\big(1+\alpha_{0,i,j} +\alpha_{1,i,j}\big)}\right|\right)\nonumber\Bigg\} \\
						&\geq&\frac{1}{p}\frac{\alpha_{0,i,j}\alpha_{1,i,j}} {(\alpha_{0,i,j} +\alpha_{1,i,j})^2}\Bigg\{\frac{\alpha_{1,i,j}(\alpha_{0,i,j}+\alpha_{1,i,j})}{(1+\alpha_{0,i,j})(1+\alpha_{0,i,j}+\alpha_{1,i,j})^2} -\frac{\left( 2\left| \frac{\alpha^*_{0,i,j}} {\alpha_{0,i,j}} -1\right|+\left| \frac{\alpha^*_{1,i,j}} {\alpha_{1,i,j}} -1\right|\right)}{1+\alpha^*_{0,i,j} +\alpha^*_{1,i,j}}\Bigg\}\nonumber\\
						&\geq& C_{1}  \frac{e^{-6\kappa_{0}-4\kappa_{1}}}{p}\nonumber,
					\end{eqnarray}
					and
					\begin{eqnarray}\label{v23ij}
						&& \E(\bV_2(\btheta)+\bV_3(\btheta))_{i,j}\\
						&= & \frac{1}{p} \left( \frac{\alpha_{1,i,j}} {(1+\alpha_{0,i,j}+\alpha_{1,i,j})^2} - \E(d_{i,j}) \frac{\alpha_{1,i,j}}{(1+\alpha_{1,i,j})^2} \right)\nonumber \\
						&= & \frac{1}{p} \frac{\alpha_{1,i,j}}{(1+\alpha_{1,i,j})^2}\left( \frac{(1+\alpha_{1,i,j})^2} {(1+\alpha_{0,i,j}+\alpha_{1,i,j})^2} - \E(d_{i,j})  \right)\nonumber \\
						&	= &\frac{1}{p} \frac{\alpha_{1,i,j}}{(1+\alpha_{1,i,j})^2}\Bigg\{ \frac{(1+\alpha_{1,i,j})^2} {(1+\alpha_{0,i,j}+\alpha_{1,i,j})^2}  - \frac{1+\alpha_{1,i,j}} {(1+\alpha_{0,i,j})(1+\alpha_{0,i,j}+\alpha_{1,i,j})}\nonumber\\
						&	&- \left(   \frac{1+\alpha^*_{1,i,j}} {(1+\alpha^*_{0,i,j}) (1+\alpha^*_{0,i,j}+\alpha^*_{1,i,j})} -\frac{1+\alpha_{1,i,j}} {(1+\alpha_{0,i,j}) (1+\alpha_{0,i,j}+\alpha_{1,i,j})}\right)\Bigg\} \nonumber\\
						&=&\frac{1}{p} \frac{\alpha_{1,i,j}}{(1+\alpha_{1,i,j})^2}\Bigg\{\frac{\alpha_{0,i,j} \alpha_{1,i,j}(1+\alpha_{1,i,j})}{(1+\alpha_{0,i,j})(1+\alpha_{0,i,j}+\alpha_{1,i,j})^2 }\nonumber\\
						&	&-	 \left( \left| \frac{1} {1+\alpha^*_{0,i,j}} -\frac{1}{1+\alpha_{0,i,j}}\right| +\left| \frac{1+ \alpha^*_{1,i,j}} {1+\alpha^*_{0,i,j} +\alpha^*_{1,i,j}} -\frac{1+\alpha_{1,i,j}} {1+\alpha_{0,i,j} +\alpha_{1,i,j}}\right|\right)\Bigg\} \nonumber\\
						&	=&\frac{1}{p} \frac{\alpha_{1,i,j}}{(1+\alpha_{1,i,j})^2}\Bigg\{\frac{\alpha_{0,i,j} \alpha_{1,i,j}(1+\alpha_{1,i,j})}{(1+\alpha_{0,i,j})(1+\alpha_{0,i,j}+\alpha_{1,i,j})^2 }\nonumber\\
						&&-  \left( \left|\frac{\alpha_{0,i,j}-\alpha^*_{0,i,j} }{\big( 1+\alpha_{0,i,j}\big) \big(1+\alpha^*_{0,i,j} \big)  }\right|+\left| \frac{\alpha_{0,i,j}-\alpha^*_{0,i,j} +\alpha_{0,i,j}\alpha_{1,i,j}^*-\alpha^*_{0,i,j}\alpha_{1,i,j}} {\big(1+\alpha^*_{0,i,j} +\alpha^*_{1,i,j}\big)\big(1+\alpha_{0,i,j} +\alpha_{1,i,j}\big)}\right|\right)\Bigg\}\nonumber \\
						&=&\frac{1}{p} \frac{\alpha_{1,i,j}}{(1+\alpha_{1,i,j})^2}\Bigg\{\frac{\alpha_{0,i,j} \alpha_{1,i,j}(1+\alpha_{1,i,j})}{(1+\alpha_{0,i,j})(1+\alpha_{0,i,j}+\alpha_{1,i,j})^2 }-  \left|\frac{\alpha_{0,i,j}-\alpha^*_{0,i,j} }{\big( 1+\alpha_{0,i,j}\big) \big(1+\alpha^*_{0,i,j} \big)  }\right|\nonumber\\
						&	&-\left| \frac{\alpha_{0,i,j}-\alpha^*_{0,i,j} +\alpha_{0,i,j}\alpha_{1,i,j}^*- \alpha_{0,i,j}^*\alpha_{1,i,j}^*+ \alpha_{0,i,j}^*\alpha_{1,i,j}^* -\alpha^*_{0,i,j}\alpha_{1,i,j}} {\big(1+\alpha^*_{0,i,j} +\alpha^*_{1,i,j}\big)\big(1+\alpha_{0,i,j} +\alpha_{1,i,j}\big)}\right|\nonumber \Bigg\}\\
						&=&\frac{1}{p} \frac{\alpha_{1,i,j}}{(1+\alpha_{1,i,j})^2}\Bigg\{\frac{\alpha_{0,i,j} \alpha_{1,i,j}(1+\alpha_{1,i,j})}{(1+\alpha_{0,i,j})(1+\alpha_{0,i,j}+\alpha_{1,i,j})^2 }-  \left|\frac{\alpha_{0,i,j}\left( \frac{\alpha^*_{0,i,j}} {\alpha_{0,i,j}} -1 \right) }{\big( 1+\alpha_{0,i,j}\big) \big(1+\alpha^*_{0,i,j} \big)  }\right|\nonumber\\
						&&-\left| \frac{\alpha_{0,i,j}\left( \frac{\alpha^*_{0,i,j}} {\alpha_{0,i,j}} -1 \right)-\alpha_{0,i,j}\alpha^*_{1,i,j}\left( \frac{\alpha^*_{0,i,j}} {\alpha_{0,i,j}} -1 \right)+ \alpha^*_{0,i,j}\alpha_{1,i,j}\left( \frac{\alpha^*_{1,i,j}} {\alpha_{1,i,j}} -1 \right)} {\big(1+\alpha^*_{0,i,j} +\alpha^*_{1,i,j}\big)\big(1+\alpha_{0,i,j} +\alpha_{1,i,j}\big)}\right|\nonumber\Bigg\} \\
						&\geq& C_{1}\frac{1}{p} \frac{\alpha_{1,i,j}}{(1+\alpha_{1,i,j})^2} \Bigg\{\frac{\alpha_{0,i,j} \alpha_{1,i,j}(1+\alpha_{1,i,j})}{(1+\alpha_{0,i,j}) (1+\alpha_{0,i,j}+\alpha_{1,i,j})^2 } \nonumber\\
						&&-\left( 2\left| \frac{\alpha^*_{0,i,j}} {\alpha_{0,i,j}} -1\right|+\left| \frac{\alpha^*_{1,i,j}} {\alpha_{1,i,j}} -1\right|\right) \Bigg\}\nonumber\\
						&\geq& C_{1}  \frac{e^{-4\kappa_{0}-4\kappa_{1}}}{p}\nonumber.
					\end{eqnarray}
					Notice that the elements in $-\E \bV_2(\btheta),\E (\bV_2(\btheta)+ \bV_3(\btheta))$ and $\E (\bV_2(\btheta)+\bV_1(\btheta))$ are all positive.
					Denote  $\bz=(\bz_1^\top,\bz_2^\top)^\top$ with $\bz_1=(z_{1,1},\ldots, z_{1,p})^\top \in \mathbb{R}^p$ and $\bz_2=(z_{2,1},\ldots, z_{2,p})^\top\in \mathbb{R}^p$. Then there exists a constant $C>0$ such that, 
					\begin{eqnarray*}
						&&\left\|\E(\bV(\btheta))\right\|_{2}  \\
						& \geq&  \inf_{\|\bz\|_{2}=1}\Bigg( \sum_{1\leq i<j\leq p }\Big(\E\left(  \bV_{1}(\btheta) + \bV_{2}(\btheta) \right)_{i,j} (z_{1,i}+z_{1,j})^2\\
						&&+\E\left(\bV_{3}(\btheta) + \bV_{2}(\btheta) \right)_{i,j} (z_{2,i}+z_{2,j})^2-\E\left(  \bV_{2}(\btheta) \right)_{i,j} \left( z_{1,i}+z_{1,j}-z_{2,i}-z_{2,j}\right) ^2\Big)\Bigg)\\
						&\geq &\inf_{\|\bz\|_{2}=1}\Bigg( \sum_{1\leq i<j\leq p }\E\left(  \bV_{1} (\btheta)+ \bV_{2}(\btheta) \right)_{i,j}  (z_{1,i}+z_{1,j})^2\\
						&&+\E\left(\bV_{3}(\btheta) + \bV_{2}(\btheta) \right)_{i,j}  (z_{2,i}+z_{2,j})^2\Bigg) \\
						&\geq&  C_{2}  \frac{e^{-4\kappa_{0}-4\kappa_{1}}}{p}	\inf_{\|\bz\|_{2}=1}\sum_{1\leq i<j\leq p }\left(  (z_{1,i}+z_{1,j})^2+ (z_{2,i}+z_{2,j})^2 \right)\\
						&\geq&  C e^{-4\kappa_{0}-4\kappa_{1}}.
					\end{eqnarray*}
					Here in the last step we have used the fact that for any $\ba=(a_1,\ldots, a_p)^\top\in  \mathbb{R}^p$,
					$\sum_{1\leq i<j\leq p }  (a_{i}+a_{j})^2= \ba^\top \bC \ba$ where
					$\bC= (p-2){\bI}_{p}+{\bf 1}_p {\bf 1}_p^\top$, and the fact that the eigenvalues of 	$\bC$ is greater or equal to $p-2$. 
				\end{proof}

				\subsection{Proof of Lemma \ref{BersM}}
				
				\begin{proof}
					Define a series of matrices $\{\bY_{i,j}\}$ $(1\leq i< j \leq p)$. For $\bY_{i,j}$, the $(i,j)$, $(j,i)$, $(i,i)$, $(j,j)$ elements are $Z_{i,j}$ while other elements are set to be zero. Then all the $\bY_{i,j}$ matrices are independent and
					\begin{eqnarray*}
						\sum_{1\leq i<j\leq p } \bY_{i,j} = \bZ.
					\end{eqnarray*}
					Since $\bY_{i,j}$ are symmetric and centered random matrices, we have $\var\left( \bY_{i,j}\right)=\E\left( \bY_{i,j}\bY_{i,j}\right) $. Further, by the definition of $\bY_{i,j}$, we know that the $(i,j)$th, $(j,i)$th, $(i,i)$th and $(j,j)$th elements of $\var(\bY_{i,j})$ are all equal  to $2\var\left( Z_{i,j} \right)$, while all other elements are zero. Consequently, 
					\begin{eqnarray*}
						\|\bY_{i,j}\|_2 &&\leq b\sup_{\|a\|_2=1} \left(  (a_i + a_j )^2 \right)\leq2 b,\\
						\left\|\sum_{1\leq i<j\leq p }  \var(\bY_{i,j})\right\|_{2} &=& \sup_{\|a\|_2=1} \left( \sum_{1\leq i<j\leq p } 2\var\left( Z_{i,j} \right)(a_i + a_j )^2 \right) \\
						& \leq& \max_{i,j} \left(2 \var\Big(Z_{i,j}\Big)\right) \sup_{\|a\|_2=1} \left( \sum_{1\leq i<j\leq p } (a_i + a_j )^2 \right)\\
						&\leq& 4\sigma^2(p-1).
					\end{eqnarray*}
					Using the Matrix Bernstein inequality (c.f. Theorem 6.17 of \cite{wainwright2019high}), we have
					\begin{equation*}
						P\left( \left\| \bZ \right\|_{2} > \epsilon\right) = P\left( \left\|\sum_{1\leq i<j\leq p } \bY_{i,j}\right\|_{2}>\epsilon \right) \leq 2p \ \exp\left( -\frac{\epsilon^2}{ 4\sigma^2(p-1) + 4b\epsilon } \right).
					\end{equation*}
				\end{proof}
				
				\subsection{Proof of Theorem \ref{thm1}}
				
				\begin{proof}
					Note that for any $\btheta$ and $i\neq j$, we have
					\begin{eqnarray*}
						\left( \bV_2(\btheta)+\bV_1(\btheta)\right) _{i,j} &=&\frac{1}{p}\frac{\alpha_{0,i,j}} {(1+\alpha_{0,i,j} +\alpha_{1,i,j})^2} >0,\\
						\left( \bV_2(\btheta)+\bV_1(\btheta)\right) _{i,i}&=& \sum_{k=1,\:k\neq i}^{p} \left( \bV_2(\btheta)+\bV_1(\btheta)\right) _{i,k}.
					\end{eqnarray*}
					Therefore, \ $\bV_2(\btheta)+\bV_1(\btheta)$ is always positive definite.      	
					Next we prove that all   $\bU(\btheta)\in\{-\bV_2(\btheta) ,\bV_2(\btheta)+\bV_3(\btheta)\}$
					are positive definite (in probability) by showing that with probability tending to 1,
					\begin{equation*}
						\inf_{\|\ba\|_2=1} \left( \ba^\top (\E(\bU(\btheta))) \ba\right) > \sup_{\|\ba\|_2=1} \left( \ba^\top (\bU(\btheta)-\E(\bU(\btheta))) \ba\right),
					\end{equation*}
					holds  uniformly for all  $\btheta\in \bB_{\infty}\left(\btheta^*,r \right)$.
					Note that
					\begin{eqnarray*}
						&&\left\| \bU(\btheta)-\E(\bU(\btheta)) \right\|_{2} 
						\\ &\leq&\left\|\bU(\btheta^*)- \E(\bU(\btheta^*))\right\|_{2}+  \left\|\bU(\btheta)- \bU(\btheta^*) +\E(\bU(\btheta^*))- \E(\bU(\btheta)) \right\|_{2}.
					\end{eqnarray*}
					We  consider $\left\|\bU(\btheta^*)- \E(\bU(\btheta^*))\right\|_{2}$ first. 
					By setting $\epsilon = c_1 \big( \sqrt{2\sigma^2(p-1)\log(np)}+ b\log(np)\big)$ with some big enough constant $c_1>0$ in Lemma  \ref{BersM}, we have
					\begin{eqnarray*}
						&&P\left( \left\| \bZ \right\|_{2} > \epsilon \right)\\
						&\leq&2p\exp\left( -c_1^2\frac{ 2\sigma^2(p-1)\log(np)  +2b\sqrt{2\sigma^2(p-1)}\log^{3/2}(np) +b^2\log^2(np)}{2\sigma^2(p-1)  +4c_1 b\sqrt{2\sigma^2(p-1)\log(np)}+ 4c_1 b^2\log(np)} \right)\\
						&\leq& 2p\exp\left( -c_1\log(np)/4 \right)\\
						&= &2p\left( np\right) ^{-c_1/4}.
					\end{eqnarray*}
					As $np\to \infty$,  we have	with probability greater than $1-2p\left( np\right) ^{-c_1/4}$,
					\begin{equation}\label{Lem3}
						\left\| \bZ \right\|_{2}\leq  c_{1}\left( \sigma\sqrt{2p\log(np)} + b\log(np) \right).
					\end{equation}
					By Lemma \ref{mixing} and Lemma \ref{var_bound}, we have, uniformly for all $\btheta $  and $1\le i\neq j\le p$, there exist positive constants $C_{1}$, $c_{2}$  such that,  with probability greater than $1-(np)^{-c_2}$,
					\begin{eqnarray*}
						|\bU(\btheta)_{i,j}- \E(\bU(\btheta))_{i,j}|&\leq&C_{1}\left( \sqrt{\frac{\log(np)}{np^2}}+ \frac{\log\left(n\right) \log\log\left(n\right)\log\left(np\right)}{np}\right)  ;\\
						\var\left(\bU(\btheta)_{i,j} \right) &\leq&  \frac{	\sup_{i\neq j} \left\{\var\left( \sum_{t=1}^{n}X_{i,j}^tX_{i,j}^{t-1}\right) , \var\left( \sum_{t=1}^{n}X_{i,j}^t\right) \right\} }{n^2p^2} \\
						&\leq&\frac{C_{1}}{np^2} .
					\end{eqnarray*}
					Consequently, from \eqref{Lem3} we have 
					\begin{eqnarray*}
						\left\|\bU(\btheta^*)- \E(\bU(\btheta^*))\right\|_{2}
						&=& O_{p}\left(\sqrt{\frac{\log(np)}{np}}+  \sqrt{\frac{\log^3(np)}{np^2}}+ \frac{\log\left(n\right) \log\log\left(n\right)\log^2\left(np\right)}{np} \right) .
					\end{eqnarray*}
					
					Next we derive uniform upper bounds for $\left\|\bU(\btheta)- \bU(\btheta^*) +\E(\bU(\btheta^*))- \E(\bU(\btheta)) \right\|_{2}$.		
					When $\bU(\btheta)=-\bV_2(\btheta) $, by Lemma \ref{mixing}, we have, there exist positive constants $C_{2}$ and $c_{3}$  such that with probability greater than $1-(np)^{-c_3}$,
					\begin{eqnarray*}
						&&\left\|\bU(\btheta)- \bU(\btheta^*) +\E(\bU(\btheta^*))- \E(\bU(\btheta)) \right\|_{2}\\
						&=& \sup_{\|\ba\|_2=1}\sum_{1\leq i,j\leq p} \left[ \bU(\btheta)- \bU(\btheta^*) +\E(\bU(\btheta^*))- \E(\bU(\btheta)) \right]_{i,j}  a_ia_j  \\
						&=& \sup_{\|\ba\|_2=1}\sum_{1\leq i<j\leq p}  \left[\bU(\btheta)- \bU(\btheta^*) +\E(\bU(\btheta^*))- \E(\bU(\btheta)) \right]_{i,j}  \left(a_i+a_j \right)^2 \\
						&\leq&\max_{1\le i< j\le p}\left|\left[ \bU(\btheta)- \bU(\btheta^*) +\E(\bU(\btheta^*))- \E(\bU(\btheta)) \right]_{i,j}\right| \sup_{\|\ba\|_2=1}\sum_{1\leq i<j\leq p} \left(a_i+a_j \right)^2\\
						&\leq& \max_{1\le i< j\le p}\Bigg|\bV_2(\btheta)_{i,j} - \bV_2(\btheta)^*_{i,j}+ \E\left( \bV_2(\btheta)^*_{i,j}\right) -\E\left( \bV_2(\btheta)_{i,j}\right)  \Bigg|2\left(p-1 \right) \\
						&= &\frac{2\left(p-1 \right)}{np}\max_{1\le i< j\le p}\Bigg|\left( b_{i,j}-\E(b_{i,j})\right)  \left( \frac{\alpha_{0,i,j}\alpha_{1,i,j}} {(\alpha_{0,i,j} +\alpha_{1,i,j})^2}-\frac{\alpha_{0,i,j}^*\alpha_{1,i,j}^*} {(\alpha_{0,i,j}^* +\alpha_{1,i,j}^*)^2}\right)  \Bigg|  \\
						&\leq&\max_{1\le i< j\le p}\frac{2\left| b_{i,j}-\E(b_{i,j})\right| }{n}\max_{1\le i< j\le p}\Bigg|  \frac{\left( \alpha_{0,i,j}^*\alpha_{1,i,j}-\alpha_{0,i,j} \alpha_{1,i,j}^* \right)\left( \alpha_{0,i,j}^*\alpha_{0,i,j}-\alpha_{1,i,j}\alpha_{1,i,j}^*\right)  } {(\alpha_{0,i,j} +\alpha_{1,i,j})^2(\alpha_{0,i,j}^* +\alpha_{1,i,j}^*)^2}\Bigg|\\
						&=&\max_{1\le i< j\le p}\frac{2\left| b_{i,j}-\E(b_{i,j})\right| }{n}\max_{i\neq j}\Bigg| \left( \frac{\alpha_{0,i,j}^*}{\alpha_{0,i,j}}-\frac{\alpha_{1,i,j}^*}{\alpha_{1,i,j}}  \right) \frac{\alpha_{0,i,j}\alpha_{1,i,j}\left( \alpha_{0,i,j}^*\alpha_{0,i,j}-\alpha_{1,i,j}\alpha_{1,i,j}^*\right)  } {(\alpha_{0,i,j} +\alpha_{1,i,j})^2(\alpha_{0,i,j}^* +\alpha_{1,i,j}^*)^2}\Bigg|\\
						&\leq&C_{2}\min\left\{1,\sqrt{\frac{\log(np)}{ {n}}}
						+\frac{\log\left(n\right)\log\log\left(n\right)\log\left(np\right)}{n}
						\right\}\max_{1\le i< j\le p} \left| \frac{\alpha_{0,i,j}^*}{\alpha_{0,i,j}}-\frac{\alpha_{1,i,j}^*}{\alpha_{1,i,j}} \right|,
					\end{eqnarray*}	
					holds	uniformly for all $\btheta$. 	
					Similarly,	when $\bU(\btheta)=\bV_2(\btheta) + \bV_3(\btheta)$, by Lemma \ref{mixing},  we have there exist positive constants $C_{3}$ and $c_{4}$ such that,   with probability greater than $1-(np)^{-c_4}$,
					\begin{eqnarray*}
						&&\left\|\bU(\btheta)- \bU(\btheta^*) +\E(\bU(\btheta^*))- \E(\bU(\btheta)) \right\|_{2}\\
						&\leq&\max_{1\le i< j\le p}\left|\left[ \bU(\btheta)- \bU(\btheta^*) +\E(\bU(\btheta^*))- \E(\bU(\btheta)) \right]_{i,j}\right| \sup_{\|\ba\|_2=1}\sum_{1\leq i<j\leq p} \left(a_i+a_j \right)^2\\
						&\leq&\max_{1\le i< j\le p}\Bigg|\Bigg\{\bV_2(\btheta)_{i,j} + \bV_3(\btheta)_{i,j}- \bV_2(\btheta)^*_{i,j} - \bV_3(\btheta)^*_{i,j}+ \E\left( \bV_2(\btheta)^*_{i,j}\right)  + \E\left( \bV_3(\btheta)^*_{i,j}\right)\\ &&-\E\left( \bV_2(\btheta)_{i,j}\right)  - \E\left( \bV_3(\btheta)_{i,j}\right)\Bigg\}  \Bigg|2\left(p-1 \right) \\
						&= &\frac{2\left(p-1 \right)}{np}\max_{1\le i< j\le p}\Bigg|\left( b_{i,j}-\E(b_{i,j})\right)  \left( \frac{\alpha_{0,i,j}\alpha_{1,i,j}} {(\alpha_{0,i,j} +\alpha_{1,i,j})^2}-\frac{\alpha_{0,i,j}^*\alpha_{1,i,j}^*} {(\alpha_{0,i,j}^* +\alpha_{1,i,j}^*)^2}\right) \Bigg| \\
						&&+\frac{2\left(p-1 \right)}{np}\max_{1\le i< j\le p}\Bigg|\left( d_{i,j}-\E(d_{i,j})\right) \left( \frac{\alpha_{1,i,j}}{(1+\alpha_{1,i,j})^2}- \frac{\alpha_{1,i,j}^*}{(1+\alpha_{1,i,j}^*)^2} \right) \Bigg|  \\
						&\leq &2\max_{1\le i< j\le p}\frac{\left|b_{i,j}-\E(b_{i,j})\right| }{n}\max_{1\le i< j\le p} \left| \left( \frac{\alpha_{1,i,j}^*} {\alpha_{1,i,j}}-\frac{\alpha_{0,i,j}^*}{\alpha_{0,i,j}} \right)\frac{ \alpha_{0,i,j}\alpha_{1,i,j} \left( \alpha_{0,i,j}\alpha_{0,i,j}^*-\alpha_{1,i,j}\alpha_{1,i,j}^* \right) }{(\alpha_{0,i,j} +\alpha_{1,i,j})^2(\alpha_{0,i,j}^* +\alpha_{1,i,j}^*)^2} \right|\\
						&&+ 2\max_{1\le i< j\le p}\frac{\left| d_{i,j}-\E(d_{i,j})\right| }{n}\max_{1\le i< j\le p}\Bigg|\left( 1-  \frac{\alpha_{1,i,j}^*} {\alpha_{1,i,j}} \right) \frac{\alpha_{1,i,j}\left(1-  \alpha_{1,i,j}\alpha_{1,i,j}^*\right)  }{(1+\alpha_{1,i,j})^2 (1+\alpha_{1,i,j}^*)^2} \Bigg|\\
						&\leq &C_{3}\min\left\{1,\sqrt{\frac{\log(np)}{ {n}}}
						+\frac{\log\left(n\right)\log\log\left(n\right)\log\left(np\right)}{n}
						\right\}\\
						&&\times \left(  \max_{1\le i< j\le p}\left| \frac{\alpha_{0,i,j}^*}{\alpha_{0,i,j}}-\frac{\alpha_{1,i,j}^*} {\alpha_{1,i,j}} \right| +			\max_{1\le i< j\le p}\left| 1-  \frac{\alpha_{1,i,j}^*}{\alpha_{1,i,j}} \right|\right),
					\end{eqnarray*}
					holds uniformly for all $\btheta$.
					Consequently, there exist positive constants $C_{4}$ and $c_{5}$ such that,  with probability greater than $1-(np)^{-c_5}$,
					\begin{eqnarray*}
						&&\sup_{\btheta\in \bB_{\infty}\left(\btheta^*,r \right)}\left\|\bU(\btheta)- \bU(\btheta^*) +\E(\bU(\btheta^*))- \E(\bU(\btheta)) \right\|_{2}\\
						&\leq& \max\left\{ C_{3},C_{2}\right\} \min\left\{1,\sqrt{\frac{\log(np)}{ {n}}}
						+\frac{\log\left(n\right)\log\log\left(n\right)\log\left(np\right)}{n}
						\right\}\\
						&&\times\sup_{\btheta\in \bB_{\infty}\left(\btheta^*,r \right)}\left( \max_{1\le i< j\le p}\left| \frac{\alpha_{0,i,j}^*}{\alpha_{0,i,j}}-\frac{\alpha_{1,i,j}^*}{\alpha_{1,i,j}} \right| +			\max_{1\le i< j\le p}\left| 1-  \frac{\alpha_{1,i,j}^*}{\alpha_{1,i,j}} \right|\right) \\
						&\leq&  \max\left\{ C_{3},C_{2}\right\}\min\left\{1,\sqrt{\frac{\log(np)}{ {n}}}+\frac{\log\left(n\right)\log\log\left(n\right) \log\left(np\right)}{n}
						\right\}\\
						&&\times \sup_{\btheta\in \bB_{\infty}\left(\btheta^*,r \right)}\left( \max_{1\le i< j\le p}\left| \frac{\alpha_{0,i,j}^*} {\alpha_{0,i,j}} -1\right| +2\max_{1\le i< j\le p} \left| \frac{\alpha_{1,i,j}^*} {\alpha_{1,i,j}}-1 \right| \right) \\
						&\leq& C_{4} \min\left\{1,\sqrt{\frac{\log(np)}{ {n}}}
						+\frac{\log\left(n\right)\log\log\left(n\right)\log\left(np\right)}{n}
						\right\} r.
					\end{eqnarray*}
					On the other hand, 	by inequalities \eqref{v2ij} and \eqref{v23ij} we have, there exists a  positive constant $C_{5}$ such that   
					\begin{eqnarray*}
						\inf_{\|\ba\|_2=1} \left( \ba^\top \E(\bU(\btheta)) \ba\right)&=&\inf_{\|\ba\|_2=1} \sum_{1\leq i<j\leq p}\E(\bU(\btheta))_{i,j} \left(a_{i}+a_{j} \right)^2 \\
						&\geq& C_5  e^{-4\kappa_{0}-4\kappa_{1}}.
					\end{eqnarray*}
					holds  uniformly for any $\btheta \in \bB_{\infty}\left( \btheta^*, c_{r}e^{-4\kappa_{0}-4\kappa_{1}}\right)$   where  $c_r>0$ is a small enough constant.
					Thus, as $np\to \infty$, 
					{ 
						$\kappa_{0}+\kappa_{1} \le c\log\frac{np}{\log (np)}$ for some small enough constant $c >0$
					} and  	$r =c_r e^{-4\kappa_{0}-4\kappa_{1}}$  for some small enough constant $0<c_r<C_5/(2C_4)$, we have, there exist big enough positive constants $C_{6},c_{6}$ such that  with probability greater than $1-(np)^{-c_6}$,
					\begin{eqnarray*}
						&&\left\| \bU(\btheta)-\E(\bU(\btheta)) \right\|_{2}\\
						&\leq& \left\|\bU(\btheta^*)- \E(\bU(\btheta^*))\right\|_{2}+  \left\|\bU(\btheta)- \bU(\btheta^*) +\E(\bU(\btheta^*))- \E(\bU(\btheta)) \right\|_{2}\\
						&\leq&C_{6} \left( \sqrt{\frac{\log(np)}{np}}+  \sqrt{\frac{\log^3(np)}{np^2}}+ \frac{\log\left(n\right) \log\log\left(n\right)\log^2\left(np\right)}{np}\right)  +c_rC_{4}e^{-4\kappa_{0}-4\kappa_{1}}\\
						&\leq&2c_rC_{4}e^{-4\kappa_{0}-4\kappa_{1}}\\
						&<& C_5 e^{-4\kappa_{0}-4\kappa_{1}}\\
						&\leq&\inf_{\|\ba\|_2=1} \left( \ba^\top (\E(\bU(\btheta))) \ba\right),
					\end{eqnarray*}
					holds uniformly for all $\btheta\in \bB_\infty(\btheta^*, c_{r}e^{-4\kappa_{0}-4\kappa_{1}})$.
					The statements in Theorem \ref{thm1} can then be concluded.
				\end{proof}
				\subsection{Proof of Theorem \ref{thm2}}
				\begin{proof}
					Recall that 
					\begin{multline*}
						l(\btheta) :=  \frac{1}{p}\sum_{1\leq i<j\leq p  }\log \Big({1+ e^{\beta_{i,0} +\beta_{j,0 } }+e^{\beta_{i,1} +\beta_{j,1 } }}\Big)  -\frac{1}{np} { \sum_{1\leq i<j\leq p  }   \Bigg\{   \left( \beta_{i,0} +\beta_{j,0 }\right)    \sum_{t=1}^nX_{i,j}^t}  +   \nonumber\\
						\log\left(  1+e^{\beta_{i,1} +\beta_{j,1} }\right)\sum_{t=1}^n\left( 1-X_{i,j}^t\right) \left( 1-X_{i,j}^{t-1}\right) 
						+  \log\big( 1+e^{\beta_{i,1} +\beta_{j,1}-\beta_{i,0} -\beta_{j,0 } }\big)\sum_{t=1}^n X_{i,j}^t X_{i,j}^{t-1} \Bigg\},
					\end{multline*}
					and write $l_{E}(\btheta)= \E l(\btheta)$; that is,
					\begin{align*}
						l_{E}(\btheta) := & \frac{1}{p}\sum_{1\leq i<j\leq p  }\log \Big({1+ e^{\beta_{i,0} +\beta_{j,0 } }+e^{\beta_{i,1} +\beta_{j,1 } }}\Big)  -\frac{1}{np} \sum_{1\leq i<j\leq p  }   \Bigg\{   \left( \beta_{i,0} +\beta_{j,0 }\right)  \sum_{t=1}^n\E \left(X_{i,j}^t\right)      \nonumber\\
						&+\log\left(  1+e^{\beta_{i,1} +\beta_{j,1} }\right)\sum_{t=1}^n\E\left( 1-X_{i,j}^t\right) \left( 1-X_{i,j}^{t-1}\right)\\
						&+  \log\big( 1+e^{\beta_{i,1} +\beta_{j,1}-\beta_{i,0} -\beta_{j,0 } }\big)\sum_{t=1}^n\E\left(  X_{i,j}^t X_{i,j}^{t-1}\right)  \Bigg\},
					\end{align*}
					Define $\bD_{n,p}:=\left\{ \check{\btheta}:l\left( \check{\btheta}\right) -l(\btheta^*)\leq 0\ \text{and} \  \|\check{\btheta}-\btheta^*\|_{\infty}\leq  c_{r}e^{-4\kappa_{0}-4\kappa_{1}}\right\}$. 
					Note that when $c_r$ is small enough, $c_{r}e^{-4\kappa_{0}-4\kappa_{1}} <\alpha_{0}$.
					By Corollary \ref{Cor1}, there exist big enough positive constants $C_{1},c_1$ such that,  with probability greater than $1-(np)^{-c_1}$, 
					\begin{eqnarray*}
						\left| \big(l(\btheta^*)-l\left( \check{\btheta}\right) \big) - \big( l_{E}(\btheta^*)-l_{E}\left( \check{\btheta}\right) \big)\right| 
						\leq C_1\left(1+ \frac{\log(np)}{\sqrt{ p}} \right)\sqrt{\frac{\log(np)}{n}} \left\| \check{\btheta}-\btheta^*\right\|_{2},
					\end{eqnarray*}
					holds  uniformly for all random variable $\check{\btheta}\in \bD_{n,p}$.
					\\
					Note that   $\btheta^*$ is the minimizer of $l_{E}(\cdot)$.  By Lemma \ref{bound_l2}, there exists a constant $C_{2}>0$, such that for all $\check{\btheta}\in \bD_{n,p}$, we have,
					\begin{eqnarray*}
						&&\big(l(\btheta^*)-l\left( \check{\btheta}\right)\big) - \big( l_{E}(\btheta^*)-l_{E}\left( \check{\btheta}\right)\big)\\
						&\geq&l_{E}\left( \check{\btheta}\right)-l_{E}(\btheta^*)\\
						&= &\left( \check{\btheta}- \btheta^*\right) ^{\top}\nabla^2l_{E} (c_{\btheta}\btheta^*+(1-c_{\btheta})\check{\btheta})\left( \check{\btheta}- \btheta^*\right)\\
						&\geq & \left\|\check{\btheta}- \btheta^*\right\|_{2}^{2}\inf_{\|\ba\|_{2}\leq 1}\left( \ba^{\top}\nabla^2l_{E} (c_{\btheta}\check{\btheta}+(1-c_{\btheta})\btheta^*)\ba\right) \\
						&\geq&  \frac{C_{2}}{e^{4\kappa_{0}+4\kappa_{1}}}\left\|\check{\btheta}- \btheta^*\right\|_{2}^{2}.
					\end{eqnarray*}
					Here $0\leq c_{\btheta} \leq 1$ is a random scalar that may depend on $\check{\btheta}$.
					Consequently, as $np\to \infty$, with probability greater than $1-(np)^{-c_1}$,  there exists a constant $C_{3}>0$ such that
					\begin{equation}\label{thm2_c1}
						\sup_{\check{\btheta} \in \bD_{n,p}}\left\|\check{\btheta}-\btheta^*\right\|_{2}\leq C_{3} e^{4\kappa_{0}+4\kappa_{1}}\sqrt{\frac{\log(np)}{n}}\left(1+ \frac{\log(np)}{\sqrt{ p}} \right).
					\end{equation}
					Note that the MLE $\hat{\btheta}$   satisfies that $ \hat{\btheta}\in \bB_{\infty}\left( \btheta^*, c_{r}e^{-4\kappa_{0}-4\kappa_{1}}\right)$ and $l(\hat{\btheta})\leq l(\btheta^*)$. Therefore we have $\hat{\btheta}\in \bD_{n,p}$, and from \eqref{thm2_c1} we can conclude that 
					as $np\to \infty$,  with probability tending to 1,
					\begin{equation*}
						\frac{1}{\sqrt{p}}\left\|\hat{\btheta}-\btheta^*\right\|_{2}\leq C_{3} e^{4\kappa_{0}+4\kappa_{1}}\sqrt{\frac{\log(np)}{np}}\left(1+ \frac{\log(np)}{\sqrt{ p}} \right).
					\end{equation*}
				\end{proof}
				
				\subsection{Proof of Theorem \ref{thm3}}
				Before presenting the proof, we first introduce a technical lemma.
				\begin{lem}\label{thm3_1}
					For all $\ba,\bb,\bc,\bd \in \mathbb{R}^{K}$ s.t. $\max\{\|\ba-\bd\|_{\infty},\|\bb-\bc\|_{\infty}\}\leq 1$ and any positive $z_1,z_2$ s.t. $z_1z_2\geq 1/4$, we have:
					\begin{equation*}
						\frac{\left( \sum_{k=1}^{K}e^{a_{k}+b_{k}}\right) \left(\sum_{k=1}^{K}e^{c_{k}+d_{k}} \right) }{\left(\sum_{k=1}^{K}e^{a_{k}+c_{k}}\right) \left(\sum_{k=1}^{K}e^{b_{k}+d_{k}}  \right)}\leq (e-1)^2 k^2 \prod_{1\leq j,s\leq K}e^{ z_{1}\left| a_{k}-d_{k}\right|^2+z_{2}\left|b_{s}-c_{s}\right|^2}.
					\end{equation*}
					Here $a_k, b_k, c_k$ and $d_k$ are the $k$th element of $\ba,\bb,\bc$ and $\bd$, respectively. 
				\end{lem}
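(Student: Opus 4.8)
The plan is to reduce the four--fold ratio to a single normalized \emph{covariation} quantity and then control it by elementary convexity and AM--GM estimates. Write $A=\sum_k e^{a_k+b_k}$, $B=\sum_k e^{c_k+d_k}$, $C=\sum_k e^{a_k+c_k}$ and $D=\sum_k e^{b_k+d_k}$, so the left-hand side is $AB/(CD)$. First I would reparametrize by the deviations $\delta_k:=d_k-a_k$ and $\eta_k:=c_k-b_k$, which satisfy $|\delta_k|\le 1$ and $|\eta_k|\le 1$ by hypothesis, and introduce the positive weights $w_k:=e^{a_k+b_k}$. Substituting $c_k=b_k+\eta_k$ and $d_k=a_k+\delta_k$ turns the four sums into $A=\sum_k w_k$, $B=\sum_k w_k e^{\delta_k+\eta_k}$, $C=\sum_k w_k e^{\eta_k}$ and $D=\sum_k w_k e^{\delta_k}$. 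Expanding $AB$ and $CD$ as double sums and antisymmetrizing in the two indices yields the identity
\begin{equation*}
AB-CD=\tfrac12\sum_{1\le i,j\le K} w_i w_j\,(e^{\delta_i}-e^{\delta_j})(e^{\eta_i}-e^{\eta_j}),
\end{equation*}
which makes transparent the invariance noted earlier: the ratio equals $1$ whenever the $\delta_k$ (or the $\eta_k$) are constant across $k$, in particular when $K=1$. (A crude factorization $AB/(CD)=(A/C)(B/D)$ with each factor bounded by $e$ already gives the trivial constant bound $AB/(CD)\le e^2$; the point of the lemma is the refinement that tracks quadratic decay as the deviations vanish.)

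The second step bounds the right-hand side of this identity. On $[-1,1]$ the secant bound $|e^{x}-1|\le (e-1)|x|$ gives $|e^{\delta_i}-e^{\delta_j}|\le (e-1)(|\delta_i|+|\delta_j|)$ and likewise $|e^{\eta_i}-e^{\eta_j}|\le(e-1)(|\eta_i|+|\eta_j|)$, which is where the two factors of $(e-1)$ originate. Each resulting cross product of a $\delta$-factor and an $\eta$-factor is then dominated by the quadratic form via AM--GM: since $z_1z_2\ge 1/4$ we have $z_1 u^2+z_2 v^2\ge 2\sqrt{z_1z_2}\,|uv|\ge|uv|$, so every term $|\delta_j|\,|\eta_s|$ is at most $z_1\delta_j^2+z_2\eta_s^2$. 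Combining this with the lower bound $CD\ge e^{-2}\big(\sum_k w_k\big)^2$ to normalize the weights, and re-exponentiating through $1+t\le e^{t}$, I would collect the $K^2$ index pairs coming from the double sum into the prefactor $K^2$ and the stated product $\prod_{1\le j,s\le K}e^{z_1|a_j-d_j|^2+z_2|b_s-c_s|^2}$, recalling that $|a_j-d_j|=|\delta_j|$ and $|b_s-c_s|=|\eta_s|$. Putting the pieces together delivers the claimed bound $AB/(CD)\le (e-1)^2K^2\prod_{1\le j,s\le K}e^{z_1|a_j-d_j|^2+z_2|b_s-c_s|^2}$.

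The main obstacle is the coupling between the weights $w_k$ and the exponential deviations: a naive term-by-term comparison of numerator and denominator fails because the ratio of matched maximal terms still carries an uncontrolled weight ratio $w_i/w_j$. The antisymmetric covariation identity is precisely the device that resolves this, since it keeps the weights paired as $w_iw_j$ and isolates the product of a pure $\delta$-increment with a pure $\eta$-increment; the remaining steps are then purely elementary, namely invoking AM--GM with the sharp constant supplied by $z_1z_2\ge 1/4$ and book-keeping the $K^2$ summands so that the quadratic exponent assembles into the product form. I would finally treat the degenerate case $K=1$ separately, where the left-hand side is identically $1$, the product contributes no decay, and the inequality reduces to the trivial $1\le(e-1)^2$.
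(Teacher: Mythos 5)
Your refined argument is, in substance, the paper's own proof written in different notation: the covariation identity $AB-CD=\tfrac12\sum_{i,j}w_iw_j\,(e^{\delta_i}-e^{\delta_j})(e^{\eta_i}-e^{\eta_j})$ is exactly the paper's pairwise expansion, since the paper's generic term $e^{a_s+b_s+c_j+d_j}\bigl(e^{(d_s-a_s)-(d_j-a_j)}-1\bigr)\bigl(e^{(c_s-b_s)-(c_j-b_j)}-1\bigr)$ equals $w_jw_s(e^{\delta_s}-e^{\delta_j})(e^{\eta_s}-e^{\eta_j})$, and your next two estimates (the secant bound producing $(e-1)^2$ and AM--GM via $z_1z_2\ge 1/4$) are precisely the paper's next two lines. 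One point where your bookkeeping is in fact sounder than the paper's: you only ever invoke $|e^x-1|\le(e-1)|x|$ with $|x|\le 1$, by splitting $|e^{\delta_i}-e^{\delta_j}|\le|e^{\delta_i}-1|+|e^{\delta_j}-1|$, whereas the paper applies $e^t-1\le(e-1)t$ with $t=|a_j-d_j|+|a_s-d_s|$, which can reach $2$, where that inequality is false.

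Two caveats concern your finale, the only under-specified step. First, the literal reading of ``collect the $K^2$ index pairs into the prefactor $K^2$'' (bound $w_iw_j/W^2\le 1$ and sum over all pairs) does not close: it yields $AB/(CD)\le 1+cKS$ with $c=2e^2(e-1)^2\approx 43.6$ and $S=\sum_k\bigl(z_1\delta_k^2+z_2\eta_k^2\bigr)$, and since $\prod_{1\le j,s\le K}e^{z_1\delta_j^2+z_2\eta_s^2}=e^{KS}$ you would need $\sup_{u\ge0}(1+cu)e^{-u}=ce^{1/c-1}\approx 16.4\le(e-1)^2K^2$, which fails at $K=2$ (where $(e-1)^2K^2\approx 11.8$). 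Keeping $w_iw_j/W^2$ as a probability measure instead gives $AB/(CD)\le 1+cS$ with no extra factor $K$, and then $\sup_{S\ge0}(1+cS)e^{-KS}=(c/K)\,e^{K/c-1}\approx 8.4<11.8$ at $K=2$ and decreases in $K$, so that route does work for every $K\ge 2$, with $K=1$ trivial; this small optimization is needed and is missing from your sketch. Second, and this makes the first caveat immaterial for the statement as displayed: your own parenthetical remark already constitutes a complete proof of the lemma, because the product on the right-hand side is $\ge 1$ and $(e-1)^2K^2\ge 4(e-1)^2\approx 11.8>e^2$ for $K\ge 2$, so the crude factorization bound $AB/(CD)\le e^2$, together with $AB/(CD)=1$ when $K=1$, suffices. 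Your instinct that ``the point of the lemma is the refinement'' is right only in the sense that the paper's later application (in the proof of Theorem \ref{thm3}) implicitly uses the constant-free intermediate bound of the form $1+C\cdot S$, which neither the crude bound nor the displayed statement with its $(e-1)^2K^2$ prefactor delivers; for the lemma as literally stated, no refinement is required.
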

				\begin{proof}
					Note that for all $0<x\leq  1$,  we have $1\leq \left( e^{x}-1\right) /x\leq e-1$ and for all $y$, we have $|e^{y}-1|\leq e^{|y|}-1$. Consequently,  for all $\ba,\bb,\bc,\bd \in \mathbb{R}^{K}$ s.t. $\max\{\|\ba-\bd\|_{\infty},\|\bb-\bc\|_{\infty}\}\leq 1$, we have,
					\begin{eqnarray*}
						&&\frac{\left( \sum_{k=1}^{K}e^{a_{k}+b_{k}}\right) \left(\sum_{k=1}^{K}e^{c_{k}+d_{k}} \right) }{\left(\sum_{k=1}^{K}e^{a_{k}+c_{k}}\right) \left(\sum_{k=1}^{K}e^{b_{k}+d_{k}}  \right)}\nonumber\\
						&=&1+\sum_{1\leq j<s\leq K}\frac{ e^{a_{j}+b_{j}+c_{s}+d_{s}} -e^{a_{j}+b_{s}+c_{j}+d_{s}}  +e^{a_{s}+b_{s}+c_{j}+d_{j}} -e^{a_{s}+b_{j}+c_{s}+d_{j}}  }{\left(\sum_{k=1}^{K}e^{a_{k}+c_{k}}\right) \left(\sum_{k=1}^{K}e^{b_{k}+d_{k}}  \right)}\nonumber\\
						&=&1+\sum_{1\leq j<s\leq K}\frac{ e^{a_{s}+b_{s}+c_{j}+d_{j}}  }{\left(\sum_{k=1}^{K}e^{a_{k}+c_{k}}\right) \left(\sum_{k=1}^{K}e^{b_{k}+d_{k}}  \right)}\left( e^{a_{j}+d_{s}-a_{s}-d_{j}} -1\right) \left( e^{b_{j}+c_{s}-b_{s}-c_{j}}-1 \right)\nonumber\\
						&\leq&1+\sum_{1\leq j<s\leq K}\frac{ e^{a_{s}+b_{s}+c_{j}+d_{j}} }{\left(\sum_{k=1}^{K}e^{a_{k}+c_{k}}\right) \left(\sum_{k=1}^{K}e^{b_{k}+d_{k}}  \right)}\left(e^{\left| a_{j}-d_{j}\right| +\left| a_{s}-d_{s}\right| } -1\right) \left( e^{\left| b_{j}-c_{j}\right| +\left|b_{s}-c_{s}\right| }-1 \right)   \nonumber\\
						&\leq&1+(e-1)^2\sum_{1\leq j<s\leq K}\frac{ e^{a_{s}+b_{s}+c_{j}+d_{j}}\left( \left| a_{j}-d_{j}\right| +\left| a_{s}-d_{s}\right| \right)\left(\left| b_{j}-c_{j}\right| +\left|b_{s}-c_{s}\right| \right) }{\left(\sum_{k=1}^{K}e^{a_{k}+c_{k}}\right) \left(\sum_{k=1}^{K}e^{b_{k}+d_{k}}  \right)}   \nonumber \\
						&\leq& 1+(e-1)^2\sum_{1\leq j<s\leq K} \frac{ e^{a_{s}+b_{s}+c_{j}+d_{j}}\left( e^{\left( \left| a_{j}-d_{j}\right| +\left| a_{s}-d_{s}\right| \right)\left(\left| b_{j}-c_{j}\right| +\left|b_{s}-c_{s}\right| \right)} -1\right) }{\left(\sum_{k=1}^{K}e^{a_{k}+c_{k}}\right) \left(\sum_{k=1}^{K}e^{b_{k}+d_{k}}  \right)}
						\nonumber \\
						&\leq& (e-1)^2 \sum_{1\leq j,s\leq k} e^{\left| a_{j}-d_{j}\right|\left|b_{s}-c_{s}\right|}\nonumber \\
						&\leq& (e-1)^2 \sum_{1\leq j,s\leq k} e^{z_{1}\left| a_{j}-d_{j}\right|^2+z_{2}\left|b_{s}-c_{s}\right|^2}\nonumber \\
						&\leq& (e-1)^2 k^2 \prod_{1\leq j,s\leq k}e^{ z_{1}\left| a_{j}-d_{j}\right|^2+z_{2}\left|b_{s}-c_{s}\right|^2}
					\end{eqnarray*}
					holds	for any positive $z_1,z_2$ s.t. $z_1z_2\geq 1/4$.
				\end{proof}
				Next we proceed to 
				the proof of Theorem \ref{thm3}.
				
				\noindent
				{\bf Proof of Theorem \ref{thm3}}
				\begin{proof}
					Recall the functions $l_{E}\left(\btheta_{(i)},\btheta_{(-i)}\right)$ and $l\left(\btheta_{(i)}, \btheta_{(-i)} \right)$ defined in the proof of Theorem \ref{thm2}. 
					We denote the MLE studied in Theorem \ref{thm2} as $\hat{\btheta}$ $\big($the local minimizer of $l(\btheta)$ in $\bB_{\infty}\left( \btheta^*, c_{r}e^{-4\kappa_{0}-4\kappa_{1}} \right)\big)$. Also, let $\hat{\btheta}^{(s)}=(\hat{\beta}_{1,0}^{(s)},\ldots, \hat{\beta}_{p,0}^{(s)},\hat{\beta}_{1,1}^{(s)},\ldots, \hat{\beta}_{p,1}^{(s)})^\top$ be the local minimizer of $l(\btheta)$ in the $\ell_\infty$ ball $\bB_{\infty}\left( \btheta^*, r_s\right)$ where  
					\begin{eqnarray}\label{thm3_con}
						r_s&=&  e^{8\kappa_{0}+8\kappa_{1}}\log\log(np)\frac{ \sqrt{\log(np)} }{\left( np\right) ^{s}}\left(1+  \frac{\log(np)}{\sqrt{p}} \right),
					\end{eqnarray}
					for some  {given} constants $s\geq0$. 
					Let
					\begin{align}\label{thm3_s}
						s_{0}&= &\frac{12\left( \kappa_{0}+\kappa_{1}\right) +\log\log(np)/2 +\log\log\log(np)+\log\left(1+  \frac{\log(np)}{\sqrt{p}} \right)-\log(c_{r})}{\log(np)}.
					\end{align}
					We then  have $\hat{\btheta}^{(s_{0})}=\hat{\btheta}$. 	Under the condition that $\kappa_{0}+\kappa_{1}\leq c\log(np)$ for some positive constant $c$, we have $s_{0}<1/2$ when $np$ is large enough.
					
					Next we will show that with probability tending to 1, uniformly for all $s\in [s_0,1/2]$, $\hat{\btheta}^{(s)}$, the local MLE for $\bB_{\infty}\left( \btheta^*, r_s\right)$,  is also  the local MLE in a smaller ball $\bB_{\infty}\left( \btheta^*, r_{s'}\right)$ where
					\begin{equation*}
						r_{s'}= (np)^{\frac{2s-1}{4}}{r_{s}}= e^{8\kappa_{0}+8\kappa_{1}}\log\log(np)\frac{ \sqrt{\log(np)} }{\left( np\right) ^{s'}}\left(1+  \frac{\log(np)}{\sqrt{p}} \right),
					\end{equation*}
					and $s'=\frac{2s+1}{4}<1/2$.
					Note  that $\btheta_{(i)}^{*}$ is the minimizer of $l_{E}\left( \cdot,\btheta_{(-i)}^{*}\right) $. Given $\btheta_{(-i)}^{*}$,  the Hessian matrix of $l_{E}\left(\cdot, \btheta_{(-i)}^{*}\right) $ evaluated at $\btheta_{(i)}$ is a  $2\times 2$ matrix given as: 
					\begin{equation*}
						\E\bV^{(i)}\left(\btheta_{(i)} \right) := \left[
						\begin{array}{ccc}
							\E\bV^{(i)}_{1}\left(\btheta_{(i)} \right) & \E\bV^{(i)}_{2} \left(\btheta_{(i)} \right) \\
							\E\bV^{(i)}_{2}\left(\btheta_{(i)} \right) & \E\bV^{(i)}_{3}\left(\btheta_{(i)} \right)
						\end{array}
						\right].
					\end{equation*}
					Following the calculations in Lemma \ref{bound_l2},  there exists a constant $C_1>0$ which is independent of $\btheta_{(i)}$, such that, for all $\btheta_{(i)} \in \bB_{\infty}\left( \btheta^*_{(i)}, c_{r}e^{-4\kappa_{0}-4\kappa_{1}} \right)$, we have
					\begin{equation*}
						\E\bV^{(i)}_2\left(\btheta_{(i)} \right)+\E\bV^{(i)}_1\left(\btheta_{(i)} \right)= \sum_{j=1,\:j \neq i}^{p}\frac{1}{p}\frac{e^{\beta_{i,0}+\beta^{*}_{j,0}}} {(1+e^{\beta_{i,0}+\beta^{*}_{j,0}} +e^{\beta_{i,1}+\beta^{*}_{j,1}})^2}\geq C_1e^{-2\kappa_{0}-4\kappa_{1}},
					\end{equation*}
					and
					\begin{equation*}
						-\E\bV^{(i)}_2\left(\btheta_{(i)} \right)
						\geq C_{1}e^{-4\kappa_{0}-4\kappa_{1}}; \quad \quad 	\E\bV^{(i)}_2\left(\btheta_{(i)} \right)+\E\bV^{(i)}_3\left(\btheta_{(i)} \right)
						\geq C_{1}e^{-4\kappa_{0}-4\kappa_{1}}.
					\end{equation*}
					Then we have for all $\btheta_{(i)} \in \bB_{\infty}\left( \btheta^*_{(i)}, c_{r}e^{-4\kappa_{0}-4\kappa_{1}}\right)$,
					\begin{eqnarray*}
						&& \left\|\E\bV^{(i)}\left(\btheta_{(i)} \right)\right\|_{2} \\
						&=& \inf_{\|\bz\|_{2}=1}\Bigg( \left( \E\bV^{(i)}_2\left(\btheta_{(i)} \right)+\E\bV^{(i)}_1\left(\btheta_{(i)} \right) \right) z_{1}^2+	\left( \E\bV^{(i)}_2\left(\btheta_{(i)} \right)+\E\bV^{(i)}_3\left(\btheta_{(i)} \right) \right)  z_{2}^2\\
						&&- \E\bV^{(i)}_2\left(\btheta_{(i)} \right) \left( z_{1}-z_{2}\right) ^2\Bigg) \\
						&\geq& \inf_{\|\bz\|_{2}=1}\Bigg\{  \left( \E\bV^{(i)}_2\left(\btheta_{(i)} \right)+\E\bV^{(i)}_1\left(\btheta_{(i)} \right) \right) z_{1}^2+	\left( \E\bV^{(i)}_2\left(\btheta_{(i)} \right)+\E\bV^{(i)}_3\left(\btheta_{(i)} \right) \right)  z_{2}^2 \Bigg\} \\
						&\geq& \frac{C_1}{e^{4\kappa_{0}+4\kappa_{1}}}\inf_{\|\bz\|_{2}=1}\Bigg\{	  z_{1}^2+  z_{2}^2 \Bigg\}\\
						&=&\frac{C_1}{e^{4\kappa_{0}+4\kappa_{1}}}.
					\end{eqnarray*}
					Consequently, for any $i$ and $s\in[s_0,1/2]$, there exists a random scalar $\btheta_{(i)}^{(\xi)}$, s.t.
						\begin{eqnarray}\label{Li_low}
							&& l\left(\btheta_{(i)}^*, \hat{\btheta}^{(s)}_{(-i)}\right) - l\left(\hat{\btheta}^{(s)}_{(i)},\hat{\btheta}^{(s)}_{(-i)} \right)   -\left[ l_{E}\left(\btheta_{(i)}^{*},\btheta_{(-i)}^{*}\right) - l_{E}\left(\hat{\btheta}^{(s)}_{(i)},\btheta_{(-i)}^{*}\right) \right]\\
							&\geq&  l_{E}\left(\hat{\btheta}^{(s)}_{(i)},\btheta_{(-i)}^{*}\right)- l_{E}\left(\btheta_{(i)}^{*},\btheta_{(-i)}^{*}\right)\nonumber\\
							&=& \frac{1}{2}\left( \btheta_{(i)}^*-\hat{\btheta}^{(s)}_{(i)}\right) ^{\top}  l_{E}''\left( \btheta_{(i)}^{(\xi)},\btheta_{(-i)}^{*}\right)  \left( \btheta_{(i)}^*-\hat{\btheta}^{(s)}_{(i)}\right)\nonumber\\
							&\geq& \|\btheta_{(i)}^*-\hat{\btheta}^{(s)}_{(i)}\|_2^{2}  \left(\inf_{\|\btheta_{(i)}\|_{\infty}<\kappa}\left\|\E\bV^{(i)}\left( \btheta_{(i)}^{(\xi)},\btheta_{(-i)}^{*}\right) \right\|_{2} \right)\nonumber \\
							&\geq& \frac{C_1\|\btheta_{(i)}^*-\hat{\btheta}^{(s)}_{(i)}\|_{\infty}^2}{e^{-4\kappa_{0}-4\kappa_{1}}}.\nonumber
						\end{eqnarray}
					On the other hand, notice that 
					\begin{eqnarray*}
						&&l\left(\btheta_{(i)}^*, \hat{\btheta}^{(s)}_{(-i)}\right) -l\left(\btheta_{(i)},\hat{\btheta}^{(s)}_{(-i)} \right)   -\left[ l_{E}\left(\btheta_{(i)}^{*},\btheta_{(-i)}^{*}\right)- l_{E}\left(\btheta_{(i)},\btheta_{(-i)}^{*}\right)  \right]\\
						&\leq&\left|l\left(\btheta_{(i)}^*, \hat{\btheta}^{(s)}_{(-i)}\right) -l\left(\btheta_{(i)},\hat{\btheta}^{(s)}_{(-i)} \right)  - \left[ l\left(\btheta_{(i)}^{*},\btheta_{(-i)}^{*}\right) - l\left(\btheta_{(i)},\btheta_{(-i)}^{*}\right) \right]\right|\\
						&&+ \left|l\left(\btheta_{(i)}^{*},\btheta_{(-i)}^{*} \right) -l\left(\btheta_{(i)}, \btheta_{(-i)}^{*}\right)-\left[ l_{E}\left(\btheta_{(i)}^{*},\btheta_{(-i)}^{*}\right)- l_{E}\left(\btheta_{(i)},\btheta_{(-i)}^{*}\right)  \right]\right|.
					\end{eqnarray*}
					By Corollary \ref{Cor1}, there exist large enough positive constants $C_2$ and $c_1$ which are independent of $\btheta_{(i)}$  such that, with probability greater than $1-(np)^{-c_1}$, 
					\begin{eqnarray}\label{thm3_e1}
						&& \left|l\left(\btheta_{(i)}^{*},\btheta_{(-i)}^{*} \right) -l\left(\btheta_{(i)}, \btheta_{(-i)}^{*}\right)-\left[ l_{E}\left(\btheta_{(i)}^{*},\btheta_{(-i)}^{*}\right)- l_{E}\left(\btheta_{(i)},\btheta_{(-i)}^{*}\right)  \right]\right|\\
						&\leq& C_{2}\left(1+ \frac{\log(np)}{\sqrt{ p}} \right)\sqrt{\frac{\log(np)}{np}} \left\|\btheta_{(i)}-\btheta^*_{(i)}\right\|_{2}\nonumber\\
						&\leq& \sqrt{2}C_{2} \left(1+ \frac{\log(np)}{\sqrt{ p}} \right) \sqrt{ \frac{\log(np)}{np}}\left\|\btheta_{(i)}-\btheta^*_{(i)}\right\|_{\infty},\nonumber
					\end{eqnarray}
					holds 
					uniformly for any  $\btheta_{(i)} \in \bB_{\infty}\left( \btheta^*_{(i)},\alpha_{0}\right)$ where $\alpha_{0}<1/4$. 
					By Lemma \ref{thm3_1}, as $np\to \infty$,  there exist big enough positive constants $C_3$  and $c_2$ which are independent of $\btheta_{(i)}$  such that, with probability greater than $1-(np)^{-c_2}$,
					\begin{eqnarray*}
						&&  \left|l\left(\btheta_{(i)}^*, \hat{\btheta}^{(s)}_{(-i)}\right) -l\left(\btheta_{(i)},\hat{\btheta}^{(s)}_{(-i)} \right)  - \left[ l\left(\btheta_{(i)}^{*},\btheta_{(-i)}^{*}\right) - l\left(\btheta_{(i)},\btheta_{(-i)}^{*}\right) \right]\right|\nonumber\\
						&=&     \Bigg|\frac{1}{np}\sum_{j=1,\:j \neq i}^{p}   \Bigg[- n\log \left( \frac{  1+ e^{\beta^{*}_{i,0} +\beta^{*}_{j,0 } }+e^{\beta^{*}_{i,1} +\beta^{*}_{j,1 }   }}{ 1+ e^{\beta^{*}_{i,0} +\hat{\beta}^{(s)}_{j,0 } }+e^{\beta^{*}_{i,1} +\hat{\beta}^{(s)}_{j,1 } }} \times \frac{ 1+ e^{\beta_{i,0} +\hat{\beta}^{(s)}_{j,0 } }+e^{\beta_{i,1} +\hat{\beta}^{(s)}_{j,1 } }}{  1+ e^{\beta_{i,0} +\beta^{*}_{j,0 } }+e^{\beta_{i,1} +\beta^{*}_{j,1 }   }}\right)  \nonumber\\
						&& +   \log\left( \frac{1+e^{\beta^{*}_{i,1} +\beta^{*}_{j,1 }}}{1+e^{\beta^{*}_{i,1} + \hat{\beta}^{(s)}_{j,1 }}}\times\frac{1+e^{\beta_{i,1} + \hat{\beta}^{(s)}_{j,1 }}}{1+e^{\beta_{i,1} +\beta^{*}_{j,1 }}} \right)  d_{i,j}\nonumber\\
						&& + \log\left(\frac{ e^{\beta^{*}_{i,0}+\beta^{*}_{j,0}}+e^{\beta^{*}_{i,1} +\beta^{*}_{j,1 }}} {e^{\beta^{*}_{i,0}+\hat{\beta}^{(s)}_{j,0}}+e^{\beta^{*}_{i,1} + \hat{\beta}^{(s)}_{j,1 }}} \times\frac{ e^{\beta_{i,0}+\hat{\beta}^{(s)}_{j,0}}+e^{\beta_{i,1} + \hat{\beta}^{(s)}_{j,1 }}} { e^{\beta_{i,0}+\beta^{*}_{j,0}}+e^{\beta_{i,1} +\beta^{*}_{j,1 }}   } \right)  b_{i,j} \Bigg] \Bigg|\nonumber\\
						&\leq&\frac{C_3}{p}\sum_{j=1,\:j \neq i}^{p}\Bigg(z_1\left|\beta^{*}_{i,0}- \beta_{i,0}  \right|^2 +z_1\left|\beta^{*}_{i,1}- \beta_{i,1}  \right|^2+z_2\left|\beta^{*}_{j,0}-\hat{\beta}^{(s)}_{j,0}  \right|^2 +z_2\left|\beta^{*}_{j,1}- \hat{\beta}^{(s)}_{j,1}  \right|^2 \Bigg)\nonumber\\
						&\leq& 2C_3z_1\|\btheta_{(i)}-\btheta^*_{(i)}\|^2_{\infty}+ \frac{C_3z_2}{p}\|\hat{\btheta}^{(s)}_{(-i)}-\btheta^*_{(-i)}\|^2_{2} \nonumber\\
						&\leq& 2C_3z_1\frac{ e^{16\kappa_{0}+16\kappa_{1}}[\log\log(np)]^2\log(np)}{\left( np\right) ^{2s}}\left(1+  \frac{\log(np)}{\sqrt{p}} \right)^2\\
						&&+ \frac{C_3z_2}{p}\frac{e^{8\kappa_{0}+8\kappa_{1}}\log(np)}{n}\left(1+ \frac{\log(np)}{\sqrt{ p}} \right)^2,
					\end{eqnarray*}
					holds uniformly for all for any positive $z_1,z_2$ s.t. $z_1z_2\geq 1/4$, $s\in[s_0,1/2]$ and $\btheta_{(i)}\in \bB_{\infty}\left( \btheta^*_{(i)},r_s\right)$.	Here in the last step we have used inequality \eqref{thm2_c1} and the fact that for all $s$, $\hat{\btheta}^{(s)}\in \bD_{n,p}$.
					Let $z_{1}=0.5e^{-4\kappa_{0}-4\kappa_{1}}[\log\log(np)]^{-1}$ $(np)^{s-1/2}$ and $z_{2}=0.5e^{4\kappa_{0}+4\kappa_{1}}\log\log(np) (np)^{1/2-s}$, we have, there exists a big enough constant $C_{4}>0$, with probability greater than $1-(np)^{-c_2}$,
					\begin{eqnarray}\label{thm3_e2}
						&&
						\left|l\left(\btheta_{(i)}^*, \hat{\btheta}^{(s)}_{(-i)}\right) -l\left(\btheta_{(i)},\hat{\btheta}^{(s)}_{(-i)} \right)  - \left[ l\left(\btheta_{(i)}^{*},\btheta_{(-i)}^{*}\right) - l\left(\btheta_{(i)},\btheta_{(-i)}^{*}\right) \right]\right|\\
						&\leq& C_{4} \frac{e^{12\kappa_{0}+12\kappa_{1}} \log\log(np)\log(np)}{(np)^{s+1/2}}\left(1+  \frac{\log(np)}{\sqrt{p}} \right)^2,\nonumber
					\end{eqnarray}
					holds uniformly for all $s\in[s_0,1/2]$ and $\btheta_{(i)}\in \bB_{\infty}\left( \btheta^*_{(i)},r_s\right)$.
					Combining the inequalities \eqref{thm3_e1} and \eqref{thm3_e2}, we have, with probability greater than $1-(np)^{-c_3}$ for a large enough constant  $c_3>0$,  
					\begin{align}\label{Li_up}
						&\left|l\left(\btheta_{(i)}^*, \hat{\btheta}^{(s)}_{(-i)}\right) -l\left(\btheta_{(i)},\hat{\btheta}^{(s)}_{(-i)} \right)   -\left[ l_{E}\left(\btheta_{(i)}^{*},\btheta_{(-i)}^{*}\right)- l_{E}\left(\btheta_{(i)},\btheta_{(-i)}^{*}\right)  \right]\right|\\
						\leq& \sqrt{2}C_{2} \left(1+ \frac{\log(np)}{\sqrt{ p}} \right) \sqrt{ \frac{\log(np)}{np}}\left\|\btheta_{(i)}-\btheta^*_{(i)}\right\|_{\infty}\nonumber\\
						&+C_{4} \frac{e^{12\kappa_{0}+12\kappa_{1}}\log\log(np)\log(np)}{(np)^{s+1/2}}\left(1+  \frac{\log(np)}{\sqrt{p}} \right)^2,\nonumber
					\end{align}
					holds uniformly for all $s\in[s_0,1/2]$, all $i=1,\ldots,p$ and  $\btheta_{(i)}\in \bB_{\infty}\left( \btheta^*_{(i)},r_{s}\right)$. 
					
					Combining the inequalities \eqref{Li_low} and \eqref{Li_up},  we have, with probability greater than $1-(np)^{-c_3}$,
					\begin{multline*}
						\frac{C_1}{e^{4\kappa_{0}+4\kappa_{1}}}\|\hat{\btheta}^{(s)}_{(i)}-\btheta_{(i)}^*\|_{\infty}^2   \leq \sqrt{2}C_{2} \left(1+ \frac{\log(np)}{\sqrt{ p}} \right) \sqrt{ \frac{\log(np)}{np}}\left\|\hat{\btheta}^{(s)}_{(i)}-\btheta^*_{(i)}\right\|_{\infty}\nonumber\\
						+C_{4} \frac{e^{12\kappa_{0}+12\kappa_{1}}\log\log(np)\log(np)}{(np)^{s+1/2}}\left(1+  \frac{\log(np)}{\sqrt{p}} \right)^2,
					\end{multline*}
					holds uniformly for all $s\in[s_0,1/2]$ and all $i=1,\ldots,p$. 
					Notice that the constants $C_1$, $C_2$, $C_3$, $C_4$, $c_2$ and $c_3$ are all independent of $r_s$ and $s$. This 
					indicates that   there exists a big enough constant $C_5>0$ which is independent of $r_s$ and $s$ s.t.
					\begin{align}\label{thm3_e3}
						&\|\hat{\btheta}^{(s)}-\btheta^*\|_{\infty}\\
						\leq& C_{5}   e^{4\kappa_{0}+4\kappa_{1}}\sqrt{ \frac{\log(np)}{np}}\left(1+ \frac{\log(np)} {\sqrt{ p}} \right)+ C_{5}\frac{e^{8\kappa_{0}+8\kappa_{1}} \sqrt{\log(np) \log\log(np)}} {(np)^{\frac{2s+1}{4}}}\left(1+ \frac{\log(np)}{\sqrt{p}} \right)\nonumber\\
						\leq&2C_{5}\frac{e^{8\kappa_{0}+8\kappa_{1}} \sqrt{\log(np) \log\log(np)}} {(np)^{\frac{2s+1}{4}}}\left(1+ \frac{\log(np)}{\sqrt{p}} \right)\nonumber\\
						\leq&e^{8\kappa_{0}+8\kappa_{1}}\log\log(np)\frac{\sqrt{\log(np)}}{(np)^{\frac{2s+1}{4}}}\left(1+ \frac{\log(np)}{\sqrt{p}} \right).\nonumber
					\end{align}
					Recall that   $\hat{\btheta}^{(s)}$ is the   local maximizer of $l(\btheta)$ in  $\bB_{\infty}\left( \btheta^*, r_{s}\right)$   with
					\begin{eqnarray*}
						r_s&=&  e^{8\kappa_{0}+8\kappa_{1}}\log\log(np)\frac{ \sqrt{\log(np)} }{\left( np\right) ^{s}}\left(1+  \frac{\log(np)}{\sqrt{p}} \right).
					\end{eqnarray*}
					Thus far we have proved that: with probability greater than $1-(np)^{-c_3}$ for some large   enough constant $c_3>0$,  uniformly for all $s\in[s_0,1/2]$, $\hat{\btheta}^{(s)}$ is also within the ball $\bB_{\infty}\left( \btheta^*, r_{s'}\right)$ with 
					\begin{equation}\label{s_seq1}
						{r_{s'}}= (np)^{\frac{2s-1}{4}}{r_{s}}=e^{8\kappa_{0}+8\kappa_{1}}\log\log(np)\frac{\sqrt{\log(np)}}{(np)^{\frac{2s+1}{4}}}\left(1+ \frac{\log(np)}{\sqrt{p}} \right).
					\end{equation}

						Now define a series $\{s_{i};i=0,1,\cdots\}$ s.t. $s_{0}$ is defined as in equation \eqref{thm3_s} and $s_{k}=s_{k-1}/2+1/4$. We have: 
						\begin{equation}\label{s_seq2}
							s_{k}-\frac{1}{2}=\frac{1}{2}\left( s_{k-1}-\frac{1}{2}\right) = \frac{1}{2^{k}}\left( s_{0}-\frac{1}{2}\right).
						\end{equation}
						Then, we have $s_{k-1}<s_{k}<1/2$ for $k>1$ and $\lim_{k\to\infty}s_{k}\to 1/2$.
						
						Let  $K=\lfloor\log_{2}\left( \log(np)\right)\rfloor+1$ where $\lfloor\cdot\rfloor$ is the smallest integer function. Beginning with $\hat{\btheta}^{(s_0)}=\hat{\btheta}$, and repeatedly using the result in \eqref{thm3_e3}   for  $K$ times, we have, with probability greater than $1-(np)^{-c_3}$,  
						we can sequentially reduce the upper bound   from 
						$\|  \hat{\btheta}-\btheta^*\|_{\infty}  \leq  e^{8\kappa_{0}+8\kappa_{1}}\log\log(np) \frac{\sqrt{\log(np)}}{(np)^{\frac{1}{2}+\frac{2s_{0}-1}{2}}}\left(1+ \frac{\log(np)}{\sqrt{p}} \right)$ to:
						\begin{eqnarray*}
							\|  \hat{\btheta}-\btheta^*\|_{\infty} 
							&\leq& e^{8\kappa_{0}+8\kappa_{1}}\log\log(np) \frac{\sqrt{\log(np)}}{(np)^{\frac{1}{2}+ \frac{2s_{0}-1}{2^{K}}}}\left(1+ \frac{\log(np)}{\sqrt{p}} \right)\\
							&=& (np)^{\frac{1-2s_{0}}{2^{K}}}e^{8\kappa_{0}+8\kappa_{1}}\log\log(np) \sqrt{\frac{\log(np)}{np}}\left(1+ \frac{\log(np)}{\sqrt{p}} \right)\\
							&\le& e^{1-2s_{0}} e^{8\kappa_{0}+8\kappa_{1}}\log\log(np)\sqrt{\frac{\log(np)}{np}}\left(1+ \frac{\log(np)}{\sqrt{p}} \right).
						\end{eqnarray*}
						Here in the last step of above inequality, we have used the fact that when $K=\lfloor\log_{2}\left( \log(np)\right)\rfloor+1> \log_{2} \left( \log(np)\right)$,
						\begin{equation*}
							(np)^{\frac{1-2s_{0}}{2^{K}}}\le \left( (np)^{\frac{\log(2)}{\log(np)}}\right) ^{\frac{1-2s_{0}}{\log(2)}} = 2^{\frac{1-2s_{0}}{\log(2)}}=e^{1-2s_{0}}.
						\end{equation*}
						Thus, assuming that $np\rightarrow \infty, n\ge 2$,      $\kappa_{0}+\kappa_{1}  \le c\log(np)$ for some positive constant $c$, we have, with probability tending to 1,
						\begin{equation*}
							\left\|\hat{\btheta}-\btheta^*\right\|_{\infty}\lesssim e^{8\kappa_{0}+8\kappa_{1}}\log\log(np) \sqrt{\frac{\log(np)}{np}}\left(1+ \frac{\log(np)}{\sqrt{p}} \right).
						\end{equation*}
					\end{proof}
					\subsection{Proof of Theorem \ref{thm4}}
					For brevity, we denote $\tilde{\alpha}_{r,i,j}:=e^{\tilde{\beta}_{i,r}+\tilde{\beta}_{i,r}}$, with $i,j=1,\cdots,p$ and $r=0,1$. We use the interior mapping theorem \citep{gragg1974optimal} to establish the existence and uniform consistency of the moment estimator. The interior mapping theorem is presented in Lemma \ref{Lemma_IMT}.
					
					\begin{lem}\label{Lemma_IMT}
						{\rm(Interior mapping theorem).}
						Let $\bF\left( \bx \right)=\left(\bF_{1}\left( \bx\right),\cdots,\bF_{p}\left( \bx \right) \right)^{\top}$ be a vector function on an open convex subset $\calD $ of $\mathbb{R}^p$ and $\left\|\bF'\left( \bx \right)-\bF'\left( \by \right)\right\|_{\infty}\lesssim \gamma\left\|\bx-\by \right\|$. Assume that $x_{0}\in \calD$ s.t.
						\begin{eqnarray*}
							&&\left\|\bF'\left( \bx_{0} \right)^{-1}\right\|_{\infty}\leq N ,\quad 
							\left\|\bF'\left(  \bx_{0} \right)^{-1}\bF\left(  \bx_{0} \right)\right\|_{\infty}\leq\delta,\quad h=2N\gamma\delta\leq1,\\
							&&t^*\equiv\frac{2}{h}\left( 1-\sqrt{1-h}\right)\delta,\quad \bB_{\infty}\left(\bx_{0},t^* \right) \subset\calD,
						\end{eqnarray*}
						where N and $\delta$ are positive constants that may depend on $x_{0}$ and $p$. Then the Newton iterates  $\bx_{n+1}\equiv \bx_{n} - \bF'\left(  \bx_{n}\right)^{-1}\bF\left(  \bx_{n} \right)$ exists and $\bx_n\in \bB_{\infty}\left(x_{0},t^* \right)\subset\calD $ for all $n\geq 0$. Moreover, $\bx^*=\lim_{n\to\infty} \bx_n$ exists, $x^*\in \overline{\bB_{\infty}\left(x_{0},t^* \right)}\subset\calD $,  where $\overline{A}$ denotes the closure of set $A$ and $\bF\left( \bx^* \right)=0$. 
					\end{lem}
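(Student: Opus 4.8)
The statement is the classical Newton--Kantorovich theorem, and the plan is to prove it by the method of majorants: I would construct a scalar quadratic whose (scalar) Newton iteration, started at $0$, dominates the vector Newton iteration term by term, and then transfer convergence from the scalar sequence to $\{\bx_n\}$. Concretely, set
\[
  g(t) = \frac{\gamma}{2}t^2 - \frac{1}{N}t + \frac{\delta}{N},
\]
so that $g(0)=\delta/N>0$, $g'(0)=-1/N<0$, and the condition $h = 2N\gamma\delta \le 1$ guarantees that $g$ has real roots, the smaller being $t^* = (1-\sqrt{1-h})/(N\gamma)$, which coincides with the $t^*$ in the statement. Define the scalar Newton sequence $t_0=0$, $t_{n+1} = t_n - g(t_n)/g'(t_n)$; since $g$ is convex and decreasing on $[0,t^*)$, this sequence is well defined, strictly increasing, and converges monotonically up to $t^*$.

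The core is a single induction over $n$ carrying three hypotheses simultaneously: (i) $\bx_n \in \bB_\infty(\bx_0,t^*)$; (ii) $\bF'(\bx_n)$ is invertible with $\|\bF'(\bx_n)^{-1}\|_\infty \le -1/g'(t_n)$; and (iii) the step is majorized, $\|\bx_{n+1}-\bx_n\|_\infty \le t_{n+1}-t_n$. The base case $n=0$ is exactly the three hypotheses of the theorem, since $-1/g'(0)=N$ and $t_1-t_0=-g(0)/g'(0)=\delta$. For the invertibility part of the step I would invoke the Banach perturbation lemma: because $\|\bF'(\bx_0)^{-1}(\bF'(\bx_{n+1})-\bF'(\bx_0))\|_\infty \le N\gamma\|\bx_{n+1}-\bx_0\|_\infty \le N\gamma t_{n+1} < 1$ (using $t_{n+1}<t^*\le 1/(N\gamma)$), the Jacobian $\bF'(\bx_{n+1})$ is invertible with $\|\bF'(\bx_{n+1})^{-1}\|_\infty \le N/(1-N\gamma t_{n+1}) = -1/g'(t_{n+1})$.

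For the residual and the step bound I would use the integral form of the remainder. Since $\bx_{n+1}$ is a Newton step, $\bF(\bx_n)+\bF'(\bx_n)(\bx_{n+1}-\bx_n)=0$, so
\[
  \bF(\bx_{n+1}) = \int_0^1\big[\bF'(\bx_n+s(\bx_{n+1}-\bx_n))-\bF'(\bx_n)\big](\bx_{n+1}-\bx_n)\,ds,
\]
whence $\|\bF(\bx_{n+1})\|_\infty \le \tfrac{\gamma}{2}\|\bx_{n+1}-\bx_n\|_\infty^2 \le \tfrac{\gamma}{2}(t_{n+1}-t_n)^2$. The decisive scalar identity is that, because $g$ is quadratic and $t_{n+1}$ is its Newton update from $t_n$, one has $g(t_{n+1})=\tfrac{\gamma}{2}(t_{n+1}-t_n)^2$; combining this with (ii) gives $\|\bx_{n+2}-\bx_{n+1}\|_\infty = \|\bF'(\bx_{n+1})^{-1}\bF(\bx_{n+1})\|_\infty \le -g(t_{n+1})/g'(t_{n+1}) = t_{n+2}-t_{n+1}$, which closes (iii); summing the step bounds keeps $\bx_{n+1}$ inside $\bB_\infty(\bx_0,t^*)$, closing (i).

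Finally, monotone boundedness makes $\{t_n\}$ Cauchy, so by (iii) $\{\bx_n\}$ is Cauchy and converges to some $\bx^*$ with $\|\bx^*-\bx_0\|_\infty \le t^*$, i.e. $\bx^*\in\overline{\bB_\infty(\bx_0,t^*)}\subset\calD$; and since $\|\bF(\bx_n)\|_\infty\to 0$ with $\bF$ continuous, $\bF(\bx^*)=0$. The main obstacle, and the place where every hypothesis is actually consumed, is keeping the compound induction self-consistent: one must control the growth of $\|\bF'(\bx_n)^{-1}\|_\infty$ through the Banach lemma (legitimate only while the iterates stay within radius $1/(N\gamma)$ of $\bx_0$) simultaneously with the quadratic contraction of the residual, and the quadratic majorant $g$ is precisely the device that couples these two controls through the identity $g(t_{n+1})=\tfrac{\gamma}{2}(t_{n+1}-t_n)^2$.
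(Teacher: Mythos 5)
Your proof is correct, but note that the paper does not prove Lemma~\ref{Lemma_IMT} at all: it is imported verbatim from \cite{gragg1974optimal} as a known form of the Newton--Kantorovich theorem, and the paper's ``proof'' is just that citation. Your majorant argument is the classical Kantorovich proof and it is complete: the quadratic $g(t)=\frac{\gamma}{2}t^2-\frac{t}{N}+\frac{\delta}{N}$ indeed has smallest root $(1-\sqrt{1-h})/(N\gamma)$, which coincides with the $t^*$ of the statement; the identity $g(t_{n+1})=\frac{\gamma}{2}(t_{n+1}-t_n)^2$ (exact because $g$ is quadratic) correctly couples the residual bound $\|\bF(\bx_{n+1})\|_\infty\le\frac{\gamma}{2}\|\bx_{n+1}-\bx_n\|_\infty^2$ with the Banach-perturbation bound $\|\bF'(\bx_{n+1})^{-1}\|_\infty\le N/(1-N\gamma t_{n+1})=-1/g'(t_{n+1})$, and the three-part induction closes; the Banach step is legitimate even in the borderline case $h=1$, since $t_{n+1}<t^*\le 1/(N\gamma)$ holds strictly for every finite $n$. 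Two small points you should make explicit in a write-up: (a) the convexity of $\calD$ (equivalently, of the $\ell_\infty$ ball) is what licenses integrating $\bF'$ along the segment $[\bx_n,\bx_{n+1}]$ in the remainder formula --- you use it tacitly; (b) the lemma as printed states the Lipschitz hypothesis with ``$\lesssim$'', i.e.\ only up to an unspecified constant, whereas your proof (like the theorem of \cite{gragg1974optimal}) requires the exact Lipschitz constant $\gamma$; otherwise the constants in $h$ and $t^*$ must be rescaled, so this is really an imprecision in the paper's statement rather than in your argument. As for what each route buys: the citation keeps the appendix short and silently carries the optimal error bounds and uniqueness assertions of Gragg--Tapia that the paper never uses, while your proof makes the lemma self-contained and exposes exactly where each hypothesis ($h\le 1$ for real roots of $g$, the ball inclusion for well-definedness of the iterates, convexity for the integral remainder) is consumed.
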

					
					\noindent
					{\bf{Proof of Theorem \ref{thm4}.}}
					\begin{proof}
						Recall that the $\tilde{\btheta}_{(0)}$ is defined as the solution of
						\begin{equation*}
							\sum_{t=1}^{n} \sum_{j=1,\:j \neq i}^{p} X_{i,j}^t = n\sum_{j=1,\:j \neq i}^{p} \frac{ e^{\beta_{i,0} +\beta_{j,0}} }{1+ e^{\beta_{i,0} +\beta_{j,0 }}},\quad i=1,\cdots,p.
						\end{equation*}
						For any  $\bx\in \mathbb{R}^{p}$,  define a system of random functions $\bG\left( \bx \right)$:
						\begin{eqnarray*}
							&&\bG_{i}\left( \bx \right):=-\frac{1}{np}\sum_{t=1}^{n} \sum_{j=1,\:j \neq i}^{p} \left( X_{i,j}^t - \frac{ e^{x_i+x_j} }{1+ e^{x_i+x_j}}\right) ,\quad i=1,\cdots,p;\\
							&&\bG\left( \bx \right):=\left(\bG_{1}\left( \bx \right),\cdots,\bG_{p}\left( \bx \right) \right)^{\top}.
						\end{eqnarray*}
						As $np\to \infty$ and $\kappa=c\log(np) $  with a small enough  constant $c>0$, there exist big enough constants $C_1>0, c_1>0$ such that  with probability greater than $1-(np)^{c_1}$,  
						\begin{eqnarray}\label{MME_con0}
							\left\|\bG'\left( \bx \right)-\bG'\left( \by \right)\right\|_{\infty}&\leq &C_{1}\|\bx-\by\|_{\infty},\\
							\left\|\bG'\left( \btheta_{(0)}^* \right)^{-1}\right\|_{\infty}&\leq& C_{1}e^{2\kappa_{0}},\nonumber\\   
							\left\|\bG'\left(  \btheta_{(0)}^* \right)^{-1}\bG\left(  \btheta_{(0)}^* \right)\right\|_{\infty}&\leq& 	C_{1}e^{2\kappa_{0}}\sqrt{\frac{\log(n)\log(p)}{np}},\nonumber
						\end{eqnarray}
						hold for all $\bx,\by\in \bB_{\infty}\left(0,\kappa_{0} \right)$.    For brevity, the proof of inequalities in  \eqref{MME_con0} is provided independently in Section \ref{p_WMME_con0}.	
						Let $x_{0}=\btheta_{(0)}^*$ and $\calD =\text{Int }( \bB_{\infty}\left(0,\kappa_{0} \right))$. Here the notation $\text{Int}(A) $ denotes the interior of a given  set $A$. We then have:
						\begin{eqnarray*}
							&&	N=C_{1}e^{2\kappa_{0}},\quad \gamma=C_{1},\quad \delta=C_{1} e^{2\kappa_{0}} \sqrt{\frac{\log(n) \log(p)} {np}},\\
							&& 	h= 2N\gamma\delta= 2C_{1}^3 e^{4\kappa_{0}}\sqrt{\frac{\log(n)\log(p)}{np}} =o\left( 1\right) ,\\
							&&	t^*\equiv\frac{2C_{1}}{h}\left( 1-\sqrt{1-h}\right)e^{2\kappa_{0}} \sqrt{\frac{\log(n) \log(p)} {np}},\\
							&&	\bB_{\infty}\left(\btheta_{(1)}^*,t^* \right)  \subset  \calD.
						\end{eqnarray*}
						Note that  $\forall h\in (0,1)$, $1-\sqrt{1-h}<1-(1-h)=h$, we have
						\begin{equation*}
							t^*\equiv\frac{2C_{1}}{h}\left( 1-\sqrt{1-h}\right)e^{2\kappa_{0}} \sqrt{\frac{\log(n) \log(p)} {np}}<4e^{4\kappa_{0}}\sqrt{\frac{\log(n)\log(p)}{np}}.
						\end{equation*}
						Consequently, by Lemma \ref{Lemma_IMT}, we have that, with probability tending to 1,
						\begin{equation}\label{MME_b1}
							\left\|\tilde{\btheta}_{(0)}-\btheta_{(0)}\right\|_{\infty} \lesssim \sqrt{\frac{\log(n)\log(p)e^{4\kappa_{0}}}{np}}.
						\end{equation}

						Next, we  derive the error bound of $\tilde{\btheta}_{(1)}$ based on \eqref{MME_b1}. For $\bx\in \mathbb{R}^{p}$, define a system of random functions $\bF\left( \bx ;\tilde{\btheta}_{(0)}\right)$:
						\begin{eqnarray*}
							&&\bF_{i}\left( \bx ;\tilde{\btheta}_{(0)}\right):=-\frac{1}{np}\sum_{t=1}^{n} \sum_{j=1,\:j \neq i}^{p} \left\{X_{i,j}^tX_{i,j}^{t-1} -  \frac{\tilde{\alpha}_{0,i,j}}{1+\tilde{\alpha}_{0,i,j}}\left(1- \frac{1}{1+\tilde{\alpha}_{0,i,j}+e^{x_i+x_j}}\right)\right\},\\
							&&\bF\left( \bx ;\tilde{\btheta}_{(0)}\right):=\left(\bF_{1}\left( \bx ;\tilde{\btheta}_{(0)}\right),\cdots,\bF_{p}\left( \bx ;\tilde{\btheta}_{(0)}\right) \right)^{\top}.
						\end{eqnarray*}
						As $np\to \infty$ and $\kappa_{0}=c\log(np) $  with a small enough  constant $c>0$, there exist big enough constants $C_2>0, c_2>0$ such that  with probability greater than $1-(np)^{c_2}$,  
						\begin{eqnarray}\label{MME_con}  
							&&\left\|\bF'\left( \bx ;\tilde{\btheta}_{(0)}\right)-\bF'\left( \by ;\tilde{\btheta}_{(0)}\right)\right\|_{\infty}\leq C_{2} \left\|\bx-\by \right\|_{\infty},\\
							&&\left\|\bF'\left( \btheta_{(1)}^*;\tilde{\btheta}_{(0)} \right)^{-1}\right\|_{\infty}\leq C_{2} e^{12\kappa_{0}+6\kappa_{1}},\nonumber\\ 
							&&\left\|\bF'\left(  \btheta_{(1)}^*;\tilde{\btheta}_{(0)} \right)^{-1}\bF\left(  \btheta_{(1)}^* ;\tilde{\btheta}_{(0)}\right)\right\|_{\infty}\leq C_{2} e^{14\kappa_{0}+6\kappa_{1}}\sqrt{\frac{\log(n)\log(p)}{np}},\nonumber
						\end{eqnarray}
						hold for all $x,y\in \bB_{\infty}\left(0,\kappa_{1} \right)$.		For brevity, the proof of inequalities in  \eqref{MME_con} is provided independently in Section \ref{p_WMME_con}.	
						Let $x_{0}=\btheta_{(1)}^*$ and $\calD =\text{Int } (\bB_{\infty}\left(0,\kappa \right))$. 
						We then have:
						\begin{eqnarray*}
							&&N=C_{2}e^{12\kappa_{0}+6\kappa_{1}},\quad \gamma=C_{2},\quad \delta=C_{2} e^{14\kappa_{0}+6\kappa_{1}} \sqrt{\frac{\log(n) \log(p)} {np}} ,\\
							&&	h= 2N\gamma\delta= 2C_{2}^3 e^{26\kappa_{0}+12\kappa_{1}} \sqrt{\frac{\log(n)\log(p)}{np}} =o\left( 1\right) ,\\
							&&t^*\equiv\frac{2C_{2}}{h}\left( 1-\sqrt{1-h}\right)e^{14\kappa_{0}+6\kappa_{1}} \sqrt{\frac{\log(n)\log(p)} {np}},\\
							&&\bB_{\infty}\left(\btheta_{(1)}^*,t^* \right) \subset \calD.
						\end{eqnarray*}
						Note that $\forall h\in (0,1)$, $1-\sqrt{1-h}<1-(1-h)=h$, we have
						\begin{equation*}
							t^*\equiv\frac{2C_{2}}{h}\left( 1-\sqrt{1-h}\right)e^{14\kappa_{0}+6\kappa_{1}}\sqrt{\frac{\log(n)\log(p)}{np}} <4C_{2}e^{14\kappa_{0}+6\kappa_{1}}\sqrt{\frac{\log(n)\log(p)}{np}}.
						\end{equation*}
						Consequently, by Lemma \ref{Lemma_IMT}, we have that, with probability tending to 1,
						\begin{eqnarray*}
							\left\|\tilde{\btheta}_{(1)}-\btheta_{(1)}^{*}\right\|_{\infty}&\leq t^{*}\leq &4C_{2} e^{14\kappa_{0}+6\kappa_{1}}\sqrt{\frac{\log(n)\log(p)}{np}}.
						\end{eqnarray*}
						Combining with \eqref{MME_b1}, the theorem is proved.
					\end{proof}
					\subsection{Proof of   \eqref{MME_con0}}\label{p_WMME_con0}
					\begin{proof}
						Note that $\bG'\left( \bx \right)$ is a balanced symmetric matrix s.t.
						\begin{equation*}
							\bG'\left( \bx \right)_{i,j}  = \frac{1}{p} \frac{\alpha_{0,i,j}}{\left( 1+\alpha_{0,i,j}\right)^2 }, \quad\quad\quad 
							\bG'\left( \bx \right)_{i,i}  =\sum_{j=1,\:j \neq i}^{p}\bG'\left( \bx \right)_{i,j},
						\end{equation*}
						for $i=1,\cdots,p$ and $i\neq j$. Following \cite{yan2012approximating}, we construct a matrix $\bS = (S_{i,j})$ to approximate the inverse of $ \bG'\left( \bx \right)$. Specifically, for any $i\neq j$, we set
						\begin{equation*}
							S_{i,j}= -\frac{1}{\calT},\quad 
							S_{i,i}= \left( \sum_{j=1,\:j \neq i}^{p}\frac{1}{p} \frac{\alpha_{0,i,j}}{\left( 1+\alpha_{0,i,j}\right)^2 }\right)^{-1}  -\frac{1}{\calT},\quad 
							\calT =  \sum_{1\leq i\neq j\leq p}  \frac{1}{p} \frac{\alpha_{0,i,j}}{\left( 1+\alpha_{0,i,j}\right)^2 }.
						\end{equation*}
						Note that 
						\begin{equation*}
							\frac{e^{-2\kappa_{0}}}{4p}\leq \frac{1}{p} \frac{\alpha_{0,i,j}}{\left( 1+\alpha_{0,i,j}\right)^2 }\leq \frac{1}{4p},
						\end{equation*}
						we have 
						\begin{equation*}
							\calT\in \left( \frac{(p-1)e^{-2\kappa_{0}}}{4},\frac{(p-1)}{4}\right).
						\end{equation*}
						By Theorem 1 in \cite{yan2012approximating}, with $m= e^{-2\kappa_{0}}/\left( 4p\right) $ and $M=1/\left( 4p\right) $,   there exists a big enough constant $C_{1}>0$, we have that 		
						\begin{eqnarray*}
							\left\| \bG'\left( \bx \right)^{-1}- \bS\right\|_{\max}&\leq& \frac{M}{m^2} \frac{pM+(p-2)m}{2(p-2)m}  \frac{1}{(p-1)^2} + \frac{1}{2m(p-1)^2}\\
							&=&\frac{1}{2m(p-1)^2}\left(1+ \frac{M}{m} +\frac{M^2}{m^2} \frac{p}{p-2}\right). \\
							&\leq &C_{1}\frac{e^{6\kappa_{0}}}{p^2},
						\end{eqnarray*}
						where $\|\bA\|_{\max}=\max_{i,j}A_{i,j}$. Then, there exists a big enough constant $C_{2}>0$,
						\begin{eqnarray*}
							\left\| \bG'\left( \bx \right)^{-1}\right\|_{\infty}&\leq&\left\| \bG'\left( \bx \right)- \bS\right\|_{\infty}+\left\|  \bS\right\|_{\infty}\\
							&\leq &p\left\| \bG'\left( \bx \right)^{-1}- \bS\right\|_{\max}+\frac{p}{\calT}+ \max_{i}\left( \sum_{j=1,\:j \neq i}^{p}\frac{1}{p} \frac{\alpha_{0,i,j}}{\left( 1+\alpha_{0,i,j}\right)^2 }\right)^{-1}\\
							&\leq &pC_{1}\frac{e^{6\kappa_{0}}}{p^2}+ 8e^{2\kappa_{0}}\\
							&< &C_{2}e^{2\kappa_{0}}.
						\end{eqnarray*}
						By Lemma \ref{mixing} and Lemma \ref{Bers}, there exist  big positive constants $C_{3}, c_1$ such that, with probability greater than $1-(np)^{-c_{1}}$, 
						\begin{align}\label{thm4_e4}
							&\left\| \bG\left( \btheta_{(0)}^*  \right)\right\|_{\infty}\\
							=&\max_{i}\left|\frac{1}{np}\sum_{t=1}^{n} \sum_{j=1,\:j \neq i}^{p} \left( X_{i,j}^t - \frac{ e^{\beta^*_{i,0}+\beta^*_{j,0}} }{1+ e^{\beta^*_{i,0}+\beta^*_{j,0}}}\right)\right|\nonumber\\
							=&\max_{i}\left|\frac{1}{np}\sum_{t=1}^{n} \sum_{j=1,\:j \neq i}^{p} \left( X_{i,j}^t - \E\left( X_{i,j}^t\right)  \right)\right|\nonumber\\
							\leq &\frac{1}{np}\max_{i}\left|\sum_{j=1,\:j \neq i}^{p}\left( \sum_{t=1}^{n}  \left( X_{i,j}^t - \E\left( X_{i,j}^t\right)  \right)\right) \right|\nonumber\\
							\leq&\frac{C_{3}}{np} \left( \sqrt{np\log(p)}+\sqrt{n\log(np)}+\log\left(n\right) \log\log\left(n\right) \log\left(np\right)\right) \nonumber\\
							<& 3C_{3}\sqrt{\frac{\log(n)\log(p)}{np}}.\nonumber
						\end{align}
						Then, for any $\by\in \mathbb{R}^p$ with $\|\by\|_{\infty}<\kappa_0$, we have that
						\begin{equation}\label{thm4_e5}
							\left\|\bG'\left( \btheta_{(0)}^* \right)^{-1}\bG\left( \btheta_{(0)}^* \right)\right\|_{\infty}\leq\left\|\bG'\left( \btheta_{(0)}^* \right)^{-1}\right\|_{\infty}\left\| \bG\left( \btheta_{(0)}^* \right)\right\|_{\infty}
							\leq 2C_{2}C_{3}e^{2\kappa_{0}}\sqrt{\frac{\log(n)\log(p)}{np}}.
						\end{equation}
						There exists a  big enough constant $C_{4}>0$, we have, for every $x,y\in \bB_{\infty}\left(0,\kappa \right)$, 
						\begin{eqnarray}\label{thm4_e6}
							&&\left\|\bG'\left( \bx \right)-\bG'\left( \by \right)\right\|_{\infty}\\
							&\leq& \max_{i}\left|\frac{1}{np}\sum_{t=1}^{n} \sum_{j=1,\:j \neq i}^{p} \left( X_{i,j}^t - \frac{ e^{x_{i}+x_{j}} }{1+ e^{x_{i}+x_{j}}}\right) - \frac{1}{np}\sum_{t=1}^{n} \sum_{j=1,\:j \neq i}^{p} \left( X_{i,j}^t - \frac{ e^{y_{i}+y_{j}} }{1+ e^{y_{i}+y_{j}}}\right) \right|\nonumber\\
							&\leq&\frac{1}{p}\max_{i}\left|\sum_{j=1,\:j \neq i}^{p}\frac{ e^{x_{i}+x_{j}} }{1+ e^{x_{i}+x_{j}}}-\frac{ e^{y_{i}+y_{j}} }{1+ e^{y_{i}+y_{j}}} \right|\nonumber\\
							&\leq&\frac{1}{p}\max_{i}\left|\sum_{j=1,\:j \neq i}^{p}\frac{ e^{z_{i}+z_{j}} }{1+ e^{z_{i}+z_{j}}}\left(x_{i}+x_{j} -y_{i}-y_{j}\right)  \right|\nonumber\\
							&\leq& C_{4}\left\|\bx-\by \right\|_{\infty}\nonumber
						\end{eqnarray}
						where $z_{i,j}:=\left( 1-c_{i,j}\right) \left( x_i+x_j\right) + c_{i,j}\left( y_i+y_j \right) $ with a series of constants $c_{i,j}\in (0,1)$.
						Combining the inequalities \eqref{thm4_e4}, \eqref{thm4_e5} and \eqref{thm4_e6}, we finish the proof of  \eqref{MME_con0}.
					\end{proof}
					\subsection{Proof of   \eqref{MME_con}}\label{p_WMME_con}
					\begin{proof}
						For brevity, $\bF\left(  \bx;\tilde{\btheta}_{(0)} \right)$ is denoted by $\bF\left(  \bx \right)$. Moreover, as all the conclusions hold uniformly for all $\tilde{\btheta}_{(0)}$, the argument ``uniformly for all $\tilde{\btheta}_{(0)}$” are also omitted in what follows.
						
						Note  that $\bF'\left( \bx \right)$ is a balanced symmetric matrix s.t.
						\begin{equation*}
							\bF'\left( \bx \right)_{i,j}  = \frac{1}{p}\frac{\tilde{\alpha}_{0,i,j}}{1+\tilde{\alpha}_{0,i,j}} \frac{e^{x_i+x_j}}{\left( 1+\tilde{\alpha}_{0,i,j}+e^{x_i+x_j}\right)^2 },\quad 
							\bF'\left( \bx \right)_{i,i}  =\sum_{j=1,\:j \neq i}^{p} \bF'\left( \bx \right)_{i,j},
						\end{equation*}
						for $i=1,\cdots,p$ and $i\neq j$.
						Following \cite{yan2012approximating}, we construct a matrix $\bS = (S_{i,j})$ to approximate the inverse of $ \bF'\left( \bx \right) $. Specifically, for all $i\neq j$, we set
						\begin{align*}
							S_{i,j}&= -\frac{1}{\calT},\quad 			S_{i,i}= \left( \sum_{j=1,\:j \neq i}^{p}\frac{1}{p}\frac{\tilde{\alpha}_{0,i,j}}{1+\tilde{\alpha}_{0,i,j}} \frac{e^{x_i+x_j}}{\left( 1+\tilde{\alpha}_{0,i,j}+e^{x_i+x_j}\right)^2 }\right)^{-1}  -\frac{1}{\calT},\\
							\calT &=  \sum_{1\leq i\neq j\leq p}  \frac{1}{p}\frac{\tilde{\alpha}_{0,i,j}}{1+\tilde{\alpha}_{0,i,j}} \frac{e^{x_i+x_j}}{\left( 1+\tilde{\alpha}_{0,i,j}+e^{x_i+x_j}\right)^2 }.
						\end{align*}
						Note that there exists a big enough constant $C_{1}>0$ s.t., for any $i \neq j$,
						\begin{equation*}
							\frac{C_{1}}{pe^{4\kappa_{0}+2\kappa_{1}}}\leq \frac{1}{p}\frac{\tilde{\alpha}_{0,i,j}}{1+\tilde{\alpha}_{0,i,j}} \frac{e^{x_i+x_j}}{\left( 1+\tilde{\alpha}_{0,i,j}+e^{x_i+x_j}\right)^2 } <\frac{1}{4p},
						\end{equation*}
						we have
						\begin{equation}\label{thm4_T}
							\calT =  \sum_{1\leq i\neq j\leq p} \frac{1}{p}\frac{\tilde{\alpha}_{0,i,j}}{1+\tilde{\alpha}_{0,i,j}} \frac{e^{x_i+x_j}}{\left( 1+\tilde{\alpha}_{0,i,j}+e^{x_i+x_j}\right)^2 }\quad \in \left(C_1pe^{-4\kappa_{0}-2\kappa_{1}}, \frac{p}{4}\right) .
						\end{equation}
						By Theorem 1 in \cite{yan2012approximating}, with $m= C_{1}/\left( pe^{4\kappa_{0}+2\kappa_{1}}\right) $ and $M=C_{1}/\left( 4p\right) $, there exists big enough constant $C_{2}>0$,  we have that 		
						\begin{eqnarray*}
							\left\|\bF'\left( \bx \right)^{-1}- \bS\right\|_{\max}&\leq& \frac{M}{m^2} \frac{pM+(p-2)m}{2(p-2)m}  \frac{1}{(p-1)^2} + \frac{1}{2m(p-1)^2}\\
							&=&\frac{1}{2m(p-1)^2}\left(1+ \frac{M}{m} +\frac{M^2}{m^2} \frac{p}{p-2}\right). \\
							&\leq &C_{2}\frac{e^{12\kappa_{0}+6\kappa_{1}}}{p^2},
						\end{eqnarray*}
						where $\|\bA\|_{\max}=\max_{i,j}A_{i,j}$.  Then we have that
						\begin{eqnarray}\label{thm4_e1}
							&&\left\|  \bF'\left( \bx \right)^{-1}\right\|_{\infty}\\
							&\leq&   \left\|\bS\right\|_{\infty}+ \left\|  \bF'\left( \bx \right)^{-1}- \bS\right\|_{\infty}\nonumber\\
							&\leq&\max_{i}\left( \sum_{j=1,\:j \neq i}^{p}\frac{1}{p}\frac{\tilde{\alpha}_{0,i,j}}{1+\tilde{\alpha}_{0,i,j}} \frac{e^{x_i+x_j}}{\left( 1+\tilde{\alpha}_{0,i,j}+e^{x_i+x_j}\right)^2 }\right)^{-1}  +\frac{p}{\calT} + p\left\|\bF'\left( \bx \right)^{-1}- \bS\right\|_{\max}\nonumber\\
							&<&2C_{2}e^{12\kappa_{0}+6\kappa_{1}}\nonumber.
						\end{eqnarray}
						Moreover, we have that
						\begin{eqnarray*}
							&&\left\|\bF\left( \btheta_{(1)}^* \right)\right\|_{\infty} \\
							&=&\max_{i}\left|-\frac{1}{np} \sum_{t=1}^{n} \sum_{j=1,\:j \neq i}^{p} \left\{X_{i,j}^tX_{i,j}^{t-1} -  \frac{\tilde{\alpha}_{0,i,j}}{1+\tilde{\alpha}_{0,i,j}}\left(1- \frac{1}{1+\tilde{\alpha}_{0,i,j}+\alpha_{1,i,j}^*}\right)\right\}\right|\\
							&\leq&\max_{i}\left|\frac{1}{np}\sum_{t=1}^{n} \sum_{j=1,\:j \neq i}^{p} \left\{X_{i,j}^tX_{i,j}^{t-1} -  \frac{\alpha_{0,i,j}^*}{1+\alpha_{0,i,j}^*}\left(1- \frac{1}{1+\alpha_{0,i,j}^*+\alpha_{1,i,j}^*}\right)\right\}\right|\\
							&&+\max_{i}\left|\frac{1}{np} \sum_{t=1}^{n} \sum_{j=1,\:j \neq i}^{p} \left\{\frac{\alpha_{0,i,j}^*}{1+\alpha_{0,i,j}^*} \frac{\alpha_{0,i,j}^*+\alpha_{1,i,j}^*}{1+\alpha_{0,i,j}^*+\alpha_{1,i,j}^*} -  \frac{\tilde{\alpha}_{0,i,j}}{1+\tilde{\alpha}_{0,i,j}} \frac{\tilde{\alpha}_{0,i,j}+\alpha_{1,i,j}^*}{1+\tilde{\alpha}_{0,i,j}+\alpha_{1,i,j}^*}\right\}\right|\\
							&=&L_{1}+L_{2}.
						\end{eqnarray*}
						By Lemma \ref{mixing} and Lemma \ref{Bers}, there exist  big positive constants $C_{3}, c_2$ s.t., with probability greater than $1-(np)^{-c_{2}}$, 
						\begin{eqnarray*}
							L_{1}&=&\frac{1}{np}\max_{i}\left|  \sum_{j=1,\:j \neq i}^{p} \sum_{t=1}^{n} X_{i,j}^tX_{i,j}^{t-1} - \E\left( X_{i,j}^t X_{i,j}^{t-1}\right)  \right|\\
							&\leq &\frac{1}{np}\max_{i}\left|\sum_{j=1,\:j \neq i}^{p}\left\{ \sum_{t=1}^{n}   X_{i,j}^tX_{i,j}^{t-1} - \E\left(X_{i,j}^tX_{i,j}^{t-1} \right)  \right\} \right|\nonumber\\
							&\leq&\frac{C_{3}}{np} \left( \sqrt{np\log(p)}+\sqrt{n\log(np)}+\log\left(n\right) \log\log\left(n\right) \log\left(np\right)\right) \nonumber\\
							&<& 3C_{3}\sqrt{\frac{\log(n)\log(p)}{np}}.
						\end{eqnarray*}
						Moreover, we have
						\begin{eqnarray*}
							&&L_{2}\\
							&=&\frac{1}{np}\max_{i}\Bigg| \sum_{t=1}^{n} \sum_{j=1,\:j \neq i}^{p} \Bigg\{\frac{\alpha_{0,i,j}^*}{1+\alpha_{0,i,j}^*}\left(1- \frac{1}{1+\alpha_{0,i,j}^*+\alpha_{1,i,j}^*}\right)\\
							&& -  \frac{\tilde{\alpha}_{0,i,j}}{1+\tilde{\alpha}_{0,i,j}}\left(1- \frac{1}{1+\tilde{\alpha}_{0,i,j}+\alpha_{1,i,j}^*}\right)\Bigg\}\Bigg|\\
							&\leq& \max_{i,j,\beta_{\xi}}\Bigg|  \left( \beta_{i,0}^*+\beta_{j,0}^* -\tilde{\beta}_{i,0}-\tilde{\beta}_{j,0}\right)  \frac{e^{\beta_{\xi,i,j}}}{\left( 1+e^{\beta_{\xi,i,j}}\right) ^2}\\
							&&+\frac{e^{\beta_{\xi,i,j}}\left( 1+e^{\beta_{\xi,i,j}}\right) \left( 1+e^{\beta_{\xi,i,j}}+\alpha_{1,i,j}^*\right)-e^{\beta_{\xi,i,j}}\left( \left( 2+\alpha_{1,i,j}^*\right) e^{\beta_{\xi,i,j}}+2e^{2\beta_{\xi,i,j}}\right) }{\left( 1+e^{\beta_{\xi,i,j}}\right) ^2\left( 1+e^{\beta_{\xi,i,j}}+\alpha_{1,i,j}^*\right)^2}\Bigg|\\
							&\leq& \max_{i,j,\beta_{\xi}}\Bigg|  \left( \beta_{i,0}^*+\beta_{j,0}^* -\tilde{\beta}_{i,0}-\tilde{\beta}_{j,0}\right) \left(  \frac{e^{\beta_{\xi,i,j}}} {\left( 1+e^{\beta_{\xi,i,j}}\right) ^2} +\frac{e^{\beta_{\xi,i,j}}\left( 1+\alpha_{1,i,j}^*-e^{2\beta_{\xi,i,j}} \right) }{\left( 1+e^{\beta_{\xi,i,j}}\right) ^2\left( 1+e^{\beta_{\xi,i,j}}+\alpha_{1,i,j}^*\right)^2}\right) \Bigg|\\
							&\leq& 2 	 \left\|\tilde{\btheta}_{(0)}-\btheta_{(0)}^*\right\|_{\infty},
						\end{eqnarray*}
						where $\beta_{\xi,i,j}:=\left( 1-c_{i,j}\right) \left( \beta_{i,0}^*+\beta_{j,0}^* \right) + c_{i,j}\left( \tilde{\beta}_{i,0}+\tilde{\beta}_{j,0} \right) $ with a series of constants $c_{i,j}\in (0,1)$. Then, by equation \eqref{MME_b1}, there exists a big enough constant $C_4>0$, we have, with probability tending to 1,
						\begin{eqnarray*}
							L_{2}	&\leq 2 \left\|\tilde{\btheta}_{(0)}-\btheta_{(0)}^*\right\|_{\infty}	    \leq& C_{4} \sqrt{\frac{\log(n)\log(p)e^{4\kappa_{0}}}{np}}.
						\end{eqnarray*}
						We can conclude that, with probability tending to 1,
						\begin{eqnarray*}
							\left|\bF\left( \btheta_{(1)}^* \right)\right|_{\infty}&\leq& \frac{1}{np} \left( L_{1}+L_{2}\right)\\
							&\leq&3C_{3}\sqrt{\frac{\log(n)\log(p)}{np}}+C_{4}\sqrt{\frac{\log(n)\log(p)e^{4\kappa_{0}}}{np}}\\
							&\leq&C_{5}\sqrt{\frac{\log(n)\log(p)e^{4\kappa_{0}}}{np}},
						\end{eqnarray*}
						hold uniformly for any $i$ at the same time.
						Consequently, we have
						\begin{eqnarray}\label{thm4_e2}
							\left\|\bF'\left(  \btheta_{(1)}^* \right)^{-1}\bF\left(  \btheta_{(1)}^* \right)\right\|_{\infty}&\leq&\left\|\bF'\left(  \btheta_{(1)}^* \right)^{-1}\right\|\left\|\bF\left(  \btheta_{(1)}^* \right)\right\|_{\infty}\\
							&\leq&2C_{2}C_{5}e^{12\kappa_{0}+6\kappa_{1}}\sqrt{\frac{\log(n)\log(p)e^{4\kappa_{0}}}{np}}.\nonumber
						\end{eqnarray}
						There exists an big enough constant $C_{6}>0$, we have, for every $x,y\in \bB_{\infty}\left(0,\kappa_{1} \right)$, 
						\begin{align}\label{thm4_e3}
							&\left\|\bF'\left( \bx \right)-\bF'\left( \by \right)\right\|_{\infty}\\
							\leq& \max_{i}\left|2\sum_{j=1,\:j \neq i}^{p}\left( \frac{1}{p}\frac{\tilde{\alpha}_{0,i,j}} {1+\tilde{\alpha}_{0,i,j}} \frac{e^{x_i+x_j}}{\left( 1+\tilde{\alpha}_{0,i,j}+e^{x_i+x_j}\right)^2 }-\frac{1}{p}\frac{\tilde{\alpha}_{0,i,j}}{1+\tilde{\alpha}_{0,i,j}} \frac{e^{y_i+y_j}}{\left( 1+\tilde{\alpha}_{0,i,j}+e^{y_i+y_j}\right)^2 }\right) \right|\nonumber\\
							=& \frac{2}{p}\max_{i}\left|\sum_{j=1,\:j \neq i}^{p} \frac{\tilde{\alpha}_{0,i,j}}{1+\tilde{\alpha}_{0,i,j}}\left(\frac{e^{x_i+x_j}}{\left( 1+\tilde{\alpha}_{0,i,j}+e^{x_i+x_j}\right)^2}- \frac{e^{y_i+y_j}}{\left( 1+\tilde{\alpha}_{0,i,j}+e^{y_i+y_j}\right)^2 }\right) \right|\nonumber\\
							=& \frac{1}{p}\max_{i}\left|\sum_{j=1,\:j \neq i}^{p} \frac{\tilde{\alpha}_{0,i,j}}{1+\tilde{\alpha}_{0,i,j}} \frac{e^{z_{i,j}}}{\left( 1+\tilde{\alpha}_{0,i,j}+e^{z_{i,j}}\right)^3 }\left|x_i+x_j -y_i-y_j  \right|\right|\nonumber\\
							\leq& C_{6} \left\|\bx-\by \right\|_{\infty},\nonumber
						\end{align}
						where $z_{i,j}:=\left( 1-d_{i,j}\right) \left( x_i+x_j\right) + d_{i,j}\left( y_i+y_j \right) $ with a series of constants $d_{i,j}\in (0,1)$.
						Combining the inequalities \eqref{thm4_e1}, \eqref{thm4_e2} and \eqref{thm4_e3}, we finish the proof of \eqref{MME_con}.
					\end{proof}
						\subsection{Proof of  Corollary \ref{cor1}}
						\begin{proof}
							When the condition $-\kappa_{0}\leq \beta_{i,0}\leq C_{\kappa}$ and $\|\bbeta_{1}\|_{\infty}\le \kappa_{1} $,  where $C_{\kappa}=O(1)$, is satisfied, there exists a constant $C_{1}>0$ 
							such that, for all $1\leq i\neq j\leq p$ and $\btheta \in \bB_{\infty}\left( \btheta^*,r \right)$ with $r = c_r e^{-2\kappa_{0}-4\kappa_{1}}$ for  a small enough constant $c_r>0$, it holds
							\begin{align*}
								\E(\bV_2(\btheta)+\bV_1(\btheta))_{i,j}\geq C_{1} \frac{ e^{-4\kappa_{1} }}{p},\quad\E(\bV_2(\btheta)+\bV_3(\btheta))_{i,j}\geq C_{1} \frac{ e^{-4\kappa_{1}- 2\kappa_{0}}}{p},
							\end{align*}
							and
							\begin{align*}
								\E(\bV_2(\btheta))_{i,j}\geq C_{1} \frac{ e^{-4\kappa_{1}- 2\kappa_{0}}}{p}.
							\end{align*}
							Then there exists a constant $C_{2}>0$, for all $\btheta \in \bB_{\infty}\left( \btheta^*,r \right)$ such that we have
							\begin{align*}
								\left\|\E(\bV(\btheta))\right\|_{2}\geq C_{2}e^{-4\kappa_{1}- 2\kappa_{0}}.
							\end{align*}
							With similar arguments as in the proofs of Lemma \ref{bound_l2} and  Theorems \ref{thm1} and \ref{thm2}, we can prove Corollary \ref{cor1}. 
						\end{proof}
						\subsection{Proof of  Corollary \ref{cor2}}
						\begin{proof}
							The first inequality can be proved using the results in  Corollary \ref{cor1},  along with analogous reasoning employed in the proof of Theorem \ref{thm1}.			Replace equations \eqref{thm3_con} and \eqref{thm3_s} by
							\begin{eqnarray*}
								r_s&=&  e^{4\kappa_{0}+8\kappa_{1}}\log\log(np)\frac{ \sqrt{\log(np)} }{\left( np\right) ^{s}}\left(1+  \frac{\log(np)}{\sqrt{p}} \right);\\
								s_{0}&= &\frac{6 \kappa_{0}+12\kappa_{1} +\log\log(np)/2 +\log\log\log(np)+\log\left(1+  \frac{\log(np)}{\sqrt{p}} \right)-\log(c_{r})}{\log(np)}.
							\end{eqnarray*}
							Similar to the proof of Theorem \ref{thm3}, let $z_{1}=0.5e^{-2\kappa_{0}-4\kappa_{1}}[\log\log(np)]^{-1}$ $(np)^{s-1/2}$ and $z_{2}=0.5e^{2\kappa_{0}+4\kappa_{1}}$ $ \log\log(np) (np)^{1/2-s}$. We can assert the existence of a sufficiently large constant  $C_{1}>0$ such that  with probability greater than $1-(np)^{-c_2}$,
							\begin{eqnarray*}
								&&
								\left|l\left(\btheta_{(i)}^*, \hat{\btheta}^{(s)}_{(-i)}\right) -l\left(\btheta_{(i)},\hat{\btheta}^{(s)}_{(-i)} \right)  - \left[ l\left(\btheta_{(i)}^{*},\btheta_{(-i)}^{*}\right) - l\left(\btheta_{(i)},\btheta_{(-i)}^{*}\right) \right]\right|\\
								&\leq& C_{1} \frac{e^{6\kappa_{0}+12\kappa_{1}} \log\log(np)\log(np)}{(np)^{s+1/2}}\left(1+  \frac{\log(np)}{\sqrt{p}} \right)^2\nonumber
							\end{eqnarray*}
							holds uniformly for all $s\in[s_0,1/2]$ and $\btheta_{(i)}\in \bB_{\infty}\left( \btheta^*_{(i)},r_s\right)$. Following similar steps, we can ascertain that as $np\rightarrow \infty$ with $n\ge 2$, with probability converging  to one it holds that
							\begin{equation*}
								\left\|\hat{\btheta}-\btheta^*\right\|_{\infty}\leq C e^{4\kappa_{0}+8\kappa_{1}} \log\log(np) \sqrt{\frac{\log(np)}{np}}\left(1+ \frac{\log(np)}{\sqrt{p}} \right).
							\end{equation*}
						\end{proof}
						\subsection{Proof of  Corollary \ref{cor3}}
						\begin{proof}
							There exist positive constants $C_{1}$ and $C_{2}$ such that
							\begin{equation*}
								\sum_{1\leq i\neq j\leq p} \frac{1}{p}\frac{\tilde{\alpha}_{0,i,j}}{1+\tilde{\alpha}_{0,i,j}} \frac{e^{x_i+x_j}}{\left( 1+\tilde{\alpha}_{0,i,j}+e^{x_i+x_j}\right)^2 }\quad \in \left(C_1pe^{-2\kappa_{0} -2\kappa_{1}}, \frac{p}{4}\right),
							\end{equation*}
							for all $\|\bx\|_{\infty}\leq \kappa_{1}$. Then the inequalities in \eqref{MME_con} can be updated to
							\begin{eqnarray*}
								&&\left\|\bF'\left( \bx ;\tilde{\btheta}_{(0)}\right)-\bF'\left( \by ;\tilde{\btheta}_{(0)}\right)\right\|_{\infty}\leq C_{2} \left\|\bx-\by \right\|_{\infty},\\
								&&\left\|\bF'\left( \btheta_{(1)}^*;\tilde{\btheta}_{(0)} \right)^{-1}\right\|_{\infty}\leq C_{2} e^{2\kappa_{0}+6\kappa_{1}},\nonumber\\ 
								&&\left\|\bF'\left(  \btheta_{(1)}^*;\tilde{\btheta}_{(0)} \right)^{-1}\bF\left(  \btheta_{(1)}^* ;\tilde{\btheta}_{(0)}\right)\right\|_{\infty}\leq C_{2} e^{4\kappa_{0}+6\kappa_{1}}\sqrt{\frac{\log(n)\log(p)}{np}},\nonumber
							\end{eqnarray*}
							for all  $\|\bx\|_{\infty}\leq \kappa_{1}$, $\|\by\|_{\infty}\leq \kappa_{1}$. 
							Consequently,  with Lemma \ref{Lemma_IMT}, similar to the proof of Theorem \ref{thm4},  we can prove Corollary \ref{cor3}.
						\end{proof}
					
					\subsection{Proof of the Theorem \ref{thm5}}
					\begin{proof}
						To simplify the notations, in what follows we shall denote $l_{i,j}(\theta_i,\theta_j)$ as $l_{i,j}(\btheta)$ instead.	Let $K=\lfloor c_0\log(np)\rfloor+1 $  for some big enough constant $c_0>0$, where $\lfloor\cdot\rfloor$ is the smallest integer function. By the Taylor expansion with Lagrange remainder we have, for any $\btheta\in \bB_\infty(\btheta', \alpha_0)$, there exists a $\btheta^{\xi}\in \bB_\infty(\btheta', \alpha_0)$ dependent on $\btheta$ s.t.,
						\begin{eqnarray*}
							&&\bL\left( \btheta\right)-\bL\left( \btheta'\right)\\
							&=&\sum_{I_1=1}^{p} \left( \theta_{I_1}-\theta_{I_1}'\right) \frac{\partial \bL\left( \btheta'\right) }{\partial \theta_{I_1}}\\
							&&+\frac{1}{2!}\sum_{I_1=1}^{p}\sum_{I_2=1}^{p} \left( \theta_{I_1}-\theta_{I_1}'\right)\left( \theta_{I_2}-\theta_{I_2}' \right)   \frac{\partial^2 \bL\left( \btheta'\right) }{\partial \theta_{I_2}\partial \theta_{I_2}}\\
							&&+\cdots\\
							&&+ \frac{1}{K!}\sum_{I_1=1}^{p}\sum_{I_2=1}^{p}\cdots\sum_{I_K=1}^{p} \left( \prod_{\ell=1}^{K}\left( \theta_{I_\ell}-\theta_{I_\ell}'\right) \right)   \frac{\partial^K \bL\left( \btheta'\right) }{\partial \theta_{I_1}\partial \theta_{I_2}\cdots\partial \theta_{I_K}}\\
							&&+ \frac{1}{\left( K+1\right) !}\sum_{I_1=1}^{p}\sum_{I_2=1}^{p}\cdots\sum_{I_{K+1}=1}^{p} \left( \prod_{\ell=1}^{K+1}\left( \theta_{I_\ell}-\theta_{I_\ell}'\right) \right)   \frac{\partial^{K+1} \bL\left( \btheta^{\xi}\right) }{\partial \theta_{I_1}\partial \theta_{I_2}\cdots\partial \theta_{I_{K+1}}}\\
							&=&\frac{1}{p}\sum_{k=1}^{K}\left\{\frac{1}{k!} \sum_{1\leq i\neq j\leq p  }\left( Y_{i,j} \sum_{s=0}^{k} \tbinom{k}{s}\left( \theta_{i}-\theta_{i}'\right)^{s}\left( \theta_{j}-\theta_{j}' \right)^{k-s} \frac{\partial^k l_{i,j}\left( \btheta'\right) }{\partial \theta_{i}^{s} \partial \theta_{j}^{k-s} }   \right)   \right\}\\
							&&+\frac{1}{p}\frac{1}{\left( K+1\right) !} \sum_{1\leq i\neq j\leq p  }\left( Y_{i,j} \sum_{s=0}^{K+1} \tbinom{K+1}{s}\left( \theta_{i}-\theta_{i}'\right)^{s}\left( \theta_{j}-\theta_{j}' \right)^{K+1-s} \frac{\partial^{K+1} l_{i,j}\left( \btheta^{\xi}\right) }{ \partial \theta_{i}^{s} \partial \theta_{j}^{K+1-s} }   \right)\\
							&\leq&\left|\sum_{k=1}^{K} \frac{1}{p}\frac{1}{k!} \sum_{1\leq i\neq j\leq p  } Y_{i,j}\left(  \left( \theta_{i}-\theta_{i}'\right)^{k}\frac{\partial^k l_{i,j}\left( \btheta'\right) }{\partial \theta_{i}^{k} } + \left( \theta_{j}-\theta_{j}' \right)^{k} \frac{\partial^k l_{i,j}\left( \btheta'\right) }{ \partial \theta_{j}^{k}} \right)      \right|\\
							&&+ \left|\sum_{k=2}^{K}\frac{1}{p}\frac{1}{k!} \sum_{1\leq i\neq j\leq p  }\left( Y_{i,j} \sum_{s=1}^{k-1} \tbinom{k}{s}\left( \theta_{i}-\theta_{i}'\right)^{s}\left( \theta_{j}-\theta_{j}' \right)^{k-s} \frac{\partial^k l_{i,j}\left( \btheta'\right) }{\partial \theta_{i}^{s} \partial \theta_{j}^{k-s} }   \right)   \right|\\
							&&+\frac{1}{p}\left|\frac{1}{\left( K+1\right) !} \sum_{1\leq i\neq j\leq p  }\left( Y_{i,j} \sum_{s=0}^{K+1} \tbinom{K+1}{s}\left( \theta_{i}-\theta_{i}'\right)^{s}\left( \theta_{j}-\theta_{j}' \right)^{K+1-s} \frac{\partial^{K+1} l_{i,j}\left( \btheta^{\xi}\right) }{ \partial \theta_{i}^{s} \partial \theta_{j}^{K+1-s} }   \right)\right|\\
							&=&S^{(1)} + S^{(2)} + S^{(3)}.
						\end{eqnarray*}
						We first consider $S^{(1)} $. By  Lemma \ref{Bers}, there exist  big enough constants $C_1>0, c_1>0$ such that, uniformly for all  $\btheta\in\bB_{\infty}\left(\btheta',\alpha_{0} \right)$  and all $k=1,\cdots,K$, we have,   with probability greater than $1-(np)^{-c_1}$,
						\begin{eqnarray*}
							&&\frac{1}{p}\frac{1}{k!}\left| \sum_{1\leq i\neq j\leq p  } Y_{i,j}\left(  \left( \theta_{i}-\theta_{i}'\right)^{k}\frac{\partial^k l_{i,j}\left( \btheta'\right) }{\partial \theta_{i}^{k} }+ \left( \theta_{j}-\theta_{j}' \right)^{k}\frac{\partial^k l_{i,j}\left( \btheta'\right) }{\partial \theta_{j}^{k} } \right)    \right|\\
							&\leq &\frac{2}{p}\frac{1}{k!}  \sum_{i=1}^{p}  \left| \frac{\theta_i-\theta_i'}{\alpha}\right|^{k} \left| \sum_{j\neq i,j=1}^{p} \alpha^kY_{i,j}\frac{\partial^k l_{i,j}\left( \btheta'\right) }{\partial \theta_{j}^{k} } \right| \\
							&\leq &\frac{2}{p}\frac{1}{k!} \max_{i}\left|\sum_{j\neq i,j=1}^{p} \alpha^k Y_{i,j} \frac{\partial^k l_{i,j}\left( \btheta'\right) }{\partial \theta_{j}^{k} }  \right|\sum_{i=1}^{p}  \left| \frac{\theta_i-\theta_i'}{\alpha}\right|^{k}\\
							&\leq& \frac{C_1}{p}\frac{1}{k!}  \left(b_{(p)}\log(np)+ \sigma_{(p)}\sqrt{p\log(np)} \right)\left(k-1 \right)!  \left\|\frac{\btheta-\btheta'} {\alpha}\right\|_{k}^{k}\\
							&=&C_1 \left\|\frac{\btheta-\btheta'} {\alpha}\right\|_{k}^{k} \frac{b_{(p)} \log(np)+ \sigma_{(p)}\sqrt{p\log(np)}}{kp}.
						\end{eqnarray*}
						Consequently, with probability greater than $1-(np)^{-c_1}$ we have:
						\begin{eqnarray*}
							S^{(1)}&\leq&\sum_{k=1}^{K}\frac{1}{p}\frac{1}{k!}\left| \sum_{1\leq i\neq j\leq p  } Y_{i,j}\left(  \left( \theta_{i}-\theta_{i}'\right)^{k}\frac{\partial^k l_{i,j}\left( \btheta'\right) }{\partial \theta_{i}^{k} }+ \left( \theta_{j}-\theta_{j}' \right)^{k}\frac{\partial^k l_{i,j}\left( \btheta'\right) }{\partial \theta_{j}^{k} } \right)    \right|\\
							&\leq&C_1 \sum_{k=1}^{K}\left\|\frac{\btheta-\btheta'} {\alpha}\right\|_{k}^{k}\frac{b_{(p)} \log(np)+ \sigma_{(p)}\sqrt{p\log(np)}}{kp}\\
							&\leq& C_1 \frac{b_{(p)} \log(np)+ \sigma_{(p)}\sqrt{p\log(np)}}{p} \sum_{k=1}^{K}\frac{1}{k}\left\|\frac{\btheta-\btheta'} {\alpha}\right\|_{k}^{k},
						\end{eqnarray*}
						holds uniformly for all  $\btheta\in\bB_{\infty}\left(\btheta',\alpha_{0} \right)$.
						Note that for any $\bx=\left( x_1,\cdots,x_{p}\right)^{\top} $ s.t. $\|\bx\|_{\infty}\leq a<1$ for some constant $a>0$, we have
						\begin{equation}\label{thm5_1}
							\sum_{k=1}^{K}\frac{1}{k}\left\|\bx \right\|_{k}^{k}=\sum_{k=1}^{K}  \sum_{i=1}^{p} \frac{1}{k}x_{i}^{k}=  \sum_{i=1}^{p}x_{i}\sum_{k=1}^{K}   \frac{x_{i}^{k-1}}{k}
							\leq \|\bx\|_{1}\sum_{k=1}^{K} \frac{a^{k-1}}{k} \leq \left\|\bx \right\|_{1}\left(-\frac{\log(1-a)}{a} \right).
						\end{equation}
						Here in last step we have used $-\log(1-a)=\sum_{k=1}^{\infty}a^{k}/k$.
						With the fact that $\left\|\frac{\btheta-\btheta'}{\alpha}\right\|_{\infty}<\alpha_0/\alpha<1/2$ holds for any  $\btheta\in\bB_{\infty}\left(\btheta',\alpha_{0} \right)$, we have, with probability greater than $1-(np)^{-c_1}$,
						\begin{eqnarray*}
							S^{(1)}	&\leq &-\frac{\log(1/2)}{1/2}  C_1\frac{b_{(p)} \log(np)+ \sigma_{(p)}\sqrt{p\log(np)}}{p}\left\|\btheta-\btheta'\right\|_{1}\\
							&\leq &2  C_1\frac{b_{(p)} \log(np)+ \sigma_{(p)}\sqrt{p\log(np)}}{p}\left\|\btheta-\btheta'\right\|_{1}.
						\end{eqnarray*}
						
						Next we derive an upper bound for $S^{(2)}$. Define a series of random $p\times p$ matrices $\{\bY^{s}_{k}: k=2,\ldots, K, s=1,\cdots,k-1\}$ with the $(i,j)$-th elements of $\bY^{s}_{k}$ given as
						\begin{equation*}
							\left( \bY^{s}_{k}\right) _{i,j} = Y_{i,j}\alpha^{k} \frac{\partial^k l_{i,j}\left( \btheta'\right) }{\partial \theta_{i}^{s}\partial \theta_{j}^{k-s}},\quad 1\leq i\neq j \leq p.
						\end{equation*}
						Further for any $k=2,\ldots, K$, define a random $(k-1)p\times (k-1)p$ matrix $\bW_{k}$ as: 
						\begin{equation*}
							\bW_{k} = \left[
							\begin{array}{ccccc}
								0 &   &   &   &\bY^{1}_{k}\\
								&   &   &\bY^{2}_{k}&   \\
								&   &\iddots&   &    \\
								&\bY^{k-2}_{k}&   &   & \\
								\bY^{k-1}_{k} &   &   &   &    0   \\
							\end{array}
							\right].
						\end{equation*}
						Also we define a series of $p\times 1$ vectors, $\{ \bz^{(s)}_{k} :  k=2,\ldots, K, s=1,\cdots,k-1\}$  with $i$-th element of $\bz^{(s)}_{k} $ given as:
						\begin{equation*}
							\left( \bz^{(s)}_{k}\right)_i =\left(0.5\alpha\right)^{-s} \left(\theta_i-\theta_i'\right)^{s}\sqrt{\tbinom{k}{s}}.
						\end{equation*}
						For any $k=2,\ldots, K$, by denoting $\tilde{\bz}_{k}$ as:
						\begin{equation*}
							\tilde{\bz}_{k} = \left[
							\begin{array}{c}
								\bz_{k}^{(1)} \\
								\vdots  \\
								\bz_{k}^{(k-1)}   \\
							\end{array}
							\right],
						\end{equation*}
						we have:
						\begin{eqnarray}\label{S2_bound}
							&&\frac{1}{p}\frac{1}{k!}\left| \sum_{1\leq i\neq j\leq p  }\left( Y_{i,j} \sum_{s=1}^{k-1} \tbinom{k}{s}\left( \theta_{i}-\theta_{i}'\right)^{s}\left( \theta_{j}-\theta_{j}' \right)^{k-s} \frac{\partial^k l_{i,j}\left( \btheta'\right) }{\partial \theta_{i}^{s}\partial \theta_{j}^{k-s}}   \right)   \right|\\
							&=&\frac{1}{p}\frac{1}{2^{k}k!}\left|\sum_{s=1}^{k-1}\left( \tilde{\bz}_{k}^{(k-s)}\right) ^{\top}\bY_{k}^{s}\tilde{\bz}_{k}^{(s)}\right|\nonumber\\
							&=&\frac{1}{p}\frac{1}{2^{k}k!}\left|\tilde{\bz}_{k}^{\top}\bW_{k}\tilde{\bz}_{k} \right| \nonumber\\
							&\leq& \frac{1}{p}\frac{1}{2^{k}k!}\|\tilde{\bz}_{k} \|_2^2\| \bW_{k} \|_2.\nonumber
						\end{eqnarray}
						We remark that by formulating the confounding terms in $S^{(2)}$ via $\{\bz^{(s)}_{k},\bW_{k} \}$, we have established in  \eqref{S2_bound} an upper bound that depends on the parameters in $\{\bz^{(s)}_{k},k=2,\ldots, K\}$ and the random matrices $\{\bW_k,2,\ldots,K\}$ separately. 	
						Using the fact that $\sum_{l=0}^{\infty}\tbinom{l+s}{s}0.5^{l+s+1}=1$, we have
						\begin{equation*}
							\sum_{k=s+1}^{K}\tbinom{k}{s}0.5^{k}< \sum_{l=0}^{\infty}\tbinom{l+s}{s}0.5^{l+s}= 2.
						\end{equation*}
						Consequently,  there exists a big enough constant $C_{2}>0$ such that,  for all $\btheta\in\bB_{\infty}\left(\btheta',\alpha_{0} \right)$, we have
						\begin{eqnarray}\label{z_bound}
							\sum_{k=2}^{K}\frac{0.5^{k}}{k}\|\tilde{\bz}_{k} \|_2^2&=&\sum_{k=2}^{K}\frac{0.5^{k}}{k}\sum_{i=1}^{p}\sum_{s=1}^{k-1}\left(0.5\alpha\right)^{-2s} \left(\theta_i-\theta_i'\right)^{2s}\tbinom{k}{s}\\
							&=&\sum_{k=2}^{K}\frac{0.5^{k}}{k}\sum_{s=1}^{k-1}\tbinom{k}{s}\left(0.5\alpha\right)^{-2s}\|\btheta-\btheta'\|_{2s}^{2s}\nonumber \\
							&\leq&\sum_{s=1}^{K-1}\frac{1}{s}\left\|\frac{\btheta-\btheta'} {\alpha/2}\right\|_{2s}^{2s}\left(  \sum_{k=s+1}^{K}\tbinom{k}{s}0.5^{k}\right)\nonumber\\
							&<&2\sum_{s=1}^{K-1}\frac{1}{s}\left\|\frac{\btheta-\btheta'} {\alpha/2}\right\|_{2s}^{2s} \nonumber\\
							&< &2\sum_{s=1}^{K-1}\frac{1}{s}\left\|\frac{\btheta-\btheta'} {\alpha/2}\right\|_{s}^{s} \nonumber\\
							&\leq&C_{2}\left\|\btheta-\btheta'\right\|_{1}.\nonumber
						\end{eqnarray}
						Here the last step follows from \eqref{thm5_1}.
						With similar arguments as in the proof of the matrix Bernstein inequality (c.f. Lemma \ref{BersM}), we can show that uniformly for all $k\leq K=\lfloor c_0\log(np)\rfloor+1 $,  there exist big enough constants $C_3>0$ and $c_2>0$, such that with probability greater than $1-(np)^{c_2}$, 
						\begin{equation}\label{W_bound}
							\| \bW_{k} \|_2 \leq C_3\left( k-1\right)! \left( b_{(p)} \log(np)+ \sigma_{(p)}\sqrt{p\log(np)}\right).
						\end{equation}
						For brevity, the proof of inequality \eqref{W_bound} is provided independently in Section \eqref{p_W_bound}. Consequently, combining \eqref{S2_bound}, \eqref{z_bound} and \eqref{W_bound} and  $K=\lfloor c_0\log(np)\rfloor+1$, we  conclude that  with probability greater than $1-(np)^{-c_2}$,
						\begin{eqnarray*}
							S^{(2)}&\leq&\sum_{k=2}^{K}\frac{1}{p}\frac{1}{k!}\left| \sum_{1\leq i\neq j\leq p  }\left( Y_{i,j} \sum_{s=1}^{k-1} \tbinom{k}{s}\left( \theta_{i}-\theta_{i}'\right)^{s}\left( \theta_{j}-\theta_{j}' \right)^{k-s} \frac{\partial^k l_{i,j}\left( \btheta'\right) }{\partial \theta_{i}^{k} }   \right)   \right|\\
							&\leq &\sum_{k=2}^{K} \frac{1}{p}\frac{1}{2^kk!}\|\tilde{\bz}_{k} \|_2^2\| \bW_{k} \|_2 \\
							&\leq &C_3\frac{b_{(p)} \log(np)+ \sigma_{(p)}\sqrt{p\log(np)}}{p}\sum_{k=2}^{K}  \frac{1}{k2^k}\|\tilde{\bz}_{k}\|_2^2\\
							&\leq& C_{2}C_3\frac{b_{(p)} \log(np)+ \sigma_{(p)}\sqrt{p\log(np)}}{p} \left\|\btheta-\btheta'\right\|_{1},
						\end{eqnarray*}
						uniformly for all  $\btheta\in\bB_{\infty}\left(\btheta',\alpha_{0} \right)$.	
						Finally, we derive an upper bound for $S^{(3)}$.	By condition (L-A1),  when $c_0$ is chosen to be big enough, there exists a big enough constant $c_3>1$, such that, uniformly for all $\btheta\in \bB_\infty(\btheta', \alpha_0)$ and $\btheta^{\xi}$, we have
						\begin{eqnarray*}
							&&S^{(3)}\\
							&=&\frac{1}{p}\left|\frac{1}{\left( K+1\right) !} \sum_{1\leq i\neq j\leq p  }\left( Y_{i,j} \sum_{s=0}^{K+1} \tbinom{K+1}{s}\left( \theta_{i}-\theta_{i}'\right)^{s}\left( \theta_{j}-\theta_{j}' \right)^{K+1-s} \frac{\partial^{K+1} l_{i,j}\left( \btheta^{\xi}\right) }{\partial \theta_{i}^{s} \partial \theta_{j}^{K+1-s} }   \right) \right|\\
							&\leq &\frac{1}{p}\frac{1}{K+1}\sum_{1\leq i\neq j\leq p} \left| Y_{i,j} \right|\left(\frac{\left| \theta_{i}-\theta_{i}' \right|}{\alpha} + \frac{\left| \theta_{j}-\theta_{j}' \right|}{\alpha}\right)^{K+1} \\
							&\leq & \frac{pb_{(p)}}{K+1}\left(\frac{2\left\| \btheta-\btheta' \right\|_{\infty}}{\alpha} \right)^{K+1}\\
							&\leq& \frac{pb_{(p)}}{K+1} \left( \frac{2\alpha_{0}}{\alpha}\right) ^{c_0\log\left(np \right) }\\
							&\leq& b_{(p)} \left( np\right)^{-c_3}.
						\end{eqnarray*}
						Here the last step will hold if we choose $c_0$ to be large enough such that $(2\alpha_{0}/\alpha)^{c_0/(c_3+1)}$$<e^{-1}$. When $np\rightarrow \infty$, and $c_0, c_3$ are chosen to be large enough,  this bound will be dominated by the upper bounds for $S^{(1)}$ and $S^{(2)}$.
						
						Combining the upper bound on $S^{(1)},S^{(2)}$ and $S^{(3)}$, we conclude that, for any given $\btheta'$, there exist large enough constants $C>0$, $c>0$ which are independent of $\btheta'$ such that for any $\btheta\in \bB_{\infty}\left( \btheta',\alpha_{0}\right)$ where $\alpha_{0}\in (0,\alpha/2)$ at the same time with probability greater than $1-(np)^{-c}$,
						\begin{equation*}
							\left|\bL\left( \btheta\right)-\bL\left( \btheta'\right)\right|
							\le 	C\frac{b_{(p)} \log(np)+ \sigma_{(p)}\sqrt{p\log(np)}}{p} \left\|\btheta-\btheta'\right\|_{1}.
						\end{equation*}
					\end{proof}
					\subsection{Proof of inequality  \eqref{W_bound}}\label{p_W_bound}
					\begin{proof}
						For any $k=2,\ldots, K$ and $1\leq i\neq j \leq p$, let  $\bW_{k,i,j}$ be defined by  keeping the $(i,j)$th element of all the $\bY^{s}_{k}, s=1, \cdots,k-1$  in $\bW_k$ unchanged, and setting all other   elements  to be  zero. Then the random matrices $\bW_{k,i,j}$, $1\leq i\neq j \leq p$   are independent, and
						\begin{equation*}
							\sum_{1\leq i\neq j\leq p } \bW_{k,i,j} = \bW_k.
						\end{equation*}
						For any $\ba\in \mathbb{R}^{(k-1)p}$, we can write it as
						\begin{equation*}
							\ba = \left[
							\begin{array}{c}
								\ba^{(1)} \\
								\vdots  \\
								\ba^{(k-1)}   \\
							\end{array}
							\right],
						\end{equation*}
						with $\ba^{(s)}=(a^{(s)}_1, \ldots, a^{(s)}_p)^\top\in \mathbb{R}^{p}$, $r=1,2,\cdots,k-1$.	Then   for any $k=2,\ldots, K$ and $1\leq i\neq j \leq p$, we have,
						\begin{eqnarray*}
							\|\bW_{k,i,j}\|_2 &=& \sup_{\|\ba\|_2=1} \|\bW_{k,i,j}\ba\|_2\\
							&=& \sup_{\|\ba\|_2=1} \left(  \sum_{s=1}^{k-1}\left( \alpha^kY_{i,j}\frac{\partial^k l_{i,j}\left( \btheta'\right) }{\partial \theta_{i}^{s}\partial \theta_{j}^{k-s}}  a^{(k- s)}_j\right) ^2  \right)^{1/2}\\
							&\leq& \max_{1\leq s\leq k}\left|\alpha^kY_{i,j}\frac{\partial^k l_{i,j}\left( \btheta'\right) }{\partial \theta_{i}^{s}\partial \theta_{j}^{k-s}} \right|\sup_{\|\ba\|_2=1} \left( \sum_{s=1}^{k-1}  \left( a^{(k- s)}_j\right) ^2 \right)^{1/2}\\
							&\leq& \left( k-1\right)!b_{(p)}.
						\end{eqnarray*}
						On the other hand,
						\begin{eqnarray*}
							&& \max\left\{\left\|\sum_{1\leq i\neq j\leq p }\E\left( \bW_{k,i,j}\bW_{k,i,j}^{\top}\right) \right\|_2 ,\left\|\sum_{1\leq i\neq j\leq p }\E\left( \bW_{k,i,j}^{\top}\bW_{k,i,j}\right) \right\|_2 \right\}	\\
							&=&\max\Bigg\{ \sup_{\|\ba\|_2=1}\left| \ba^{\top}\left( \sum_{1\leq i\neq j\leq p } \E\left( \bW_{k,i,j}\bW_{k,i,j}^{\top}\right) \right) \ba\right|,\\
							&&\sup_{\|\ba\|_2=1} \left|\ba^{\top}\left( \sum_{1\leq i\neq j\leq p } \E\left( \bW_{k,i,j}^{\top}\bW_{k,i,j}\right) \right) \ba\right|    \Bigg\}\\
							&=&\max\Bigg\{ \sup_{\|\ba\|_2=1} \left| \sum_{1\leq i\neq j\leq p } \sum_{s=1}^{k-1}\left(\var\left( Y_{i,j} \right)  \alpha^{2k}\left|\frac{\partial^k l_{i,j}\left( \btheta'\right) }{\partial \theta_{i}^{s}\partial \theta_{j}^{k-s}} \right|^2\left( a^{(k-s)}_j \right) ^2 \right)\right|,\\
							&&\sup_{\|\ba\|_2=1} \left| \sum_{1\leq i\neq j\leq p } \sum_{s=1}^{k-1}\left(\var\left( Y_{i,j} \right)  \alpha^{2k}\left|\frac{\partial^k l_{i,j}\left( \btheta'\right) }{\partial \theta_{i}^{s}\partial \theta_{j}^{k-s}} \right|^2\left( a^{(s)}_i \right) ^2\right)\right|   \Bigg\}\\
							&\leq& \max_{i,j,s}\left( \var\left(Y_{i,j} \right)\alpha^{2k}\left|\frac{\partial^k l_{i,j}\left( \btheta'\right) }{\partial \theta_{i}^{s}\partial \theta_{j}^{k-s}} \right|^2\right)\sup_{\|\ba\|_2=1} \left|\sum_{1\leq i\neq j\leq p } \sum_{s=1}^{k-1} \left( a^{(k-s)}_i \right) ^2 +\left( a^{(s)}_j \right) ^2 \right| \\
							&\leq &2p\left( \left( k-1\right)!\right)^2\sigma_{(p)}^2.
						\end{eqnarray*}
						Using the general Matrix Bernstein inequality (c.f. Theorem 6.17 and equation (6.43) of \cite{wainwright2019high}), we have,
						\begin{align*}
							P\left( \left\| \bW_{k} \right\|_{2} > \epsilon\right) &= P\left( \left\|\sum_{1\leq i<j\leq p } \bW_{k,i,j}\right\|_{2}>\epsilon \right)\\
							&\leq 2p \ \exp\left( -\frac{\epsilon^2}{\left(p-1 \right)  \left( \left( k-1\right)!\right)^2\sigma_{(p)}^2+ 2\left( k-1\right)!b_{(p)}\epsilon   } \right).
						\end{align*}
						Consequently, there exist  big enough constants $C_2>0$ and $c_ 2>0$  s.t. by choosing 
						\begin{align*}
							\epsilon=C_{2} \left( k-1\right)! \left( b_{(p)} \log(np)+ \sigma_{(p)}\sqrt{p\log(np)}\right),
						\end{align*}
						we have, with probability greater than $1-(np)^{c_2}$, 
						\begin{equation*}
							\| \bW_{k} \|_2 \leq C_2\left( k-1\right)! \left( b_{(p)} \log(np)+ \sigma_{(p)}\sqrt{p\log(np)}\right),
						\end{equation*}
						holds  for all  $k\leq K=c_{0}\log(np)$ where   $c_{0}>0$ is big enough constant.  
					\end{proof}
					\subsection{Proof of Theorem \ref{thm6}}
					\begin{proof}
						For any vector $\bx\in \mathbb{R}^{p}$, we define  $\bx_{-i}:=(x_{1},\cdots,x_{i-1}, x_{i+1}, \cdots, x_{p})^{\top}$.
						With similar arguments as in the proof of Theorem \ref{thm5}, we have that there exists  a $\btheta^{\xi}$ such that
						\begin{eqnarray*}
							&&\bL_{i}\left( \btheta\right)-\bL_{i}\left( \btheta'\right)\\
							& =&\frac{1}{p}\sum_{k=1}^{\infty}\left(\frac{1}{k!} \sum_{j=1,\:j \neq i}^{p} \left( Y_{i,j} \sum_{s=0}^{k} \tbinom{k}{s}\left( \theta_{i}-\theta_{i}'\right)^{s}\left( \theta_{j}-\theta_{j}' \right)^{k-s} \frac{\partial^k l_{i,j}\left( \btheta'\right) }{\partial \theta_{i}^{s}\partial \theta_{j}^{k-s}}   \right)   \right)\\
							&\leq&\left|\sum_{k=1}^{K} \frac{1}{p}\frac{1}{k!}\sum_{j=1,\:j \neq i}^{p} Y_{i,j}\left(  \left( \theta_{i}-\theta_{i}'\right)^{k}+ \left( \theta_{j}-\theta_{j}' \right)^{k} \right) \frac{\partial^k l_{i,j}\left( \btheta'\right) }{\partial \theta_{i}^{k} }    \right|\\
							&&+ \left|\sum_{k=2}^{K}\frac{1}{p}\frac{1}{k!} \sum_{j=1,\:j \neq i}^{p}\left( Y_{i,j} \sum_{s=1}^{k-1} \tbinom{k}{s}\left( \theta_{i}-\theta_{i}'\right)^{s}\left( \theta_{j}-\theta_{j}' \right)^{k-s} \frac{\partial^k l_{i,j}\left( \btheta'\right) }{\partial \theta_{i}^{s}\partial \theta_{j}^{k-s}}   \right)   \right|\\
							&&+\frac{1}{p}\left|\frac{1}{\left( K+1\right) !} \sum_{j=1,\:j \neq i}^{p}\left( Y_{i,j} \sum_{s=0}^{K+1} \tbinom{K+1}{s}\left( \theta_{i}-\theta_{i}'\right)^{s}\left( \theta_{j}-\theta_{j}' \right)^{K+1-s} \frac{\partial^{K+1} l_{i,j}\left( \btheta^{\xi}\right) }{\partial \theta_{i}^{s}\partial \theta_{j}^{K+1-s}}   \right)   \right|\\
							&=&S^{(1)}_{i} + S^{(2)}_{i} + S^{(3)}_{i},
						\end{eqnarray*}
						where $K=\lfloor c_0\log(np)\rfloor+1$ for some large enough constant $c_0$.
						First consider $S^{(1)}_{i}$. There exist big enough constants $C_1>0$, $c_1>0$ such that, uniformly for all $i=1,\cdots,p$, $k=1,2,\cdots,K$ and all $\btheta\in\bB_{\infty}\left(\btheta',\alpha_{0} \right)$, we have,  with probability greater than $1-(np)^{-c_1}$,
						\begin{eqnarray*}
							&&\frac{1}{p}\frac{1}{k!}\left| \sum_{j=1,\:j \neq i}^{p} Y_{i,j}\left(  \left( \theta_{i}-\theta_{i}'\right)^{k}+ \left( \theta_{j}-\theta_{j}' \right)^{k} \right) \frac{\partial^k l_{i,j}\left( \btheta'\right) }{\partial \theta_{i}^{k} }    \right|\\
							&=&\frac{1}{p}\frac{1}{k!}  \left| \frac{\theta_i}{\alpha}-\frac{\theta_i'} {\alpha}\right|^{k} \left| \sum_{j\neq i,j=1}^{p} Y_{i,j}\frac{\partial^k l_{i,j}\left( \btheta'\right) }{\partial \theta_{i}^{k} } \alpha^k\right|\\
							&&+\frac{1}{p}\frac{1}{k!} \sum_{j\neq i,j=1}^{p}\left| \frac{\theta_j}{\alpha}-\frac{\theta_j'} {\alpha}\right|^{k} \left|  Y_{i,j}\frac{\partial^k l_{i,j}\left( \btheta'\right) }{\partial \theta_{i}^{k} } \alpha^k\right|\\
							&\leq &\frac{1}{p}\frac{1}{k!}\left|\frac{ \theta_{i}-\theta'_{i} }{\alpha}\right|^{k}  \max_{i}\left|\sum_{j\neq i,j=1}^{p}Y_{i,j}\frac{\partial^k l_{i,j}\left( \btheta'\right) }{\partial \theta_{i}^{k} } \alpha^k\right|\\
							&&+ \frac{1}{p}\frac{1}{k!}\left\|\frac{\btheta_{-i}-\btheta'_{-i}} {\alpha}\right\|_{k}^{k}\max_{i,j}\left|Y_{i,j}\frac{\partial^k l_{i,j}\left( \btheta'\right) }{\partial \theta_{i}^{k} } \alpha^k\right|\\
							&\leq& C_{1 }\left( \left|\frac{ \theta_{i}-\theta'_{i} }{\alpha}\right|^{k}  \frac{b_{(p)} \log(np)+ \sigma_{(p)}\sqrt{p\log(np)}}{kp}+ \left\|\frac{\btheta_{-i}-\btheta'_{-i}} {\alpha}\right\|_{k}^{k} \frac{b_{(p)}}{kp}\right).
						\end{eqnarray*}
						Then, from inequality \eqref{thm5_1} we have, there exist big enough constants $C_{2}>0$ and $c_2>0$, such that  with probability greater than $1-(np)^{-c_2}$,
						\begin{eqnarray*}
							S^{(1)}_{i}&\leq&\sum_{k=1}^{K}\frac{1}{p}\frac{1}{k!}\left| \sum_{j=1,\:j \neq i}^{p} Y_{i,j}\left(  \left( \theta_{i}-\theta_{i}'\right)^{k}+ \left( \theta_{j}-\theta_{j}' \right)^{k} \right) \frac{\partial^k l_{i,j}\left( \btheta'\right) }{\partial \theta_{i}^{k} }    \right|\\
							&\leq&  C_{1}\sum_{k=1}^{K}\left(\left|\frac{ \theta_{i}-\theta'_{i} }{\alpha}\right|^{k}  \frac{b_{(p)} \log(np)+ \sigma_{(p)}\sqrt{p\log(np)}}{kp}+ \left\|\frac{\btheta_{-i}-\btheta'_{-i}} {\alpha}\right\|_{k}^{k} \frac{b_{(p)}}{kp}\right)   \\
							&\leq& C_{1}\frac{b_{(p)} \log(np)+ \sigma_{(p)}\sqrt{p\log(np)}}{p} \left( \sum_{k=1}^{K}\frac{1}{k}\left|\frac{ \theta_{i}-\theta'_{i} }{\alpha}\right|^{k}  \right)+ C_{1}\frac{b_{(p)}}{p}\sum_{k=1}^{K}\frac{1}{k}\left\| \frac{\btheta_{-i}-\btheta'_{-i}}{\alpha} \right\|_{k}^{k}  \\
							&\leq& C_{2}\frac{b_{(p)} \log(np)+ \sigma_{(p)}\sqrt{p\log(np)}}{p}\left|\theta_{i}-\theta'_{i}\right| + C_{2}\frac{b_{(p)}}{p} \left\|\btheta_{-i}-\btheta'_{-i}\right\|_{1},
						\end{eqnarray*}
						holds uniformly for $i=1,\cdots,p$ and $\btheta\in\bB_{\infty}\left(\btheta',\alpha_{0}  \right)$. 
						
						Next we derive an upper bound for $S^{(2)}_{i}$. With the fact that $Y_{i,j}^2\leq b_{(p)}^2$ and 
						\begin{equation*}
							\var\left( Y_{i,j}^2\right)\leq\E\left( Y_{i,j}^4\right)
							\leq b_{(p)}^2\E\left( Y_{i,j}^2\right)
							=  b_{(p)}^2\sigma^2_{(p)},
						\end{equation*}
						by Lemma \ref{Bers} we have, there exist big enough constants $C_3,c_3>0$ such that   with probability greater than $1-(np)^{-c_3}$, 
						\begin{align*}
							\max_{i}\left(\sum_{j=1,\:j \neq i}^{p} Y_{i,j}^2\right)^{1/2}&\leq C_{3} \left( b_{(p)}^2\log(np)+ \sigma_{(p)} b_{(p)}\sqrt{p\log(np)} \right) ^{1/2}\\
							&\leq C_{3} \left(\left( p\log(np)\right) ^{1/4}\sqrt{\sigma_{(p)} b_{(p)}}+ b_{(p)}\sqrt{\log(np)}\right) .
						\end{align*}
						Consequently,  there exists a big enough constant $C_4>0$ such that, uniformly for all $i=1,\cdots,p$ and $\btheta\in\bB_{\infty}\left(\btheta',\alpha_{0}  \right)$, we have, with probability greater than $1-(np)^{-c_3}$,
						\begin{eqnarray*}
							S^{(2)}_{i}
							&\leq&\sum_{k=2}^{K}\frac{1}{p}\frac{1}{k!}\sum_{j=1,\:j \neq i}^{p}\left| \left( Y_{i,j} \sum_{s=1}^{k-1} \tbinom{k}{s}\left( \theta_{i}-\theta_{i}'\right)^{s}\left( \theta_{j}-\theta_{j}' \right)^{k-s} \frac{\partial^k l_{i,j}\left( \btheta'\right) }{\partial \theta_{i}^{s}\partial \theta_{j}^{k-s}}   \right)   \right|\\
							&\leq&  \sum_{k=2}^{K}\frac{1}{p}\frac{1}{k!}0.5^{k}\left( \sum_{j=1,\:j \neq i}^{p}   Y_{i,j}^2\sum_{s=1}^{k-1}\left(\frac{\theta_{i}-\theta_{i}'}{\alpha/2}\right)^{2s}\right)^{1/2} \\
							&& \quad \left(\sum_{j=1,\:j \neq i}^{p}\sum_{s=1}^{k-1} \left( \tbinom{k}{s}\left( \frac{\theta_{j}-\theta_{j}'}{\alpha/2} \right)^{k-s}\alpha^{k} \frac{\partial^k l_{i,j}\left( \btheta'\right) }{\partial \theta_{i}^{s}\partial \theta_{j}^{k-s}}\right) ^2 \right) ^{1/2}\\
							&\leq& \sum_{k=2}^{K}\frac{1}{p}\frac{1}{k!}0.5^{k} \left( \frac{\left|\theta_{i}-\theta'_{i} \right|^2} {\alpha^2/4-\left|\theta_{i}-\theta'_{i}\right|^2 }\right) ^{1/2}\max_{i}\left(\sum_{j=1,\:j \neq i}^{p} Y_{i,j}^2\right)^{1/2} \\
							&&\quad  \left(\sum_{s=1}^{k-1}  \tbinom{k}{s}^2\sum_{j=1,\:j \neq i}^{p}\left( \frac{\theta_{j}-\theta_{j}'}{\alpha/2} \right)^{2(k-s)} \right) ^{1/2} \max_{j,s,k}\left|\alpha^{k} \frac{\partial^k l_{i,j}\left( \btheta'\right) }{\partial \theta_{i}^{s}\partial \theta_{j}^{k-s}} \right|\\
							&\leq&C_{4} \left|\theta_{i}-\theta'_{i} \right|\frac{\left( p\log(np)\right) ^{1/4}\sqrt{\sigma_{(p)} b_{(p)}}+ b_{(p)}\sqrt{\log(np)}}{p}\\
							&&  \quad   \sum_{k=2}^{K} \frac{0.5^{k} }{k} \left(\sum_{s=1}^{k-1}  \tbinom{k}{s}^2\sum_{j=1,\:j \neq i}^{p}\left( \frac{\theta_{j}-\theta_{j}'}{\alpha/2} \right)^{2(k-s)} \right) ^{1/2}\\
							&\leq&C_{4} \left|\theta_{i}-\theta'_{i} \right|\frac{\left( p\log(np)\right) ^{1/4}\sqrt{\sigma_{(p)} b_{(p)}}+ b_{(p)}\sqrt{\log(np)}}{p} \\
							&& \times\sum_{k=2}^{K} \sum_{s=1}^{k-1}\sum_{j=1,\:j \neq i}^{p}  \frac{0.5^{k} }{s}\tbinom{k}{s}\left|\frac{\theta_{j}-\theta_{j}'}{\alpha/2} \right|^{k-s}\\
							&\leq&C_{4} \left|\theta_{i}-\theta'_{i} \right|\frac{\left( p\log(np)\right) ^{1/4}\sqrt{\sigma_{(p)} b_{(p)}}+ b_{(p)}\sqrt{\log(np)}}{p} \\
							&& \times    \sum_{s=1}^{K-1} \left( \frac{1}{s}  \left\|\frac{\btheta_{-i}-\btheta_{-i}'}{\alpha/2} \right\|_{s}^{s} \sum_{k=s+1}^{K}\tbinom{k}{s}0.5^{k}\right) \\
							&\leq&2C_{4} \left|\theta_{i}-\theta'_{i} \right|\frac{\left( p\log(np)\right) ^{1/4}\sqrt{\sigma_{(p)} b_{(p)}}+ b_{(p)}\sqrt{\log(np)}}{p} \sum_{s=1}^{K-1}\frac{1}{s}  \left\|\frac{\btheta_{-i}-\btheta_{-i}'}{\alpha/2} \right\|_{s}^{s}\\
							&\leq&4C_{4} \left\|\btheta_{-i}-\btheta_{-i}' \right\|_{1}\left|\theta_{i}-\theta'_{i} \right|\frac{\left( p\log(np)\right) ^{1/4}\sqrt{\sigma_{(p)} b_{(p)}}+ b_{(p)}\sqrt{\log(np)}}{p}.
						\end{eqnarray*}
						Here in the above inequalities we have used that fact that $\sum_{k=s+1}^{K}\tbinom{k}{s}0.5^{k}<2$, and the last step follows from the inequality \eqref{thm5_1}.

						Finally, we derive an upper bound for $S^{(3)}_{i}$.	By condition (L-A1),  by choosing $K=\lfloor c_0\log(np)\rfloor+1$ with $c_0$ to be a large enough constant, there exists a big enough constant $c_4>0$ such that, uniformly for all $i=1,\cdots,p$,  $\btheta^{\xi}$ and  $\btheta\in\bB_{\infty}\left(\btheta',\alpha_{0}  \right)$, we have
						\begin{eqnarray*}
							S^{(3)}_{i}&=&\frac{1}{p}\left|\frac{1}{\left( K+1\right) !} \sum_{j=1,\:j \neq i}^{p}\left( Y_{i,j} \sum_{s=0}^{K+1} \tbinom{K+1}{s}\left( \theta_{i}-\theta_{i}'\right)^{s}\left( \theta_{j}-\theta_{j}' \right)^{K+1-s} \frac{\partial^{K+1} l_{i,j}\left( \btheta^{\xi}\right) }{\partial \theta_{i}^{s}\partial \theta_{j}^{K+1-s}}   \right)   \right|\\
							&\leq &\frac{1}{p}\frac{1}{K+1}  \sum_{j=1,\:j \neq i}^{p} \left| Y_{i,j} \right|\left(\frac{\left| \theta_{i}-\theta_{i}' \right|}{\alpha} + \frac{\left| \theta_{j}-\theta_{j}' \right|}{\alpha}\right)^{K+1} \\
							&\leq & \frac{b_{(p)}}{K+1}\left(\frac{2\left\| \btheta-\btheta' \right\|_{\infty}}{\alpha} \right)^{K+1}\\
							&\leq & \frac{b_{(p)}}{K+1}\left(\frac{2\alpha_{0}}{\alpha} \right)^{K+1}\\
							&\leq & b_{(p)}\left(np \right)^{-c_4}.
						\end{eqnarray*}
						Here the last step will hold if we choose $c_0$ to be large enough such that $(2\alpha_{0}/\alpha)^{c_0/ c_3}$$<e^{-1}$. When $np\rightarrow \infty$, and $c_0, c_3$ are chosen to be large enough,  this bound will be dominated by the upper bounds for $S^{(1)}$ and $S^{(2)}$.
						
						Consequently,  by choosing $K=\lfloor c_0\log(np)\rfloor+1$ with $c_0$ to be a large enough constant, we conclude that, for any given $\btheta'$, there exist large enough constants $C,c>0$ such that uniform for any $\btheta\in \bB_{\infty}\left( \btheta',\alpha_{0}\right)$    and $1\leq i\leq p$, with probability greater than $1-(np)^{-c}$,
						\begin{eqnarray*}
							&&\left|\bL_{i}\left( \btheta\right)-\bL_{i}\left( \btheta'\right)\right|\\
							&\leq& C_{2}\frac{b_{(p)} \log(np)+ \sigma_{(p)}\sqrt{p\log(np)}}{p}\left| \theta_{i}-\theta'_{i}\right| + C_{2}\frac{b_{(p)}}{p} \left\|\btheta_{-i}-\btheta'_{-i}\right\|_{1}\\
							&&+ C_{4}\left\|\btheta_{-i}-\btheta_{-i}' \right\|_{1}\left|\theta_{i} -\theta'_{i} \right|\frac{\left( p\log(np)\right) ^{1/4}\sqrt{\sigma_{(p)} b_{(p)}}+ b_{(p)}\sqrt{\log(np)}}{p} \\
							&\leq& C\frac{b_{(p)}}{p} \left\|\btheta_{-i}-\btheta_{-i}' \right\|_{1}+C\left( \left\|\btheta_{-i}-\btheta_{-i}' \right\|_{1}+1\right) \left|\theta_{i} -\theta'_{i} \right|\frac{b_{(p)} \log(np)+ \sigma_{(p)}\sqrt{p\log(np)}}{p}.
						\end{eqnarray*}
					\end{proof}

						\section{Additional numerical results}
						~\\
						\subsection{Informal justification of the use of TWHM in analyzing the insecta-ant-colony4 dataset}\label{AR1}
						To motivate the use of the proposed TWHM for the analysis of the insecta-ant-colony4 dataset, 
						we plot the autocorrelation function (ACF) and the partial autocorrelation function (PACF) of the degree sequences of two ants. These ants are selected based on their respective highest and lowest values at time $t=1$ according to  
						\begin{equation*}
							\sum_{t=1}^{n}\left| \frac{p-1}{2}- \sum_{j=1,\:j \neq i}^{p}X_{i,j}^{t}\right|.
						\end{equation*}
						Upon examining Figure \ref{fig:acf}, it becomes evident that both degree sequences exhibit patterns reminiscent of a first-order autoregressive model with long memory. This observation serves as a strong motivation for employing the TWHM methodology.
						\begin{figure}[tb]
							\includegraphics[width=0.9\textwidth]{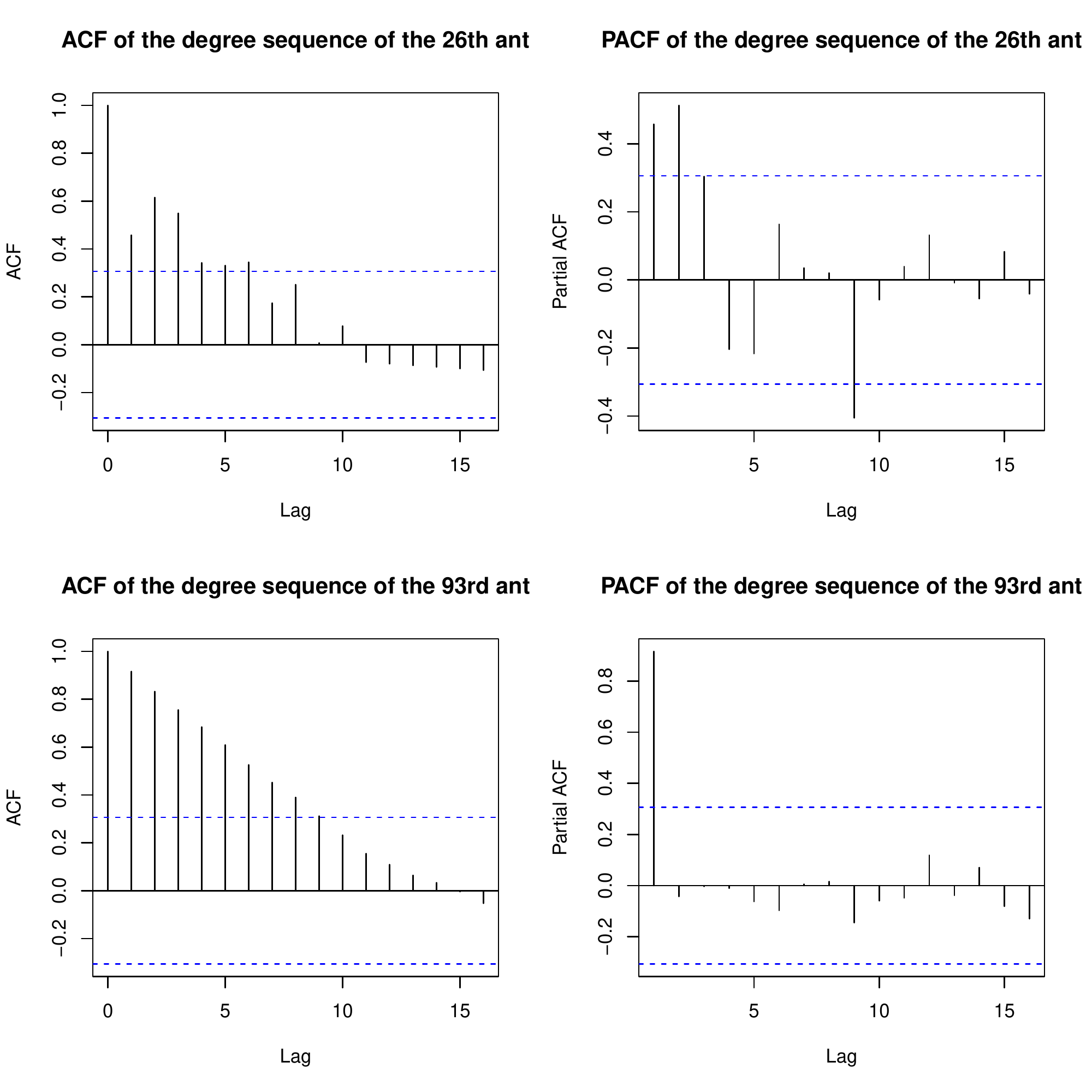}
							\caption{The ACF and PACF plots of the degree sequences of two selected ants. }
							\label{fig:acf}
						\end{figure}			
						\subsection{Community detection under stochastic block structures}
						We conduct additional numerical studies to assess the efficacy of community detection using the estimated $\beta$-parameters. To implement stochastic block structures within the TWHM framework, we partitioned the $p$ nodes into $k$ communities of equal size, ensuring that the parameters $(\beta_{i,0},\beta_{i,1})$ for all nodes $i$ within the same community were identical. Furthermore, we explored scenarios where the networks were independently generated from a Stochastic Block Model (SBM). 
						Specifically, we have considered the following settings:
						\begin{itemize}   
							\item[] Setting 1: The networks are generated under TWHM with  $\bbeta_{i,r}= -0.2, 0, 0.2$ $(r=0,1)$ for all nodes $i$ in communities $1,2$ and $3$ respectively.
							\item[] Setting 2: The networks are generated under TWHM with  $\bbeta_{i,r}= -0.4, 0, 0.4$ $(r=0,1)$ for all nodes $i$ in communities $1,2$ and $3$ respectively.
							\item[] Setting 3: The networks are independently generated under SBM, with the probability matrix among different communities specified as
							\begin{small}
								\begin{align*}
									\left[
									\begin{array}{ccc}
										0.26&	0.1&	0.1\\
										0.1&	0.2&	0.1\\
										0.1&	0.1&	0.14\\ 			
									\end{array}
									\right].
								\end{align*}
							\end{small}	    	
							\item[] Setting 4: The networks are independently generated under  SBM, with the probability matrix among different communities given by	
							\begin{small}
								\begin{align*}
									\left[
									\begin{array}{ccc}
										0.4&	0.3&	0.2\\
										0.3&	0.225&	0.15\\
										0.2&	0.15&	0.1\\ 			
									\end{array}
									\right].
								\end{align*}
							\end{small}
							\item[] Setting 5: The networks are independently generated under  SBM, with  the probability matrix among different communities given by
							\begin{small}
								\begin{align*}
									\left[
									\begin{array}{ccc}
										0.9&	0.5&	0.3\\
										0.5&	0.3&	0.2\\
										0.3&	0.2&	0.15\\ 			
									\end{array}
									\right].
								\end{align*}
							\end{small}
						\end{itemize}  
						Networks in Settings 1-2 are generated with autoregressive dependence using our TWHM model, while networks in Setting 3 are independent samples following the classical SBM structure. In Settings 4-5, networks are also generated with SBM structures, but the probability formation matrices are not full-rank.
						\\
						Once the $\beta$-parameters are estimated, we apply k-means clustering to these parameters to cluster the $p$ nodes, denoting this method as "TWHM-Cluster". For comparison, we apply spectral clustering, widely used for SBM, on the averaged networks, denoting this method as "SBM-Spectral". All experiments are repeated 100 times, and the clustering accuracy is reported in Table \ref{error_SBM} below.
						\\
						From Table \ref{error_SBM}, we observe that community detection using the $\beta$-parameters performs significantly better under Settings 1 and 2, where data were generated from our TWHM model. This improvement is attributed to the fact that parameter estimation has considered the autoregressive structure of the networks.
						\\
						When networks were independently generated from SBM under Setting 3, the performance of TWHM-Cluster is comparable to that of SBM-Cluster. However, when the probability matrix of SBM is not full-rank (Settings 4 and 5), the TWHM model still demonstrates promising performance, while classical spectral clustering can be much less satisfactory.
						\begin{table}[htbp]
							\centering
							\caption{Mean clustering accuracy of TWHM and SBM over 100 replications. Here $k, n, p$ denote the number of communities, the number of network observations, and the number of nodes, respectively.}
							\begin{tabular}{c|c|c|c}
								\hline
								& $(k,n,p)$ & TWHM-Cluster & SBM-Spectral \\
								\hline
								\multirow{3}{*}{Setting 1} & (3,2,300) & 68.6\% & 37.1\% \\
								& (3,10,300) & 95.1\% & 37.0\% \\
								& (3,50,300) & 99.5\% & 38.7\% \\
								\hline
								\multirow{3}{*}{Setting 2} & (3,2,300) & 92.2\% & 39.7\% \\
								& (3,10,300) & 95.6\% & 48.6\% \\
								& (3,50,300) & 100.0\% & 63.0\% \\
								\hline
								\multirow{3}{*}{Setting 3} & (3,10,300) & 92.1\% & 99.8\% \\
								& (3,30,300) & 99.4\% & 100.0\% \\
								& (3,50,300) & 99.3\% & 100.0\% \\
								\hline
								\multirow{3}{*}{Setting 4} & (3,10,500) & 97.0\% & 37.1\% \\
								& (3,30,500) & 93.9\% & 37.0\% \\
								& (3,50,500) & 100\% & 37.5\% \\
								\hline
								\multirow{3}{*}{Setting 5} & (3,10,500) & 80.0\% & 71.2\% \\
								& (3,30,500) & 80.2\% & 70.7\% \\
								& (3,50,500) & 83.8\% & 72.8\% \\
								\hline
							\end{tabular}%
							\label{error_SBM}%
						\end{table}%
						\subsection{Dynamic protein-protein	interaction networks}
						In this section, we applied the proposed TWHM to 12  dynamic protein-protein	interaction networks (PPIN)  of yeast cells  examined in \cite{fu2022dppin}. Each dynamic network comprises 36 network observations. The objective of investigating protein-protein interactions is to glean valuable insights into the cellular function and machinery of a proteome \citep{wu2006prediction}. To provide an overview of these datasets, we present selected statistics in Table \ref{PPIN_stat}.
						\begin{table}[htbp]
							\centering
							\caption{Some statistics of  the 12 protein-protein	interaction network datasets.}
							\begin{tabular}{c|c|c|c}
								\hline
								Dataset & \# of Nodes & Mean degree & Density \\
								\hline
								DPPIN-Uetz  & 922   & 4.68  & 0.22\% \\
								\hline
								DPPIN-Ito  & 2856  & 6.05  & 0.07\% \\
								\hline
								DPPIN-Ho  & 1548  & 54.55 & 0.13\% \\
								\hline
								DPPIN-Gavin  & 2541  & 110.22 & 0.08\% \\
								\hline
								DPPIN-Krogan (LCMS)  & 2211  & 77.01 & 0.09\% \\
								\hline
								DPPIN-Krogan (MALDI)  & 2099  & 74.60 & 0.10\% \\
								\hline
								DPPIN-Yu  & 1163  & 6.19  & 0.17\% \\
								\hline
								DPPIN-Breitkreutz  & 869   & 90.33 & 0.23\% \\
								\hline
								DPPIN-Babu & 5003  & 44.56 & 0.04\% \\
								\hline
								DPPIN-Lambert  & 697   & 19.09 & 0.29\% \\
								\hline
								DPPIN-Tarassov  & 1053  & 9.17  & 0.19\% \\
								\hline
								DPPIN-Hazbun  & 143   & 27.40 & 1.40\% \\
								\hline
							\end{tabular}%
							\label{PPIN_stat}%
						\end{table}%
						We have employed our method for linkage prediction on these PPINs. Similar to the main paper, we utilized TWHM with either a fixed cutoff point of $0.5$ (TWHM$_{0.5}$) or the adaptive rule (TWHM$_{adaptive}$). For comparison, we used a naive estimator $\bX^{t-1}$ to predict $\bX^t$ (Naive). Our training time slot size was set to $n = 10$, and the results are presented in Table \ref{ppin_pre} below. As evident from the table, our approach shows significant promise for accurate link prediction.
						\begin{table}[htbp]
							\small
							\centering
							\caption{Comparison of TWHM$_{0.5}$, TWHM$_{adaptive}$, and Naive in terms of misclassification rate of links ($\times 10^{-5}$).}
						\begin{tabular}{c|Hc|c|c}
							\hline
							Dataset & Statistics & TWHM$_{0.5}$  & TWHM$_{adaptive}$   & Naive \\
							\hline
							{DPPIN-Uetz} & MR & 55 & 55 & 88 \\
							\hline
							{DPPIN-Ito} & MR & 28 & 29 & 50 \\
							\hline
							{DPPIN-Ho} & MR & 108 & 111 & 194 \\
							\hline
							{DPPIN-Gavin} & MR & 139 & 142 & 250 \\
							\hline
							{DPPIN-Krogan(LCMS)} & MR &133 & 141& 222 \\
							\hline
							{DPPIN-Krogan(MALDI)} & MR & 124 & 127 & 215 \\
							\hline
							{DPPIN-Yu} & MR & 54 & 55 & 86 \\
							\hline
							{DPPIN-Breitkreutz} & MR & 297 & 305 & 493 \\
							\hline
							{DPPIN-Babu} & MR & 27 & 28 & 45 \\
							\hline
							{DPPIN-Lambert} & MR & 293 & 298 & 489 \\
							\hline
							{DPPIN-Tarassov} & MR & 72 & 75 & 133 \\
							\hline
							{DPPIN-Hazbun} & MR & 759 & 772 & 1198 \\
							\hline
						\end{tabular}%
						\label{ppin_pre}%
					\end{table}%
			\end{appendix}

			\end{document}